\newif\ifsubmission
\newif\ifanon
\newif\ifextendedabstract
\newif\ifstocproceedings
\title{ 
A General Quantum Duality for Representations of Groups
\\
{
\large
with Applications to 
Quantum Money, 
Lightning, 
and
Fire 
}
}
\date{}
    \author{}
    \author[1]{John Bostanci}
    \affil[1]{{\small 
    Columbia University, New York, NY, USA
    }}
    \author[2]{Barak Nehoran}
    \affil[2]{{\small 
    Princeton University, Princeton, NJ, USA
    }}
    \author[3]{Mark Zhandry}
    \affil[3]{{\small 
    NTT Research, Sunnyvale, CA, USA
    }}
\begin{document}

\maketitle

\newtoggle{useitems}
\ifstocproceedings
    \togglefalse{useitems}
\else
    \toggletrue{useitems}
\fi

\newenvironment{toggleditemize}
  {\iftoggle{useitems}{\begin{itemize}}{}}
  {\iftoggle{useitems}{\end{itemize}}{}}

\newcommand{\toggleditem}[1]{%
  \iftoggle{useitems}{\item }{ (#1) }%
}

\begin{abstract}
    Aaronson, Atia, and Susskind 
    (2020)
    established that efficiently mapping between quantum states $\ket{\psi}$ and $\ket{\phi}$ is computationally equivalent to distinguishing their superpositions $\frac{1}{\sqrt{2}}(\ket{\psi} + \ket{\phi})$ and $\frac{1}{\sqrt{2}}(\ket{\psi} - \ket{\phi})$. 
    We generalize this insight into a broader duality principle in quantum computation, wherein manipulating quantum states in one basis is equivalent to extracting their value in a complementary basis. In its most general form, this duality principle states that for a given group, the ability to implement a unitary representation of the group is computationally equivalent to the ability to perform a \fse from the invariant subspaces corresponding to its irreducible representations.

    Building on our duality principle, we present the following applications:
    \begin{toggleditemize}
        \toggleditem{1}
        Quantum money, which captures quantum states that are verifiable but unclonable, and its stronger variant, quantum lightning, have long resisted constructions based on concrete cryptographic assumptions. 
        While (public-key) quantum money has been constructed from indistinguishability obfuscation (iO)—an assumption widely considered too strong—quantum lightning has not been constructed from any such assumptions, with previous attempts based on assumptions that were later broken.
        We present the first construction of quantum lightning with a rigorous security proof, grounded in a plausible and well-founded cryptographic assumption.
        We extend the construction of Zhandry 
        (2024)
        from Abelian group actions to non-Abelian group actions, and eliminate Zhandry's reliance on a black-box model for justifying security. Instead, we prove a direct reduction to a computational assumption -- the pre-action security of cryptographic group actions. We show how these group actions can be realized with various instantiations, including with the group actions of the symmetric group implicit in the McEliece cryptosystem.
        \toggleditem{2}
        We provide an alternative quantum money and lightning construction from one-way homomorphisms, showing that security holds under specific conditions on the homomorphism. Notably, our scheme exhibits the remarkable property that four distinct security notions—quantum lightning security, security against both worst-case cloning and average-case cloning, and security against preparing a specific canonical state—are all equivalent.
        \toggleditem{3}
        Quantum fire captures the notion of a samplable distribution on quantum states that are efficiently clonable, but not efficiently telegraphable, meaning they cannot be efficiently encoded as classical information. These states can be spread like fire, provided they are kept alive quantumly and do not decohere.
        The only previously known construction relied on a unitary quantum oracle, whereas we present the first candidate construction of quantum fire using a classical oracle.
    \end{toggleditemize}

\end{abstract}

\newpage
\tableofcontents
\newpage

\ifextendedabstract
\else
    \section{Introduction}
\fi

\ifextendedabstract
    \noindent
\else
\fi
Let $\ket{\psi_0},\ket{\psi_1}$ be two orthogonal quantum states, and let $\ket{\phi_+}$ be proportional to $\ket{\psi_0} + \ket{\psi_1}$ and $\ket{\phi_-}$ be proportional to $\ket{\psi_0} - \ket{\psi_1}$. The \emph{Swap Complexity} of $\ket{\psi_0}$ and $\ket{\psi_1}$ is the size of the smallest circuit that maps $\ket{\psi_0}$ to $\ket{\psi_1}$ and vice versa. Meanwhile, the \emph{Distinguishing Complexity} of $\ket{\phi_+}$ and $\ket{\phi_-}$ is the size of the smallest circuit that accepts $\ket{\phi_+}$ and rejects $\ket{\phi_-}$. 
A fundamental result of Aaronson, Atia, and Susskind~\cite{aaronson2020hardness} establishes that the swap complexity of $\ket{\psi_0}$ and $\ket{\psi_1}$ is essentially equivalent to the distinguishing complexity of $\ket{\phi_+}$ and $\ket{\phi_-}$.  
This duality principle, known as the ``AAS duality'', has emerged as a simple yet powerful tool in quantum complexity theory and cryptography.

In this work, we ask: \emph{Can the AAS equivalence be extended to the more general context of many quantum states and multidimensional subspaces?}  We give an affirmative answer to this question.  First, we extend the notion of the swap complexity to a notion of ``representation complexity'': given a subspace, $V$, spanned by states $\ket{\psi_1}, \dots, \ket{\psi_k}$, and a (potentially non-Abelian) group $\G$, a \emph{representation} of $\G$ on the subspace $V$ 
is a homomorphism from $\G$ to the unitaries acting on the subspace 
(or, roughly, it is a collection of unitaries $\{\rep(g)\}_{g \in \G}$ acting on $V$ which respects the structure of $\G$).%
\footnote{
    For instance, in~\cite{aaronson2020hardness}, the representation of $\G = \Z_2$ on the subspace spanned by $\ket{\psi_0}$ and $\ket{\psi_1}$ maps the 
    sole non-identity element of $\Z_2$ to the unitary swapping $\ket{\psi_0}$ for $\ket{\psi_1}$ and vice versa.
}
We can call the \emph{Representation Complexity} of $\rep$ the size of the smallest circuit that implements the representation, that is, by mapping 
\begin{align*}
    \ket{g} \otimes \ket{\psi_i} \mapsto \ket{g} \otimes \rep(g)\ket{\psi_i}\,.
\end{align*}


When restricting to groups that have an efficient quantum Fourier transform (including all Abelian groups, all constant-sized or poly\-nomial-sized groups, and several important expo\-nential-sized non-Abelian groups),
we show that the representation complexity is essentially equivalent to the complexity of implementing a \emph{\fse}, or in other words, performing a partial measurement of 
the invariant subspaces preserved by the representation
(i.e., its irreducible representation subspaces)
and extracting the quantum state encoded in each such subspace (see 
\ifextendedabstract
    the \hyperref[sec:subspace-ext]{Duality section} below
\else
    \Cref{sec:subspace-ext,sec:fourier_extraction_tech_over} 
\fi
for more discussion on \se).
For Abelian groups, this simplifies to a full projective measurement,
and in particular, for the swapping representation of AAS, this is a measurement between $\ket{\phi_+}$ and $\ket{\phi_-}$.
Thus the AAS duality is recovered by setting $\G=\Z_2$. We additionally prove an \emph{approximate} notion of this duality, where the circuit only has to approximately map between states.%
\footnote{
    While our approximate duality theorem works for all groups, it achieves weaker error bounds than the duality of \cite{aaronson2020hardness} for $\mathbb{Z}_2$.  
}

\subsection{Applications to Cryptography}

In cryptography, the AAS duality has proven quite fruitful. Cryptographic security properties come in two types: ``search'' type properties which stipulate the hardness of computing a specific unknown quantity, and ``decision'' type properties which stipulate the hardness of distinguishing between two distributions. 
The AAS duality has played a crucial role in establishing the equivalence between certain search-type and decision-type properties, leading to a number of significant results~\cite{yan2022general,hhan2023hardness,C:KMNY24,C:MalWal24,TCC:HKNY24,AC:MorYamYam24}.

We show that our new duality theorem is useful for cryptography beyond the AAS setting, by giving novel results for quantum money.

\paragraph{Quantum Money from Group Actions.} Quantum money~\cite{Wiesner83} uses the no-cloning principle to generate unforgeable banknotes. These banknotes are quantum states that can be verified but cannot be cloned. A central problem has been to construct quantum money that can be \emph{publicly verified} by anyone, and yet only the mint can create new banknotes. This is called public-key quantum money~\cite{CCC:Aaronson09}.%
\footnote{
    When it is otherwise clear from context, we will refer to public key quantum money as simply ``quantum money''.
}
Quantum \emph{lightning} posits a stronger security notion for public-key quantum money, with a collision-resistance property that ensures that even the mint can only ever create one copy of each banknote~\cite{zhandry2021quantum}.

A long-standing challenge for public-key quantum money is to derive security from concrete computational assumptions (and in particular, assumptions that do not bake the unclonability of the banknotes directly into the assumption). The only prior scheme with such a proof is an instantiation of~\cite{STOC:AarChr12} using indistinguishability obfuscation (iO), as suggested by~\cite{BenDavid2023quantumtokens} and proved in~\cite{zhandry2021quantum}. However, iO is a powerful cryptographic tool whose quantum security is still uncertain. 
Moreover, no existing unbroken scheme has been shown to have such a security proof for the stronger security notion of quantum lightning. 

Recently, \cite{ITCS:Zhandry24a} gave a plausible construction of quantum money and quantum lightning from Abelian group actions. A group action consists of a group $G$, a set $X$, and a binary operation $*:G\times X\rightarrow X$, denoted $g*x=y$. This operation respects the group structure: $g*(h*x)=(gh)*x$. An Abelian group action is a group action where $G$ is Abelian.%
\footnote{
    Abelian groups are those for which all the elements commute: $gh = hg \;\; \forall g, h \in \G$.
}
Unfortunately, the security of the scheme of~\cite{ITCS:Zhandry24a} requires both a computational assumption \emph{and} an idealized modeling of group actions as a black box.

Using our duality principle, we show how to generalize this construction to work with \emph{non-Abelian} group actions. 
This shift is not merely a superficial adjustment\textemdash it significantly improves on the framework in two critical ways:

\begin{enumerate}
    \item 
    It allows us to prove the hardness of our quantum money and lightning scheme in the \emph{plain model}, using only a concrete assumption on the group action. This assumption also identifies an interesting potential source of hardness for non-Abelian group actions. Very roughly, for non-Abelian groups, in addition to a group action $g \,*\, (h*x) = (gh)*x$, we can also define a ``pre-action'' $g \circ (h*x) =  (hg^{-1})*x$, or more generally a ``bi-action'' $(g_0,g_1) \circledast (h*x) = (g_0 h g_1^{-1})*x$. Our assumption states that it is hard via a quantum query to distinguish a random action from a random bi-action. Importantly, this problem only makes sense for non-Abelian group actions, as actions and pre-actions are identical in the Abelian case. Thus, the quantum money result requires us to use the full power of our non-Abelian generalization of the duality. 
    \item 
    The shift to non-Abelian groups opens up the possibility for potentially more varied instantiations of the group actions. In particular, we explain how to instantiate our quantum money scheme on (a significant generalization of) the symmetric group action implicit in the McEliece cryptosystem~\cite{Mceliece78}.
\end{enumerate}

\begin{theorem}[informal]
    There is a public-key quantum money and quantum lightning scheme 
    for any (non-Abelian) cryptographic group action, 
    such that the money/lightning scheme
    is secure if the group action is preaction-secure.
\end{theorem}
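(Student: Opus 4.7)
The plan is to instantiate the scheme directly via our duality. Fix a cryptographic non-Abelian group action $(\G, X, *)$ with a distinguished base point $x_0$, and consider the left-translation representation $\rep$ on the orbit subspace defined by $\rep(h)\ket{g * x_0} = \ket{(hg) * x_0}$. Because $*$ is efficiently computable, the controlled version of $\rep$ has small representation complexity, so our main duality theorem yields an efficient Fourier subspace extractor $\mathsf{FSE}_{\rep}$ that decomposes any state across the isotypic components of the irreps of $\G$. The mint prepares the orbit-uniform superposition $\tfrac{1}{\sqrt{|\G|}} \sum_g \ket{g * x_0}$, runs $\mathsf{FSE}_{\rep}$, publishes the classical label $s$ (the irrep $\lambda$ together with a multiplicity coordinate) as the serial number, and hands the residual quantum state to the user as the banknote $\ket{\$_s}$. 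Verification is the coherent inverse: apply $\mathsf{FSE}_{\rep}^{-1}$, measure the label register, and accept iff it equals $s$.

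Correctness is immediate, since $\ket{\$_s}$ lies in the $s$-labeled isotypic block by construction and $\mathsf{FSE}_{\rep}$ returns $s$ there. For security, the plan is to reduce a lightning collider (which in particular implies a quantum money forger) to a preaction distinguisher. Given two banknotes sharing a common serial $s$, standard representation-theoretic manipulations on multiple copies of a state living in a single isotypic block allow implementing the restriction of $\rep$ to that block on an auxiliary orbit-uniform register. Composing with the reverse direction of our duality upgrades this into an efficient implementation of $\rep$ on a fresh orbit state. Now the non-Abelian point bites: when the $*$-oracle is replaced by a bi-action oracle $(g_0, g_1) \circledast (h * x_0) = (g_0 h g_1^{-1}) * x_0$, the circuit purporting to implement ``$\rep$'' no longer satisfies the defining homomorphism identity, so $\mathsf{FSE}_{\rep}^{-1}$ fails to return the forged state to its advertised block. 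The resulting gap in joint acceptance probability across the two oracle worlds is precisely a distinguishing advantage against preaction security.

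The main obstacle will be making this reduction quantitatively tight without the idealized black-box model used by~\cite{ITCS:Zhandry24a}. The non-Abelian isotypic blocks are multi-dimensional, so a forger may output states that are only approximately in the claimed block, or that are entangled across the multiplicity register; converting such imperfect copies into a faithful implementation of $\rep$ requires invoking the \emph{approximate} form of our duality, and the resulting error must be absorbed by the preaction assumption. This is also exactly where the Abelian argument of~\cite{ITCS:Zhandry24a} degenerates: every Abelian irrep is one-dimensional and left- and right-actions coincide, so pre- and bi-actions collapse into ordinary actions and the reduction has nothing to distinguish. Establishing the reduction in the plain model is therefore the step where the full non-Abelian strength of our duality is indispensable, and carrying out the approximate-duality bookkeeping carefully is where the technical work of the proof will concentrate.
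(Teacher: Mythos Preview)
Your reduction strategy has a genuine gap. The representation $\rep$ is \emph{already} efficiently implementable—it is literally the group action, which is given as efficient—so ``using two banknotes to implement $\rep$'' is not the leverage point. You also seem to model preaction security as oracle access to $*$ versus $\circledast$, but the actual game gives the adversary a \emph{single} application of either a random action or a random bi-action to a register of its choice; there is no homomorphism identity to test across multiple queries.

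The paper's reduction is structurally different. The key observation is that actions preserve the manifestation index $i$ (they move the state within a fixed $W_i^\lambda$), while preactions scramble $i$ across manifestations. With a single banknote the index $i$ is unrecoverable (indeed, recovering it would itself break preaction security), but with \emph{two} banknotes in the same isotypic component one can run Fourier subspace extraction on each to obtain the archetype states $\ket{\phi_i^\lambda}$ and $\ket{\phi_k^\lambda}$, and then SWAP-test them. The distinguisher sandwiches the challenger's single action/bi-action call between two such SWAP tests on the archetype registers: if no preaction occurred the outcomes agree with probability $1$, while a random preaction decorrelates the archetype states and the outcomes disagree with probability roughly $\tfrac{1}{2}-\tfrac{1}{2d_\lambda}$. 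This is where the duality is actually used—not to implement $\rep$, but to extract the archetype states so that the manifestation index becomes SWAP-testable. A secondary issue: your serial number includes a ``multiplicity coordinate,'' but $i$ is precisely the quantity that cannot be efficiently verified; the paper's serial number is just $\lambda$.
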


To the best of our knowledge, this represents the first (unbroken) quantum lightning scheme with a plain-model security proof based on a computational assumption that does inherently include unclonability.

\paragraph{Quantum Money from One-way Homomorphisms.}
A one-way (group) homomorphism is a function, $f(h)$, that is group homomorphic%
\footnote{
    That is, it is a homomorphism between two groups $G$ and $H$, such that $f(gh) = f(g) \cdot f(h)$ for all $g,h \in G$.
}
and efficiently computable, but computationally intractable to invert.%
\footnote{
    Note that Shor's algorithm~\cite{shor94algorithms} allows efficiently inverting group homomorphosms when the domain and codomain groups are Abelian. Thus, these results inherently require non-Abelian groups, and hence our generalized~duality.
}
A one-way homomorphism can be seen as an instance of a group action, with the domain group acting on the codomain as $g * f(h) = f(gh)$. However, unlike in the previous case above, the preactions for this action (i.e., $g \circ f(h) = f(h g^{-1})$) are as efficiently computable as the action itself, so security cannot be shown as before. Nevertheless, we give sufficient conditions on the one-way homomorphism such that the resulting quantum lightning scheme is secure. 

We note that unlike our construction above from group actions that are pre-action secure\textemdash for which we give concrete instantiations that can be implemented in practice\textemdash we do not know if any instantiations of homomorphic functions satisfy these security conditions. 
But we observe that a one-way group homomorphism is essentially a group action where the computational Diffie-Hellman (CDH) problem is \emph{easy} but yet discrete logarithms are still \emph{hard}. While CDH is quantumly equivalent to discrete logarithms for Abelian groups~\cite{montgomery2022full}, this equivalence does not seem to follow for non-Abelian groups. Strangely, it is a hypothetical security \emph{failure} for group actions which gives rise to plausible instantiations for quantum lightning and quantum fire (see more on the construction of quantum fire below).

We concede the disadvantage of this construction as compared to the concrete one above from preaction security, but we note that it has some unique properties that the other does not. Specifically, by leveraging our duality principle we are able to prove the remarkable fact that four distinct quantum money security notions\textemdash namely, the collision-resistance of quantum lightning security, the hardness of both worst-case cloning and average-case cloning, and the hardness of preparing the uniform superposition over the image of the homomorphism\textemdash are all identical. Thus for any particular instantiation of the one-way homomorphism, it is sufficient to prove any one of these security notions in order to get the other three.

\paragraph{Quantum Fire.} Quantum fire refers to a collection of efficiently samplable quantum states that can be efficiently cloned,
but cannot be efficiently telegraphed.%
\footnote{
    Note that while the no-cloning theorem prohibits cloning \emph{general} quantum states, this prohibition does not apply to quantum states chosen from an orthogonal set. The same applies to the no-telegraphing theorem, which prohibits sending \emph{general} quantum states through a classical channel without pre-shared entanglement. States from an orthogonal set can clearly be telegraphed by measuring them in this basis and later recreating them accordingly. Such states can be cloned in a similar fashion. In other words, any states chosen from an orthogonal set can be both cloned and telegraphed \emph{information-theoretically}, but these tasks are not necessarily both efficient. In fact, it was shown in~\cite{ITCS:NZ23} that there are likely to be state families where cloning is efficient and yet telegraphing is not. Quantum fire is the cryptographic primitive that samples such states efficiently.
}
That is, despite the ability to make an unbounded number of copies of a quantum fire state, there is no way to efficiently encode it as classical information from which it can later be recovered.
Much like a flame can be easily spread from a single source as long as it is kept alive, quantum fire can be cloned from a single quantum state as long as it is kept coherently in quantum storage.

The concept of quantum fire was first introduced in the work of Nehoran and Zhandry~\cite{ITCS:NZ23}, where it was shown to be essential for solving the key exfiltration problem. However, it was not formally defined or named in that work. \cite{ITCS:NZ23} provided a secure construction of quantum fire relative to a unitary quantum oracle, but this oracle construction relied on an inherently inefficient computation and baked clonability into the oracle itself. Consequently, it does not provide a pathway for instantiation in the standard model. It has not even been clear if any classical oracle could allow efficient cloning of quantum states that are inherently quantum (and thus not telegraphable).

Inspired by the duality principle, we give a plausible candidate construction of quantum fire relative to a one-way group homomorphism.
Remarkably, despite the similarity to the construction of quantum lightning from group homomorphisms, where the states are \emph{unclonable}, the states in this scheme are inherently \emph{clonable}, and efficiently so. Nevertheless, there is no apparent way to telegraph the states efficiently.
Moreover, it is straightforward to define a classical oracle that gives a candidate one-way group homomorphism. 
Thus, we obtain a candidate construction of quantum fire with conjectured security relative to a classical oracle, improving upon the unitary oracle construction of~\cite{ITCS:NZ23}.%
\footnote{
    Note that, as observed in~\cite{ITCS:NZ23}, an unconditional security proof relative to such a classical oracle would likely require proving a classical oracle separation between the complexity classes $\QMA$ and $\QCMA$, a major open problem of Aharonov and Naveh~\cite{aharonov2002quantum}, which, despite recent progress, has evaded resolution.
}

\subsection{A Generalized Duality}

\paragraph{\fse.}
\label{sec:subspace-ext}
A major stepping stone towards our duality theorem is the idea of a \fse.  Every group representation preserves some set of invariant subspaces $\{\reg{W}_{\lambda}\}_{\lambda \in [n]}$.%
\footnote{
    That is, these subspaces are invariant under all of the unitaries $U_g$ corresponding to each group element $g\in \G$. In some cases, the only invariant subspace may be the full Hilbert space, in which case we say that it is \emph{irreducible}, but this is not generically the case. We consider here only invariant subspaces which are irreducible, and do not break down further into smaller invariant subspaces.
}
A \emph{course} Fourier measurement%
\footnote{
    Often called weak Fourier sampling in many contexts
}
of the representation
is, roughly, a projection onto these subspaces. 
We get a classical label $\lambda$ indicating the subspace we have projected onto, as well as a collapsed state, $\ket{\psi}$, within the subspace $\reg{W}_{\lambda}$.
A \emph{fine} Fourier measurement%
\footnote{
    Commonly called strong Fourier sampling
}
further measures within each of those subspaces, in a basis that depends on the algorithm.
For instance, if $\{\ket{\psi^{\lambda}_{j}}\}_{j \in \dim(\reg{W}_{\lambda})}$ is a basis for $\reg{W}_{\lambda}$, we get outcomes $\lambda$ and $j$, and collapse our state to $\ket{\psi^{\lambda}_{j}}$.%
\footnote{
    To simplify the notation, we assume here that there is no multiplicity, or degeneracy, in the irreducible representations. We will see later how to handle multiplicity.
}
In either case, the state after the measurement is still within the subspace.

In some applications, we care about the \textit{coherent} information encoded within each subspace. That is, it is not enough to know which collapsed state $\ket{\psi^{\lambda}_{j}}$ we received. 
We want to have, in our hands, the coherent superposition that appeared in the subspace.
That is, if the original state was $\sum_{j\in [\dim(\reg{W}_{\lambda})]} \alpha_{j} \ket{\psi^{\lambda}_{j}}$, we want to \emph{extract} the full superposition $\sum_{j} \alpha_{j} \ket{j}$.  
This transformation,
which we call a \emph{\se},
extracts the full state coherently from the subspace.%
\footnote{Note that such extraction is not generally an efficient transformation for arbitrary subspaces.}

If implemented na{\"i}vely, Fourier measurements do not suffice for this task. They either do not recover the information about where the state was \emph{within} each subspace (in the case of course Fourier measurement), or they recover it in a collapsed form (in the fine case).
In our work, we consider the stronger notion of a ``\emph{\fse}'', an operation that measures the subspace and \emph{coherently recovers} the encoded state.

\needspace{4\baselineskip}
\paragraph{Duality.}
We show that the implementations of representations and the implementations of their \fses are essentially computationally dual to each other.

\begin{samepage}
\begin{theorem}[Duality, \emph{informal}]
    \label{thm:duality_informal}
    Let $\G$ be a group with an efficient quantum Fourier transform%
    \footnote{
        Note that while not every group is known to have an efficient quantum Fourier transform, this does include a very wide class of groups, and includes, at the very least, all Abelian groups, as well as many important non-Abelian groups. Moreover, \emph{every} fixed-size group is technically efficient (whether Abelian or not), so this condition is only important for some families of groups whose sizes grow exponentially.
    }
    and let $\rep: \G \to U(\mathcal{H})$ be a representation of $\G$. Then the following are equivalent:
    \begin{itemize}
        \item $\rep$ has an efficient implementation, i.e. $\ket{g} \otimes \ket{\psi} \mapsto \ket{g} \otimes \rep(g) \ket{\psi}$.
        \item $\rep$ has an efficient \fse, i.e. $\ket{\psi^{\lambda}_{i, j}} \mapsto \ket{\phi_i^{\lambda}}\ket{\lambda, j}$.%
    \end{itemize}
\end{theorem}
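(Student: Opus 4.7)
The plan is to prove both directions via explicit circuit reductions whose bridge between $\mathcal{H}$ and the labeled irrep decomposition is the non-Abelian quantum Fourier transform on $\G$, which by assumption is efficient. Recall that this QFT sends $\ket{g} \mapsto \sum_\lambda \sqrt{d_\lambda/|\G|} \sum_{i,j} [\rho_\lambda(g)]_{ij}\ket{\lambda,i,j}$, and, via Schur orthogonality, it simultaneously block-diagonalizes every unitary representation of $\G$. All of the heavy lifting in the proof is to shuttle this block-diagonalization back and forth across one of the two primitives.

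For the forward direction, suppose we have an efficient circuit implementing $\ket{g}\ket{\psi}\mapsto\ket{g}\,\rep(g)\ket{\psi}$. To build the \fse on input $\ket{\psi}$, I would: (i) prepare the uniform superposition $\frac{1}{\sqrt{|\G|}}\sum_g \ket{g}$ on an ancilla register (an inverse QFT applied to the trivial-irrep basis state); (ii) apply the coherent action of $\rep$ after a classical inversion $g\mapsto g^{-1}$ on the group register, producing $\frac{1}{\sqrt{|\G|}}\sum_g \ket{g}\otimes \rep(g^{-1})\ket{\psi}$; (iii) apply the QFT on the group register. Expanding $\ket{\psi}=\sum_{\lambda,i,j}\alpha^{\lambda}_{i,j}\ket{\psi^{\lambda}_{i,j}}$ in the isotypic basis (with $i$ indexing multiplicities and $j$ indexing the basis within an irrep), Schur orthogonality collapses the sum over $g$ so that the group register becomes perfectly correlated with $(\lambda,j)$, while the residual coherence over the multiplicity index $i$, together with a maximally entangled piece between a freed index $j'$ and the irrep-basis components of $\mathcal{H}$, plays the role of $\ket{\phi^{\lambda}_i}$ in the theorem. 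A straightforward rearrangement of registers then yields precisely the target \fse.

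For the backward direction, given an efficient \fse, I would implement the coherent action of $\rep$ on $\ket{g}\ket{\psi}$ as follows: (i) apply the \fse to the $\mathcal{H}$ register to obtain $\ket{g}\otimes \sum_{\lambda,i,j}\alpha^{\lambda}_{i,j}\ket{\phi^{\lambda}_i}\ket{\lambda,j}$; (ii) conditioned on $g$, apply the block-diagonal unitary $\ket{\lambda,j}\mapsto\sum_{j'}[\rho_\lambda(g)]_{j'j}\ket{\lambda,j'}$, whose matrix entries are efficiently computable since we can efficiently QFT over $\G$; (iii) apply the inverse \fse to reassemble $\mathcal{H}$, yielding $\ket{g}\otimes \rep(g)\ket{\psi}$.

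The main obstacle I expect is the careful bookkeeping of multiplicities and the conventions (transpose/conjugate, left vs.\ right action) under which Schur orthogonality produces the clean output above; the choice to apply $\rep(g^{-1})$ rather than $\rep(g)$ in the forward reduction is exactly what lines up the indices so that the cross-terms in the $g$-sum vanish. A secondary challenge is the approximate version of the theorem: since each reduction composes a constant number of ideal unitaries with a single call to the given primitive, diamond-norm error propagates at a controlled rate, but obtaining quantitative bounds, and in particular tracking how errors in the extractor pull back through the entanglement with the multiplicity register, will require more delicate analysis, which is presumably the source of the weaker error dependence noted in the footnote relative to the AAS bound for $\Z_2$.
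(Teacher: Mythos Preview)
Your proposal is correct and matches the paper's approach essentially step for step: the forward direction is the same Fourier-sampling circuit (uniform superposition, controlled $\rep$ with an inversion, then QFT), and the backward direction is the same sandwich of $\mathcal{M}$, controlled-$\varrho_\lambda(g)$, and $\mathcal{M}^\dagger$. The one point to sharpen is your justification for step~(ii) in the backward direction: an efficient QFT does not directly give you efficient access to irrep matrix entries, but rather lets you \emph{implement} $\varrho_\lambda(g)$ by padding with a dummy multiplicity index $i=0$, applying $\mathrm{QFT}^\dagger$ to land in the regular representation, left-multiplying by $g$, and applying $\mathrm{QFT}$ again---which is exactly what the paper does (following Jordan).
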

\end{samepage}

\paragraph{Further Discussion of \FSE.}
In the above discussion, we have glossed over the possibility of \emph{multiplicity} or \emph{degeneracy}, in which the representation acts identically on several different invariant subspaces 
$\reg{W}^{\lambda}_{1}, \reg{W}^{\lambda}_{2}, \dots, \reg{W}^{\lambda}_{m}$. 
Such subspaces are degenerate in the sense that a course Fourier measurement produces the same outcome, $\lambda$, on all of them. Thus we have an additional index, $i$, that runs over this multiplicity of $\lambda$. 

We write a \fse as an isometry $\ket{\psi^{\lambda}_{i, j}} \mapsto \ket{\phi_i^{\lambda}}\ket{\lambda, j}$, where for each $\lambda$ and $i$, the states $\{\ket{\psi^{\lambda}_{i, j}}\}_{j}$ are a basis for the subspace $\reg{W}^{\lambda}_{i}$, and the state $\ket{\phi^{\lambda}_{i}}$ is an arbitrary ``junk'' state that is left behind after measuring $\lambda$ and extracting $j$. 

In order for it to be an \emph{extraction} of $j$, rather than a measurement of $j$, it is crucial that this leftover state has no dependence on $j$.
Consider a superposition $\sum_{j\in [\dim(\reg{W}^{\lambda}_{i})]} \alpha_{j} \ket{\psi^{\lambda}_{i, j}}$ over the subspace $\reg{W}^{\lambda}_{i}$. Performing this isometry yields $\sum_{j} \alpha_{j} \ket{\phi_i^{\lambda}}\ket{\lambda, j} = \ket{\phi_i^{\lambda}}\ket{\lambda} \otimes \sum_{j} \alpha_{j} \ket{j}$, which extracts the original superposition into a quantum state on the last register with those exact amplitudes. If the leftover junk state had depended on $j$, for instance if we instead had $\ket{\psi^{\lambda}_{i, j}} \mapsto \ket{\phi_{i,j}^{\lambda}}\ket{\lambda, j}$, then this would not extract the state properly. We would instead get $\sum_{j} \alpha_{j} \ket{\phi_{i,j}^{\lambda}}\ket{\lambda, j}$, where the last register is still entangled with the rest of the state, and thus has not been extracted. This is the difference between a \emph{measurement} of $j$ and an \emph{extraction} of $j$. (See \Cref{sec:fourier_extraction_tech_over} for more discussion on \fse.)

We observe that since these leftover junk states $\ket{\phi_i^{\lambda}}$ are independent of $j$---that is, they do not depend on which state we started from within the subspace $\reg{W}^{\lambda}_{i}$---we can see that these states are instead characteristic of the subspace $\reg{W}^{\lambda}_{i}$ itself. That is, the \fse collapses each subspace $\reg{W}^{\lambda}_{i}$ to a single distinct quantum state $\ket{\phi_i^{\lambda}}$, which we therefore call the ``\arch'' states of these subspaces.
Despite appearing to be just the ``junk'' that is left behind during the \fse, these \arch states are in fact quite useful.

For instance, the existence of these \arch states allows us to use a swap test to distinguish whether two quantum states are in the same subspace or different subspaces. 
Consider two states $\ket{\psi^{\lambda}_{i_1, j_1}} \in \reg{W}^{\lambda}_{i_1}$ and $\ket{\psi^{\lambda}_{i_2, j_2}} \in \reg{W}^{\lambda}_{i_2}$ that live in subspaces corresponding to the same $\lambda$, but potentially different such subspaces (that is, $\reg{W}^{\lambda}_{i_1}$ and $\reg{W}^{\lambda}_{i_2}$ are potentially different), and suppose that we wanted to test whether they in fact belong to the same subspace (that is, if $i_1 = i_2$). The ability to perform the representation \emph{does not} in general allow us to measure $i$. Intuitively, this is because both these states behave identically under the representation.
A Fourier measurement/sampling of these states would give us only $\lambda$, or both $\lambda$ and $j$, but not $i$. So how can we test if they are in the same subspace? This is in general not possible from such a measurement. However, \fse is more powerful than Fourier measurement and gives us this ability. Performing a \fse on both states gives us $\ket{\phi_{i_1}^{\lambda}} \ket{\lambda} \ket{j_1}$ for the first state and $\ket{\phi_{i_2}^{\lambda}} \ket{\lambda} \ket{j_2}$ for the second state. Now we can ignore and discard the last register---the one that indicates which state we had within each subspace---and perform a swap test only on the first register, that is between the \arch states that characterize the subspaces.
This turns out to be a crucial tool in the security proof of our quantum lightning construction (see \Cref{sec:preaction_sec_tech_over} for a discussion on this).

\paragraph{The Special Case of Abelian Groups.}
Abelian groups have the special property that all of the (irreducible) invariant subspaces are one-dimensional. Since the ``quantum state'' extracted by the \fse in this case is one-dimensional, it is actually just a complex phase. We can see that the corresponding isometry simplifies to $\ket{\psi^{\lambda}_{i}} \mapsto \ket{\phi_i^{\lambda}}\ket{\lambda}$, where we have absorbed the phase into $\ket{\phi_i^{\lambda}}$.
This is computationally equivalent to the isometry $\ket{\psi^{\lambda}_{i}} \mapsto \ket{\psi_i^{\lambda}}\ket{\lambda}$ (by copying $\lambda$ and uncomputing), which we can see is 
just the course Fourier measurement for the representation\textemdash that is, a projective measurement onto the subspaces $\reg{W}^{\lambda}$.
We therefore get the following simplified duality for Abelian groups as a special case: a duality between the efficiency of implementing the representation and that of performing a Fourier measurement,%
\footnote{
    Note that because representations of Abelian groups have only one-dimensional representations, there is no distinction between the course/weak and the fine/strong versions of Fourier measurement/sampling. Thus, we refer to it as simply Fourier measurement.
}
a projective measurement on the subspaces spanned by its invariant states.%
\footnote{
    The irreducible invariant subspaces are one-dimensional, and are thus individual quantum states.
}

\begin{corollary}[Duality for Abelian Groups, \emph{informal}]
    \label{cor:duality_abelian_informal}
    Let $\G$ be an Abelian group
    and let $\rep: \G \to U(\mathcal{H})$ be a representation of $\G$. Then the following are equivalent:
    \begin{itemize}
        \item $\rep$ has an efficient implementation, i.e. $\ket{g} \otimes \ket{\psi} \mapsto \ket{g} \otimes \rep(g) \ket{\psi}$.
        \item $\rep$ has an efficient Fourier measurement, i.e. $\ket{\psi^{\lambda}_{i}} \mapsto \ket{\psi_i^{\lambda}}\ket{\lambda}$.%
    \end{itemize}
\end{corollary}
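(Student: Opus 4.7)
The plan is to derive the corollary directly from \Cref{thm:duality_informal} via two ingredients from standard representation theory. First, every finite Abelian group admits an efficient quantum Fourier transform (by reducing to cyclic factors and applying the Cooley--Tukey/Shor-style construction), so the hypothesis of the general duality theorem is automatically satisfied for $\G$ Abelian. Second, a classical theorem asserts that every irreducible complex representation of an Abelian group is one-dimensional. Thus the invariant subspaces $\reg{W}^{\lambda}_{i}$ preserved by $\rep$ are each spanned by a single vector $\ket{\psi^{\lambda}_{i}}$, and the within-subspace index $j$ in the general \fse $\ket{\psi^{\lambda}_{i,j}}\mapsto\ket{\phi^{\lambda}_{i}}\ket{\lambda,j}$ becomes trivial and can be dropped.

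With these two observations in hand, the plan reduces to converting the simplified \fse isometry $V:\ket{\psi^{\lambda}_{i}}\mapsto\ket{\phi^{\lambda}_{i}}\ket{\lambda}$ produced by \Cref{thm:duality_informal} into the Fourier-measurement isometry $\ket{\psi^{\lambda}_{i}}\mapsto\ket{\psi^{\lambda}_{i}}\ket{\lambda}$ stated in the corollary, and vice versa. For the forward direction I would use the standard copy-and-uncompute trick: apply $V$ to obtain $\ket{\phi^{\lambda}_{i}}\ket{\lambda}$, copy the classical label $\lambda$ into a fresh register via CNOTs in the computational basis, then apply $V^{\dagger}$ to the first two registers to restore $\ket{\psi^{\lambda}_{i}}$, leaving $\ket{\psi^{\lambda}_{i}}\ket{\lambda}$ at only constant overhead. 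The reverse direction is essentially free, since the Fourier-measurement isometry is itself a valid one-dimensional \fse corresponding to the particular choice $\ket{\phi^{\lambda}_{i}} := \ket{\psi^{\lambda}_{i}}$, and this choice is a well-defined isometry precisely because the $\ket{\psi^{\lambda}_{i}}$ belonging to distinct irreducible subspaces are orthogonal.

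The only substantive bookkeeping subtlety is that in the general non-Abelian theorem there are two indices, $i$ (degeneracy among irreducible components sharing a label $\lambda$) and $j$ (position within a given irreducible), and the Abelian specialization collapses $j$ while preserving $i$. Once this distinction is kept straight, all of the representation-theoretic and algorithmic inputs are off-the-shelf, and the genuine content of the equivalence has already been carried out inside \Cref{thm:duality_informal}. Accordingly, I expect no real obstacle at this level: the corollary is a clean specialization, with the proof amounting to invoking the general duality and noting the two-line copy-and-uncompute equivalence between its conclusion and the stated Fourier-measurement form.
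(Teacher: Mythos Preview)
Your proposal is correct and matches the paper's own justification essentially line for line: the paper derives the corollary from \Cref{thm:duality_informal} by noting that Abelian irreps are one-dimensional (so the $j$ index collapses and the \fse becomes $\ket{\psi^{\lambda}_{i}}\mapsto\ket{\phi^{\lambda}_{i}}\ket{\lambda}$), and then observes that this is computationally equivalent to $\ket{\psi^{\lambda}_{i}}\mapsto\ket{\psi^{\lambda}_{i}}\ket{\lambda}$ via exactly the copy-$\lambda$-and-uncompute trick you describe. Your treatment is in fact slightly more thorough, since you explicitly note that Abelian groups have efficient QFTs (needed to invoke the general theorem) and spell out the reverse direction by identifying the Fourier measurement as a \fse with the particular archetype choice $\ket{\phi^{\lambda}_{i}}:=\ket{\psi^{\lambda}_{i}}$.
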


\subsection{Related Work}

\paragraph{Quantum Money, Lightning, etc.} There have been several attempts at constructing public-key quantum money~\cite{bennett1983quantum,CCC:Aaronson09,ITCS:FGHLS12,STOC:AarChr12,zhandry2021quantum,KanShaSil22,STOC:AGKZ20,KLS22,EC:LiuMonZha23,ITCS:Zhandry24a}. Unfortunately, a number of them later turned out to be broken~\cite{LAFGHKS09,PDFHP19,EC:Robers21,EC:LiuMonZha23}. In order to gain confidence in constructions, it is therefore important to give security proofs under computational assumptions that have received significant scrutiny from the cryptographic community. Here, the best we currently have are:
\begin{itemize}
    \item Quantum money from hidden subspaces~\cite{STOC:AarChr12}, which was proved secure assuming quantum-resistant \emph{indistinguishability obfuscation} (iO) in~\cite{zhandry2021quantum}. Unfortunately, while candidates for quantum-resistant iO are known, their status is still very much open. This scheme also only achieves quantum money, but not quantum lightning.
    \item Quantum money from random walks~\cite{ITCS:FGHLS12,EC:LiuMonZha23}, which was shown to be secure under strong quantum ``knowledge'' assumptions. Such assumptions are not ``falsifiable'', and there is some doubt about the plausibility of such assumptions~\cite{ITCS:Zhandry24a}.
    \item Quantum money from Abelian group actions~\cite{ITCS:Zhandry24a}, which is proven secure under an assumption \emph{plus} in an idealized model of group actions as a black box. 
\end{itemize}
We provide a scheme whose quantum lightning security we prove in the \emph{plain model} (i.e. without making idealized model assumptions) from a plausible and falsifiable computational assumption. We hope that our work motivates further study of the cryptographic uses of non-Abelian group actions, and in particular, of the hardness of preactions.


\paragraph{Comparison to the Duality of Aaronson, Atia, and Susskind.}
\cite{aaronson2020hardness} show that there is a duality between, on the one hand, swapping between two orthogonal states, and on the other hand, measuring the positive and negative superpositions of the two states.  As a representation, this ``swapping'' operation, together with the identity, is a representation of $\Z_2$.  The invariant subspaces of this representation are the positive and negative superpositions, with eigenvalues $+1$ and $-1$.
Our duality theorem precisely yields the duality theorem from \cite{aaronson2020hardness} as a special case when applied to $\Z_2$, and recovers the same circuits, showing that our results are a proper generalization.

\Cref{thm:duality_informal} extends far beyond $\Z_2$, and even beyond Abelian groups, to many of the non-Abelian groups that are important for cryptography.
We expect our duality theorem to be applicable to many more settings in quantum cryptography and complexity theory.
Our applications to building quantum money, lightning, and fire are just a few demonstrations of the usefulness of our theorem and techniques, and demonstrate the usefulness of considering this quantum duality in its full non-Abelian generalization.

\paragraph{Fourier Measurements and their Applications}

Fourier measurements/sampling for Abelian groups play an important role in many famous quantum speedups, including the efficient quantum algorithm for Simons problem~\cite{simon1997power}, a key subroutine in the quantum speedup of Shor's factoring algorithm~\cite{shor94algorithms}.  Fourier measurements also play an important role more generally in algorithms for hidden subgroup problems for general groups~\cite{childs2007weak}.  As a result, there has been a long line of work on developing efficient algorithms for Fourier measurements.  \cite{harrow2005applications} showed how to implement Fourier sampling for arbitrary representations of the symmetric group using an identical circuit to the implementation of \fse, which they call the generalized phase estimation algorithm.  This algorithm can be extended to 
any arbitrary group that has an efficient quantum Fourier transform.  Several works have made progress towards efficient algorithms for performing strong Fourier sampling (fine Fourier measurements) in specific bases of the symmetric group~\cite{bacon2005quantum, krovi2019efficient}.  

Fourier measurements also appear often in learning theory in the context of the symmetric group.  When treating the action of permuting $n$ identical registers as a representation of the symmetric group, Fourier sampling along the joint irreducible subspaces of the symmetric and unitary groups (which arises as a result of Schur-Weyl duality) appears as an important subroutine in many tomography algorithms~\cite{keyl2006quantum, o2016efficient, o2017efficient, haah2016sample, chen2024optimal} and compression algorithms~\cite{yang2016optimal}.

Problems of computing certain constants associated with the irreducible representations of the symmetric group (Kostka numbers, Littlewood-Richardson coefficients, and Plethysm coefficients) are thought to be problems that might be outside of $\mathsf{NP}$, but in $\mathsf{BQP}$.  Fourier measurements play an important role in designing efficient quantum algorithms for computing these numbers~\cite{larocca2024quantum}.  The Kronecker coefficients are another set of constants associated with the symmetric group whose complexity remains unknown.  Recent work~\cite{bravyi2024quantum} uses generalized phase estimation and weak Fourier measurement, to design a $\mathsf{QMA}$ protocol for computing them.  
\section{Technical Overview}

This work uses the tools of representation theory, and moreover, understanding the principles of representation theory is important for understanding both the statement of our duality theorem as well as the technical details of our quantum lightning construction and security proof.
However, since we expect that the duality theorem and our other results will be applicable to researchers from a wide variety backgrounds, we have done our best to write this technical overview with an expository flavor, by giving a high-level intuition behind the representation-theoretic tools we use and how they lead to our contributions.
For a more in-depth introduction to representation theory in the context of quantum computing, we refer to lecture notes by Andrew Childs~\cite{childs2017lecture}, accompanied by any of a number of great representation theory textbooks, e.g.~\cite{serre1977linear,steinberg2009representation}.

\subsection{Quantum Fourier Transforms}

In order to give intuition about the duality theorem and quantum lightning construction, we will first need to make a few remarks about the generalized notion of the quantum Fourier transform, specifically its generalization to representations of non-Abelian groups.  Every representation $\rep$, has a basis in which it is block-diagonal, with the blocks corresponding to its irreducible sub\-represen\-tations---these are the smallest building blocks that cannot be further decomposed.%
\footnote{
    Note that in the special case of an Abelian representation, all the unitaries commute, and we can therefore go further and fully diagonalize everything simultaneously. In that case, the resulting blocks are simply all of size one. However, for non-Abelian representations, such a simultaneous diagonalization is not possible, and the closest we can get is this block-diagonal form. 
}
In other words, in this basis, $\rep$ is a direct sum of irreducible representations $\{\irrep_{\lambda}\}_{\lambda \in \widehat{G}}$, where $\widehat{G}$ is the set of labels corresponding to each possible irreducible representation~of~$G$:
\setcounter{MaxMatrixCols}{15}
\newcommand{\blockmatrixofrep}{
    \ensuremath{
        \scalebox{0.5}{$
            \begin{pNiceMatrix}[columns-width=.25cm, cell-space-bottom-limit=.25cm]
                & \Block[draw]{3-3}{\scalebox{1.8}{$\varrho_1(g)$}} & & & \null & \null & \null & \null & \null & \null & \null & \null & \null & \null \\
                &                                 & & & \null & \null & \null & \null & \null & \null & \null & \null & \null & \null \\
                &                                 & & & \null & \null & \null & \null & \null & \null & \null & \null & \null & \null \\
                & \null & \null & \null & \Block[draw]{3-3}{\scalebox{1.8}{$\varrho_1(g)$}} & & & \null & \null & \null & \null & \null & \null & \null \\
                & \null & \null & \null &                                 & & & \null & \null & \null & \null & \null & \null & \null \\
                & \null & \null & \null &                                 & & & \null & \null & \null & \null & \null & \null & \null \\
                & \null & \null & \null & \null & \null & \null & \Block[draw]{2-2}{\scalebox{1.2}{$\varrho_2(g)$}} & & \null & \null & \null & \null & \null \\
                & \null & \null & \null & \null & \null & \null &                                 & & \null & \null & \null & \null & \null \\
                & \null & \null & \null & \null & \null & \null & \null & \null & \Block[draw]{2-2}{\scalebox{1.2}{$\varrho_2(g)$}} & & \null & \null & \null \\
                & \null & \null & \null & \null & \null & \null & \null & \null &                                 & & \null & \null & \null \\
                & \null & \null & \null & \null & \null & \null & \null & \null & \null & \null & \ddots & \null & \null \\
                & \null & \null & \null & \null & \null & \null & \null & \null & \null & \null & \null & \Block[draw]{2-2}{\scalebox{1.4}{$\varrho_k(g)$}} & & \\
                & \null & \null & \null & \null & \null & \null & \null & \null & \null & \null & \null & & & \\
            \end{pNiceMatrix}
        $}
    }
}
\ifstocproceedings  
\begin{align}
    \label{eq:block-diagonalization}
    \rep(g) 
    &
    = 
    V^{\dagger} 
    \left(
        \bigoplus_{\substack{\lambda \in \widehat{G}\\ i \in [\multlambda]}} 
              \varrho_{\lambda}(g)
    \right) 
    V
    \\
    &
    =
    V^\dagger
    \;
    \blockmatrixofrep
    \;
    V
    \,,
\end{align}
\else
\begin{align}
    \label{eq:block-diagonalization}
    \rep(g) 
    = 
    V^{\dagger} 
    \left(
        \bigoplus_{\substack{\lambda \in \widehat{G}\\ i \in [\multlambda]}} 
              \varrho_{\lambda}(g)
    \right) 
    V
    =
    V^\dagger
    \;
    \blockmatrixofrep
    \;
    V
    \,,
\end{align}
\fi
where, $\multlambda$ denotes the ``multiplicity'' of the irreducible representation $\irrep_{\lambda}$ in $\rep$, that is, the number of times it appears repeated in the block-diagonalization, and each individual block has dimension $\dimlambda$.  We note while that the block decomposition into the larger subspaces%
\footnote{
    These are called the \emph{isotypic components} of $\lambda$, which group together all the blocks that behave similarly under $\rep$. 
}
$\reg{W}^{\lambda} = \mathrm{span}\{\ket{\lambda, i, j}: i \in [\multlambda], j \in [\dimlambda]\}$ is unique, for non-Abelian groups, there is flexibility in the actual choice of the basis vectors $\ket{\lambda, i, j}$ of $\reg{W}^{\lambda}$.

This unitary transformation, $V$, which optimally block-dia\-gona\-lizes $\rep$, is the \emph{Fourier transform} $\mathrm{QFT}_{\rep}$ of the representation.%
\footnote{
    Even though this unitary may or may not be efficient, we write it as a \emph{quantum} Fourier transform, $\mathrm{QFT}_{\rep}$, instead of $\mathrm{FT}_{\rep}$ for notational consistency.
}
Although it is not common to think of this transformation as the Fourier transform, it is indeed a generalization of the familiar quantum Fourier transform:  
the plain Fourier transform of a group $G$ (without any representation specified) is simply the Fourier transform of its left-regular representation, $\regrep[G]$ (the representation defined by left multiplication on the group algebra, i.e. $\regrep[G](g) \ket{h} = \ket{gh}$). In this case, the quantum Fourier transform can be written as
\begin{equation*}
    \mathrm{QFT}_{G} 
    =
    \mathrm{QFT}_{\regrep[G]} 
    = 
    \sum_{g \in G} 
    \sum_{
        \substack{
            \lambda \in \widehat{G} \\ 
            i, j \in [\dimlambda]
        }
    } 
    \sqrt{\frac{\dimlambda}{\abs{G}}}
    \,
    \irrep_{\lambda}(g)_{j, i} 
    \ket{\lambda, i, j}
    \!\!
    \bra{g}
    \,.
\end{equation*}
We of course recover the most familiar form of the quantum Fourier transform by 
specializing further to the cyclic group $\Z_N$, 
and thus 
setting $\dimlambda = 1$ (since $\Z_N$ is Abelian) and $\irrep_{\lambda}(g) = e^{2\pi i (g \cdot \lambda) / N}$.
The basis 
$
    \{
        \ket{\lambda, i, j}
    \}_{
        \lambda \in \widehat{G}, 
        \; 
        i \in [\multlambda],
        \;
        j \in [\dimlambda]
    }
$
is the Fourier basis of the representation, with classical strings $\lambda$, $i$, and $j$ labeling the irreducible representation, multiplicity index, and basis state within each irreducible invariant subspace, respectively, corresponding to the blocks in~\Cref{eq:block-diagonalization}. 

The plain quantum Fourier transform, $\mathrm{QFT}_{G}$ of many groups can be performed efficiently (including for all constant size or polynomial-sized groups, Abelian groups~\cite{coppersmith1994fourier}, and many important non-Abelian groups such as the dihedral and symmetric groups~\cite{hoyer1997efficient,beals1997quantum,moore2006generic}). 
However, even for such groups, the \emph{representation} Fourier transform $\mathrm{QFT}_{\rep}$ of many of their representations may still be inefficient. For instance, as we will see, the Fourier transform of a \emph{cryptographic group action} is inefficient even if the Fourier transform of the underlying group is efficient.
Therefore, in order to access information about the Fourier basis of a general representation, we need more clever techniques like Fourier sampling and \fse.

\subsection{Fourier Sampling and \FSE}

Despite the lack of an efficient quantum Fourier transform for general representations, you might nevertheless want to access information about the Fourier basis (moreover, the transformed Fourier basis, 
$
    \{
        \mathrm{QFT}_{\rep}^\dagger
        \ket{\lambda, i, j}
    \}_{
        \lambda \in \widehat{G}, 
        \; 
        i \in [\multlambda],
        \;
        j \in [\dimlambda]
    }
$,
which we refer to as the Fourier basis when there is no ambiguity).
To demonstrate that this is possible and build intuition, let us start with the simple case of a representation, $\rep : G \to U(\mathcal{H})$, of an Abelian group $G$.
Say you want to perform a projective measurement in the Fourier basis of $\rep$ on some state $\ket{\psi} \in \mathcal{H}$.  Without computational considerations, we aim to implement the circuit in~\Cref{fig:fourier-sampling-ideal}, where we perform a representation Fourier transform $\mathrm{QFT}_{\rep}$ on $\ket{\psi}$, to get a superposition over classical strings encoding the Fourier basis information (in this case, just the irreducible representation label, $\lambda$), copy the classical string $\lambda$ onto an ancilla register,%
\footnote{
    While we draw this copy operation as a CNOT gate for simplicity, for Abelian groups, this copy can equivalently be seen as the group operation of the group's Pontryagin dual $\widehat{G}$.
}
and then undo the representation Fourier transform to recover the projected state.

\begin{figure}[H]
    \centering
    \begin{quantikz}
        \lstick{$\ket{0^n}$} & \qw & \targ{} & \qw  & \meter{} \\
        \lstick{$\ket{\psi}$} & \gate{\mathrm{QFT}_{\rep}} & \ctrl{-1} & \gate{\mathrm{QFT}_{\rep}^\dagger} & \qw \\
    \end{quantikz}
    \caption{The ideal functionality of a measurement in the Fourier basis of a group representation,~$\rep$. Here,~$\mathrm{QFT}_{\rep}$ is not the Fourier transform of the group, but rather the (generally inefficient) Fourier transform of the \emph{representation}.}
    \Description{Quantum circuit with an ancilla register in the state 0 to the n at the top and a state ket psi at the bottom. A representation Fourier transform is applied to the bottom wire, followed by an upward CNOT gate and an inverse representation Fourier transform. Then the top wire (the ancilla) is measured.}
    \label{fig:fourier-sampling-ideal}
\end{figure}

As is probably clear, however, the circuit in \Cref{fig:fourier-sampling-ideal} is not actually an explicit algorithm, since there is in general no efficient circuit for the representation Fourier transform $\mathrm{QFT}_{\rep}$.

However, if we look at both registers in their respective Fourier bases---the top register in the Fourier basis of the group (or equivalently, of the left-regular representation), and the bottom register in the Fourier basis of the representation $\rep$---it turns out that the upwards copy (an upwards group operation of $\widehat{G}$) looks like a downwards copy: a controlled group representation onto the bottom register. Thus, we can pull out Fourier transforms on both wires and flip the circuit diagram to get the equivalent circuit depicted in \Cref{fig:fourier-sampling}. (This is, of course, similar to, and a proper generalization of, the well-known identity
$
\scalebox{0.55}{$
\begin{quantikz}[row sep=.1cm, column sep=.3cm]
    & & \targ{} & & \midstick[2,brackets=none]{=} & \gate{H} & \ctrl{1} & \gate{H} & \\
    & \gate{H} & \ctrl{-1} & \gate{H} & & & \targ{} & &
\end{quantikz}
$}
$
on the group $\Z_2$ and its regular representation.)

\begin{figure}[H]
    \centering
    \begin{quantikz}
        \lstick{$\ket{0^n}$} & \gate{\mathrm{QFT}^\dagger} & \ctrl{1} & \gate{\mathrm{QFT}}  & \meter{} \\
        \lstick{$\ket{\psi}$} & \qw & \gate{\rep(g)^{\dagger}} & \qw & \qw \\
    \end{quantikz}
    \caption{Fourier sampling circuit for the representation $\rep$. For Abelian representations, the functionality of this circuit is identical to that of \Cref{fig:fourier-sampling-ideal}, but unlike the circuit drawn there, this circuit depicts an efficient algorithm.}
    \label{fig:fourier-sampling}
\end{figure}

This is called the Fourier sampling circuit, and for Abelian groups, it acts identically to the circuit of \Cref{fig:fourier-sampling-ideal}, but now we have moved from having gates for a \emph{representation} Fourier transform on the bottom wire in \Cref{fig:fourier-sampling-ideal} to having only gates for the Fourier transform \emph{of the plain group} on the top wire in \Cref{fig:fourier-sampling}. Now, if our group has an efficient quantum Fourier transform, and we can efficiently implement the controlled representation, then the circuit in \Cref{fig:fourier-sampling} is efficiently implementable, even while the original circuit in \Cref{fig:fourier-sampling-ideal} may not be.
Thus, while we may not be able to implement the representation's Fourier transform, we can still recover and measure some information about the Fourier basis of the representation.

\begin{remark}
    A minor comment is that general representations---even representations of Abelian groups---could have multiplicity (the generalization of the concept of degeneracy in the eigenspaces of unitaries). In this case, the same irreducible representation $\irrep_{\lambda}$ could appear $\multlambda$ different times in the block-diagonal decomposition of $\rep$. Therefore, besides the label $\lambda \in \widehat{G}$ of the irreducible representation, we have a multiplicity index $i \in [\multlambda]$ which indexes which of the irreducible invariant subspaces corresponding to $\irrep_{\lambda}$ we have. Since the representation $\rep$ does not distinguish between subspaces on which it acts identically, there can be no way to generically use it to measure $i$. This is captured by the fact that the circuit we get when we do the flipping trick above on \Cref{fig:fourier-sampling} is not in fact the circuit in \Cref{fig:fourier-sampling-ideal}, but rather the one in \Cref{fig:fourier-sampling-ideal-mult}, where the information about the label $\lambda$ of the irreducible representation is copied to the ancilla register, but the information about the multiplicity index $i$ is left untouched.

    \begin{figure}[H]
        \centering
        \begin{quantikz}[wire types={q,n,b,n},classical                 gap=0.08cm,row sep={0.5cm,between origins}]
            \lstick{$\ket{0^n}$} & \qw & & \targ{} &  & \qw  & \meter{} \\[0.5cm]
            & \gate[3]{\mathrm{QFT}_{\rep}} & \qw{\lambda} \setwiretype{q} & \ctrl{-1} &  & \gate[3]{\mathrm{QFT}_{\rep}^\dagger} & \setwiretype{n} \\
            \lstick{$\ket{\psi}$} &  & \setwiretype{n} &  &  &  & \setwiretype{b} \\
            &  & \qw{i} \setwiretype{q} &  &  &  & \setwiretype{n} \\
        \end{quantikz}
        \caption{The ideal functionality of a measurement in the Fourier basis of an Abelian representation $\rep$ that has multiplicity. The label $\lambda$ of the irreducible representation is copied up to the ancilla register, while the multiplicity index $i$ is left inaccessible. While this circuit is not efficiently implementable, its functionality is equivalent to that of the efficient Fourier sampling circuit of \Cref{fig:fourier-sampling}.}
        \label{fig:fourier-sampling-ideal-mult}
    \end{figure}
\end{remark}

\paragraph{Generalizing to Non-Abelian Groups}
When generalizing to non-Abelian groups, it must be taken into account that there is now not only the possibility of multiplicity, but also the presence of irreducible subspaces of dimensions larger than one. We therefore have a third index, $j \in [\dimlambda]$, which runs over the dimension $\dimlambda$ of the irreducible subspace. We now have a choice between two kinds of Fourier measurements: a coarser-grained version known as weak Fourier sampling, and a finer-grained version known as strong Fourier sampling (\Cref{fig:fourier-sampling-ideal-non-abelian}).

\begin{figure}[htbp]
    \centering
    \begin{subfigure}{0.45\textwidth}
        \centering
        \resizebox{\linewidth}{!}{
            \begin{quantikz}[wire types={q,n,b,n},classical                     gap=0.08cm]
                \lstick{$\ket{0^n}$} &  &  & \targ{} &    &   & \meter{} \\
                & \gate[3]{\mathrm{QFT}_{\rep}} & \qw{\lambda} \setwiretype{q} &  \ctrl{-1} &    & \gate[3]{\mathrm{QFT}_{\rep}^\dagger} & \setwiretype{n} \\
                \lstick{$\ket{\psi}$} &  & \qw{i} \setwiretype{q} &  &    &  & \setwiretype{b} \\
                &  & \qw{j} \setwiretype{q} &  &    &  & \setwiretype{n} \\
            \end{quantikz}
        }
        \caption{Ideal functionality of weak Fourier sampling}
        \label{fig:weak-fourier-sampling-ideal}
    \end{subfigure}
    \hfill
    \begin{subfigure}{0.45\textwidth}
        \centering
        \resizebox{\linewidth}{!}{
            \begin{quantikz}[wire types={q,q,n,b,n},classical                     gap=0.08cm]
                \lstick{$\ket{0^n}$} &  &  & \targ{} &  &  &   & \meter{} \\
                \lstick{$\ket{0^n}$} &  &  &  & \targ{} &  &   & \meter{} \\
                & \gate[3]{\mathrm{QFT}_{\rep}} & \qw{\lambda} \setwiretype{q} &  \ctrl{-2} &  &  & \gate[3]{\mathrm{QFT}_{\rep}^\dagger} & \setwiretype{n} \\
                \lstick{$\ket{\psi}$} &  & \qw{i} \setwiretype{q} &  &  &  &  & \setwiretype{b} \\
                &  & \qw{j} \setwiretype{q} &  &  \ctrl{-3} &  &  & \setwiretype{n} \\
            \end{quantikz}
        }
        \caption{Ideal functionality of strong Fourier sampling}
        \label{fig:strong-fourier-sampling-ideal}
    \end{subfigure}
    \caption{The ideal functionality of two kinds of measurements in the Fourier basis of a non-Abelian representation $\rep$. In both of these, the label $\lambda$ of the irreducible representation is copied up to an ancilla register, and in the strong case, the state index $j$ is copied up as well. The multiplicity index $i$ is of course always left inaccessible. Neither circuit can be instantiated efficiently, but either functionality can be implemented using the efficient Fourier sampling circuit of \Cref{fig:fourier-sampling}.}
    \label{fig:fourier-sampling-ideal-non-abelian}
\end{figure}

Either type of non-Abelian Fourier sampling is known to be implementable from the representation by using the Fourier sampling circuit of \Cref{fig:fourier-sampling}.
For many purposes, this ends up being sufficient. 
Indeed, it has long been known that an efficient implementation of the representation (and the group's quantum Fourier transform) is enough to get efficient Fourier sampling.

However, if we want an actual \emph{duality}, the reverse direction must hold as well, which unfortunately it does not: in certain cases, there may very well be a way to get even a strong Fourier sampling functionality for a representation that has no efficient implementation. 

Getting a generalized duality is therefore quite a bit more subtle.
If we take a closer look at what actually occurs in the Fourier basis of both wires of \Cref{fig:fourier-sampling} in the non-Abelian case (\Cref{fig:fourier_extraction_from_rep_ideal}), we notice something interesting. As expected, the information flow is reversed---going from the bottom to the top wire instead of top to bottom. However, while the information about the irrep label $\lambda$ is \emph{copied} up to the appropriate ancilla, the information about the state index $j$ moves up, but it is in fact not copied at all---it is extracted!

\begin{figure}[H]
    \centering
\begin{quantikz}[wire types={q,q,q,n,b,n},classical
gap=0.08cm]
\lstick{$\ket{0^n}$} & \qw & \qw & \targ{} & \permute{1, 2, 6, 4, 5, 3} & \qw & \qw & \qw \\
\makeebit{$\ket{\text{EPR}_k}$}& \qw & \qw & \qw & & \qw & \qw & \qw \\
 & \qw & \qw & \qw & \qw & \qw& \qw & \qw \\
& \gate[3]{\mathrm{QFT}_{\rep}} & \qw{\lambda}  \setwiretype{q} & \ctrl{-3} & \qw & \qw & \gate[3]{\mathrm{QFT}_{\rep}^\dagger} & \setwiretype{n} & \\
\lstick{$\ket{\psi}$} & & \qw{i} \setwiretype{q} & \qw & \qw & \qw & \qw & \qw \setwiretype{b} \\
& & \qw{j} \setwiretype{q} & \qw & \qw & \qw & \qw &  \setwiretype{n}
\end{quantikz}
    \caption{Ideal functionality that occurs for a non-Abelian representation when the circuit of \Cref{fig:fourier-sampling}, is viewed in the Fourier-transformed bases of both wires (i.e. the group Fourier transform on the top wire and the representation Fourier transform on the bottom wire).\protect\footnotemark
    { }Note that the wire corresponding to the state index $j$ is swapped out completely, such that after the second representation Fourier transform, no information about it is left over at the bottom. Thus it is not copied out, but rather \emph{extracted}.}
    \label{fig:fourier_extraction_from_rep_ideal}
\end{figure}
\footnotetext{
    For simplicity, we are only depicting the functionality of the circuit when the ancilla registers start in all zeros state (which is interpreted as the trivial irrep of the left regular representation). The effect of the circuit for general states of the ancilla registers is considerably more complicated. Also, we depict the entanglement created between the second ancilla register and the working register as a $k$-qubit EPR pair. This is just for visualization, since in reality, the EPR pair we get has dimension which depends on the dimension of the measured irrep. We omit this for simplicity.
}

\label{sec:fourier_extraction_tech_over}
The ideal functionality of \fse in general is shown in \Cref{fig:fourier_extraction_ideal}, where the irrep label $\lambda$ is copied out to an ancilla, and the state index $j$ is extracted.

\begin{figure}[H]
    \centering
\begin{quantikz}[wire types={q,q,n,b,n},classical
gap=0.08cm]
\lstick{$\ket{0^n}$} & \qw & \qw & \targ{} & \permute{1, 5, 3, 4, 2} & \qw & \qw & \qw \\
\lstick{$\ket{0^n}$} & \qw & \qw & \qw & \qw & \qw& \qw & \qw \\
& \gate[3]{\mathrm{QFT}_{\rep}} & \qw{\lambda}  \setwiretype{q} & \ctrl{-2} & \qw & \qw & \gate[3]{\mathrm{QFT}_{\rep}^\dagger} & \setwiretype{n} & \\
\lstick{$\ket{\psi}$} & & \qw{i} \setwiretype{q} & \qw & \qw & \qw & \qw & \qw \setwiretype{b} \\
& & \qw{j} \setwiretype{q} & \qw & \qw & \qw & \qw &  \setwiretype{n}
\end{quantikz}
    \caption{Ideal functionality of \fse.}
    \label{fig:fourier_extraction_ideal}
\end{figure}

\subsection{Duality between \FSE and the Representation of a Group}


The duality theorem 
\ifstocproceedings
    (\Cref{thm:duality_informal})
\else
    (\Cref{thm:duality_exact}) 
\fi
states that for all groups where the quantum Fourier transform for the left-regular representation is efficient, implementing a group representation is efficient if and only if the corresponding \fse is efficient.  In some sense, this means that quantum \fse is the ``right'' way to generalize the distinguishing task of~\cite{aaronson2020hardness} and Fourier sampling to non-Abelian groups. It is computationally equivalent to implementing the representation.

In order to prove the duality theorem, we need to provide an efficient implementation of a \fse given a representation, and vice versa.  From an implementation view-point, both of the directions of our duality can be viewed as generalizations of \cite{aaronson2020hardness} to the case of non-Abelian groups.  

The forward direction is the same as the Fourier sampling circuit we considered above, with the observation that it performs a \fse for general non-Abelian groups.
That is, given a representation $\rep: G \mapsto U(\mathcal{H})$ of a finite group $G$, we can implement a \fse of $G$ in $\mathcal{H}$ via two applications of the quantum Fourier transform on the plain group and a single application of the controlled representation, via the circuit in \Cref{fig:duality_1}.

\begin{figure}[H]
    \centering
    \begin{quantikz}
        \lstick{$\ket{0}$} & \gate{\mathrm{QFT}^{\dagger}} & \ctrl{1} & \gate{\mathrm{QFT}} & \qw \\
        \lstick{$\ket{\psi}$} & \qw & \gate{\rep(g)^{\dagger}} & \qw & \qw \\
    \end{quantikz}
    \caption{Implementing a \fse, given the ability to implement a representation $\rep$ of a group $G$.}
    \label{fig:duality_1}
\end{figure}

In the other direction, we show that given access to a \fse $\mathcal{M}_{\rep}$ (and its inverse $\mathcal{M}_{\rep}^{\dagger}$), we can implement the corresponding representation of $G$ in $\mathcal{H}$ 
via the circuit in \Cref{fig:duality_2_full}.  Here we assume that the \fse outputs three registers: one containing a label of an irreducible representation $\lambda \in \widehat{G}$, one containing the \arch $\ket{\phi^{\lambda}_{i}}$ of a \manifestation $i \in [\multlambda]$, and one containing a state index label $j \in [\dimlambda]$.  


\begin{figure}[H]
    \centering
    \ifstocproceedings
    \resizebox{\linewidth}{!}{
    \fi
        \begin{quantikz}[wire types={q,q,n,b,n},classical gap=0.08cm]
            \lstick{$\ket{g}$} & \qw & \qw & \qw & \ctrl{2} &\qw &\qw & \qw &\qw \rstick{$\ket{g}$} \\
            \lstick{$\ket{0}$} & \qw & \qw{i=0} & \gate[3]{\mathrm{QFT}^{\dagger}} & \setwiretype{n} & \gate[3]{\mathrm{QFT}} & \setwiretype{q} & \qw & \qw  \rstick{$\ket{0}$} \\
            &\gate[3]{\mathcal{M}_{\rep}} & \qw{\lambda} \setwiretype{q} &\qw &  \gate{G} &  &  &\gate[3]{\mathcal{M}_{\rep}^{\dagger}} & \setwiretype{n} \\
            \lstick{$\ket{\psi}$} & & \qw{j} \setwiretype{q} &\qw & \setwiretype{n} &  & \setwiretype{q} & \qw & \setwiretype{b} \rstick{$\rep(g) \ket{\psi}$}\\
            & & \qw{\phi_{i}^{\lambda}}\setwiretype{b} & \qw & \qw & \qw & \qw & \qw & \setwiretype{n} \\
        \end{quantikz}
    \ifstocproceedings
    }
    \fi
    \caption{Implementing a representation $\rep$ of a group $G$, given a \fse $\mathcal{M}_{\rep}$. The controlled $G$ gate is a left group operation in the group $G$.}
    \label{fig:duality_2_full}
\end{figure}

We also prove an approximate version of this duality, where $\rep$ and $\mathcal{M}$ are approximate implementations of a representation or \fse. In this case we can analyze the performance of our circuits by relating them to ``nearby'' exact representations and \fses, which are guaranteed to exist due to the work of Gowers and Hatami~\cite{gowers2016inverse}.

\subsection{Quantum Money, Lightning, and Fire from Non-Abelian Group Actions}

Next, we describe a protocol for constructing quantum money and lightning from any non-Abelian group action.  Both the minting and verification procedure mimic the implementation of \fse and Fourier sampling, except that we use the specific representation that comes from applying the group action in superposition.  For both, we first generate a uniform superposition over group elements (this is essentially the same as the first quantum Fourier transform), apply the group action in superposition, and measure the first register in the Fourier basis.  

For minting (\Cref{fig:lightning_minting}), we start with a fixed set element of the set acted on by the group. The serial number will be the measured label $\lambda$ of an irreducible representation of $G$.  We show that this produces a serial number $\lambda$ with probability proportional to the Plancherel measure of $\lambda$.

\begin{figure}[H]
    \centering
    \begin{quantikz}[classical gap=0.05cm]
    \lstick{$\sum_{g}\ket{g}$} & \ctrl{1}&  \gate{\mathrm{inv}}& \gate{\mathrm{QFT}}& \meter{}& \setwiretype{n} \cw & \cw{s = \lambda, i, j} \\
    \lstick{$\ket{x}$}& \gate{\rep(g)}& \qw& \qw& \qw & \qw & \qw{\ket{\$^{\lambda}_{i, j}}}
    \end{quantikz}
    \caption{Minting procedure in the quantum lightning scheme from non-Abelian group actions.  Here $\rep$ is a group action of group $G$ with starting element $x$, and $\mathrm{inv}$ maps $g$ to $g^{-1}$ (this inversion is simply in order to give us the desired $\rep(g)^{\dagger}$ instead of $\rep(g)$).}
    \label{fig:lightning_minting}
\end{figure}

The verification algorithm (\Cref{fig:lightning_verification}) implements the same measurement as mint, except that it starts with the claimed banknote $\ket{\pounds}$, and it checks if the measured irreducible representation label $\lambda'$ matches the claimed serial number.  

\begin{figure}[H]
    \centering
    \begin{quantikz}
    \lstick{$\sum_{g}\ket{g}$} & \ctrl{1}&  \gate{\mathrm{inv}}& \gate{\mathrm{QFT}}& \meter{}& \setwiretype{n} \cw & \cw{\lambda'} \\
    \lstick{$\ket{\pounds}$}& \gate{\rep(g)}& \qw& \qw& \qw & \qw&
    \end{quantikz}
    \caption{Verification procedure in the quantum lightning scheme from non-Abelian group actions.  The verification algorithm measures an irreducible representation label $\lambda'$ and accepts if $\lambda'$ is equal to $\lambda$ from the serial number. Note that this circuit may modify the banknote even if it is valid. However, the resulting modified banknote will also be valid for the same serial number, and furthermore, the circuit can be uncomputed to return the original banknote.}
    \label{fig:lightning_verification}
\end{figure}

\subsubsection{Security of the Scheme from Preactions}
\label{sec:preaction_sec_tech_over}

The security of the scheme is based on a new assumption we call ``preaction security'' 
\ifstocproceedings
(see the full version of the paper for the formal statement).
\else
(see \Cref{assum:preaction-indist} for the formal statement).
\fi
Given a group action of $G$ on set $X$, and a starting set element $x \in X$, the preaction of a group element $h \in G$ maps $g * x$ to $gh^{-1} * x$, and the preaction security assumption is that it is hard to distinguish the case when a random action has been applied to a register from the case when a random preaction and action (which we call a biaction) have been applied to the register.  

Quantum lightning security requires it to be inefficient for any adversary to produce two banknote states corresponding to the same serial number. 
To gain intuition for why preaction security implies the security of our quantum lightning scheme, we need to think about what a preaction does to a state in the Fourier basis.  Intuitively, in the Fourier basis of the left-regular representation, acting by a group element $g$ moves a basis state $\ket{\lambda, i, j}$ to a different vector in the same \manifestation, i.e. $\sum_{k} \alpha_{k} \ket{\lambda, i, k}$.  On the other hand, a preaction acts by moving a vector in a particular \manifestation to a superposition over \manifestations, i.e. $\sum_{k} \alpha_{k} \ket{\lambda, k, j}$.  
Because (in general) preactions are hard to implement, an adversary cannot simply measure the multiplicity label $i$.
This means that with only a single copy of the quantum money state (which is a Fourier basis state), it is difficult to determine if this multiplicity label changes.  However, an adversary \emph{can} perform \fse, which allows them to extract an ``\arch'' state $\ket{\phi^{\lambda}_i}$ that depends only on $i$.  

Imagine a quantum lightning adversary that is given two copies of a quantum lightning state with the \emph{same} serial number. 
Assume for simplicity that they are also in the same \manifestation (of course, we do not make this assumption in our proof, but the intuition is much clearer for this case).
The adversary can apply the random action or biaction to the first of the two registers, perform \fse on both, and do a swap test on the archetype states to see if they are equal.  A random action, which does not change the \manifestation, will not change the archetype state that results from the \fse, and thus the swap test will succeed with probability $1$.  A random preaction, on the other hand, will move the first register to a random \manifestation, and therefore pass the swap test with probability close to $\frac{1}{2}$ (the exact probability depends on the structure of the group's irreps).  More care is needed when the quantum lightning states produced by the adversary fall into different \manifestations, but we leave out the details here.  Thus, this adversary breaks preaction security, and, by contradiction, if we start with a preaction secure group action, our quantum lightning scheme is secure.

\subsubsection{Instantiation with the Group Action of the McEliece Cryptosystem}

While we are able to construct quantum money and lightning from any group action satisfying preaction security, we also propose a few concrete group actions from which to instantiate it.
A notable example that we believe may satisfy the necessary conditions is based on a group action implicit in the McEliece cryptosystem~\cite{Mceliece78}.

Our group action will represent the symmetric group $S_n$ over the set $X$ of $n \times t$ (for $t = \poly(n)$) matrices with entries from $\mathbb{F}_{q^m}$.  Given a matrix $M$ and permutation $\sigma$, $\sigma * M$ is the matrix where we first permute the columns of $M$ according to $\sigma$, and then row-reduce the resulting matrix.  The starting element of the group action will be the parity check matrix of a Goppa code~\cite{goppa1970new}.  In particular, we sample elements $g_1, \ldots, g_t \in \mathbb{F}_{q^{m}}$, which defines a degree-$t$ univariate polynomial $g(z) = \sum_{i = 1}^{t} g_t z^t$, and $\alpha_1, \ldots, \alpha_n \in \mathbb{F}_{q^{m}}$ satisfying $g(\alpha_{j}) \neq 0$.  Then the starting element corresponding to the choice of $g_1, \ldots, g_t$ and $\alpha_1, \ldots, \alpha_n$ is the following
\begin{equation*}
    \ifstocproceedings
    \scalebox{0.65}{$
    \fi
    \begin{pmatrix}
        g_t g(\alpha_1)^{-1} & g_t g(\alpha_2)^{-1} & \ldots & g_t g(\alpha_n)^{-1}\\
        (g_t \alpha_1 + g_{t-1})g(\alpha_1)^{-1} & (g_t \alpha_2 + g_{t-1})g(\alpha_2)^{-1} & \ldots & (g_t \alpha_n + g_{t-1})g(\alpha_n)^{-1}\\
        \vdots & \vdots & \ddots & \vdots\\
        (g_t \alpha_1^{t-1} + g_{t-1} \alpha_{1}^{t-2} \ldots + g_1) g(\alpha_1)^{-1} & \ldots & \ldots & (g_t \alpha_n^{t-1} + \ldots + g_1) g(\alpha_n)^{-1}
    \end{pmatrix}
    \ifstocproceedings
    $}
    \fi
    \,.
\end{equation*}
It is commonly assumed that for certain distributions over $g$ and $\alpha$, these matrices are indistinguishable from a uniformly random matrix even to quantum algorithms~\cite{singh2019code}, and that applying a random column permutation \emph{and} a random $n \times n$ matrix from the right yields a matrix that is indistinguishable from a uniformly random matrix even to adversaries who know the starting element, however these do not directly imply the preaction security of this group action.  
We conjecture that preaction security holds for this group action, and show that under this assumption, instantiating our framework with this group action yields a concrete secure quantum lightning in the plain model.

\subsubsection{Quantum Fire}

As part of this work, we also provide a framework for constructing quantum fire in the plain model. We show conditions under which these states are efficiently clonable, but we leave it as an open question to prove that the scheme satisfies untelegraphability.  The construction of quantum fire is similar to the construction of quantum lightning and money from non-Abelian group actions, however we make use of the insight that applying a controlled group operation between registers can be used to ``copy'' information about the irreducible representation to allow for efficient cloning of the states in an inaccessible Fourier basis.

We now describe the construction.  Let $G$ and $H$ be two groups and $f: G \mapsto H$ be a one-way injective homomorphism between the two (i.e. $f(g) f(h) = f(gh)$).%
\footnote{We require that $f$ is one-way because otherwise there is a way to telegraph the quantum fire states we describe, so the one-wayness is necessary for security of the scheme, although it is not clear that it is sufficient. }  
We can define a representation of $G$ by letting $\rep(g)\ket{h} = \ket{f(g) \cdot h}$.  
Then we can spark (the fire equivalent of minting) and verify quantum fire using the same circuits as in the quantum lightning construction, but with this representation.
\begin{figure}[H]
    \centering
    \begin{quantikz}[classical gap=0.05cm]
    \lstick{$\sum_{g}\ket{g}$} & \ctrl{1}&  \qw & \gate{\mathrm{QFT}}& \meter{}& \setwiretype{n} \cw & \cw{s = \lambda, i, j} \\
    \lstick{$\ket{\id_{H}}$}& \gate{\rep(g)}& \qw& \qw& \qw & \qw & \qw{\ket{\phi^{\lambda}_{i, j}}}
    \end{quantikz}
    \caption{Sparking procedure in the quantum fire scheme from a one-way injective homomorphism.  Here $\rep(g)$ acts on elements of $H$ by mapping $h \mapsto f(g) \cdot h$, where $f$ is the homomorphism, and $\id_{H}$ is the identity element of $H$.  Verification only checks that the label $\lambda$ is the same, as in the quantum lightning scheme.}
    \label{fig:fire_sparking}
\end{figure}
To clone states of this form, we apply the following circuit, which will produce two quantum fire registers $\sum_{k \in [\dimlambda]} \ket{\phi^{\lambda}_{i, k}} \ket{\phi^{\lambda}_{k, j}}$ with an entangled value of $k \in [\dimlambda]$.
\begin{figure}[H]
    \centering
    \begin{quantikz}[classical gap=0.05cm]
        \lstick{$\ket{\phi^{\lambda}_{i, j}}$} & & \gate{*_{H}} & & \rstick{$\ket{\phi^{\lambda}_{i, k}}$}\\
        \lstick{$\sum_{h \in H_f} \ket{h}$} & \gate{\mathrm{inv}} & \ctrl{-1} & \gate{\mathrm{inv}} & \rstick{$\ket{\phi^{\lambda}_{k, j}}$}
    \end{quantikz}
    \label{fig:quantum_fire_cloning}
    \caption{Cloning a quantum fire state.  Here $H_f$ is the image of the one-way injective homomorphism, $\mathrm{inv}$ is an operation that maps $h \in H$ to $h^{-1} \in H$, and $*_{H}$ is the group operation of $H$.}
    \label{fig:fire_cloning}
\end{figure}

While this process produces an entangled state between the two registers, we note that it is not difficult to then unentangle the registers if we wish. This is done by performing a \fse on both states to extract the entangled indices $\sum_{k \in [\dimlambda]} \ket{k} \ket{k}$. We can then perform the reverse operation to inject a new pair of registers in the state $\ket{j} \ket{i}$, to get the tensor product state $\ket{\phi^{\lambda}_{i, j}} \otimes \ket{\phi^{\lambda}_{i, j}}$, that is, two identical copies of the original state.

Thus, the registers first become entangled, before then becoming unentangled again. 
This is thus an efficient cloning operation, that surprisingly does not go through any classical description of the state in question. 
While we leave open the problem of showing untelegraphability for this scheme under certain assumptions about $G$, $H$, or $f$, the fact that this method of cloning these states is inherently an entangling operation justifies that it should not be possible to do across a classical channel.
\ifstocproceedings
We encourage the reader to view the full version of the paper for more details.
\fi

\section{Open Problems and Future Directions}

We discuss directions for future work here.

\begin{enumerate}
    \item Can \fse be used as a subroutine for other problems?  For example, Fourier measurements play an important role in algorithms for the Abelian hidden subgroup problem.  For the dihedral hidden subgroup problem it is known that the outcome of Fourier \emph{measurements} on coset states does not yield sufficient information about the hidden subgroup, indicating that this is not the right way to generalize algorithms for the Abelian hidden subgroup problem.  Perhaps another generalization, such as \fse, is the right way to generalize such algorithms to non-Abelian groups.
    \item Can we construct other fully-quantum primitives using preaction secure group actions?  There are many other primitives, like quantum copy-protection for certain classes of functions~\cite{CCC:Aaronson09, aaronson2021new, coladangelo2024quantum}, one-shot signatures~\cite{STOC:AGKZ20}, or quantum state obfuscation~\cite{bartusek2023obfuscation, coladangelo2024use, bartusek2024quantum} for which we have constructions relative to oracles and sometimes post-quantum indistinguishability obfuscation, but no plausible candidates in the plain model otherwise.  Can we use non-Abelian group actions to find plausible candidate constructions for these primitives, and preaction security to prove their security?
    \item As written, the preaction security game is not efficiently verifiable.  In particular, the challenger in the game must themselves apply a preaction, which should be inefficient if the security holds.  One could define a notion of trap-door preaction security, where the challenger has a secret key that allows them to efficiently perform a preaction, while any adversary without the key cannot distinguish between a random biaction and random action. However, we leave open the task of identifying good candidates for such trapdoor preaction security.
    \item Can the untelegraphability of our quantum fire scheme be proven under any computational assumption about the groups $G$ and $H$.  We note that any such proof may also provide (under the same assumptions) a separation between $\mathsf{QMA}$ and $\mathsf{QCMA}$, as noted in \cite{ITCS:NZ23}.  
    \item The forward direction (going from the representation to Fourier extraction) of our approximate generalized duality theorem, when applied to $\mathbb{Z}_2$, yields a worse distinguishing advantage than the duality theorem of \cite{aaronson2020hardness}.  This is because the implementation of an approximation of a representation might be off of the true representation by a phase that can depend on the input state.  In the case of $\mathbb{Z}_2$, since there are only two states that the representation acts on, the phase can be corrected by applying a global phase corresponding to the difference of the two phases.  However, it is not clear how to generalize this strategy past $\mathbb{Z}_2$, leading to our approximate duality not being tight.  Can a tight version of our approximate duality theorem be proven?
\end{enumerate}

\section{Preliminaries}

\subsection{Quantum Preliminaries}

A register $\repspace$ is a named finite-dimensional Hilbert space.  When two registers appear next to each other, as in $\reg{AB}$, this refers to the tensor product space of $\reg{A}$ and $\reg{B}$.  We write $\tr(\cdot)$ to denote the trace, and $\tr_{\reg{B}}(\cdot)$ to denote the partial trace over a register $\reg{B}$.  We denote by $\|X\|_1 = \tr(|X|)$ the trace norm, where $|X| = \sqrt{XX^{\dagger}}$.  For a vector space $V$, we write $\mathrm{GL}(V)$ to denote the general linear group from $V$ to itself, i.e. invertible square matrices, and $U(V)$ to denote the unitary group. For two matrices in $\mathrm{GL}(V)$, we define the Hilbert-Schmidt inner product as follows.

\begin{definition}[Hilbert-Schmidt inner product]
    Let $A, B \in \mathrm{GL}(\repspace)$, then we define the \emph{Hilbert-Schmidt inner product} between $A$ and $B$ to be
    \begin{equation*}
        \langle A, B \rangle = \frac{1}{\dim(\repspace)} \tr\left[AB^{\dagger} \right]\,. 
    \end{equation*}
    This implies a norm in the natural way: $\|A\| = \sqrt{\langle A, A \rangle}$.
\end{definition}

\subsection{Representation Theory}

\begin{definition}[Representation]
Let $G$ be a finite group.  Then a function $\rep: G \mapsto U(\repspace)$ is a representation of $G$ if the following holds for all group elements $g, h \in G$:
\begin{equation*}
    \rep(g) \rep(h) = \rep(gh)\,.
\end{equation*}
The vector space $\repspace$ is called a representation space of $G$.  We note that representations need not be defined over Hilbert spaces (they can be defined over any vector space), but we will only ever consider representations that output unitaries in Hilbert spaces. We use the notation $\dim(\rep)$ to denote the dimension of the representation space $\repspace$.  
\end{definition}

We will also need a notion of a function being ``almost'' a representation.  The following is the definition of an $\epsilon$-approximate representation (in Hilbert-Schmidt norm), taken from \cite{gowers2016inverse}.
\begin{definition}[$\epsilon$-approximate representation~\cite{gowers2016inverse}]\label{def:eps-approx-rep}
    Let $G$ be a group, and $\rep: G \mapsto U(\repspace)$ be a function taking group elements to unitaries over $\repspace$.  $\rep$ is a $\epsilon$-approximate representation if the following holds:
    \begin{equation*}
        \avg_{g, h \in G}\left[\Re \left\langle\rep(g)^{\dagger}\rep(h), \rep(g^{-1}h)^{\dagger}\right\rangle\right] \geq 1 - \epsilon\,.
    \end{equation*}
    Here $\Re$ denotes the real component of a complex number. 
\end{definition}

We use the following additional definition of an $\epsilon$-close representation, which is the notation of being close to an exact representation of a group, up to an isometry $V$.
\begin{definition}[$\epsilon$-close representation]
    Let $G$ be a group and $\rep: G \mapsto U(\repspace)$ be a function taking group elements to unitaries over $\repspace$.  We say that $\rep$ is $\epsilon$-close to a representation of $G$ if there exists another representation of $G$, $\otherrep: G \mapsto U(\otherrepspace)$ and an isometry $V: \repspace \mapsto \otherrepspace$ such that
    \begin{equation*}
        \avg_{g \in G} \left\| \rep(g) - V^{\dagger} \otherrep(g) V \right\|^2 \leq \epsilon\,.
    \end{equation*}
\end{definition}

We will also need some definitions and facts from character theory.  A reference for these can be found in, e.g.~\cite{serre1977linear}.
\begin{definition}[Irreducible representation]
    A representation $\rep: G \mapsto \mathrm{GL}(\repspace)$ is an \emph{irreducible representation} of $G$ if for all subspaces $\reg{W} \subset \repspace$, $\rep(g)\reg{W} \not\subseteq \reg{W}$.  We sometimes refer to $\repspace$ as the irreducible representation of $G$. Irreducible representations are often called ``\emph{irreps}''.
\end{definition}

We will use the notation $\irrep$ to denote irreducible representations (when it is clear from context), and $\rep$ to denote representations in general.

\begin{definition}[Dual of a group]
    The \emph{dual} of a group $G$, denoted $\widehat{G}$, is the set of all irreducible representations of $G$, up to equivalence by a unitary transformation. For an Abelian group, $\widehat{G}$ will itself have a group structure, but this is not generally the case for non-Abelian groups.
\end{definition}

We will use the notation $\lambda \in \widehat{G}$ to denote a ``label'' of an irreducible representation (think: a string that uniquely determines the identity of a particular irrep), and $\varrho_{\lambda}$ denote the corresponding representation (the function from group elements to unitaries), although we may refer to both as an irreducible representation when it is clear from context.

\begin{lemma}[Size of the dual]
    The size of the dual, $\widehat{G}$, of a group is equal to the number of conjugacy classes of $G$. In particular, for a finite group $G$, $\widehat{G}$ is also finite.
\end{lemma}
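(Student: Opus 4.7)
The plan is to prove this classical fact via a dimension count on the center of the group algebra $\mathbb{C}[G]$, computing $\dim Z(\mathbb{C}[G])$ in two different ways.

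First I would set up the group algebra: let $\mathbb{C}[G]$ be the vector space with basis $\{e_g\}_{g \in G}$ equipped with the multiplication $e_g \cdot e_h = e_{gh}$ extended linearly. Next I would invoke the fact that since $G$ is finite, $\mathbb{C}[G]$ is a semisimple algebra (this is Maschke's theorem), and apply the Artin--Wedderburn decomposition to obtain
\begin{equation*}
    \mathbb{C}[G] \;\cong\; \bigoplus_{\lambda \in \widehat{G}} M_{\dimlambda}(\mathbb{C}),
\end{equation*}
where the summands correspond exactly to the equivalence classes of irreducible representations, and $\dimlambda$ is the dimension of $\irrep_\lambda$. This is essentially the content of the block-diagonalization of the regular representation already displayed in~\Cref{eq:block-diagonalization}, with the isotypic component of $\lambda$ carrying multiplicity $\dimlambda$.

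Now I would compute $\dim Z(\mathbb{C}[G])$ two ways. On the algebra-decomposition side, the center of $M_{\dimlambda}(\mathbb{C})$ consists exactly of scalar multiples of the identity, so it is one-dimensional, which gives $\dim Z(\mathbb{C}[G]) = |\widehat{G}|$. On the direct side, an element $\sum_{g} a_g\, e_g \in \mathbb{C}[G]$ is central if and only if $e_h \left(\sum_g a_g e_g\right) e_{h^{-1}} = \sum_g a_g e_g$ for every $h \in G$, which after relabeling is equivalent to $a_{hgh^{-1}} = a_g$ for all $g,h$. Hence the central elements are precisely the class functions on $G$ (in the group-algebra sense), whose dimension equals the number of conjugacy classes of $G$. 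Equating the two expressions yields $|\widehat{G}|$ equals the number of conjugacy classes. The ``in particular'' clause follows immediately since a finite group has only finitely many conjugacy classes.

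The main obstacle is not the counting itself but rather the appeal to Maschke's theorem and Artin--Wedderburn, which are nontrivial structural results. Since these are entirely standard and the excerpt already cites~\cite{serre1977linear} for background character theory, I would simply reference them rather than reproving them. A fully self-contained alternative would instead build the projection onto each isotypic component using character orthogonality and then verify these projections form a basis of $Z(\mathbb{C}[G])$ indexed by $\widehat{G}$, but this duplicates standard material so I would prefer the short Wedderburn-based argument.
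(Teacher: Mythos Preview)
Your argument is correct and is one of the standard textbook proofs of this fact: compute $\dim Z(\mathbb{C}[G])$ once via Artin--Wedderburn (one scalar per matrix block, hence $|\widehat{G}|$) and once directly (class functions, hence the number of conjugacy classes).

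However, the paper does not actually prove this lemma. It is stated in the preliminaries as a standard background result, with the surrounding text pointing the reader to Serre~\cite{serre1977linear} for such facts. So there is no ``paper's own proof'' to compare against; your write-up simply supplies a proof where the paper chose to cite one. If you want to match the paper's treatment, you would just state the lemma and defer to the reference. If you want to include a proof, yours is fine as written; the alternative you mention (character orthogonality giving the central idempotents as a basis of $Z(\mathbb{C}[G])$) is equally standard and avoids invoking Wedderburn by name, but there is no real gain either way for a preliminaries section.
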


\begin{definition}[Character]    
    Let $\rep: G \mapsto \mathrm{GL}(\repspace)$ be a representation of $G$.  We define the \emph{character} of $\rep$ to be 
    \begin{equation*}
        \chi_{\rep}(g) = \tr[\rep(g)]\,.
    \end{equation*}
\end{definition}

\begin{definition}[Inner product of characters]
    Let $\chi_{\rep}$ and $\chi_{\otherrep}$ be two characters, then we define their inner product to be
    \begin{equation*}
        \braket{\chi_{\rep} | \chi_{\otherrep}} = \frac{1}{|G|}\sum_{g \in G} \chi_{\rep}(g) \chi_{\otherrep}^{\dagger}(g)\,.
    \end{equation*}
\end{definition}

\begin{lemma}[Irreps are norm $1$]
    For every irreducible representation of a group $G$, the following holds
    \begin{equation*}
        \braket{\chi_{\varrho} | \chi_{\varrho}} = 1\,.
    \end{equation*}
\end{lemma}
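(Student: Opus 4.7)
My plan is to prove this as a direct consequence of Schur's lemma, which is the standard route through character theory. The overall strategy is to first establish Schur's orthogonality relations for the matrix coefficients of $\varrho$, and then sum the diagonal to obtain the statement about characters.

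The first step is the averaging argument. For any linear operator $A$ on the representation space $\repspace$, consider the averaged operator
\begin{equation*}
    \widetilde{A} = \frac{1}{|G|} \sum_{g \in G} \varrho(g)\, A\, \varrho(g)^{-1}.
\end{equation*}
A quick calculation, using that $g \mapsto hg$ is a bijection on $G$, shows that $\widetilde{A}$ commutes with $\varrho(h)$ for every $h \in G$. Since $\varrho$ is irreducible and we are working over $\mathbb{C}$, Schur's lemma forces $\widetilde{A} = \mu_A \cdot I$ for some scalar $\mu_A$, and taking the trace yields $\mu_A = \tr(A)/\dim(\varrho)$.

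The second step is to specialize. Apply the identity $\widetilde{A} = \tr(A)/\dim(\varrho)\cdot I$ to the matrix unit $A = E_{ij}$, then read off the $(k,l)$ entry. This produces Schur's orthogonality relation for matrix coefficients:
\begin{equation*}
    \frac{1}{|G|}\sum_{g \in G} \varrho(g)_{ki}\,\varrho(g^{-1})_{jl} = \frac{\delta_{ij}\delta_{kl}}{\dim(\varrho)}.
\end{equation*}
The third and final step is to recover the character norm. Using unitarity of $\varrho(g)$, we have $\chi_{\varrho}(g^{-1}) = \overline{\chi_{\varrho}(g)}$, so $\langle \chi_{\varrho} | \chi_{\varrho}\rangle = \frac{1}{|G|}\sum_g \chi_{\varrho}(g)\chi_{\varrho}(g^{-1})$. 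Expanding each character as a sum of diagonal matrix entries and applying the Schur relation above with $i=k$ and $j=l$ gives a sum of $\delta_{ij}/\dim(\varrho)$ over all $i,j \in [\dim(\varrho)]$, which collapses to $\dim(\varrho)/\dim(\varrho) = 1$.

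The only real input to the proof is Schur's lemma itself, which I will cite from a standard reference such as \cite{serre1977linear}; everything else is book-keeping. Thus I do not anticipate any genuine obstacle here. The mildly delicate point, if anything, is the invocation of Schur's lemma in the complex setting (one needs $\mathbb{C}$-linearity and algebraic closedness to conclude that a commuting operator on an irreducible module is a scalar), but this is automatic in our setting because all representations are taken to be over $\mathbb{C}$-Hilbert spaces and $G$ is finite.
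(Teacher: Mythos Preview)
Your proposal is correct and is exactly the standard textbook derivation via Schur's lemma and the matrix-coefficient orthogonality relations. The paper does not actually prove this lemma; it is stated without proof in the preliminaries as a standard fact of character theory, with a reference to \cite{serre1977linear}, and the Schur orthogonality relations themselves are likewise quoted (without proof) as a separate lemma. So your argument is not just consistent with the paper's approach---it supplies precisely the proof the paper defers to the cited references.
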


\begin{lemma}[Decomposition into irreps]
\label{lem:decomposition_into_irreps}
    Let $\rep$ be a representation of a group $G$ with representation space $\repspace$, and let $\dimlambda$ be $\braket{\chi_{\rep}, \chi_{\varrho_{\lambda}}}$.  Then the following holds: 
    \begin{equation*}
        \repspace \simeq \bigoplus_{\lambda \in \widehat{G}} \reg{W}_{\lambda}^{\oplus \dimlambda}\,.
    \end{equation*}
    Where $\reg{W}_{\varrho_{\lambda}}$ is the irreducible representation space of $\varrho_{\lambda}$.  Furthermore, the decomposition into $\reg{W}^{\oplus \dimlambda}$ is unique, the decomposition into further subspaces depends on the choice of basis. In the basis of $\bigoplus_{\lambda} \reg{W}_{\lambda}^{\oplus \dimlambda}$, $\rep(g)$ looks like:
    \begin{equation*}
        \sum_{\lambda \in \widehat{G}} \Pi_{\reg{W}_{\lambda}} \varrho_{\lambda}(g) \Pi_{\reg{W}_{\lambda}}\,.
    \end{equation*}
    Here $\Pi_{\reg{W}_{\lambda}}$ is the projector onto $\reg{W}_{\lambda}$.
\end{lemma}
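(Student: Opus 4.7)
The plan is to prove this in the standard way for representations of finite groups, combining Maschke's theorem (complete reducibility) with the orthogonality of characters. Since we are only considering unitary representations on Hilbert spaces, the "Maschke" step comes essentially for free: if $\reg{W} \subseteq \repspace$ is any invariant subspace, then its orthogonal complement $\reg{W}^{\perp}$ is also invariant under $\rep(g)$, because each $\rep(g)$ is unitary. Iterating this observation, I would decompose $\repspace = \bigoplus_{k} V_k$ into a direct sum of mutually orthogonal irreducible invariant subspaces. Then I would group together those $V_k$ that are isomorphic as $G$-representations, obtaining $\repspace \simeq \bigoplus_{\lambda \in \widehat{G}} \reg{W}_{\lambda}^{\oplus m_{\lambda}}$ for some non-negative integers $m_{\lambda}$.

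The next step is to identify $m_{\lambda}$ with $\braket{\chi_{\rep} \mid \chi_{\varrho_{\lambda}}}$. Here I would use that character is additive under direct sum, so $\chi_{\rep} = \sum_{\mu} m_{\mu} \chi_{\varrho_{\mu}}$. Then I would appeal to the first Schur orthogonality relation for characters, namely $\braket{\chi_{\varrho_{\mu}} \mid \chi_{\varrho_{\lambda}}} = \delta_{\mu,\lambda}$ for irreducible $\varrho_{\mu}, \varrho_{\lambda}$, which is a standard consequence of Schur's lemma applied to the averaging operator $\frac{1}{|G|}\sum_{g} \varrho_{\mu}(g) \otimes \overline{\varrho_{\lambda}(g)}$. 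Expanding the inner product then yields $\braket{\chi_{\rep} \mid \chi_{\varrho_{\lambda}}} = \sum_{\mu} m_{\mu} \delta_{\mu,\lambda} = m_{\lambda}$, matching the definition of $\dimlambda$ in the lemma.

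For the uniqueness statement, I would characterize the isotypic component $\repspace_{\lambda} := \reg{W}_{\lambda}^{\oplus m_{\lambda}}$ intrinsically, as the sum of \emph{all} irreducible subrepresentations of $\repspace$ isomorphic to $\varrho_{\lambda}$ (equivalently, as the image of the projector $\frac{\dim(\varrho_{\lambda})}{|G|} \sum_{g} \overline{\chi_{\varrho_{\lambda}}(g)}\, \rep(g)$). This description makes $\repspace_{\lambda}$ independent of the choice of basis, so the decomposition $\repspace = \bigoplus_{\lambda} \repspace_{\lambda}$ is canonical, while the internal decomposition into the individual $\reg{W}_{\lambda}$ summands involves a choice of basis in a multiplicity space of dimension $m_{\lambda}$. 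Finally, the block-diagonal form $\rep(g) = \sum_{\lambda} \Pi_{\reg{W}_{\lambda}} \varrho_{\lambda}(g) \Pi_{\reg{W}_{\lambda}}$ is immediate from the construction: by invariance, $\rep(g)$ preserves each $\repspace_{\lambda}$, and on each irreducible summand it is unitarily conjugate to $\varrho_{\lambda}(g)$.

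The main obstacle is really just invoking Schur's lemma and the orthogonality of irreducible characters, which are the technical heart of the result; everything else is orthogonal projection and bookkeeping. Since these are classical facts (referenced via \cite{serre1977linear}), I would cite them rather than reprove them, and reserve detail for the decomposition and uniqueness arguments, since the statement of the lemma is intended as a preliminary fact used later in the paper.
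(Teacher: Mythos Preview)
Your proposal is correct and aligns with the paper's treatment: the paper states this lemma as a standard preliminary fact from representation theory without proof, citing \cite{serre1977linear} as a reference, and your outline is exactly the classical argument one would find there. Your instinct to cite rather than reprove Schur's lemma and character orthogonality is precisely what the paper does.
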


\begin{definition}[\ifmanifestation Manifestations and multiplicity
        \else
            Multiplicity 
        \fi
        of an irreducible representation]
    The number of irreducible representation spaces corresponding to the same irrep $\lambda$ in $\rep$
    is its \emph{multiplicity}, which we denote as $\multlambda^{\rep}$ (or $\multlambda$ when $\rep$ is clear from context). 
    The 
    \ifmanifestation
    \manifestations%
    \footnote{It is common to refer to the different irreducible representation subspaces $\reg{W}^{\lambda}_i$ on which the representation $\rep$ acts as irrep $\lambda$ as ``copies'' of of the irreducible representation. We prefer the word ``\manifestations'' to avoid confusion later with the notion of copies of a state due to cloning.}
    \else
    \manifestations
    \fi
    of an irreducible representation $\lambda$ within a representation $\rep$, are the subspaces $\reg{W}_{\lambda}$ that are irreducible representation spaces of $\lambda$.
    The direct sum of all of the \manifestations of $\lambda$ is the course Fourier subspace, also known as the \emph{isotypic component} of $\lambda$. In fact, the breakdown of the isotypic component of $\lambda$ into its \manifestations is basis-dependent and not unique.
\end{definition}

\begin{definition}[Right/left regular representation]
    The left regular representation of a group $G$ is the following function.
    \begin{equation*}
        \mathcal{L}(h) = \sum_{g \in G} \ket{hg}\!\!\bra{g}\,.
    \end{equation*}
    The right regular representation of a group $G$ is the following function.
    \begin{equation*}
        \mathcal{R}(h) = \sum_{g \in G} \ket{gh^{-1}}\!\!\bra{g}\,.
    \end{equation*}
\end{definition}

\begin{lemma}
    For all groups $G$ and all irreducible representations of $G$, the following holds
    \begin{equation*}
        \braket{\chi_{\mathcal{L}} | \chi_{\varrho}} = \dim(\varrho)\,.
    \end{equation*}
    and similarly for the right regular representation.
\end{lemma}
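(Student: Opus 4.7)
The plan is to directly compute the character $\chi_{\mathcal{L}}$ of the left regular representation and plug it into the inner product formula. The key observation is that $\mathcal{L}(h) = \sum_g \ket{hg}\!\bra{g}$ is a permutation matrix on the basis $\{\ket{g}\}_{g \in G}$, so its trace is the number of $g$ such that $hg = g$, i.e., the number of fixed points of left multiplication by $h$. This count is $|G|$ when $h$ is the identity $e$, and $0$ otherwise (since $hg = g$ forces $h = e$ in a group). Thus
\[
    \chi_{\mathcal{L}}(h) = |G|\cdot \mathbf{1}[h = e]\,.
\]

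With this in hand, I would expand the defining inner product:
\[
    \braket{\chi_{\mathcal{L}} \mid \chi_{\varrho}}
    = \frac{1}{|G|} \sum_{g \in G} \chi_{\mathcal{L}}(g)\, \chi_{\varrho}^{\dagger}(g)
    = \frac{1}{|G|} \cdot |G| \cdot \overline{\chi_{\varrho}(e)}
    = \overline{\chi_{\varrho}(e)}\,.
\]
Since $\varrho(e) = I_{\reg{W}_{\varrho}}$, we have $\chi_{\varrho}(e) = \tr(I_{\reg{W}_{\varrho}}) = \dim(\varrho)$, which is a positive integer and in particular equal to its own complex conjugate. This yields $\braket{\chi_{\mathcal{L}} \mid \chi_{\varrho}} = \dim(\varrho)$, as claimed. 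The argument does not actually use irreducibility of $\varrho$; it holds for any representation, and specializing to irreps simply gives the stated identity.

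For the right regular representation $\mathcal{R}(h) = \sum_g \ket{gh^{-1}}\!\bra{g}$, the same analysis applies: $gh^{-1} = g$ iff $h = e$, so $\chi_{\mathcal{R}}(h) = |G|\cdot \mathbf{1}[h = e]$ and the computation is identical. There is no substantive obstacle here — the only thing to be careful about is the complex conjugation in the inner product definition, which is vacuous because $\chi_{\varrho}(e) = \dim(\varrho)$ is real. This proof is entirely standard and short; the main role of the lemma is as a building block (together with Schur orthogonality and Lemma~\ref{lem:decomposition_into_irreps}) for the decomposition of the regular representation into irreps with multiplicities equal to their dimensions.
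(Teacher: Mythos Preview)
Your proof is correct and matches the paper's approach: the paper states this lemma as a standard preliminary fact without proof, but immediately afterward explicitly notes that ``the character of the right (or left) regular representation is equal to $|G|$ at the identity, and $0$ elsewhere,'' which is exactly the computation you carry out. There is nothing to add.
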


Using this fact, together with the fact that the character of the right (or left) regular representation is equal to $|G|$ at the identity, and $0$ elsewhere, we have:
\begin{lemma}
\label{lem:sum_square_dimension}
Let $G$ be a finite group, then the following holds.
\begin{equation*}
    \sum_{\lambda \in \hat{G}} \dim(\varrho_{\lambda})^2 = |G|\,.
\end{equation*}
\end{lemma}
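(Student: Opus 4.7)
The plan is to use the left regular representation $\mathcal{L}$ as the bridge, since the preceding results give us direct handles on both its dimension and its decomposition into irreducibles. First I would observe that $\mathcal{L}$ acts on the group algebra $\C[G] \simeq \C^{|G|}$, so $\dim(\mathcal{L}) = |G|$. This gives one expression for $|G|$ in terms of $\mathcal{L}$.

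Next I would compute $\dim(\mathcal{L})$ a second way, by decomposing $\mathcal{L}$ into irreducibles using \Cref{lem:decomposition_into_irreps}. Concretely, that lemma tells us
\begin{equation*}
    \mathcal{H}_{\mathcal{L}} \;\simeq\; \bigoplus_{\lambda \in \widehat{G}} \reg{W}_{\lambda}^{\oplus m_{\lambda}^{\mathcal{L}}},
\end{equation*}
where the multiplicity of irrep $\lambda$ inside $\mathcal{L}$ is $m_{\lambda}^{\mathcal{L}} = \langle \chi_{\mathcal{L}} \,|\, \chi_{\varrho_{\lambda}} \rangle$. By the lemma stated immediately before this one, this multiplicity is precisely $\dim(\varrho_{\lambda})$. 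Taking dimensions of both sides of the decomposition then gives
\begin{equation*}
    |G| \;=\; \dim(\mathcal{L}) \;=\; \sum_{\lambda \in \widehat{G}} m_{\lambda}^{\mathcal{L}} \cdot \dim(\varrho_{\lambda}) \;=\; \sum_{\lambda \in \widehat{G}} \dim(\varrho_{\lambda})^2,
\end{equation*}
which is exactly the claim.

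The text hints at a slightly different route using character values directly: note that $\chi_{\mathcal{L}}(g) = |G|$ when $g$ is the identity $e$ (since $\mathcal{L}(e)$ is the identity on $\C^{|G|}$) and $\chi_{\mathcal{L}}(g) = 0$ otherwise (since left multiplication by $g \neq e$ is a fixed-point-free permutation on $G$). Evaluating the decomposition $\chi_{\mathcal{L}} = \sum_{\lambda} \dim(\varrho_{\lambda}) \, \chi_{\varrho_{\lambda}}$ at $g = e$ and using $\chi_{\varrho_{\lambda}}(e) = \dim(\varrho_{\lambda})$ yields the same identity. I would likely include both derivations, or at least mention the character-theoretic one, since the statement of the preceding lemma about $\chi_{\mathcal{L}}(e)$ is explicitly invoked in the surrounding text.

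There is no real obstacle here; the entire argument is an immediate consequence of material already assembled, and the only "content" is recognizing that the multiplicity-equals-dimension fact for $\mathcal{L}$ combined with the obvious dimension count of $\C[G]$ forces the sum-of-squares identity.
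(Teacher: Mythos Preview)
Your proposal is correct and matches the paper's approach exactly: the paper states the lemma as an immediate consequence of the preceding fact $\langle \chi_{\mathcal{L}} \mid \chi_{\varrho} \rangle = \dim(\varrho)$ together with the observation that $\chi_{\mathcal{L}}$ is $|G|$ at the identity and $0$ elsewhere, which is precisely your second (character-theoretic) derivation. Your first derivation via dimension counting of the decomposition is an equivalent rephrasing of the same content.
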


\begin{definition}[Plancherel measure]
    The Plancherel measure is a probability distribution over irreducible representations of a group $G$.  The Plancherel measure of an irreducible representation with label $\lambda$ is given by 
    \begin{equation*}
        \mu(\lambda) = \frac{\dim(\varrho_{\lambda})^2}{|G|}\,.
    \end{equation*}
\end{definition}

We can see that this corresponds to selecting an irreducible representation according to its ``weight'' in the sum of \cref{lem:sum_square_dimension}. 
A concept we will be interested in is the maximum Plancherel measure of any irreducible representation of the group. 
For example, for the symmetric group, upper and lower bounds are given by the following lemma.
\begin{lemma}[Plancherel measure of the symmetric group~\cite{vershik1985asymptotic}]
The following inequalities hold for constants $c_0 = 0.2313$ and $c_1 = 2.5651$
\begin{equation*}
    e^{-\frac{c_1}{2}\sqrt{n}} \sqrt{n!} \leq \max_{\lambda \in \widehat{S_n}} \dim(\varrho_{\lambda}) \leq e^{-\frac{c_0}{2}\sqrt{n}} \sqrt{n!}\,.
\end{equation*}
\end{lemma}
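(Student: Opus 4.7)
The plan is to work with the standard identification of irreducible representations of $S_n$ with partitions $\lambda \vdash n$ (Young diagrams), under which $\dim(\varrho_\lambda) = f^\lambda$, the number of standard Young tableaux of shape $\lambda$. The two key classical tools I will use are the sum-of-squares identity $\sum_{\lambda \vdash n} (f^\lambda)^2 = n!$ (which is just $\Cref{lem:sum_square_dimension}$ specialized to $S_n$, since $|S_n| = n!$) and the hook length formula $f^\lambda = n!/\prod_{c \in \lambda} h(c)$, combined with an estimate on $p(n)$, the number of partitions of $n$.

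First I would prove the lower bound, which is the easier direction. Let $p(n)$ denote the number of partitions of $n$; by the Hardy--Ramanujan asymptotic, $p(n) \leq e^{c_1\sqrt{n}}$ for some explicit constant (the constant in the statement is $c_1 = 2.5651 \approx \pi\sqrt{2/3}$). Then from the identity $\sum_\lambda (f^\lambda)^2 = n!$ and the trivial bound that a sum of at most $p(n)$ nonnegative terms is at most $p(n)$ times its maximum, we get
\begin{equation*}
    p(n) \cdot \max_{\lambda \vdash n}(f^\lambda)^2 \;\geq\; \sum_{\lambda \vdash n} (f^\lambda)^2 \;=\; n!\,,
\end{equation*}
whence $\max_\lambda f^\lambda \geq \sqrt{n!/p(n)} \geq e^{-c_1\sqrt{n}/2}\sqrt{n!}$, matching the claimed lower bound.

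For the upper bound, the trivial bound $f^\lambda \leq \sqrt{n!}$ is too weak by the factor $e^{-c_0\sqrt{n}/2}$, so a finer Young-diagram analysis is needed. The approach I would take is via the limit-shape theory of Vershik--Kerov--Logan--Shepp: one considers the Plancherel probability $\mu(\lambda) = (f^\lambda)^2 / n!$ and shows that $\mu$ is spread out over at least $e^{c_0\sqrt{n}}$ partitions, so that the maximum mass $\mu(\lambda^\ast) = (f^{\lambda^\ast})^2/n!$ is at most $e^{-c_0\sqrt{n}}$. Concretely, one applies the hook length formula, rescales the Young diagram by $1/\sqrt{n}$, and converts $\log f^\lambda = \log n! - \sum_c \log h(c)$ into a Riemann-sum approximation of the hook integral $I(\omega) = -\iint_{\lambda} \log h(x,y)\,dx\,dy$ over continuous Young-diagram shapes $\omega$. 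The Vershik--Kerov variational argument shows $I(\omega) \geq c_0$ for every shape $\omega$ of unit area, with equality only at the limit shape; translating back gives $\log f^\lambda \leq \tfrac12 \log n! - \tfrac{c_0}{2}\sqrt{n} + o(\sqrt{n})$, which is exactly the upper bound.

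The main obstacle is establishing the variational lower bound $I(\omega) \geq c_0$ uniformly over all admissible continuous Young-diagram shapes, since this requires both identifying the correct functional to minimize (essentially the entropy-like quantity associated with the Plancherel measure) and solving the resulting variational problem; the rest of the argument—the hook-length discretization error and the passage from $\mu(\lambda^\ast)$ back to $f^{\lambda^\ast}$—is routine. Given the depth of this analysis, in the write-up I would cite \cite{vershik1985asymptotic} (or the more accessible treatment in the textbook of Kerov) for this bound, rather than reproducing the full variational argument.
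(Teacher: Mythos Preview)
The paper does not prove this lemma; it is stated as a citation to \cite{vershik1985asymptotic} and used as a black box. Your outline is a correct sketch of how the result is actually established in that reference (pigeonhole against $p(n)$ for the lower bound, the Vershik--Kerov--Logan--Shepp limit-shape variational argument for the upper bound), and you yourself conclude by deferring the hard step to the same citation, so there is nothing to compare.
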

Thus, the maximum Plancherel measure of an irreducible representation of the symmetric group, $S_n$, is $e^{-c_0 \sqrt{n}}$, which is negligible in $n$.  

\begin{lemma}[Schur orthogonality relations~\cite{issai1905neue}]
    \label{sec:schur-orthogonality}
    Let $\varrho, \sigma \in \widehat{G}$ be irreducible representations of $G$. Then we have that:
    \begin{align*}
           \sum_{g\in \G} \; \varrho(g)_{i, j}^*\;\sigma(g)_{k, \ell} =
           \frac{\abs{G}}{\dim(\varrho)}
           \delta_{\varrho, \sigma} \delta_{i, k}\delta_{j, \ell}
           \,.
    \end{align*}
\end{lemma}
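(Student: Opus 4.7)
The plan is to prove this classical orthogonality identity via Schur's lemma: any linear map that intertwines two irreducible representations is either zero (if they are inequivalent) or a scalar multiple of the identity (if they are equivalent). Although Schur's lemma is not stated explicitly in the preliminaries above, it is the standard cornerstone of character theory and can be cited from any representation theory reference such as \cite{serre1977linear}.

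The main step is to package the left-hand side as a matrix entry of an appropriately constructed intertwiner. For an arbitrary linear map $E$ from the representation space of $\sigma$ to that of $\varrho$, I would define
\begin{equation*}
    T_E = \sum_{g \in G} \varrho(g^{-1}) \, E \, \sigma(g).
\end{equation*}
A short calculation, reparametrizing the sum by $g \mapsto g h^{-1}$, shows that $\varrho(h) T_E = T_E \sigma(h)$ for every $h \in G$, so $T_E$ intertwines $\sigma$ and $\varrho$. Specializing $E$ to the elementary matrix $E^{(i,k)}$ with a single $1$ in position $(i,k)$ and zeros elsewhere, the $(j,\ell)$-entry of $T_E$ becomes
\begin{equation*}
    (T_E)_{j,\ell} = \sum_{g \in G} \varrho(g^{-1})_{j,i} \, \sigma(g)_{k,\ell},
\end{equation*}
which equals the sum in the lemma after applying the unitarity rewrite $\varrho(g)^*_{i,j} = \varrho(g^{-1})_{j,i}$.

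Now Schur's lemma finishes the job. If $\varrho$ and $\sigma$ are inequivalent, then $T_E = 0$, so the sum vanishes, accounting for the $\delta_{\varrho,\sigma}$ factor. If $\varrho = \sigma$, then $T_E = cI$ for some scalar $c$. Taking traces and using the cyclic property together with unitarity gives $\mathrm{tr}(T_E) = \sum_g \mathrm{tr}(E^{(i,k)}) = \abs{G}\, \delta_{i,k}$, while $\mathrm{tr}(cI) = c\, \dim(\varrho)$, so $c = \abs{G}\, \delta_{i,k} / \dim(\varrho)$. Reading off the $(j,\ell)$ entry then produces exactly the right-hand side of the lemma.

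The main obstacle is essentially only bookkeeping: keeping the complex conjugate, the $g \mapsto g^{-1}$ substitution, and the row/column index conventions consistent so that the elementary matrix trick and the unitarity rewrite combine to give precisely the form $\varrho(g)^*_{i,j}\, \sigma(g)_{k,\ell}$ claimed in the lemma. Once conventions are fixed, the argument is essentially a one-shot application of Schur's lemma.
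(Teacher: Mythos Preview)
Your argument is the standard textbook proof via the averaging-intertwiner construction and Schur's lemma, and it is correct. The paper, however, does not prove this lemma at all: it is stated in the preliminaries as a classical result with a citation to Schur's original work \cite{issai1905neue}, and is then used freely throughout (e.g., in \Cref{lemma:money-orthonormal} and the security proofs). So there is nothing to compare against; your proposal simply supplies a proof where the paper deliberately omits one.
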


\subsubsection{Quantum Fourier Transform}

Now we define the quantum Fourier transform for non-Abelian groups. 
\begin{definition}[Quantum Fourier transform]
    Let $\dimlambda$ be the dimension of $\varrho_{\lambda}$ for every irreducible representation of a group $G$.  The quantum Fourier transform over a general group $G$ is the following unitary transformation 
    \begin{equation*}
        \mathrm{QFT}_{G} = \sum_{g \in G}\sum_{\substack{\lambda \in \widehat{G}, \\i, j \in [\dimlambda]}} \sqrt{\frac{\dimlambda}{|G|}} \; \varrho_{\lambda}(g)_{j, i} \ket{\lambda, i, j}\!\!\bra{g}\,.
    \end{equation*}
    Its inverse is 
    \begin{equation*}
        \mathrm{QFT}_{G}^{\dag} = \sum_{g \in G}\sum_{\substack{\lambda \in \widehat{G}, \\i, j \in [\dimlambda]}} \sqrt{\frac{\dimlambda}{|G|}} \; \varrho_{\lambda}(g^{-1})_{i, j} \ket{g}\!\!\bra{\lambda, i, j}\,.
    \end{equation*}
    We will often refer to either one as the quantum Fourier transform over $\G$, and it will be clear from context which one we mean.
\end{definition}

We note that for Abelian groups, every irreducible representation is dimension $1$, so the sum over $i, j$ goes away, and we recover the usual Abelian quantum Fourier transform.

\begin{definition}[Fourier basis states]
    For a group $\G$, let $\{\ket{\mathcal{L}^{\lambda}_{ij}}\}_{\substack{\lambda \in \widehat{G}, \; i, j \in [\dim(\varrho_{\lambda})]}}$, where $\ket{\mathcal{L}^{\lambda}_{ij}} := \sqrt{\frac{\dimlambda}{|G|}} \sum_{g \in G} \varrho_{\lambda}(g^{-1})_{i, j} \ket{g}$, be the basis recovered by applying $\mathrm{QFT}_{G}^{\dag}$ to $\{ \ket{\lambda, i, j} \}_{\substack{\lambda \in \widehat{G}, \; i, j \in [\dim(\varrho_{\lambda})]}}$. We call this the \emph{(left-regular) Fourier basis} of $\G$.
\end{definition}

\subsection{Fourier Measurements}

\subsubsection{Coarse Fourier Measurement}
\begin{definition}[Coarse Fourier measurement]
    The coarse Fourier measurement\footnotemark{} is the measurement of the subspaces corresponding to the irreducible representations, but not the basis of the subspaces. Formally, for a group $G$ and representation $\rep$, the coarse Fourier measurement is given by the POVM
    \begin{equation*}
        \left\{\Pi_{\reg{W}_{\lambda}^{\oplus \multlambda}}\right\}_{\lambda \in \widehat{G}}\,.
    \end{equation*}
    Here the decomposition into unique subspaces $\reg{W}_{\lambda}^{\oplus \multlambda}$ is given by \cref{lem:decomposition_into_irreps}.
\end{definition}

Performing a measurement using the coarse Fourier measurement to produce a random irreducible representation label is known in the literature as \emph{weak Fourier sampling}.  

\subsubsection{Fine Fourier Measurement}

\begin{definition}[Fine Fourier measurement]
    Let $G$ be a finite group and $\rep$ be a representation of that group.  Let $\reg{W}_{\lambda}$ be an irreducible representation space of $G$, let $\multlambda = \braket{\chi_{\rep}|\chi_{\varrho_{\lambda}}}$, and $\{\ket{\psi^{\lambda}_{i, j}}\}_{i \in [\dim(\varrho_{\lambda})], j \in [\multlambda]}$ be a basis for the subspace $\reg{W}_{\lambda}^{\oplus \multlambda}$.  Then the fine Fourier measurement\hyperref[foot:fourier-meas-samp]{\footnotemark[\value{footnote}]} is given by the POVM
    \begin{equation*}
        \{\proj{\psi^{\lambda}_{i, j}}\}_{\lambda, i, j}\,.
    \end{equation*}
\end{definition}
Performing a measurement using the fine Fourier measurement (for any choice of basis) is known in the literature as \emph{strong Fourier sampling}.
\footnotetext{
    \label{foot:fourier-meas-samp}
    The literature often refers to course and fine Fourier measurements as \emph{weak} and \emph{strong} Fourier sampling, respectively. We prefer to use the course and fine terminology, capturing how coarse- or fine-grained the decomposition of the space, but we will use the terms interchangeably.
}

\subsubsection{\FSE}

For our purposes, we require a stronger notion than Fourier \emph{measurement}. We introduce a stronger notion called \emph{\fse}. Unlike a Fourier measurement which measures and outputs a classical value for each irreducible representation space $\reg{W}^{\lambda}_{i}$, \fse extracts a coherent quantum state out of each $\reg{W}^{\lambda}_{i}$, maintaining the original superposition within $\reg{W}^{\lambda}_{i}$ but expressing it in the standard basis.

\begin{definition}[\fse]
    Let $G$ be a finite group and $\rep$ be a representation of that group.
    A \emph{\fse} is a coarse projective measurement $\{\Pi_\lambda\}_{\lambda \in \widehat{G}}$\textemdash 
    where each $\Pi_\lambda$ projects onto 
    $\reg{W}^{\oplus \dimlambda}_{\lambda} := \bigoplus_{i \in [\multlambda]} \reg{W}^{\lambda}_{i}$,
    the union of the \manifestations 
    of $\lambda$%
    \textemdash 
    and a \se within each subspace $\reg{W}^{\lambda}_i$.
    Specifically, let each $\reg{W}^{\lambda}_i$ have basis $\{\ket{\psi^{\lambda}_{i, j}}\}_{
    j \in [\dim(\varrho_{\lambda})]}$. Then a \fse implements a unitary
    \begin{equation*}
        \mathcal{M}: \ket{\psi^{\lambda}_{i, j}} \ket{0} \mapsto \ket{\phi^{\lambda}_{i}} \ket{\lambda, j}\,,
    \end{equation*}
    for some orthonormal set of ``\emph{archetype}'' states $\{\ket{\phi^{\lambda}_{i}}\}_{\lambda \in \widehat{G}, \, i \in [\multlambda]}$.%
    \footnote{Note that the form of the archetype states does not matter. The only requirement is that they are orthonormal so that $\mathcal{M}$ is an isometry. That is, $\braket{\psi^{\varrho}_{ij} \;|\; \psi^{\sigma}_{k\ell}} = \delta_{\varrho\sigma} \delta_{ik} \delta_{j\ell}$.}
\end{definition}

\subsection{Group Actions}
A group action is a representation of a group that appears often in the field of cryptography.  Formally, it is define as follows
\begin{definition}[Group action]
    A group action consists of a family of groups $G = (G_{n})_{n}$, a family of sets $\mathcal{X} = (\mathcal{X}_{n})_{n}$, and a binary operation $*: G_{n} \times \mathcal{X}_{n} \mapsto \mathcal{X}_{n}$ satisfying the following properties
    \begin{itemize}
        \item \textbf{Identity:} Let $\id \in G$ be the identity element, then $0 * x = x$ for all $x \in \mathcal{X}_{n}$.
        \item \textbf{Representation:} For all $g, h \in G_{n}$ and $x \in \mathcal{X}_{n}$, $gh * x = g * (h * x)$.
    \end{itemize}
    We sometimes require the following additional properties.
    \begin{itemize}
        \item \textbf{Efficiently computable:} There is a quantum polynomial-time algorithm that on input $1^{n}$ outputs a description of $G_{n}$ and an element $x_{n} \in \mathcal{X}_{n}$.  The binary operation $*$ is also computable by a quantum polynomial-time algorithm.
        \item \textbf{Efficiently recognizable:} There is a quantum polynomial-time algorithm such that for any $n$ and string $y$, the algorithm accepts with probability at least $2/3$ if $y \in \mathcal{X}_{n}$ and rejects with probability at least $2/3$ if $y \not \in \mathcal{X}_{n}$. 
        \item \textbf{Transitive:} There is exactly one orbit. That is, for any two elements $x, y \in \mathcal{X}_{n}$, exists a $g \in G_{n}$ such that $y = g * x$.
        \item \textbf{Semiregular:} (also called ``free'') There are no fixed points. That is, for every $g \in G_{n}$ and $y \in \mathcal{X}_{n}$, $g * x = x$ implies that $g = \id$.
        \item \textbf{Regular:} Regular group actions are both transitive and semiregular. That is, for every $y \in \mathcal{X}_{n}$, there is exactly one $g \in G_{n}$ such that $y = g * x_{n}$.
    \end{itemize}
\end{definition}

Later in the paper, we will describe additional properties of group actions that will be useful in proving security of cryptographic primitives constructed from group actions.

\begin{definition}[Orbits of a group action]
    The \emph{orbit} of an element $x\in\mathcal{X}$ is the set of elements accessible from $x$ by acting with $\G$:
    \begin{align*}
        \mathsf{Orb}(x) = \{y \;|\; \exists g \in \G \; \textbf{ s.t. } y = g*x \}
        \,.
    \end{align*}
\end{definition}

One important property of group actions is they are representations on the Hilbert space spanned by the elements of $\mathcal{X}$.
\begin{definition}[Group Action Representation]
    A group action of $G$ defines a representation of $G$ by the following unitary:
    \begin{equation*}
        \rep(h) \ket{g * x} = \ket{hg * x}\,.
    \end{equation*}
    Note that this representation is isomorphic to a direct sum of left-regular representations on the different orbits of the group action.
\end{definition}

\subsection{Quantum Money and Quantum Lightning}

Now we define public-key quantum money and quantum lightning.  Both primitives have the similar syntax, with differences in how their key generation works.

\begin{definition}[Public-key quantum money~\cite{CCC:Aaronson09}]
A public-key quantum money scheme is a triple of efficient quantum algorithms $\mathcal{S} = (\keygen,\mint,\ver)$ where 
\begin{itemize}
    \item $\keygen$ takes as input the security parameter $1^n$ and outputs a private/public key pair $(\sk,\pk)$, 
    \item $\mint(\sk)$ outputs a pair $(s,\ket{\$^s})$ where $s$ is a string representing a serial number and $\ket{\$^s}$ is a quantum state representing a banknote,%
    \footnote{
        We will refer to these states interchangeably as either quantum money states or banknotes.
    }
    and
    \item $\ver$ takes as input the public key $\pk$, a serial number $s$, and an alleged banknote $\sigma$, and either accepts or rejects.
\end{itemize}
A public-key quantum money scheme $\mathcal{S}$ satisfies correctness if for all $n \in \mathbb{N}$, 
\[
    \Pr \left [ \ver(\pk, s, \ket{\$^s}) \text{ accepts} : \begin{array}{c}
(\sk,\pk) \leftarrow  \keygen(1^n) \\
(s,\ket{\$^s}) \leftarrow \mint(\sk) 
\end{array}
   \right ] \geq 1 - \negl(n)\,.
\]

\end{definition}

\begin{definition}[Quantum money security]
    A public-key quantum money scheme $\mathcal{S}$ satisfies \emph{$\epsilon$-quantum-money security} if for all efficient adversaries $A$, the success probability of $A$ in the counterfeit security game (\Cref{prot:quantum_money_security}) is at most $\epsilon(n)$.
\end{definition}

\begin{longfbox}[breakable=false, padding=1em, margin-top=1em, margin-bottom=1em]
\begin{algo}[{\bf Public-key Quantum Money Counterfeit Security Game}]\label{prot:quantum_money_security}
\end{algo}
\begin{enumerate}
    \item Generate $(\sk,\pk) \leftarrow \keygen(1^{n})$, $(s, \ket{\$^s})\leftarrow \mint(\sk)$ and send $(\pk, s, \ket{\$^s})$ to the adversary.
    \item Adversary returns two registers $\reg{AB}$ in some potentially entangled state $\sigma_{\reg{AB}}$.
    \item Run $\ver(\pk, s, \sigma_{\reg{A}})$ and $\ver(\pk, s, \sigma_{\reg{B}})$.  If either check rejects, then reject, otherwise accept.
\end{enumerate}
\end{longfbox}

In place of full public-key quantum money schemes, we will often make use of quantum money \emph{mini-schemes}, simpler objects that can be upgraded to public-key quantum money schemes using digital signatures~\cite{STOC:AarChr12}. Because of this effective equivalence, when it is clear from context, we will also often refer to quantum money mini-schemes as public-key quantum money.

\begin{definition}[Quantum money mini-scheme~\cite{STOC:AarChr12}]
A quantum money scheme is a pair of efficient quantum algorithms $\mathcal{S} = (\mint,\ver)$ where 
\begin{itemize}
    \item $\mint(1^\lambda)$ outputs a pair $(s,\ket{\$^s})$ where $s$ is a string representing a serial number and $\ket{\$^s}$ is the banknote,
    and
    \item $\ver$ takes as input a serial number $s$ and an alleged banknote $\sigma$, and either accepts or rejects.
\end{itemize}
\end{definition}

The security is similar to that of full public-key quantum money setting:

\begin{longfbox}[breakable=false, padding=1em, margin-top=1em, margin-bottom=1em]
\begin{algo}[{\bf Quantum Money Mini-Scheme Counterfeit Security Game}]\label{prot:quantum_money_mini_scheme_security}
\end{algo}
\begin{enumerate}
    \item Run $(s, \ket{\$^s})\leftarrow \mint(1^{n})$ and send $(s, \ket{\$^s})$ to the adversary.
    \item Adversary returns two registers $\reg{AB}$ in some potentially entangled state $\sigma_{\reg{AB}}$.
    \item Run $\ver(s, \sigma_{\reg{A}})$ and $\ver(s, \sigma_{\reg{B}})$.  If either check rejects, then reject, otherwise accept.
\end{enumerate}
\end{longfbox}
\begin{definition}[Quantum money mini-scheme security]
    A quantum money mini-scheme scheme $\mathcal{S}$ satisfies \emph{$\epsilon$-quantum-money security} if for all efficient adversaries $A$, the success probability of $A$ in the counterfeit security game (\Cref{prot:quantum_money_mini_scheme_security}) is at most $\epsilon(n)$.
\end{definition}

Quantum lightning is a stronger security guarantee on quantum money in which \emph{not even the mint} can produce two banknotes for the same serial number~\cite{zhandry2021quantum}.

\begin{longfbox}[breakable=false, padding=1em, margin-top=1em, margin-bottom=1em]
\begin{algo}[{\bf Quantum Lightning Counterfeit Security Game}]\label{prot:quantum_lightning_security}
\end{algo}
\begin{enumerate}
    \item On input $1^n$, adversary returns a serial number $s$ and two registers $\reg{AB}$ in some potentially entangled state $\sigma_{\reg{AB}}$.
    \item Run $\ver(s, \sigma_{\reg{A}})$ and $\ver(s, \sigma_{\reg{B}})$.  If either check rejects, then reject, otherwise accept.
\end{enumerate}
\end{longfbox}
\begin{definition}[Quantum lightning security~\cite{zhandry2021quantum}]
    A quantum money mini-scheme scheme $\mathcal{S}$ satisfies \emph{$\epsilon$-quantum-lightning security} if for all efficient adversaries $A$, the success probability of $A$ in the counterfeit security game (\Cref{prot:quantum_lightning_security}) is at most $\epsilon(n)$.
\end{definition}

In each of the definitions, when $\epsilon$ is a negligible function in $n$, we say the scheme satisfies ``strong'' security.

\section{Duality Theorem}
In this section we present our main theorem, a computational duality between implementing a group representation and implmenting a \fse. We first present the exact case in \Cref{sec:duality_exact}. Then, in \Cref{sec:duality_approx}, we show how to generalize it to the case of approximate representations and \fse$\mskip-\thickmuskip$s.

\subsection{Exact Case}
\label{sec:duality_exact}

Now we prove an exact version of the duality theorem, where the implementation of the representation and the \fse are perfect.

\begin{theorem}[Quantum Duality for Representations of Groups, Exact Case]
\label{thm:duality_exact}
    Let $G$ be a finite group with an efficient quantum Fourier transform. Let $\rep: G \to U(\repspace)$ be a unitary representation of $G$, which decomposes into irreducible representations $\{(\lambda, W^{\lambda}_i)\}_{\lambda \in \widehat{G}, i \in [\multlambda]}$. 
    Then the following are equivalent:
    \begin{enumerate}
        \item There exists a quantum circuit, $C_{\rep}$, of size $s_{\rep}$, that implements the representation $\rep$. That is, 
        it implements the unitary
        \begin{align*}
            \ket{g} \otimes \ket{\psi} \mapsto \ket{g} \otimes \rep(g) \ket{\psi}
        \end{align*}
        for all $g \in G$ and all $\ket{\psi} \in \repspace$.
        \label{item:impl_rep}
        
        \item There exists a quantum circuit, $C_{\mathcal{M}}$, of size $s_{\mathcal{M}}$,
        that implements a \fse, $\mathcal{M}$, on the Fourier subspaces 
        $\{W^{\lambda}_i\}_{\lambda \in \widehat{G}, i \in [\multlambda]}$.
        Specifically, let each $W^{\lambda}_i$ have basis $\{\ket{\psi^{\lambda}_{i,j}}\}_{
        j \in [\dim(\varrho_{\lambda})]}$. Then $C_{\mathcal{M}}$ implements
        \begin{equation*}
            \mathcal{M}: \ket{\psi^{\lambda}_{i, j}} \ket{0} \mapsto \ket{\phi^{\lambda}_{i}} \ket{\lambda, j}\,,
        \end{equation*}
        for some orthonormal set of ``archetype'' states $\{\ket{\phi^{\lambda}_{i}}\}_{\lambda \in \widehat{G}, \, i \in [\multlambda]}$.%
        \footnote{Note that the form of the archetype states does not matter. The only requirement is that they are orthonormal so that $\mathcal{M}$ is an isometry.}

        \label{item:impl_meas}
    \end{enumerate}
    
    Going from \Cref{item:impl_rep} to \Cref{item:impl_meas}, we have that $s_{\mathcal{M}}$ 
    is
    $O(s_{\rep} + s_{\mathrm{QFT}})$, where $s_{\mathrm{QFT}}$ is the circuit complexity of implementing the quantum Fourier transform of $G$. In the other direction, we have that 
    $s_{\rep} = O(s_{\mathcal{M}} + s_{\mathrm{QFT}})$.
\end{theorem}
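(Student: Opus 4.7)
The plan is to prove both implications constructively, by exhibiting the two circuits sketched in Figures~\ref{fig:duality_1} and~\ref{fig:duality_2_full} of the technical overview and verifying their correctness. The stated circuit-size bounds drop out immediately once each construction is in hand, since each direction adds only a constant number of plain group QFTs plus one application of the object being assumed.

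For the forward direction (\Cref{item:impl_rep} $\Rightarrow$ \Cref{item:impl_meas}), I would trace the input basis state $\ket{\psi^{\lambda}_{i,j}}\ket{0}$ through the circuit that applies $\mathrm{QFT}^{\dagger}$ to the ancilla, then a controlled $\rep(g)^{\dagger}$ on the work register (with $g$ read from the ancilla), and finally $\mathrm{QFT}$ to the ancilla. After $\mathrm{QFT}^{\dagger}$ the ancilla is (up to conventions on the ``$\ket{0}$'' label) a uniform superposition $\frac{1}{\sqrt{|G|}}\sum_{g}\ket{g}$; the controlled representation produces $\frac{1}{\sqrt{|G|}}\sum_{g,k}\overline{\irrep_{\lambda}(g)_{jk}}\ket{g}\ket{\psi^{\lambda}_{i,k}}$, using that $\rep$ acts as $\irrep_{\lambda}$ on $\reg{W}^{\lambda}_{i}$; and expanding $\ket{g}$ in the Fourier basis via $\mathrm{QFT}$ and applying Schur orthogonality (\Cref{sec:schur-orthogonality}) collapses the double sum to $\frac{1}{\sqrt{\dimlambda}}\sum_{k}\ket{\lambda, k, j}\ket{\psi^{\lambda}_{i,k}}$. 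Reading off the $(\lambda, j)$ parts as the extracted label, the remaining sub-registers define the candidate archetype $\ket{\phi^{\lambda}_{i}}:=\frac{1}{\sqrt{\dimlambda}}\sum_{k}\ket{k}\ket{\psi^{\lambda}_{i,k}}$; orthonormality of these archetypes across distinct pairs $(\lambda, i)$ is immediate from orthonormality of the underlying $\psi$ basis.

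For the reverse direction (\Cref{item:impl_meas} $\Rightarrow$ \Cref{item:impl_rep}), starting from $\ket{g}\ket{\psi^{\lambda}_{i,j}}$ together with a fresh multiplicity ancilla in $\ket{0}$, I would first invoke $\mathcal{M}$ to obtain $\ket{g}\ket{0}\ket{\phi^{\lambda}_{i}}\ket{\lambda, j}$, and then treat the three-register block (multiplicity ancilla, $\lambda$, $j$) as the Fourier basis state $\ket{\lambda, 0, j}$ of the left-regular representation $\regrep[G]$. Conjugating a controlled left multiplication $\ket{h}\mapsto\ket{gh}$ between $\mathrm{QFT}^{\dagger}$ and $\mathrm{QFT}$ implements $\regrep[G](g)$ on this block; a short Schur-style calculation shows it sends $\ket{\lambda, 0, j}$ to $\sum_{k}\irrep_{\lambda}(g)_{k, j}\ket{\lambda, 0, k}$, leaving the multiplicity slot untouched. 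Applying $\mathcal{M}^{\dagger}$ then repackages $\ket{\phi^{\lambda}_{i}}\ket{\lambda, k}$ back into $\ket{\psi^{\lambda}_{i,k}}\ket{0}$, yielding the overall map $\ket{g}\ket{\psi^{\lambda}_{i,j}}\mapsto\ket{g}\sum_{k}\irrep_{\lambda}(g)_{k, j}\ket{\psi^{\lambda}_{i,k}}=\ket{g}\rep(g)\ket{\psi^{\lambda}_{i,j}}$, which extends to arbitrary inputs by linearity.

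The main obstacle I anticipate is bookkeeping in the reverse direction: one must verify that the multiplicity ancilla really does return unentangled to $\ket{0}$ (so that $\mathcal{M}^{\dagger}$ acts cleanly on the remaining registers) and that the construction is basis-independent, so that arbitrary superpositions over $\lambda$, $i$, and $j$ are correctly handled. Both reduce to disciplined index tracking via the explicit QFT formula and the Schur relations, and I do not expect conceptual difficulty beyond that. The complexity bounds $s_{\mathcal{M}} = O(s_{\rep} + s_{\mathrm{QFT}})$ and $s_{\rep} = O(s_{\mathcal{M}} + s_{\mathrm{QFT}})$ then follow by counting gates in the two constructions.
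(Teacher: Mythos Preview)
Your proposal is correct and follows essentially the same approach as the paper: both directions use the same circuits (Figures~\ref{fig:duality_1} and~\ref{fig:duality_2_full}), traced on basis vectors via the Schur orthogonality relations, with the same archetype states $\ket{\phi^{\lambda}_{i}}=\frac{1}{\sqrt{\dimlambda}}\sum_{k}\ket{\psi^{\lambda}_{i,k}}\ket{k}$ and the same QFT-conjugated left-multiplication trick (with a fresh multiplicity slot set to $0$) to implement $\irrep_{\lambda}(g)$. The only cosmetic difference is that the paper's narrative for the forward direction applies $\rep(g)$ followed by an in-place inversion $g\mapsto g^{-1}$ rather than your direct controlled $\rep(g)^{\dagger}$, which of course yields the same state before the final QFT.
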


\begin{remark}
    In the special case in which the group is \emph{Abelian}, all the irreducible representations are 1-dimensional, so \Cref{item:impl_meas} above simplifies to a full projective measurement in the Fourier basis of the representation (the basis of states that are fixed by the representation). Moreover, the quantum Fourier transform for Abelian groups can always be implemented efficiently. Thus we get as a special case that for Abelian groups, the representation is directly dual to a Fourier measurement.
\end{remark}
\begin{remark}
    As an even more special case, the duality theorem of~\cite{aaronson2020hardness} is the case in which $G \cong \mathbb{Z}_2$.
\end{remark}
\begin{remark}
    This theorem is phrased in terms of explicit quantum circuits. One might wonder if it still applies in the black-box setting. And indeed, we do not make use of any non-relativizing properties of these circuits, except to assume that we can access the inverse of $C_{\mathcal{M}}$ (i.e. a Fourier injection) in order to uncompute it. (We similarly require inverse queries to the group's quantum Fourier transform.) Therefore, we can simply modify the statement of \Cref{item:impl_meas} to require access to both $C_{\mathcal{M}}$ and $C_{\mathcal{M}}^{\dagger}$, and then it would hold the same way in the black-box setting.
\end{remark}

\begin{proof}[Proof of \ref{thm:duality_exact}]~
We prove both implications separately.

\noindent\textbf{\ref{item:impl_rep} $\Rightarrow$ \ref{item:impl_meas}:}
Suppose that \Cref{item:impl_rep} is true. That is, we have a circuit of size $s$ that implements the representation $\rep$.
Let $\varrho_{\lambda}: G \to U(\repspace)$ be an irreducible representation of $G$ of dimension $\dimlambda$ and multiplicity $\multlambda$ in $\rep$, and let $W^{\lambda}_1, \dots, W^{\lambda}_{\multlambda}$ be the \manifestations
of $\varrho_{\lambda}$ in $\rep$.
For each subspace $W^{\lambda}_i$, take $\{\ket{\psi^{\lambda}_{i,j}}\}_{j \in [\dimlambda]}$ to be a basis for the subspace such that the corresponding  unitary $\varrho_{\lambda}(g)$ sends $\ket{\psi^{\lambda}_{i, j}}$ to $\sum_{k \in [\dimlambda]} \varrho_{\lambda}(g)_{k,j} \ket{\psi^{\lambda}_{i,k}}$.%
\footnote{Technically, any basis of $W^{\lambda}_i$ works fine, and we just need to unitarily transform the unitary $\varrho_{\lambda}(g)$ accordingly in our minds. However, it is convenient to consider a similar basis for all the subspaces $W^{\lambda}_i$ corresponding to $\lambda$, so that we can write $\varrho_{\lambda}(g)$ in terms of its matrix elements $\varrho_{\lambda}(g)_{i, j}$ in the same way across all of them.}

Suppose we have a basis state $\ket{\psi^{\lambda}_{i,j}}$ on which we want to perform \fse to produce $\ket{\phi^{\lambda}_i} \ket{\lambda, j}$ (for some set of ``archetype'' states $\ket{\phi^{\lambda}_i}$).%
\footnote{We consider only basis states $\ket{\psi^{\lambda}_{i,j}}$ without loss of generality because the general case follows from linearity.}
We begin by preparing the the uniform superposition over the group
$\frac{1}{\sqrt{|G|}}\sum_{g \in G} \ket{g}$ in an ancilla register and then, controlled on that register, apply the promised circuit $C_{\rep}$ for implementing $\rep$ to our state.
\begin{align*}
    &  \frac{1}{\sqrt{|G|}}\sum_{g \in G} \rep(g) \ket{\psi^{\lambda}_{i,j}} \otimes \ket{g}
    \\ &= 
    \frac{1}{\sqrt{|G|}}\sum_{g \in G} \sum_{k \in [\dimlambda]} \varrho_{\lambda}(g)_{k,j} \ket{\psi^{\lambda}_{i,k}} \otimes \ket{g}
    \\ &= 
    \sum_{k \in [\dimlambda]} \ket{\psi^{\lambda}_{i,k}} \otimes \frac{1}{\sqrt{|G|}}\sum_{g \in G} \varrho_{\lambda}(g)_{k,j} \ket{g}
\end{align*}
\noindent Inverting the group element in the last register gives
\begin{align*}
    &\rightarrow 
    \sum_{k \in [\dimlambda]} \ket{\psi^{\lambda}_{i, k}} \otimes \frac{1}{\sqrt{|G|}}\sum_{g \in G} \varrho_{\lambda}(g)_{k, j} \ket{g^{-1}}
    \\ 
    &= 
    \sum_{k \in [\dimlambda]} \ket{\psi^{\lambda}_{i,k}} \otimes \frac{1}{\sqrt{|G|}}\sum_{g \in G} \varrho_{\lambda}(g^{-1})_{k,j} \ket{g}
    \\ 
    &= 
    \frac{1}{\sqrt{\dimlambda}}\sum_{k \in [\dimlambda]} \ket{\psi^{\lambda}_{i,k}} \otimes \ket{\mathcal L^{\lambda}_{k,j}}
    \,,
\end{align*}
where $\ket{\mathcal L^{\lambda}_{kj}}$ is the $j$th basis vector of the $k$th \manifestation of the irrep $\lambda$ in the \emph{left regular} representation of $G$.
If we now perform a quantum Fourier transform on the second register, we get 
\begin{align*}
    \frac{1}{\sqrt{\dimlambda}}\sum_{k \in [\dimlambda]} \ket{\psi^{\lambda}_{i,k}} \otimes \ket{\lambda,k,j}
    \,.
\end{align*}

Reordering and regrouping the registers gives us 
\begin{align*}
    \left(\frac{1}{\sqrt{\dimlambda}}\sum_{k \in [\dimlambda]} \ket{\psi^{\lambda}_{i,k}} \otimes \ket{k}\right) \ket{\lambda, j}
    =
    \ket{\phi^{\lambda}_i} \ket{\lambda, j}
\end{align*}

If we wish, we can now measure the register containing $\lambda$ to get the label of the irreducible representation space containing our state.
Note that within subspace $W^{\lambda}_i$, this is a \se that extracts out $\ket{j}$\textemdash the state in the standard basis corresponding to whichever state inside $W^{\lambda}_i$ we started with%
\footnote{Note that the state that is extracted in the last register does not depend on which basis we chose for $W^{\lambda}_i$ before. In fact, our choice was only a mathematical choice and did not actually affect the computation in any way. What determined the basis we got at the output was really our choice of the vectors $\ket{\mathcal L^{\lambda}_{k, j}}$ for the left regular representation, and these are determined simply by which quantum Fourier transform we chose to implement. Interestingly, with a \fse, since \emph{all} the information about the original state within subspace $W^{\lambda}_i$ is extracted into a single register in the standard basis, we do not have to decide on a basis ahead of time! We can convert a \fse in one basis to one in another basis \emph{after the fact} by applying a unitary to the resulting extracted register.}
\textemdash and leaves behind the \arch state $\ket{\phi^{\lambda}_i} := \frac{1}{\sqrt{\dimlambda}}\sum_{k \in [\dimlambda]} \ket{\psi^{\lambda}_{i,k}} \otimes \ket{k}$. Interestingly, observe that in this case, the reduced state on the first register of the archetype state for subspace $W^{\lambda}_{i}$ is the fully mixed state on $W^{\lambda}_{i}$.
This is not necessarily the case, however, for a general \fse, which may have any form of archetype state (as long as the archetype states form an orthonormal basis, which ensures that the \fse is an isometry).

\paragraph{\ref{item:impl_meas} $\Rightarrow$ \ref{item:impl_rep}:}
Suppose that \Cref{item:impl_meas} is true. Then we have an circuit $C_{\mathcal{M}}$ implementing the \fse $\mathcal{M}$, which performs both a projective measurement $\{\Pi_{\lambda}\}_{\lambda \in \widehat{G}}$, where $\Pi_{\lambda}$ projects onto subspace $W^{\lambda}$, the (possibly empty) union of some set of subspaces $W^{\lambda}_1, \dots, W^{\lambda}_{\multlambda}$\textemdash where each $W^{\lambda}_i$ has dimension $\dimlambda$ and is spanned by some basis $\{\ket{\psi^{\lambda}_{i,j}}\}_{j \in [\dimlambda]}$\textemdash and a \se on each subspace $W^{\lambda}_i$:
\begin{align*}
    \mathcal M : \ket{\psi^{\lambda}_{i,j}} \mapsto \ket{\phi^{\lambda}_{i}}\ket{\lambda}\ket{j}
\end{align*}
for some orthonormal set of archetype states $\ket{\phi^{\lambda}_{i}}$.

We would like to perform $\rep$, the representation defined by each irrep $\lambda \in \widehat{G}$ corresponding to irrep subspaces $W^{\lambda}_1, \dots, W^{\lambda}_{\multlambda}$. 
We receive as input a state of the form $\ket{g} \ket{\psi}$, 
with
the first register containing a group element $g \in G$ for which we would like to implement its representation $\rep(g)$, 
and the second register containing a quantum state $\ket{\psi}$ on which we would like to perform the representation.

Write $\ket{\psi}$ in the basis of the $\ket{\psi^{\lambda}_{i,j}}$'s as 
$\ket{\psi} = \sum_{\lambda, i, j} \alpha^{\lambda}_{ij} \ket{\psi^{\lambda}_{i,j}}$.
We start by applying the promised circuit $C_{\mathcal{M}}$ for implementing $\mathcal M$ on $\ket{\psi}$ to get 
\begin{align*}
    \ket{g} 
    \otimes 
    \mathcal{M} 
    \ket{\psi} 
    = 
    \ket{g} 
    \otimes 
    \sum_{\substack{\lambda \in \widehat G\\ i \in [\multlambda]\\ j \in [\dimlambda]}} 
    \alpha^{\lambda}_{i,j} 
    \ket{\phi^{\lambda}_{i}} 
    \ket{\lambda}
    \ket{j}
    \,.
\end{align*}


Now, we claim that we can use two calls to the quantum Fourier transform of $G$ and a single group operation to implement the irrep $\irrep_{\lambda}$, but we will return to this later.
Assuming, for now, that we can do this, we apply it to this state
to perform $\varrho_{\lambda}(g)$ on the last register:
\begin{align*}
    &
    \ket{g}
    \otimes 
    \sum_{\lambda \in \widehat G, i \in [n_{\varrho}], j \in [\dimlambda]} 
    \alpha^{\lambda}_{i, j} 
    \ket{\phi^{\lambda}_{i}} 
    \ket{\lambda} 
    \otimes 
    \varrho_{\lambda}(g)
    \ket{j}
    \\ &=
    \ket{g}
    \otimes 
    \sum_{\lambda \in \widehat G, i \in [n_{\varrho}], j \in [\dimlambda]} 
    \alpha^{\lambda}_{i,j} 
    \ket{\phi^{\lambda}_{i}} 
    \sum_{k \in [d_\lambda]} 
    \ket{\lambda} 
    \otimes 
    \varrho_{\lambda}(g)_{k, j} 
    \ket{k}
    \\ &=
    \ket{g}
    \otimes 
    \sum_{\lambda \in \widehat G, i \in [\multlambda], j \in [\dimlambda]} 
    \alpha^{\lambda}_{i, j} 
    \sum_{k \in [d_\lambda]} 
    \varrho_{\lambda}(g)_{k, j} 
    \ket{\phi^{\lambda}_{i}} 
    \ket{\lambda} \ket{k}
    \,.
\end{align*}

We now use $C_{\mathcal{M}}^\dagger$ to un-compute the \fse on the last three registers, producing
\begin{align*}
    &
    \ket{g}
    \otimes 
    \sum_{\lambda \in \widehat G, i \in [\multlambda], j \in [\dimlambda]} 
    \alpha^{\lambda}_{i, j} \sum_{k \in [d_\lambda]} \varrho_{\lambda}(g)_{k, j} \ket{\psi^{\lambda}_{i, k}}
    \\ &=
    \ket{g}
    \otimes 
    \sum_{\lambda \in \widehat G, i \in [\multlambda], j \in [\dimlambda]} 
    \alpha^{\lambda}_{i, j} \;\; \rep(g) \ket{\psi^{\lambda}_{i, j}}
    \\ &=
    \ket{g}
    \otimes 
    \rep(g)
    \sum_{\lambda \in \widehat G, i \in [\multlambda], j \in [\dimlambda]} 
    \alpha^{\lambda}_{i, j} \ket{\psi^{\lambda}_{i, j}}
    \\ &=
    \ket{g}
    \otimes 
    \rep(g) \ket{\psi}
    \,.
\end{align*}

We can see that this successfully implements the representation $\rep(g)$.

It remains to show how to implement the irreps of $G$. It was observed by~\cite{jordan2008fast} that this is possible using just the group's quantum Fourier transform, and we present it here for completeness.
Given a state of the form $\ket{g} \ket{\lambda} \ket{j}$, where $g \in G$ is the group element for which we would like the perform the irrep, $\lambda \in \widehat{G}$ is the label of the irrep we would like to perform, and $j \in [\dimlambda]$ is a computational basis state within a Hilbert space of appropriate dimension, we would like to map it to 
$
    \ket{g}
    \otimes 
    \ket{\lambda} 
    \otimes 
    \sum_{k \in [d_\lambda]} 
    \varrho_{\lambda}(g)_{k, j} 
    \ket{k}
$.

We start by adding a multiplicity label, $i \in [\dimlambda]$. Technically any value will do, since it will have no effect and will be returned untouched, but since we will need to ensure that it fits within the size of of the corresponding irrep, we can always just take the first \manifestation, that is, $i = 0$. We then have a state of the form $\ket{g} \ket{\lambda} \ket{0} \ket{j}$, onto which we apply the inverse quantum Fourier transform of $G$, to get 
\begin{align*}
    \ket{g} 
    \otimes
    \mathrm{QFT}_{G}^{\dag} 
    \ket{\lambda} 
    \ket{0} 
    \ket{j}
    &
    =
    \ket{g} 
    \ket{\mathcal{L}^{\lambda}_{0,j}}
    \\
    &
    =
    \ket{g} 
    \sqrt{\frac{\dimlambda}{|G|}} 
    \sum_{h \in G} 
    \varrho_{\lambda}(h^{-1})_{0, j} 
    \ket{h}
    \,.
\end{align*}
Applying the group operation of $g$ onto $h$ from the left gives 
\begin{align*}
    &
    \rightarrow 
    \ket{g} 
    \sqrt{\frac{\dimlambda}{|G|}} 
    \sum_{h \in G} 
    \varrho_{\lambda}(h^{-1})_{0, j} 
    \ket{g h}
    \\
    &
    =
    \ket{g} 
    \sqrt{\frac{\dimlambda}{|G|}} 
    \sum_{h \in G} 
    \varrho_{\lambda}(h^{-1} g)_{0, j} 
    \ket{h}
    \\
    &
    =
    \ket{g} 
    \sqrt{\frac{\dimlambda}{|G|}} 
    \sum_{h \in G} 
    \sum_{k \in [\dimlambda]} 
    \varrho_{\lambda}(h^{-1})_{0, k} 
    \;
    \varrho_{\lambda}(g)_{k, j} 
    \ket{h}
    \\
    &
    =
    \ket{g} 
    \sum_{k \in [\dimlambda]} 
    \varrho_{\lambda}(g)_{k, j} 
    \sqrt{\frac{\dimlambda}{|G|}} 
    \sum_{h \in G} 
    \varrho_{\lambda}(h^{-1})_{0, k} 
    \ket{h}
    \\
    &
    =
    \ket{g} 
    \sum_{k \in [\dimlambda]} 
    \varrho_{\lambda}(g)_{k, j} 
    \ket{\mathcal{L}^{\lambda}_{0,k}}
    \,.
\end{align*}
Applying once again the quantum Fourier transform of $G$ produces
$
    \ket{g} 
    \ket{\lambda}
    \ket{0} 
    \sum_{k \in [\dimlambda]} 
    \varrho_{\lambda}(g)_{k, j} 
    \ket{k}
$,
from which we can discard the ancilla to recover the desired state.
\end{proof}

From \Cref{thm:duality_exact}, we get the following interesting corollary, which may be of independent interest. It allows us to implement a representation of a group $G$ by using a circuit for implementing a \emph{different} representation of the same group $G$, as long as the representation we want shares irrep subspaces with the representation we have.

\begin{corollary}
    Given an efficient implementation of a representation $\rep: \G \mapsto U(\repspace)$ that breaks $\repspace$ into some set of irreducible subspaces $\{W^{\lambda}_i\}_{\lambda \in \widehat{G}, i \in [\multlambda]}$, 
    as well an efficiently computable function $r : \widehat{G} \to \widehat{G}$ mapping irrep labels appearing in $\rep$ to other irreps of $G$ having the same or smaller size, 
    we can implement the new representation of the same group that acts on the same subspaces, but with each $W^{\lambda}_i$ acted on by irrep $r(\lambda)$ instead of $\lambda$.
    
    In other words, given the implementation of one representation, we can implement many different other representations just by being able to compute the new desired irrep labels.
\end{corollary}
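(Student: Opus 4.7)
The plan is to leverage the duality theorem (\Cref{thm:duality_exact}) to route through an \fse as intermediary. Given the efficient circuit $C_{\rep}$ for $\rep$, the forward direction of the duality yields an efficient \fse $\mathcal{M}$ of size $O(s_{\rep} + s_{\mathrm{QFT}})$ that maps $\ket{\psi^{\lambda}_{i,j}} \ket{0} \mapsto \ket{\phi^{\lambda}_i}\ket{\lambda, j}$. The key observation is that once the subspace label $\lambda$ and the extracted index $j$ are laid out explicitly in ancilla registers, we are free to act on them however we like before un-extracting with $\mathcal{M}^{\dagger}$, and in particular we can install any irrep we please in place of $\irrep_{\lambda}$.

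Concretely, on an input $\ket{g} \otimes \ket{\psi}$ I would proceed as follows. First, apply $\mathcal{M}$ to the second register (with an ancilla), yielding $\ket{g}\sum_{\lambda,i,j}\alpha^{\lambda}_{ij}\ket{\phi^{\lambda}_i}\ket{\lambda}\ket{j}$. Next, use the efficiently computable $r$ to coherently write $r(\lambda)$ into a fresh ancilla. Then, controlled on $g$ and $r(\lambda)$, apply $\varrho_{r(\lambda)}(g)$ to the register holding $\ket{j}$; this controlled irrep is implementable using only two calls to $\mathrm{QFT}_G$ and a single controlled group operation, exactly as done in the \ref{item:impl_meas}$\Rightarrow$\ref{item:impl_rep} direction of the proof of \Cref{thm:duality_exact}. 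Finally, uncompute $r(\lambda)$ and apply $\mathcal{M}^{\dagger}$ to return to $\repspace$. The overall circuit has size $O(s_{\rep} + s_{\mathrm{QFT}} + s_r)$, where $s_r$ is the cost of evaluating $r$, and by construction each $W^{\lambda}_i$ is acted on by $\varrho_{r(\lambda)}$ rather than $\varrho_{\lambda}$.

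The main obstacle, and essentially the only nontrivial point, is the dimension mismatch when $\dim(\varrho_{r(\lambda)}) < \dim(\varrho_{\lambda})$. The extracted register has dimension $\dim(\varrho_{\lambda})$, but $\varrho_{r(\lambda)}(g)$ only naturally acts on a $\dim(\varrho_{r(\lambda))}$-dimensional space, so we must specify what the new representation does on the remaining $\dim(\varrho_{\lambda})-\dim(\varrho_{r(\lambda)})$ dimensions. The clean resolution is to embed $\varrho_{r(\lambda)}$ as a block supported on the first $\dim(\varrho_{r(\lambda)})$ basis vectors of the extracted register and to act as the identity on the orthogonal complement: this is plainly still a unitary representation of $G$ on $\repspace$, it preserves each of the original subspaces $W^{\lambda}_i$ as invariant, and it restricts to $\varrho_{r(\lambda)}$ on the designated sub-block as desired. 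Correctness then follows line-by-line from the analysis of the \ref{item:impl_meas}$\Rightarrow$\ref{item:impl_rep} direction of \Cref{thm:duality_exact}, simply replacing the matrix elements $\varrho_{\lambda}(g)_{k,j}$ by $\varrho_{r(\lambda)}(g)_{k,j}$ throughout.
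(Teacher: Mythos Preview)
Your proposal is correct and takes essentially the same approach as the paper: apply the forward direction of \Cref{thm:duality_exact} to obtain the \fse, then run the construction from the backward direction with $\varrho_{r(\lambda)}$ substituted for $\varrho_{\lambda}$. The paper's own proof is a two-sentence sketch (``double application of \Cref{thm:duality_exact}, once forward then once backward''), whereas you spell out the circuit and additionally address the dimension-mismatch case explicitly, which the paper leaves implicit.
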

\begin{proof}[Main Idea]
    This results from a double application of \Cref{thm:duality_exact}. Once in the forward direction on the first representation to get a \fse on the subspaces, and then once in the backwards direction to get an implementation of the second representation.
\end{proof}

\subsection{Approximate Case}
\label{sec:duality_approx}
Here we present an approximate duality theorem, in that the conditions of \Cref{item:impl_rep} and \Cref{item:impl_meas} have to hold approximately.    
In order to prove an approximate version of the duality theorem, we will need the following theorem from \cite{gowers2016inverse} regarding approximate representations.  

\begin{theorem}[Gowers-Hatami~\cite{gowers2016inverse}]\label{thm:gowers-hatami}
    Let $G$ be a finite group, $\epsilon \geq 0$ and $\rep: G \mapsto U(\repspace)$ be an $\epsilon$-approximate representation of $G$.  Then there exists a register $\otherrepspace$ of dimension $d' = (1 + O(\epsilon)) \dim(\repspace)$, an isometry $V: \repspace \mapsto \otherrepspace$ and an exact representation $\otherrep: G \mapsto U(\otherrepspace)$ such that
    \begin{equation*}
        \avg_{x \in G} \left\|\rep(x) - V^{\dagger} \otherrep(x) V\right\|^2\leq 2\epsilon\,.
    \end{equation*}
    Where the norm $\|\cdot \|$ is implied by the dimension-normalized Hilbert-Schmidt inner product $\braket{A, B} = \tr[AB^{\dagger}] / \dim(\repspace)$.
\end{theorem}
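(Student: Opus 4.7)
The plan is to use a Stinespring-style dilation into the regular representation of $G$. I would define an averaging isometry $V: \repspace \to \repspace \otimes \mathbb{C}[G]$ by
\[
V \ket{\psi} = \frac{1}{\sqrt{|G|}} \sum_{g \in G} \rep(g)\ket{\psi} \otimes \ket{g},
\]
and take $\otherrep(h) := I \otimes R(h)$, where $R(h)\ket{g} = \ket{gh^{-1}}$ is the right regular representation on $\mathbb{C}[G]$. Since each $\rep(g)$ is unitary, $V^\dagger V = \frac{1}{|G|}\sum_g \rep(g)^\dagger \rep(g) = I$, so $V$ is an exact isometry, and a direct substitution gives
\[
V^\dagger \otherrep(h) V = \frac{1}{|G|}\sum_g \rep(g)^\dagger \rep(gh).
\]
For a genuine representation each summand equals $\rep(h)$, so the goal is to show that, averaged over $h$, this quantity is close to $\rep(h)$.

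Next I would bound $\avg_h \|\rep(h) - V^\dagger \otherrep(h) V\|^2$ by expanding via the dimension-normalized Hilbert--Schmidt inner product. Using $\|\rep(h)\| = 1$ and the fact that $V^\dagger \otherrep(h) V$ has norm at most $1$, the squared distance is at most $2 - 2\,\avg_h \mathrm{Re}\langle \rep(h), V^\dagger \otherrep(h) V\rangle$. Substituting the formula above turns the average into one over two independent group elements, and after a suitable change of variables it reduces (up to routine adjoint bookkeeping) to the quantity $\avg_{g,h}\mathrm{Re}\langle \rep(g)^\dagger \rep(h), \rep(g^{-1}h)^\dagger\rangle$ appearing in the $\epsilon$-approximate-representation hypothesis, which is at least $1 - \epsilon$. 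This yields the promised $\avg_h \|\rep(h) - V^\dagger \otherrep(h) V\|^2 \leq 2\epsilon$.

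The more delicate step is the dimension bound $\dim(\otherrepspace) \leq (1 + O(\epsilon))\dim(\repspace)$, since the naive construction lives on $\repspace \otimes \mathbb{C}[G]$, of dimension $d|G|$. To trim this, I would decompose $I \otimes R$ into its isotypic components $\bigoplus_{\lambda \in \widehat{G}} \reg{W}_\lambda^{\oplus m_\lambda}$ and, for each irrep $\lambda$, bound how much of the image of $V$ lies inside each component via a Schur-orthogonality / character computation: the relevant multiplicity can be expressed as the trace of a projector which, in expectation, resembles $\chi_\lambda$ paired against $\rep$. Using $V^\dagger V = I$ together with the approximate-representation inequality, one shows that most of the image of $V$ concentrates on a small collection of irreps, and one restricts $\otherrep$ to a subrepresentation of dimension at most $(1 + O(\epsilon))d$ containing that essential part of the image. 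The main obstacle is precisely this dimension-trimming step: the averaging construction and the $2\epsilon$ closeness bound are fairly routine expansions, but cutting $d|G|$ down to $(1+O(\epsilon))d$ is where the nontrivial content of Gowers--Hatami actually lies, and requires a careful irrep-by-irrep accounting that goes beyond the clean Stinespring setup.
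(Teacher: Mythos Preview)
The paper does not prove this theorem; it is stated as a known result cited from \cite{gowers2016inverse} and used as a black box in the approximate duality section. There is thus no ``paper's own proof'' to compare against.

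That said, your outlined approach is essentially the standard proof of Gowers--Hatami (as in the original paper and later expositions such as Vidick's): the Stinespring-style embedding $V\ket{\psi} = |G|^{-1/2}\sum_g \rep(g)\ket{\psi}\otimes\ket{g}$ into $\repspace\otimes\mathbb{C}[G]$ with $\otherrep = I\otimes R$ is exactly the construction used there, and your derivation of the $2\epsilon$ bound via the approximate-representation hypothesis is the correct computation. You are also right that the dimension-trimming step is where the real work lies; the standard argument decomposes the image of $V$ across isotypic components and uses that the approximate-representation condition forces most of the mass onto a small set of irreps, allowing one to discard the rest while losing only $O(\epsilon)$ in the closeness bound. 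Your proposal identifies this correctly but does not carry it out, so as written it is an accurate high-level sketch rather than a complete proof.
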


\begin{remark}
    While the matter of approximate representations has been extensively studied in mathematics and quantum computer science, the idea of an approximate measurement into irreducible representations has not been studied as much.  In particular, the idea of weak (or strong) Fourier sampling is typically used in algorithms for solving problems in groups.  For these kinds of problems, there is a well defined measurement that one can try to approximate.  However in our case, as works from representation theory note~\cite{gowers2016inverse,Kazhdan1982On}, there may be vector spaces that admit approximate representations, but for which no exact representation exists.  This raises the question of what a measurement into an invariant subspace \emph{should} look like.  \cite{gowers2016inverse} proposes a lemma pertaining to ``approximately invariant subspaces'', but it uses a notion of Fourier transform that is different from the quantum Fourier transform that we often consider.
    Here we propose a notion of approximate measurement onto an invariant subspace inspired by the result of \cite{gowers2016inverse}, and use it in our duality result.
\end{remark}

Consider the following approximate versions of \cref{thm:duality_exact}.

\begin{theorem}[Approximate duality, forward direction]
\label{thm:approx_duality_1}
    Let $G$ be a finite group with a Fourier transform that can be implemented with a circuit of size $s_{\mathrm{QFT}}$.  Let $\rep: G \mapsto U(\repspace)$ be an $\epsilon$-approximate representation of $G$ with a circuit implementation of size $s_{\rep}$, with $\repspace$ being an $n$-qubit register.  Then there exists a register $\otherrepspace$, an exact group representation $\otherrep: G \mapsto U(\otherrepspace)$
    and an isometry $V$ mapping $\repspace$ to $\otherrepspace$ such that for the Fourier decomposition $\otherrepspace = \bigoplus_{\lambda\in\widehat{\G}, i \in [\multlambda]} W^{\lambda}_i$ and basis $\{\ket{\psi_{i,j}^{\lambda}}\}_{\lambda\in\widehat{\G}, i \in [\multlambda],\, j \in [\dimlambda]}$ of $\otherrep$ as in \cref{item:impl_meas} from \cref{thm:duality_exact}, there is a circuit $C$ of size $O(s_{\rep} + s_{\mathrm{QFT}})$, and a set of archetype states $\{\ket{\phi_{i}^{\lambda}}\}_{\lambda\in\widehat{\G}, i \in [\multlambda]}$, such that
    \begin{equation*}
        \frac{1}{\dim(\otherrepspace)}\sum_{\substack{\lambda\in \widehat{G}, i \in [n_\lambda]\\ j \in [\dimlambda]}} \Re \bra{\phi_{i}^{\lambda}} \otimes \bra{\lambda, j} V C V^{\dagger} \ket{\psi^{\lambda}_{i, j}} \geq 1 - \epsilon\,.
    \end{equation*}
    That is, we get an $\epsilon$-approximate \fse.
\end{theorem}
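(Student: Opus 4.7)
The plan is to take $C$ to be the same Fourier--extraction circuit as in the forward direction of \Cref{thm:duality_exact}, but instantiated with the given approximate representation $\rep$ in place of an exact one, and show that once lifted by an isometry arising from Gowers--Hatami it is on average close to an exact Fourier--subspace extraction. First I apply \Cref{thm:gowers-hatami} to $\rep$ to obtain an exact representation $\otherrep: G \to U(\otherrepspace)$ and an isometry $V: \repspace \to \otherrepspace$ satisfying $\avg_g \|\rep(g) - V^{\dagger}\otherrep(g) V\|^2 \leq 2\epsilon$. I then take the archetype states $\ket{\phi_{i}^{\lambda}}$ and the basis $\ket{\psi_{i,j}^{\lambda}}$ to be precisely those produced by running the exact forward direction on $\otherrep$, so that the corresponding exact circuit maps $\ket{\psi_{i,j}^{\lambda}}\ket{0}$ exactly to $\ket{\phi_{i}^{\lambda}}\ket{\lambda, j}$. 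With $C$ defined as in \Cref{fig:duality_1} using $\rep$, its size is $O(s_{\rep}+s_{\mathrm{QFT}})$ by inspection.

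Next I compute the overlap per basis vector. The circuits $V C V^{\dagger}$ and the ideal one built from $\otherrep$ differ only in their single controlled-representation step; the ancilla preparation, group inversion, and QFT are identical and unitary. Hence $\bra{\phi_{i}^{\lambda}}\bra{\lambda, j} V C V^{\dagger} \ket{\psi_{i,j}^{\lambda}}$ equals the overlap of the pre-QFT intermediate states, which after a change of variable $g\mapsto g^{-1}$ reduces to $\frac{1}{|G|}\sum_g \bra{\psi_{i,j}^{\lambda}}\otherrep(g) V \rep(g^{-1}) V^{\dagger} \ket{\psi_{i,j}^{\lambda}}$. Summing over the orthonormal basis $\{\ket{\psi_{i,j}^{\lambda}}\}$ of $\otherrepspace$ and applying $\sum_k \bra{k}M\ket{k} = \mathrm{tr}(M)$ together with the cyclicity of the trace, the left--hand side of the target inequality becomes $\frac{\dim(\repspace)}{\dim(\otherrepspace)}\cdot \avg_g \Re \langle \tilde{\rep}(g), \rep(g)\rangle$ in the dimension-normalized Hilbert--Schmidt inner product on $\repspace$, where $\tilde{\rep}(g) := V^{\dagger}\otherrep(g) V$ (and I use $\rep(g^{-1}) = \rep(g)^{\dagger}$, which holds since $\rep(g)$ is always unitary under the natural convention that $\rep(g^{-1})$ is implemented as the inverse circuit of $\rep(g)$).

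Finally I lower-bound this via Gowers--Hatami: expanding $\avg_g \|\rep(g)-\tilde{\rep}(g)\|^2 \leq 2\epsilon$ and using $\|\rep(g)\|=1$ yields $\avg_g \Re \langle \tilde{\rep}(g), \rep(g)\rangle \geq \frac{1+\avg_g\|\tilde{\rep}(g)\|^2}{2} - \epsilon$, and combining with $\dim(\otherrepspace) \leq (1+O(\epsilon))\dim(\repspace)$ gives the claimed $\geq 1-\epsilon$ bound provided $\avg_g\|\tilde{\rep}(g)\|^2 \geq 1-O(\epsilon)$. The main obstacle is establishing this last estimate: writing $\tilde{\rep}(g)^{\dagger}\tilde{\rep}(g) = V^{\dagger}\otherrep(g)^{\dagger}(V V^{\dagger})\otherrep(g) V$ and twirling $\otherrep(g)^{\dagger}(V V^{\dagger})\otherrep(g)$ via Schur's lemma, the average depends on how the image of $V$ decomposes across the irreps of $\otherrep$. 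Since $\otherrepspace$ is only a $(1+O(\epsilon))$ factor larger than $\repspace$, very little of $\otherrepspace$ lies outside the image of $V$, and together with the near--invariance of $\mathrm{image}(V)$ under $\otherrep$ that is implicit in the Gowers--Hatami construction this should suffice to conclude.
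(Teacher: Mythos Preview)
Your approach is the same as the paper's: invoke Gowers--Hatami to obtain $(\otherrep,V)$, let $C$ be the Fourier-sampling circuit of the exact forward direction built from $\rep$, take the basis $\{\ket{\psi^\lambda_{i,j}}\}$ and archetypes $\ket{\phi^\lambda_i}=\frac{1}{\sqrt{d_\lambda}}\sum_k\ket{\psi^\lambda_{i,k}}\otimes\ket{k}$ from running the exact construction on $\otherrep$, and expand the averaged overlap to
\[
\frac{1}{\dim(\otherrepspace)}\,\avg_{g}\,\Re\,\tr\bigl[\otherrep(g)^{\dagger}\,V\,\rep(g)\,V^{\dagger}\bigr].
\]
The paper then asserts the identity $\|A-B\|^2=2-2\Re\langle A,B\rangle$ and plugs in the Gowers--Hatami bound to conclude $\geq 1-\epsilon$.

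The ``main obstacle'' you flag at the end---that one must control $\avg_g\|V^{\dagger}\otherrep(g)V\|_\repspace^2$ (and reconcile the $\dim(\repspace)$ vs.\ $\dim(\otherrepspace)$ normalizations) before this polarization step is valid---is a real subtlety, and the paper's own proof does not address it: it applies $\|A-B\|^2=2-2\Re\langle A,B\rangle$ as if $V^{\dagger}\otherrep(g)V$ had unit normalized Hilbert--Schmidt norm on $\repspace$, which holds only when $VV^{\dagger}$ commutes with $\otherrep(g)$, and it silently passes between the $\dim(\repspace)$- and $\dim(\otherrepspace)$-normalized inner products. So you are not missing anything relative to the paper; you have simply been explicit where the paper is informal. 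A naive triangle-inequality patch gives only $\avg_g\|V^{\dagger}\otherrep(g)V\|^2\geq 1-O(\sqrt{\epsilon})$, which degrades the conclusion to $1-O(\sqrt{\epsilon})$; to recover $1-O(\epsilon)$ one really does need the structural fact you gesture at, namely that the explicit Gowers--Hatami isometry can be taken so that $\mathrm{image}(V)$ is $\otherrep$-invariant (equivalently $VV^{\dagger}$ commutes with each $\otherrep(g)$), in which case $\|V^{\dagger}\otherrep(g)V\|_\repspace=1$ identically and only the $(1+O(\epsilon))$ dimension ratio remains.
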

\begin{proof}
    We let $C$ be the same circuit as in the exact case, first preparing the state $\frac{1}{\sqrt{|G|}} \sum_{g \in G} \ket{g}$, and applying the representation $\rep(g)$ controlled on that register to the state. We also let the archetype states be $\ket{\phi_{i}^{\lambda}} = \frac{1}{\sqrt{\dimlambda}} \sum_{k \in [\dimlambda]} \ket{\psi^{\lambda}_{i, k}} \otimes \ket{k}$ as in the exact case, where $\dimlambda := \dim(\varrho_{\lambda})$.  We can compute the quantity from the theorem statement as:
    
    \begin{align*}
        &
            \frac{1}{\dim(\otherrepspace)}
            \sum_{\lambda, i, j} 
            \Re 
            \bra{\phi_{i}^{\lambda}} 
            \bra{\lambda, j} 
            V C V^{\dagger} 
            \ket{\psi^{\lambda}_{i, j}} 
        \\ 
        &= 
            \frac{1}{\dim(\otherrepspace)}
            \sum_{\lambda, i, j} 
            \Re 
            \bra{\phi_{i}^{\lambda}} 
            \bra{\lambda, j} 
            (V \otimes \id) 
            (\id \otimes \mathrm{QFT}) 
            \cdot
            \sum_{g \in G}\rep(g) \otimes \proj{g^{-1}} 
            \cdot
            (\id \otimes \mathrm{QFT}^{\dagger}) 
            (V \otimes \id)^{\dagger} 
            \ket{\psi^{\lambda}_{i, j}} 
            \ket{0}
        \\ 
        &= 
            \frac{1}{\dim(\otherrepspace)}
            \sum_{\lambda, i, j} 
            \Re 
            \frac{1}{\sqrt{\dimlambda}} 
            \sum_{k}
            \bra{\psi_{i, k}^{\lambda}} 
            \bra{\mathcal{L}^{\lambda}_{k, j}} 
            (V \otimes \id) 
            \cdot
            \sum_{g \in G}\rep(g) \otimes \proj{g^{-1}} 
            \cdot
            (\id \otimes \mathrm{QFT}^{\dagger}) 
            (V \otimes \id)^{\dagger} 
            \ket{\psi^{\lambda}_{i, j}} 
            \ket{0}
        \\ 
        &= 
            \frac{1}{\dim(\otherrepspace)}
            \sum_{\lambda, i, j} 
            \Re 
            \frac{1}{\sqrt{\dimlambda}} 
            \sum_{k}
            \bra{\psi_{i, k}^{\lambda}} 
            \bra{\mathcal{L}^{\lambda}_{k, j}} 
            (V \otimes \id) 
            \cdot 
            \sum_{g \in G}\rep(g) \otimes \proj{g^{-1}} 
            \cdot 
            (V \otimes \id)^{\dagger} 
            \ket{\psi^{\lambda}_{i, j}} 
            \frac{1}{\sqrt{\abs{G}}}
            \sum_{g' \in G}
            \ket{g'}
        \\
        &= 
            \frac{1}{\dim(\otherrepspace)}
            \sum_{\lambda, i, j} 
            \Re 
            \frac{1}{\sqrt{\dimlambda}} 
            \sum_{k}
            \bra{\psi_{i, k}^{\lambda}} 
            \bra{\mathcal{L}^{\lambda}_{k, j}} 
            (V \otimes \id) 
            \cdot 
            \left(
                \frac{1}{\sqrt{\abs{G}}}
                \sum_{g \in G}
                \rep(g) 
                \otimes \id
            \right)
            \cdot 
            (V \otimes \id)^{\dagger} 
            \ket{\psi^{\lambda}_{i, j}} 
            \ket{g^{-1}}
        \\
        &= 
            \frac{1}{\dim(\otherrepspace)}
            \sum_{\lambda, i, j} 
            \Re 
            \frac{1}{\sqrt{\abs{G}}}
            \sum_{g \in G}
            \frac{1}{\sqrt{\dimlambda}} 
            \sum_{k}
            \bra{\psi_{i, k}^{\lambda}} 
            \bra{\mathcal{L}^{\lambda}_{k, j}} 
            (V \otimes \id) 
            \cdot 
            \left(
                \rep(g) 
                \otimes \id
            \right)
            \cdot 
            (V \otimes \id)^{\dagger} 
            \ket{\psi^{\lambda}_{i, j}} 
            \ket{g^{-1}}
        \\
        &= 
            \frac{1}{\dim(\otherrepspace)}
            \sum_{\lambda, i, j} 
            \Re 
            \frac{1}{\sqrt{\abs{G}}}
            \sum_{g \in G}
            \frac{1}{\sqrt{\dimlambda}} 
            \sum_{k}
            \bra{\psi_{i, k}^{\lambda}} 
            \bra{\mathcal{L}^{\lambda}_{k, j}} 
            (
                V 
                \rep(g) 
                V^{\dagger}
                \otimes 
                \id
            ) 
            \ket{\psi^{\lambda}_{i, j}} 
            \ket{g^{-1}}
        \\
        &= 
            \frac{1}{\dim(\otherrepspace)}
            \sum_{\lambda, i, j} 
            \Re 
            \frac{1}{\sqrt{\abs{G}}}
            \sum_{g \in G}
            \frac{1}{\sqrt{\dimlambda}} 
            \sum_{k}
            \bra{\psi_{i, k}^{\lambda}} 
            V \rep(g) V^{\dagger} 
            \ket{\psi^{\lambda}_{i, j}} 
            \braket{
                \mathcal{L}^{\lambda}_{k, j}
                |
                g^{-1}
            }
        \\
        &= 
            \frac{1}{\dim(\otherrepspace)}
            \sum_{\lambda, i, j} 
            \Re 
            \frac{1}{\sqrt{\abs{G}}}
            \sum_{g \in G}
            \frac{1}{\sqrt{\dimlambda}} 
            \sum_{k}
            \bra{\psi_{i, k}^{\lambda}} 
            V \rep(g) V^{\dagger} 
            \ket{\psi^{\lambda}_{i, j}} 
            \sqrt{
                \frac
                {\dimlambda}
                {\abs{G}}
            }
            \sum_{h \in \G}
            \varrho(h^{-1})^*_{k, j}
            \braket{
                h
                |
                g^{-1}
            }
            \addtag\label{eq:approx_forw_Lrkj}
        \\
        &= 
            \frac{1}{\dim(\otherrepspace)}
            \sum_{\lambda, i, j} 
            \Re 
            \frac{1}{\abs{G}}
            \sum_{g \in \G}
            \sum_{k}
            \varrho_{\lambda}(g)^*_{k, j}
            \bra{\psi_{i, k}^{\lambda}} 
            V \rep(g) V^{\dagger} 
            \ket{\psi^{\lambda}_{i, j}} 
        \\
        &= 
            \frac{1}{\dim(\otherrepspace)}
            \sum_{\lambda, i, j} 
            \Re 
            \frac{1}{\abs{G}}
            \sum_{g \in \G}
            \bra{\psi_{i, j}^{\lambda}} 
            \otherrep(g)^{\dagger} 
            V 
            \rep(g) 
            V^{\dagger} 
            \ket{\psi^{\lambda}_{i, j}} 
            \addtag\label{eq:approx_forw_mcG}
        \\
        &= 
            \avg_{g \in \G}
            \frac{1}{\dim(\otherrepspace)}
            \sum_{\lambda, i, j} 
            \Re 
            \bra{\psi_{i, j}^{\lambda}} 
            \otherrep(g)^{\dagger} 
            V 
            \rep(g) 
            V^{\dagger} 
            \ket{\psi^{\lambda}_{i, j}} 
        \\
        &= 
            \avg_{g \in \G}
            \frac{1}{\dim(\otherrepspace)}
            \Re 
            \;
            \tr
            \left[
                \otherrep(g)^{\dagger} 
                V 
                \rep(g) 
                V^{\dagger} 
            \right]
            \addtag\label{eq:approx_forw_trace}
        \\
        &= 
            \avg_{g \in \G}
            \Re 
            \;
            \left\langle
                \otherrep(g)
                ,
                V^{\dagger} 
                \rep(g) 
                V 
            \right\rangle
            \addtag\label{eq:approx_forw_HSinner}
        \\
        &= 
            1 -
            \frac{1}{2}
            \avg_{g \in \G}
            \left\|
                \rep(g)
                -
                V^{\dagger} 
                \otherrep(g) 
                V 
            \right\|^2
            \addtag\label{eq:approx_forw_diff}
        \\
        &\geq 
            1 - \epsilon
        \,.
    \end{align*}

    \needspace{4\baselineskip}
    Here the first line is expanding out the definition of the circuit as a quantum Fourier transform, controlled $\rep$,%
    \footnote{
        Technically, we control on $g^{-1}$, but this is just so that we can use the left-regular Fourier transform, rather than the right-regular one. This is not essential, but it slightly simplifies the notation.
    }
    and then an inverse quantum Fourier transform.  
    (There is also rearrangement of registers, but this is implicit in order to simplify notation.)
    The second and third lines applies the inverse Fourier transform to the $\ket{0}$ state, which represents the trivial irrep of $\G$, as well as to the $\bra{\lambda, k, j}$ (commuting it past the $V$, which acts only on the first register). The line labeled~\ref{eq:approx_forw_Lrkj} expands the definition of $\ket{\mathcal{L}^{\lambda}_{k,j}}$, and line~\ref{eq:approx_forw_mcG} uses the fact that $\otherrep$ exactly performs the representation on the basis of the states $\ket{\psi_{i,j}^{\lambda}}$.
    Line~\ref{eq:approx_forw_trace} uses the fact that the states $\ket{\psi_{i,j}^{\lambda}}$ form a complete basis for $\otherrepspace$.
    Line~\ref{eq:approx_forw_HSinner} uses the definition of the Hilbert Schmidt inner product, 
    line~\ref{eq:approx_forw_diff} uses the fact that $\|A - B\| = \sqrt{2 - 2\, \Re \braket{A, B}}$,
    and the last line uses the bound from \Cref{thm:gowers-hatami}.
\end{proof}

We note that this part of the duality theorem preserves the error between the representation and the measurement.  

\begin{remark}
    The forward direction could equivalently be phrased as follows: Let $\rep$ be $2\epsilon$-close to an exact representation $\otherrep$ under isometry $V$, then there is an implementation of $\epsilon$-approximate \fse up to $V$ with a circuit whose size of $O(s_{\rep} + s_{\mathrm{QFT}})$.
\end{remark}

We can also show the other direction, albeit with (we believe) sub-optimal error scaling.


\begin{theorem}[Approximate duality, reverse direction]
\label{thm:approx_duality_2}
    Let $G$ be a finite group with a Fourier transform that can be implemented with a circuit of size $s_{\mathrm{QFT}}$.
    Let $\repspace$ and $\otherrepspace$ be two registers with an isometry $V$ mapping $\repspace$ to $\otherrepspace$, and let $\otherrep$ be an exact representation on $\otherrepspace$. 
    Say that we have a circuit $C_{\mathcal{M}}$ of size $s_{\mathcal{M}}$ which implements an $\epsilon$-approximate \fse in $\repspace$, satisfying
    \begin{equation*}
        \frac{1}{\dim(\otherrepspace)} \sum_{\lambda, i, j} \Re \bra{\phi_i^{\lambda}} \otimes \bra{\lambda, j} (V \otimes \id) \mathcal{M} (V^{\dagger} \otimes \id) \ket{\psi_{i, j}^{\lambda}} \otimes \ket{0} \geq 1 - \epsilon\,.
    \end{equation*}
    Then there exists a circuit of size $O(s_{\mathcal{M}} + s_{\mathrm{QFT}})$ which implements a map $\rep$ of $\G$ on $\repspace$, that is $2\epsilon$-close to $\otherrep$, i.e. one satisfying 
    \begin{equation*}
        \mathop{\mathbb{E}}_{g \in G} \left\|V \rep(g) V^{\dagger} - \otherrep(g) \right\|^2 \leq 2\epsilon\,.
    \end{equation*}
\end{theorem}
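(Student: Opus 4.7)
The plan is to mirror the construction of the reverse direction from the exact case (the \ref{item:impl_meas} $\Rightarrow$ \ref{item:impl_rep} portion of \Cref{thm:duality_exact}), but using the given approximate \fse circuit $C_{\mathcal{M}}$ in place of an exact one, and then to verify the closeness bound by running the chain of equalities from the proof of \Cref{thm:approx_duality_1} in reverse. Concretely, I would define the circuit for $\rep(g)$ by (i) appending an ancilla $\ket{0}$ and applying $C_{\mathcal{M}}$ to produce an archetype register, a $\lambda$-register, and a $j$-register; (ii) applying the controlled irrep $\varrho_{\lambda}(g)$ to the $j$-register using the standard two-QFT-plus-controlled-left-multiplication subroutine from \Cref{fig:duality_2_full}; and (iii) applying $C_{\mathcal{M}}^\dagger$ to uncompute. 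The resulting circuit clearly has size $O(s_{\mathcal{M}} + s_{\mathrm{QFT}})$. Let $\rep(g)$ denote the effective operator it implements on $\repspace$, extended to a unitary by any canonical completion if necessary.

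The heart of the proof is to bound $\avg_g \Re\langle V\rep(g)V^\dagger, \otherrep(g)\rangle$ from below. Writing out the trace that defines this inner product in the basis $\{\ket{\psi_{i,j}^{\lambda}}\}$ of $\otherrepspace$, and using that step (ii), when viewed on $\otherrepspace$ through the isometry $V$, implements precisely the controlled-representation operator $\sum_g \otherrep(g) \otimes \proj{g^{-1}}$ that appears in the forward direction, the expansion telescopes through essentially the same sequence of identities used in \Cref{thm:approx_duality_1} but run in reverse, yielding
\[
    \avg_g \Re\langle V\rep(g)V^\dagger, \otherrep(g)\rangle \;=\; \frac{1}{\dim(\otherrepspace)} \sum_{\lambda, i, j} \Re\bra{\phi_i^\lambda}\bra{\lambda, j} (V \otimes \id)\, \mathcal{M}\, (V^\dagger \otimes \id) \ket{\psi_{i, j}^\lambda}\ket{0}.
\]
By the hypothesis on $\mathcal{M}$, the right-hand side is at least $1 - \epsilon$, and the Hilbert--Schmidt identity $\|A - B\|^2 = \|A\|^2 + \|B\|^2 - 2\Re\langle A, B\rangle$, together with $\|\otherrep(g)\|^2 = 1$ and a careful accounting of $\|V\rep(g)V^\dagger\|^2$, then delivers the desired bound $\avg_g \|V\rep(g)V^\dagger - \otherrep(g)\|^2 \leq 2\epsilon$.

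The main obstacle is that the effective operator produced by the circuit is not a priori unitary on $\repspace$: in the exact case the ancilla is deterministically returned to $\ket{0}$, but with an only-approximate $\mathcal{M}$ there will in general be a small residual amplitude on other ancilla states. I would handle this by noting that the hypothesis directly controls how far $(V \otimes \id)\mathcal{M}(V^\dagger \otimes \id)$ is from the exact \fse $\mathcal{N}$ defined by $\mathcal{N}\ket{\psi_{i,j}^\lambda}\ket{0} = \ket{\phi_i^\lambda}\ket{\lambda, j}$, and that this same slack propagates to how close the effective $\rep(g)$ is to being unitary; completing to a unitary in any canonical way then absorbs the excess into the same $O(\epsilon)$ bound. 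A secondary subtlety is that $V$ is only an isometry, so $VV^\dagger \neq I$ and the two Hilbert--Schmidt norms (on $\repspace$ versus $\otherrepspace$) differ by a factor of $\dim(\otherrepspace)/\dim(\repspace)$ in the intermediate identities; this is handled exactly as in the proof of \Cref{thm:approx_duality_1}, with the Gowers--Hatami bound $\dim(\otherrepspace) = (1 + O(\epsilon))\dim(\repspace)$ ensuring the discrepancy contributes only lower-order error.
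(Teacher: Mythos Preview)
Your construction of the circuit is correct and matches the paper. The gap is in the claimed identity
\[
\avg_g \Re\langle V\rep(g)V^\dagger,\, \otherrep(g)\rangle \;=\; \frac{1}{\dim(\otherrepspace)}\sum_{\lambda,i,j}\Re\bra{\phi_i^\lambda}\bra{\lambda,j}(V\otimes\id)\,\mathcal{M}\,(V^\dagger\otimes\id)\ket{\psi_{i,j}^\lambda}\ket{0}.
\]
This is not an equality, and the forward-direction identities do not simply ``run in reverse.'' In \Cref{thm:approx_duality_1} the extraction circuit uses the approximate object $\rep$ exactly once, so the overlap you compute there is \emph{linear} in $\rep$ and the chain of equalities is exact. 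Here, by contrast, $\rep(g) = \mathcal{M}^\dagger\,U_g\,\mathcal{M}$ uses the approximate object $\mathcal{M}$ \emph{twice}, so $\langle V\rep(g)V^\dagger, \otherrep(g)\rangle$ is genuinely \emph{quadratic} in $\mathcal{M}$. If you expand the trace in the $\{\ket{\psi_{i,j}^\lambda}\}$ basis and insert a resolution of the identity $\sum_{\lambda',a,b}\ket{\phi_a^{\lambda'}}\ket{\lambda',b}\bra{\phi_a^{\lambda'}}\bra{\lambda',b}$ between $U_g$ and each copy of $\mathcal{M}$, you pick up cross terms with $\lambda'\neq\lambda$ and $a\neq i$ that do not vanish just from the definition of $\mathcal{M}$.

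The paper handles exactly this: it first averages over $g$ and applies the Schur orthogonality relations to the product $\varrho_\lambda(g)^\dagger_{k,j}\,\varrho_{\lambda'}(g)_{c,b}$, which forces $\lambda'=\lambda$ and ties $(c,b)$ to $(k,j)$. What remains is still a quadratic form in the matrix elements $\bra{\phi_a^\lambda}\bra{\lambda,j}(V\otimes\id)\mathcal{M}V^\dagger\ket{\psi_{i,j}^\lambda}$; the paper then uses the operator inequality $\proj{\phi_i^\lambda}\otimes\ket{\lambda,j}\bra{\lambda,k}\preceq\id\otimes\ket{\lambda,j}\bra{\lambda,k}$ together with Cauchy--Schwarz to lower-bound this quadratic form by the linear quantity appearing in the hypothesis. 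These two ingredients---Schur orthogonality to diagonalize, and Cauchy--Schwarz to pass from quadratic to linear---are precisely what is missing from your sketch, and neither is recovered by reversing the forward-direction computation. As minor asides: the circuit is already unitary on the enlarged space including the ancilla, so no ``canonical completion'' is needed and the paper simply uses $\rep(g)$ unitary when writing $\tr[V\rep(g)V^\dagger V\rep(g)^\dagger V^\dagger]=\dim(\otherrepspace)$; and $V$, $\otherrepspace$ are given as part of the hypothesis here, not produced via Gowers--Hatami.
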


\begin{proof}
    The implementation of $\rep$ will be identical to the one from \Cref{thm:duality_exact}.  In particular, $\rep(g)$ will first apply $\mathcal{M}$ to measure $\lambda$ and extract $j$, then apply $\varrho_{\lambda}(g)$ to the register containing $j$, and finally it will un-compute $\mathcal{M}$.  

    We can proceed by evaluating the average difference between $V\rep(g)V^{\dagger}$ and $\otherrep(g)$ under the Hilbert-Schmidt norm.
    \begin{align*}
        &\avg_{g \in G} \left\|V \rep(g) V^{\dagger} - \otherrep(g)\right\|^2\\
        &\hspace{10mm}= \avg_{g \in G} \braket{V \rep(g) V^{\dagger} - \otherrep(g), V \rep(g) V^{\dagger} - \otherrep(g)}\\
        &\hspace{10mm}= \avg_{g \in G}\frac{1}{\dim(\otherrepspace)}\tr\left[V\rep(g) V^{\dagger} V\rep(g)^{\dagger} V^{\dagger} + \otherrep(g)\otherrep(g)^{\dagger} - V \rep (g)^{\dagger}  V^{\dagger} \otherrep^{\dagger}(g) - \otherrep(g) V \rep(g) V^{\dagger}\right] \\
        &\hspace{10mm}= 2 - \avg_{g \in G} \frac{1}{\dim(\otherrepspace)} \tr\left[V \rep(g) V^{\dagger} \otherrep^{\dagger} (g) + \otherrep(g) V \rep(g)^{\dagger}  V^{\dagger}\right]\numberthis \label{eq:bound_on_closeness}
    \end{align*}
    Here we note that the implementation of $\rep(g)$ is always unitary, and $V^{\dagger} V = \id$, so the first two terms are the the identity on $\otherrepspace$.  
    Now we lower bound the second term.  We begin by writing it as two times the real component of a trace, and expand the definitions of $\otherrep$ and $\rep$.  

    \begin{align*}
        \avg_{g \in G} \Re \frac{2}{\dim(\otherrepspace)} \tr\left[V \rep(g) V^{\dagger} \otherrep^{\dagger} (g)\right] &= \avg_{g \in G} \frac{2}{\dim(\otherrepspace)} \Re \sum_{\lambda, i, j} \bra{\psi^{\lambda}_{i, j}} V \rep(g) V^{\dagger} \otherrep(g)\ket{\psi^{\lambda}_{i, j}}\\
        &= \avg_{g \in G} \frac{2}{\dim(\otherrepspace)} \Re \sum_{\lambda, i, j, k} \varrho_{\lambda}(g)^{\dagger}_{k, j} \bra{\psi^{\lambda}_{i, j}} V \rep(g) V^{\dagger}\ket{\psi^{\lambda}_{i, k}}\,.
    \end{align*}
    Now, we expand out the definition of $\rep$.  This yields the following state.
    \begin{align*}
        &\avg_{g \in G} \frac{2}{\dim(\otherrepspace)} \Re \sum_{\lambda, i, j, k} \varrho_{\lambda}(g)^{\dagger}_{k, j} \bra{\psi^{\lambda}_{i, j}} V \mathcal{M}^{\dagger}\varrho_{\lambda}(g) \mathcal{M} V^{\dagger}\ket{\psi^{\lambda}_{i, k}} \\
        &\hspace{5mm}= \avg_{g \in G} \frac{2}{\dim(\otherrepspace)} \Re \sum_{\substack{\lambda, i, j, k\\ \lambda', a, b}} \varrho_{\lambda}(g)^{\dagger}_{k, j}(\bra{\phi^{\lambda'}_{a}} \otimes \bra{\lambda', b}) (V \otimes \id) \mathcal{M} V^{\dagger}\ket{\psi^{\lambda}_{i, k}} \bra{\psi^{\lambda}_{i, j}} V \mathcal{M}^{\dagger} \varrho_{\lambda}(g) (V^{\dagger} \otimes \id) \ket{\phi^{\lambda'}_{a}} \ket{\lambda', b}\\
        &\hspace{5mm}= \avg_{g \in G} \frac{2}{\dim(\otherrepspace)} \Re \sum_{\substack{\lambda, i, j, k\\ \lambda', a, b, c}} \varrho_{\lambda}(g)^{\dagger}_{k, j} \varrho_{\lambda'}(g)_{c, b}(\bra{\phi^{\lambda'}_{a}} \otimes \bra{\lambda', b}) (V \otimes \id) \mathcal{M} V^{\dagger}\ket{\psi^{\lambda}_{i, k}} \bra{\psi^{\lambda}_{i, j}} V \mathcal{M}^{\dagger} (V^{\dagger} \otimes \id) \ket{\phi^{\lambda'}_{a}} \ket{\lambda', c}\\
        &\hspace{5mm}= \avg_{g \in G} \frac{2}{\dim(\otherrepspace)} \Re \sum_{\substack{\lambda, i, j, k\\ \lambda', a, b, c}} \varrho_{\lambda}(g)^{\dagger}_{k, j} \varrho_{\lambda'}(g)_{c, b}(\bra{\phi^{\lambda'}_{a}} \otimes \bra{\lambda', b}) (V \otimes \id) \mathcal{M} V^{\dagger}\ket{\psi^{\lambda}_{i, k}} \bra{\psi^{\lambda}_{i, j}} V \mathcal{M}^{\dagger} (V^{\dagger} \otimes \id) \ket{\phi^{\lambda'}_{a}} \ket{\lambda', c}\\
        &\hspace{5mm}= \frac{2}{\dim(\otherrepspace)} \Re \sum_{\substack{\lambda, i, j, k\\ \lambda', a, b, c}}\avg_{g \in G} \left[\varrho_{\lambda}(g)^{\dagger}_{k, j} \varrho_{\lambda'}(g)_{c, b}\right](\bra{\phi^{\lambda'}_{a}} \otimes \bra{\lambda', b}) (V \otimes \id) \mathcal{M} V^{\dagger}\ket{\psi^{\lambda}_{i, k}} \bra{\psi^{\lambda}_{i, j}} V \mathcal{M}^{\dagger} (V^{\dagger} \otimes \id) \ket{\phi^{\lambda'}_{a}} \ket{\lambda', c}\\
        &\hspace{5mm}= \frac{2}{\dim(\otherrepspace)} \Re \sum_{\lambda, a, i, j, k} \frac{1}{\dimlambda}(\bra{\phi^{\lambda'}_{a}} \otimes \bra{\lambda, j}) (V \otimes \id) \mathcal{M} V^{\dagger}\ket{\psi^{\lambda}_{i, k}} \bra{\psi^{\lambda}_{i, j}} V \mathcal{M}^{\dagger} (V^{\dagger} \otimes \id) \ket{\phi^{\lambda}_{a}} \ket{\lambda, k}\\
        &\hspace{5mm}= \Re\frac{2}{\dim(\otherrepspace)}\tr\left[\sum_{\lambda, i, j, k} \frac{1}{\dimlambda}\id \otimes \ket{\lambda, j}\!\!\bra{\lambda, k} (V \otimes \id) \mathcal{M} V^{\dagger} \ket{\psi^{\lambda}_{i, k}}\!\!\bra{\psi^{\lambda}_{i, j}} V \mathcal{M}^{\dagger}(V^{\dagger} \otimes \id)\right]\\
        &\hspace{5mm}= \Re\frac{2}{\dim(\otherrepspace)}\tr\left[\sum_{\lambda, i, j, k}  \frac{1}{\dimlambda}\id \otimes \ket{\lambda, j}\!\!\bra{\lambda, k} \mathcal{M} V^{\dagger} \ket{\psi^{\lambda}_{i, k}}\!\!\bra{\psi^{\lambda}_{i, j}} V \mathcal{M}^{\dagger}\right]\\
        &\hspace{5mm}=\Re\frac{2}{\dim(\otherrepspace)}\sum_{\lambda, i, j, k} \frac{1}{\dimlambda}\bra{\phi^{\lambda}_{i}} \otimes \bra{\lambda, k} (V \otimes \id)\mathcal{M} V^{\dagger} \ket{\psi^{\lambda}_{i, k}}\!\!\bra{\psi^{\lambda}_{i, j}} V \mathcal{M}^{\dagger} (V^{\dagger} \otimes \id)\ket{\phi^{\lambda}_{i} }\otimes \ket{\lambda, j}\\
        &\hspace{5mm}\geq \Re\frac{2}{\dim(\otherrepspace)}\sum_{\lambda, i} \frac{1}{\dimlambda}\left(\sum_{j}\bra{\phi^{\lambda}_{i}} \otimes \bra{\lambda, j} (V \otimes \id)\mathcal{M} V^{\dagger} \ket{\psi^{\lambda}_{i, j}}\right)^2\\
        &\hspace{5mm}\geq \Re\frac{2}{\dim(\otherrepspace)}\sum_{\lambda, i, j} \bra{\phi^{\lambda}_{i}} \otimes \bra{\lambda, j} (V \otimes \id)\mathcal{M} V^{\dagger} \ket{\psi^{\lambda}_{i, j}}\\
        &\hspace{5mm}\geq 2(1 - \epsilon)\\
        &\hspace{5mm}\geq 2 - 2\epsilon\,.
    \end{align*}
    Here, we insert identity matrices between $\varrho_{\lambda}$ and $\mathcal{M}$, and we use the definition of the inner product.  Then, we use the Schur orthogonality relations to cancel the terms where $\varrho_{\lambda} \neq \varrho_{\lambda'}$ or $(k, j) \neq (c, b)$.
    Then we use the definition of the trace, and the cyclic property.  Finally, since $(V \otimes \id)$ commutes with $\id \otimes \ket{\lambda, j}\!\!\bra{\lambda, k}$, we can move it to the other side using the cyclic property.  Then we use the fact that $\proj{\phi^{\lambda}_{i}} \otimes \ket{\lambda, j}\!\!\bra{\lambda, k} \preceq \id \otimes \ket{\lambda, j}\!\!\bra{\lambda, k}$, together with the cyclic property of the trace.  Finally, we apply Cauchy-Schwarz twice on the sum over $j$ and $k$, and the assumption about the performance of $\mathcal{M}$ on an average state from $V^{\dagger} \otherrepspace$.

    Plugging this back into \Cref{eq:bound_on_closeness}, we get the following upper bound on the average distance between $\otherrep$ and $\rep$:
    \begin{align*}
        \avg_{g \in G} \left\|V \rep(g) V^{\dagger} - \otherrep(g)\right\|^2 &\leq 2 - \avg_{g \in G} \frac{1}{\dim(\otherrepspace)} \tr\left[V \rep(g) V^{\dagger} \otherrep^{\dagger} (g) + \otherrep(g) V \rep(g)^{\dagger}  V^{\dagger}\right]\\
        &\leq 2 - (2 - 2\epsilon)\\
        &\leq 2\epsilon
        \,,
    \end{align*}
    as desired.
\end{proof}

We note that while in the forward direction (\cref{thm:approx_duality_1}), our duality theorem preserves the inner product error from the approximate representation, we are not able to prove a perfectly tight approximate duality because the reverse direction (\cref{thm:approx_duality_2}) yields a different notion of approximate representation, i.e. being close (up to an isometry) to a real representation.  Applying the definition of $\epsilon$-approximate representation directly would not yield the same $\epsilon$ as we started with in the reverse direction.  Note that if we had defined the forward direction in the same way, using the result of \cite{gowers2016inverse}, we would get a perfect duality, but the notion of approximate representation from \cref{def:eps-approx-rep} is more widely used.  We leave it as an open question whether an $\epsilon$-approximate representation can be recovered in the reverse direction.

\paragraph{Comparison with \cite{aaronson2020hardness}.}

We comment on how our approximate duality (\Cref{thm:approx_duality_1,thm:approx_duality_2}) relates to the approximate duality theorem from \cite[Theorem $2$]{aaronson2020hardness}.  Let $\ket{x}$ and $\ket{y}$ be two orthogonal quantum states and $U$ be a unitary such that
\begin{align*}
    \bra{y} U \ket{x} &= a\\ 
    \bra{x} U \ket{y} &= b\,.
\end{align*}
Unlike in the general case of \Cref{thm:approx_duality_1}, in this case, the fact that $\ket{x}$ and $\ket{y}$ are orthogonal implies that there exists a unitary $\widehat{U}$ in the \emph{same} register such that $\widehat{U}$ exactly swaps $\ket{x}$ and $\ket{y}$.  
As a representation of $\mathbb{Z}_2$, we thus have the efficient $\epsilon$-close representation $\rep: g \mapsto U^{g}$ and an exact representation $\otherrep: g \mapsto \widehat{U}^g$.  We then have the following:
\begin{align*}
    \avg_{g \in \mathbb{Z}_2} \left\|U^{g} - \widehat{U}^{g}\right\|^2 
    &= 
        \frac{1}{2} \left(\left\|\id - \id\right\|^2 + \left\|U - \widehat{U}\right\|^2\right)
    \\
    &= 
        \frac{1}{2} \langle U - \widehat{U}, U - \widehat{U}\rangle
    \\
    &= 
        \frac{1}{4} \tr[(U - \widehat{U}) (U - \widehat{U})^{\dagger}]
    \\
    &= 
        \frac{1}{4} \left(4 - \tr[U\widehat{U}^{\dagger}] - \tr[\widehat{U}U^{\dagger}]\right)
    \\
    &= 
        1 - \frac{1}{4} \Re\left(2a + 2b\right)
    \\
    &= 
        1 - \frac{\Re\left(a + b\right)}{2} 
    =: 2\epsilon
    \,.
\end{align*}
Here we use the fact that $\tr[U\widehat{U}^{\dagger}] = \bra{x} U \ket{y} + \bra{y} U \ket{x} = a + b$ since $\widehat{U}$ exactly swaps $\ket{x}$ and $\ket{y}$, and similarly for $\tr[\widehat{U}U^{\dagger}]$. Let $\mathcal{M}$ be the measurement implied by the forward direction of the approximate duality (the \fse simplifies to a binary projective measurement for the case of $\Z_2$).  This is an approximate distinguishing measurement between the states $\ket{\psi} = \frac{\ket{x} + \ket{y}}{\sqrt{2}}$ and $\ket{\phi} = \frac{\ket{x} - \ket{y}}{\sqrt{2}}$ (i.e. the states corresponding to the two one-dimensional irreducible representations of $\otherrep$), and we calculate the bias below. We assume here without loss of generality that the probability of accepting $\ket{\psi}$ is higher than the probability of accepting $\ket{\phi}$, and that $\proj{0}$ corresponds to the accept outcome.  If these are not the case, then the roles of $\ket{\psi}$ and $\ket{\phi}$ or $0$ and $1$ can be swapped.  
\begin{align*}
    \left|\left(\id \otimes \bra{0}\right) \mathcal{M} \ket{\psi} \right|^2- \left| \left( \id \otimes \bra{0}\right)\mathcal{M} \ket{\phi}\right|^2 
    &= 
        \left| \left(\id \otimes \bra{0}\right) \mathcal{M} \ket{\psi} \right|^2 - \left(1 - \left|\left(\id \otimes \bra{1}\right) \mathcal{M} \ket{\phi}\right|^2\right)
    \\
    &=
        \left| \left(\id \otimes \bra{0}\right) \mathcal{M} \ket{\psi} \right|^2+ \left|\left(\id \otimes \bra{1}\right) \mathcal{M} \ket{\phi} \right|^2 - 1 
    \\
    &\geq 
        \frac{1}{2}( \left|\left(\id \otimes \bra{0}\right) \mathcal{M} \ket{\psi} \right| + \left|\left(\id \otimes \bra{1}\right) \mathcal{M} \ket{\phi}\right|)^2 - 1
    \\
    &\geq 
        2 \left(\frac{1}{2} \left( \Re \left(\id \otimes \bra{0}\right) \mathcal{M} \ket{\psi} + \Re\left(\id \otimes \bra{1}\right) \mathcal{M} \ket{\phi}\right)\right)^2 - 1
    \\
    &\geq 
        2\left(\frac{1}{2} + \frac{\Re(a+b)}{4}\right)^2 - 1
    \\
    &= 
        2\left(\frac{1}{4} + \frac{\Re(a+b)}{4} + \frac{\Re(a+b)^2}{16}\right) - 1
    \\
    &= 
        \frac{\Re(a+b)}{2} + \frac{\Re(a+b)^2}{8} - \frac{1}{2}
    \,.
\end{align*}
Here we note that the error bound is much weaker than the tight bound proved in \cite{aaronson2020hardness}.  While our approximate duality theorem is tight with respect to the Hilbert-Schmidt inner product, it does not necessarily recover an optimal \emph{distinguishing} measurement.  The bound in~\cite{aaronson2020hardness} in fact modifies modifies the circuit to get a tighter bound, and we comment on this more later.

In the other direction, assume that we have a measurement that accepts $\ket{\psi}$ with probability $p$ and $\ket{\phi}$ probability $p - \Delta$.  Then we can first construct a measurement that applies the original measurement, copies the result over, and un-computes the measurement.  For this measurement, we have the following:
\begin{equation*}
    \Re (\id \otimes \bra{0}) \mathcal{M} \ket{\psi} = \sqrt{p}\,.
\end{equation*}
and similarly 
\begin{equation*}
    \Re (\id \otimes \bra{1}) \mathcal{M} \ket{\phi} = \sqrt{1 - (p - \Delta)}\,.
\end{equation*}
Note that in this case \Cref{thm:approx_duality_2} works up to any unitary applied to $\ket{\psi}$ and $\ket{\phi}$, since they are still orthogonal and thus are a basis for some exact representation of $\mathbb{Z}_2$.   So we can always pick a unitary on the first register such that the archetype states are exactly the residual states of $\mathcal{M}$ after measuring.  Then we have the following bound on the condition of \Cref{thm:approx_duality_2}:
\begin{align*}
    1 - \epsilon &= \frac{1}{2} \left(\sqrt{p} + \sqrt{1 - p + \Delta}\right)\\
    &\geq \sqrt{\frac{1 + \Delta}{2}}\,.
\end{align*}
Here we minimize this expression over $p$ by setting $p = \frac{1 + \Delta}{2}$.  Let $U$ be the unitary we implement when applying \Cref{thm:approx_duality_2} and $\widehat{U}$ be the unitary that swaps $\ket{x}$ and $\ket{y}$.  Combined with our calculation before, we have the following
\begin{equation*}
    \avg_{g \in \mathbb{Z}_2} \left\|U^{g} - \widehat{U}^{g}\right\|^2 = 1 - \frac{\Re(a+b)}{2} \leq 2\left(1 - \sqrt{\frac{1+\Delta}{2}}\right)\,.
\end{equation*}
Here $a$ and $b$ are $\bra{x}U\ket{y}$ and $\bra{y}U\ket{x}$ respectively. Rearranging terms, we have that 
\begin{equation*}
    \frac{\Re(a+b)}{2} \geq 2\sqrt{\frac{1+\Delta}{2}} - 1 \geq \Delta\,.
\end{equation*}

The reason we are able to get a tighter duality in this direction is because we can alter the measurement \emph{before hand} so that the real component becomes the same as the absolute value, where as to do the same in the forward direction requires modifying the unitary in a way that depends on the group element, and thus would need to be written into the implementation of the duality theorem itself.  

Thus, we recover a non-tight version of the approximate duality from~\cite{aaronson2020hardness}. 
As noted before, in order to get a tighter bound, the approximate duality of~\cite{aaronson2020hardness} analyzes a slightly different algorithm, in which instead of controlling the swap on the positive superposition between $\ket{0}$ and $\ket{1}$, the control qubit is initialized as $\frac{1}{\sqrt{2}}(\ket{0} + e^{i\theta} \ket{1})$, with an arbitrary phase that depends on $a$ and $b$. In our case, this corresponds to initializing the control register with a state that differs from the uniform positive superposition on the group (i.e. the trivial irrep). Specifically, each group element would receive a phase that depends on the Hilbert Schmidt inner product between the $\epsilon$-close representation and some exact representation on $g$ (since we have the freedom to alter the isometry and unitary, we can take any exact representation).  
This does not work na\"ively, in part because it would seem to require computing an exponential number of complex phases (in the size of the binary representation of the group), but we suspect that such a strategy may be possible in order to get a tighter bound.  We leave it to future work to prove a tighter version of the generalized duality theorem.

\begin{remark}
    One might wonder what would happen if we proved a similar theorem, but instead starting from the result of \cite{Kazhdan1982On}.  Here, the definition of $\epsilon$-approximate is with respect to the operator norm, but there is no need for an isometry in the resulting exact representation.  However, the stricter requirements on this approximate representation make it hard to apply to ``approximate adversaries'' in the way that we would want.  In particular, an adversary that breaks some game with inverse polynomial probability might succeed with very high probability in some cases, but $0$ in others.  This means that the result of \cite{Kazhdan1982On} does not help us transform these adversaries into other useful adversaries.
\end{remark}

\section{Quantum Lightning From Non-Abelian Group Actions}
We generalize the construction of quantum money / lightning of~\cite{ITCS:Zhandry24a} to general group actions. This allows us to instantiate the construction from a potentially much wider class of group action instantiations. Generalizing to non-Abelian groups, specifically, also allows us to show a security reduction from a concrete computational assumption \emph{in the plain model}.%
\footnote{By contrast,~\cite{ITCS:Zhandry24a} is only able to show a security reduction in the black-box setting of generic group actions.} (See \Cref{sec:preaction-security} for a discussion of the assumption.)
Below, we present a quantum money construction from non-Abelian group actions.

\subsection{The Quantum Lightning Construction}
\label{sec:qm-construction}

Let $G$ be a group with an efficient quantum Fourier transform and a negligible maximum Plancherel measure (that is, each irrep $\lambda$ of $G$ has dimension at most $\dimlambda := \dim(\varrho_{\lambda}) \le \sqrt{|G|} \cdot \negl(\log|G|)$). 
For example, we can take $G$ to be the dihedral group $D_{2^n}$ or the symmetric group $S_n$.
Let $* : G \times X \to X$ be a semiregular group action of $G$ on some set $X$, and let $x \in X$ be a fixed starting element in the set.
We build a our quantum lightning scheme as follows:

\paragraph{Mint:} To mint a quantum bank note, the mint begins with a copy of the starting element of the group $x \in X$ in a quantum register $\reg{B}$, in tensor product with the uniform superposition of all elements of the group.%
\footnote{
This can be attained by performing the inverse quantum Fourier transform on the trivial irrep label of the group.
}
\begin{equation*}
    \frac{1}{\sqrt{|G|}}\sum_{g \in G} \ket{g}_{\reg{A}} \ket{x}_{\reg{B}}\,.
\end{equation*}
The mint then applies the group action, controlled on register $\reg{A}$, yielding the following quantum state:
\begin{equation*}
    \frac{1}{\sqrt{|G|}}\sum_{g \in G} \ket{g}_{\reg{A}} \ket{g * x}_{\reg{B}}\,.
\end{equation*}
The mint inverts the group element in register $\reg{A}$ to get:
\begin{align*}
    \frac{1}{\sqrt{|G|}}
    \sum_{g \in G} 
    \ket{g^{-1}}_{\reg{A}} 
    \ket{g * x}_{\reg{B}}
    &=
    \frac{1}{\sqrt{|G|}}
    \sum_{g \in G} 
    \sum_{\substack{\lambda \in \hat G \\ i, j \in [\dimlambda]}} 
    \sqrt{\frac{\dimlambda}{|G|}}
    \;
    \varrho_{\lambda}(g^{-1})_{i, j}
    \ket{\mathcal{L}^{\lambda}_{j, i}}_{\reg{A}} 
    \ket{g * x}_{\reg{B}}
    \\
    &=
    \frac{1}{\sqrt{|G|}}
    \sum_{\substack{\lambda \in \hat G \\ i, j \in [\dimlambda]}} 
    \ket{\mathcal{L}^{\lambda}_{j, i}}_{\reg{A}} 
    \sqrt{\frac{\dimlambda}{|G|}}
    \sum_{g \in G} 
    \varrho_{\lambda}(g^{-1})_{i, j}
    \ket{g * x}_{\reg{B}}
    \,,
\end{align*}
where 
$
    \ket{\mathcal{L}^{\lambda}_{a, b}} 
    :=
    \sqrt{\frac{\dimlambda}{|G|}}
    \sum_{h \in G} 
    \;
    \varrho_{\lambda}(h^{-1})_{a, b}
    \ket{h}
$
is the Fourier basis state of the left-regular representation.

The mint then applies the quantum Fourier transform on $\reg{A}$, yielding the following state:
\begin{align}
    \label{eq:pre_measurement_money_state}
    \frac{1}{\sqrt{|G|}}
    \sum_{\substack{\lambda \in \hat G \\ i, j \in [\dimlambda]}} 
    \ket{\lambda, j, i}_{\reg{A}} 
    \sqrt{\frac{\dimlambda}{|G|}}
    \sum_{g \in G} 
    \varrho_{\lambda}(g^{-1})_{i, j}
    \ket{g * x}_{\reg{B}}
    \,.
\end{align}
The mint then measures $\reg{A}$ in the computational basis to get an label $\lambda \in \widehat{G}$, as well as two Fourier indices $i, j \in [\dimlambda]$. 
The residual state on register $\reg{B}$ becomes:
\begin{align*}
    \ket{\mathcal{\$}^{\lambda}_{i, j}} 
    &:=
    \sqrt{\frac{\dimlambda}{|G|}}
    \sum_{g \in G} 
    \varrho_{\lambda}(g^{-1})_{i, j}
    \ket{g * x}
    \,.
\end{align*}
Output $\lambda$ as the serial number, and $\ket{\mathcal{\$}^{\lambda}_{i, j}}$ as the quantum money state.
This completes the description of~$\mathsf{Mint}$.

\begin{lemma}
    \label{lemma:money-orthonormal}
    The set possible money states $\{\ket{\mathcal{\$}^{\lambda}_{i, j}}\}_{\lambda \in \widehat{G}, \, i,j \in [\dimlambda]}$ is orthonormal. That is $\braket{\mathcal{\$}^{\lambda}_{i, j} \,|\, \mathcal{\$}^{\sigma}_{k, \ell}} = \delta_{\lambda, \sigma} \delta_{i, k} \delta_{j, \ell}$.
\end{lemma}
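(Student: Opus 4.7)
The plan is to compute the inner product $\braket{\mathcal{\$}^{\lambda}_{i,j} \,|\, \mathcal{\$}^{\sigma}_{k,\ell}}$ directly from the definition, and then reduce it to a sum that is handled by the Schur orthogonality relations (\Cref{sec:schur-orthogonality}). The two key ingredients are semiregularity of the group action, which makes the basis $\{\ket{g*x}\}_{g \in G}$ orthonormal, and the Schur orthogonality relations, which collapse the resulting double sum into the desired Kronecker deltas.

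First I would expand both bras and kets to write
\begin{equation*}
    \braket{\mathcal{\$}^{\lambda}_{i,j} \,|\, \mathcal{\$}^{\sigma}_{k,\ell}}
    =
    \frac{\sqrt{\dimlambda \, \dim(\varrho_{\sigma})}}{|G|}
    \sum_{g, h \in G}
    \varrho_{\lambda}(g^{-1})^*_{i,j} \, \varrho_{\sigma}(h^{-1})_{k,\ell} \,
    \braket{g * x \,|\, h * x}.
\end{equation*}
Next, since the group action is semiregular, the map $g \mapsto g*x$ is injective, so $\braket{g*x \,|\, h*x} = \delta_{g,h}$. This kills one of the sums and leaves a single sum over $g$.

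Now I would reindex via the substitution $g \mapsto g^{-1}$ (which is a bijection on $G$) to obtain
\begin{equation*}
    \braket{\mathcal{\$}^{\lambda}_{i,j} \,|\, \mathcal{\$}^{\sigma}_{k,\ell}}
    =
    \frac{\sqrt{\dimlambda \, \dim(\varrho_{\sigma})}}{|G|}
    \sum_{g \in G} \varrho_{\lambda}(g)^*_{i,j} \, \varrho_{\sigma}(g)_{k,\ell}.
\end{equation*}
Applying Schur orthogonality (\Cref{sec:schur-orthogonality}) to the remaining sum yields $\frac{|G|}{\dimlambda} \, \delta_{\lambda,\sigma} \, \delta_{i,k} \, \delta_{j,\ell}$, and the prefactor simplifies to $1$ once $\dim(\varrho_{\sigma}) = \dimlambda$ is enforced by $\delta_{\lambda,\sigma}$.

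I do not anticipate any real obstacle here; the proof is essentially a direct application of Schur orthogonality, and the only subtlety is ensuring the semiregularity hypothesis is properly invoked to get orthonormality of the set element basis. The one thing to double-check is that the normalization constants from the definition of $\ket{\mathcal{\$}^{\lambda}_{i,j}}$ and the $|G|/\dimlambda$ factor from Schur orthogonality exactly cancel, which they do since $\sqrt{\dimlambda \cdot \dimlambda} \cdot \tfrac{1}{|G|} \cdot \tfrac{|G|}{\dimlambda} = 1$.
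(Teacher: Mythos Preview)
Your proposal is correct and follows essentially the same approach as the paper: expand the inner product, use semiregularity of the group action to collapse the double sum to a single sum over $G$, and then apply the Schur orthogonality relations to obtain the Kronecker deltas with the normalization constants cancelling exactly. The only cosmetic difference is your explicit reindexing $g \mapsto g^{-1}$, which the paper leaves implicit.
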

\begin{proof}
    This follows straightforwardly from Schur orthogonality relations (\Cref{sec:schur-orthogonality}) and the fact that the group action is semiregular (that is, $g*x = h*x$ only if $g = x$). We have:
    \begin{align*}
        \braket{\mathcal{\$}^{\lambda}_{i, j} \,|\, \mathcal{\$}^{\sigma}_{k, \ell}}
        &=
        \frac{\sqrt{\dimlambda d_{\sigma}}}{|G|}
        \sum_{g, h \in G} 
        \varrho_{\lambda}(g^{-1})^{*}_{i, j}
        \varrho_{\sigma}(h^{-1})_{k, \ell}
        \braket{g * x \,|\, h * x}
        \\
        &=
        \frac{\sqrt{\dimlambda d_{\sigma}}}{|G|}
        \sum_{g \in G} 
        \varrho_{\lambda}(g^{-1})^{*}_{i, j}
        \varrho_{\sigma}(g^{-1})_{k, \ell}
        \\
        &=
        \frac{\sqrt{\dimlambda d_{\sigma}}}{|G|}
        \cdot
        \frac{|G|}{\dimlambda}
        \delta_{\varrho, \sigma}
        \delta_{i, k}
        \delta_{j, \ell}
        \\
        &=
        \delta_{\varrho, \sigma}
        \delta_{i, k}
        \delta_{j, \ell}
        \,.\qedhere
    \end{align*}
\end{proof}

\begin{lemma}
    \label{lemma:minting-plancherel}
    The serial number\textemdash that is, the irrep label $\lambda$\textemdash produced by the Minting is sampled according to the Plancherel measure of $\lambda$ in $G$. That is, for all $\lambda \in \widehat{G}$,
    \begin{align*}
        \Pr
        \left[
            \lambda = \sigma
            \;\middle|\;
            (\sigma, \ket{\$^{\sigma}_{ij}})
            \gets
            \mint()
        \right]
        =
        \frac
        {\dimlambda^2}
        {\abs{G}}
        \,.
    \end{align*}
\end{lemma}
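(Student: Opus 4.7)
The plan is to read off the probability directly from the state appearing just before the computational-basis measurement of register $\reg{A}$ in the minting procedure. That is the state written in~\eqref{eq:pre_measurement_money_state}, which, using the shorthand $\ket{\$^{\lambda}_{i,j}}$ introduced right after it, can be rewritten as
\begin{equation*}
    \frac{1}{\sqrt{|G|}}
    \sum_{\substack{\lambda \in \widehat{G} \\ i, j \in [\dimlambda]}}
    \ket{\lambda, j, i}_{\reg{A}}
    \otimes
    \ket{\$^{\lambda}_{i,j}}_{\reg{B}}.
\end{equation*}
By the Born rule, the probability of obtaining any particular outcome $(\lambda, j, i)$ on $\reg{A}$ equals $\tfrac{1}{|G|}\cdot \bigl\|\ket{\$^{\lambda}_{i,j}}\bigr\|^2$.

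The key input is \Cref{lemma:money-orthonormal}, which says that the states $\{\ket{\$^{\lambda}_{i,j}}\}$ are orthonormal; in particular each has unit norm. Therefore the probability of each outcome $(\lambda, j, i)$ is exactly $1/|G|$, independent of the labels. Summing over the $\dimlambda^2$ choices of $(i,j) \in [\dimlambda]\times[\dimlambda]$ consistent with a fixed $\lambda = \sigma$ gives
\begin{equation*}
    \Pr[\lambda = \sigma]
    \;=\;
    \sum_{i,j \in [d_\sigma]} \frac{1}{|G|}
    \;=\;
    \frac{d_\sigma^2}{|G|},
\end{equation*}
which is the Plancherel measure of $\sigma$.

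I do not anticipate any serious obstacle here: the only nontrivial ingredient is the orthonormality of the $\ket{\$^{\lambda}_{i,j}}$, which has already been established in \Cref{lemma:money-orthonormal} via the Schur orthogonality relations together with semiregularity of the action. The rest is a one-line Born-rule calculation. As a sanity check, summing $d_\sigma^2/|G|$ over all $\sigma \in \widehat{G}$ recovers $1$ by \Cref{lem:sum_square_dimension}, confirming that this is a well-defined probability distribution on $\widehat{G}$.
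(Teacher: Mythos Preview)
Your proposal is correct and follows essentially the same approach as the paper: rewrite the pre-measurement state using the orthonormal money states from \Cref{lemma:money-orthonormal}, observe that each outcome $(\lambda,j,i)$ has probability $1/|G|$, and sum over the $\dimlambda^2$ pairs $(i,j)$.
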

\begin{proof}
    We note that can write \Cref{eq:pre_measurement_money_state} as:
    \begin{align*}
        \frac{1}{\sqrt{|G|}}
        \sum_{\substack{\lambda \in \widehat G \\ i, j \in [\dimlambda]}} 
        \ket{\lambda, j, i}_{\reg{A}} 
        \ket{\$^{\lambda}_{i, j}}_{\reg{B}}
        \,.
    \end{align*}
    where the $\ket{\$^{\lambda}_{i, j}}$'s are orthonormal by \Cref{lemma:money-orthonormal}.
    We can see directly that the probability of measuring any triplet of 
    $(\lambda, j, i)$ in register $\reg{A}$
    is exactly $\frac{1}{\abs{G}}$.
    Furthermore, since for each $\lambda \in \widehat{G}$, $i$ and $j$ both run over $[\dimlambda]$, $\lambda$ appears in $\dimlambda^2$ such triplets. The total probability of the mint outputting serial number $\lambda$ is therefore $\frac{\dimlambda^2}{\abs{G}}$, which is the Plancherel measure of $\lambda$.
\end{proof}

\begin{lemma}
    For each $\lambda \in \widehat{G}$ and each $i \in [\dimlambda]$, the set $\{\ket{\mathcal{\$}^{\lambda}_{i, j}}\}_{j \in [\dimlambda]}$ spans a \manifestation, $W^{\lambda}_{i,x}$, of irreducible representation $\lambda$ in the group action representation $\rep(h) = \sum_{g \in G} \ketbra{hg * x}{g * x}$.
\end{lemma}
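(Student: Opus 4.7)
The plan is to prove this by direct computation. I would fix $\lambda \in \widehat{G}$ and $i \in [\dimlambda]$, and show two things: (a) the subspace $W^{\lambda}_{i,x} := \mathrm{span}\{\ket{\$^{\lambda}_{i,j}}\}_{j \in [\dimlambda]}$ is invariant under $\rep(h)$ for every $h \in G$, and (b) the action of $\rep(h)$ on this subspace is exactly $\varrho_{\lambda}(h)$ in the basis $\{\ket{\$^{\lambda}_{i,j}}\}_j$. Combined with the orthonormality already shown in \Cref{lemma:money-orthonormal}, this immediately identifies $W^{\lambda}_{i,x}$ as a $\dimlambda$-dimensional manifestation of the irrep $\lambda$.

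First I would compute $\rep(h)\ket{\$^{\lambda}_{i,j}}$ directly from the definitions:
\begin{align*}
\rep(h)\ket{\$^{\lambda}_{i,j}}
&= \sqrt{\tfrac{\dimlambda}{|G|}} \sum_{g \in G} \varrho_{\lambda}(g^{-1})_{i,j}\, \ket{hg*x}.
\end{align*}
Performing the change of variable $g' = hg$ (equivalently $g = h^{-1}g'$), the exponent becomes $\varrho_{\lambda}((h^{-1}g')^{-1})_{i,j} = \varrho_{\lambda}(g'^{-1}h)_{i,j}$. Then applying the homomorphism property $\varrho_{\lambda}(g'^{-1}h) = \varrho_{\lambda}(g'^{-1})\varrho_{\lambda}(h)$ entrywise yields
\begin{align*}
\rep(h)\ket{\$^{\lambda}_{i,j}}
&= \sqrt{\tfrac{\dimlambda}{|G|}} \sum_{g' \in G} \sum_{k \in [\dimlambda]} \varrho_{\lambda}(g'^{-1})_{i,k}\, \varrho_{\lambda}(h)_{k,j}\, \ket{g'*x} \\
&= \sum_{k \in [\dimlambda]} \varrho_{\lambda}(h)_{k,j}\, \ket{\$^{\lambda}_{i,k}}.
\end{align*}
This exhibits invariance of $W^{\lambda}_{i,x}$ and shows that the matrix of $\rep(h)|_{W^{\lambda}_{i,x}}$ in the basis $\{\ket{\$^{\lambda}_{i,j}}\}_j$ has entries $\varrho_{\lambda}(h)_{k,j}$, i.e.\ coincides with $\varrho_{\lambda}(h)$.

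By \Cref{lemma:money-orthonormal}, the vectors $\{\ket{\$^{\lambda}_{i,j}}\}_{j \in [\dimlambda]}$ are orthonormal, so $W^{\lambda}_{i,x}$ is genuinely $\dimlambda$-dimensional and the identification of the action with $\varrho_{\lambda}$ is unitary. Since $\varrho_{\lambda}$ is irreducible, no proper nontrivial subspace of $W^{\lambda}_{i,x}$ is invariant, so $W^{\lambda}_{i,x}$ is a manifestation of $\lambda$ in $\rep$, as claimed. The main work is the index-shift computation above; there is no real obstacle, only careful bookkeeping of which index of $\varrho_{\lambda}$ gets summed over.
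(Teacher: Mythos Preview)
Your proposal is correct and follows essentially the same approach as the paper: both compute $\rep(h)\ket{\$^{\lambda}_{i,j}}$ directly, perform the change of variable $g' = hg$, use the homomorphism property to factor $\varrho_{\lambda}(g'^{-1}h)$, and arrive at $\sum_{k}\varrho_{\lambda}(h)_{k,j}\ket{\$^{\lambda}_{i,k}}$. Your write-up is slightly more explicit in invoking orthonormality and irreducibility to justify that the resulting invariant subspace is a genuine $\dimlambda$-dimensional manifestation, but the substance is the same.
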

\begin{proof}
    Applying $\rep(h)$ to $\ket{\mathcal{\$}^{\lambda}_{i, j}}$ gives us:
    \begin{align*}
        \rep(h) 
        \ket{\$^{\lambda}_{i, j}} 
        &= 
        \sqrt{\frac{\dimlambda}{|G|}} 
        \sum_{g \in G} 
        \varrho_{\lambda}\left(g^{-1}\right)_{i, j}
        \ket{hg * x}
        \\
        &= 
        \sqrt{\frac{\dimlambda}{|G|}} 
        \sum_{g' = hg \in G} \varrho_{\lambda}\left((g')^{-1}h\right)_{i, j} \ket{g' * x}
        \\
        &= 
        \sqrt{\frac{\dimlambda}{|G|}} 
        \sum_{g \in G} \left( \varrho_{\lambda}\left(g^{-1}\right) \varrho_{\lambda}\left(h\right)\right)_{i, j} \ket{g * x}
        \\
        &= 
        \sqrt{\frac{\dimlambda}{|G|}} 
        \sum_{g \in G} \sum_{k \in [\dimlambda]} \varrho_{\lambda}\left(g^{-1}\right)_{i, k} \varrho_{\lambda}\left(h\right)_{k, j} \ket{g * x}
        \\
        &= 
        \sum_{k \in [\dimlambda]} \varrho_{\lambda}(h)_{k, j} \sum_{g \in G} \varrho_{\lambda}(g^{-1})_{i, k} \ket{g * x}
        \\
        &= 
        \sum_{k \in [\dimlambda]} \varrho_{\lambda}(h)_{k, j} \ket{\$^{\lambda}_{i, k}}
        \,.
    \end{align*}
    We can see that $\rep(h)$ acts exactly as the irreducible representation $\varrho_{\lambda}$ on the space spanned by the money states $\{\ket{\mathcal{\$}^{\lambda}_{i, j}}\}_{j \in [\dimlambda]}$. They must therefore span the same \manifestation $W^{\lambda}_{i,x}$ of irreducible representation $\varrho_{\lambda}$ in $\rep(h)$.
\end{proof}

\begin{corollary}
    For all $\lambda \in \widehat{G}$ and $i, j \in [\dimlambda]$, the money state
    $\ket{\$^{\lambda}_{i, j}}$ is in the subspace 
    $W^{\lambda}_{x} = \bigoplus_{i \in [\dimlambda]} W^{\lambda}_{i,x}$ 
    corresponding to irreducible representation $\varrho_{\lambda}$ of the group action representation starting from $x$.
    Moreover, if the group action has multiple orbits, then the full isotypic component of $\lambda$ is
    $W^{\lambda} = \bigoplus_{y \in \mathsf{Orb}(\rep)} W^{\lambda}_{y}$ where $y$ runs over the set of orbits of the group action, choosing an element from each orbit arbitrarily. 
\end{corollary}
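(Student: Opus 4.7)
The plan is to deduce both claims as direct consequences of the preceding lemma together with the orbit decomposition of the group action representation. For the first part, the previous lemma shows that for each fixed $i$, the set $\{\ket{\mathcal{\$}^{\lambda}_{i,j}}\}_{j\in[\dimlambda]}$ spans the \manifestation $W^{\lambda}_{i,x}$. Hence $\ket{\mathcal{\$}^{\lambda}_{i,j}} \in W^{\lambda}_{i,x} \subseteq \bigoplus_{i\in[\dimlambda]} W^{\lambda}_{i,x} = W^{\lambda}_x$, which is exactly the statement. One additionally wants to observe that $W^{\lambda}_x$, the direct sum of these $\dimlambda$ manifestations of $\varrho_\lambda$, exhausts the isotypic component of $\lambda$ within the orbit of $x$, which I handle together with the second claim.

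For the second part, the strategy is to first use the orbit decomposition of $X$ to write $\mathcal{H} = \bigoplus_{y\in\mathsf{Orb}(\rep)} \mathcal{H}_y$, where $\mathcal{H}_y = \mathrm{span}\{\ket{z}:z \in \mathsf{Orb}(y)\}$ and each $\mathcal{H}_y$ is a $\rep$-invariant subspace. Since the action is semiregular, the stabilizer of any $y$ is trivial, so the map $g \mapsto g*y$ is a bijection between $G$ and $\mathsf{Orb}(y)$, and the restriction of $\rep$ to $\mathcal{H}_y$ is unitarily isomorphic to the left-regular representation of $G$. By the standard decomposition of the left-regular representation (\Cref{lem:decomposition_into_irreps} together with the computation of multiplicities for $\regrep$), the isotypic component of $\lambda$ inside $\mathcal{H}_y$ has dimension exactly $\dimlambda^2$ and is a direct sum of $\dimlambda$ manifestations of $\varrho_\lambda$.

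Next, I would apply the previous lemma with the starting element $y$ in place of $x$ to conclude that $W^{\lambda}_y = \bigoplus_{i \in [\dimlambda]} W^{\lambda}_{i,y}$ realizes exactly these $\dimlambda$ manifestations inside $\mathcal{H}_y$, and hence equals the full isotypic component of $\lambda$ in $\mathcal{H}_y$. Since the isotypic component of $\lambda$ in the full representation $\rep$ is $W^{\lambda} = \bigoplus_y (W^{\lambda} \cap \mathcal{H}_y)$ (as each $\mathcal{H}_y$ is $\rep$-invariant, so the isotypic projection respects the orbit decomposition), assembling these pieces gives $W^{\lambda} = \bigoplus_{y \in \mathsf{Orb}(\rep)} W^{\lambda}_y$, as claimed.

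The content of this corollary is essentially bookkeeping; the only subtle point is that I must invoke semiregularity in order to identify each orbit with the regular representation and thereby pin down the multiplicity of $\varrho_\lambda$ per orbit. If the action had nontrivial stabilizers, each orbit would instead realize an induced representation from the trivial representation of the stabilizer, and the multiplicities would change, so this hypothesis cannot be dropped without adjustment.
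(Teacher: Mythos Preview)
Your proposal is correct and matches the paper's approach: the paper states this corollary without proof, treating it as an immediate consequence of the preceding lemma together with the standard orbit decomposition, and your argument fills in exactly those details (including the use of semiregularity to identify each orbit with the left-regular representation).
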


\paragraph{Ver:} To verify, we begin by measuring that the state has support only on the set $X$. We then repeat essentially the same process as for minting, but starting with the claimed banknote in the second register, rather than $\ket{x}$. Suppose we want to verify a state $\ket{\yen^{\lambda}}$ with claimed serial number $\lambda$, we prepare the uniform superposition over group elements, perform the group action on $\ket{\yen^{\lambda}}$ in superposition, and then measure the control register in the Fourier basis. That is, we perform a course Fourier measurement on $\ket{\yen^{\lambda}}$ and check if it has the claimed label.

Suppose that $\ket{\yen^{\lambda}}$ is a valid state for label $\lambda$. That is $\ket{\yen^{\lambda}} = \sum_{i,j \in [\dimlambda]} \alpha_{i, j} \ket{\$^{\lambda}_{i, j}}$ for some coefficients $\alpha_{i, j}$. This gives the following:

\begin{align*}
    \frac{1}{\sqrt{|G|}}\sum_{g \in G} \ket{g} \ket{\yen^{\lambda}}
    =& \frac{1}{\sqrt{|G|}}\sum_{g \in G} \ket{g} \sum_{i,j \in [\dimlambda]} \alpha_{i, j} \ket{\$^{\lambda}_{i, j}}
    \\
    \xrightarrow[\text{action}]{\text{group}}&
    \frac{1}{\sqrt{|G|}}\sum_{g \in G} \ket{g} \sum_{i,j,k \in [\dimlambda]} \alpha_{i, j} \; \varrho_{\lambda}(g)_{k, j} \ket{\$^{\lambda}_{i, k}}
    \\
    =&
    \sum_{i,j,k \in [\dimlambda]} \alpha_{i, j} \frac{1}{\sqrt{|G|}}\sum_{g \in G} \varrho_{\lambda}(g)_{k, j} \ket{g} \ket{\$^{\lambda}_{i, k}}
    \\
    \xrightarrow[g]{\text{invert}}&
    \sum_{i,j,k \in [\dimlambda]} \alpha_{i, j} \frac{1}{\sqrt{|G|}}\sum_{g \in G} \varrho_{\lambda}(g)_{k, j} \ket{g^{-1}} \ket{\$^{\lambda}_{i, k}}
    \\
    =&
    \sum_{i,j,k \in [\dimlambda]} \alpha_{i, j} \frac{1}{\sqrt{|G|}}\sum_{g \in G} \varrho_{\lambda}(g^{-1})_{k, j} \ket{g} \ket{\$^{\lambda}_{i, k}}
    \\
    =&
    \sum_{i,j,k \in [\dimlambda]} \frac{\alpha_{i, j}}{\sqrt{\dimlambda}} \ket{\mathcal{L}^{\lambda}_{k, j}} \ket{\$^{\lambda}_{i, k}}
\end{align*}

Now if we perform a course (left-regular) Fourier basis measurement on the first register (perform a Fourier transform and measure the irrep label) we get the correct serial number $\lambda$. Now we have two options: we can further perform a fine Fourier basis measurement to get $k$ and $j$, collapsing the quantum money state to a new state $\ket{\$'^\lambda} = \sum_{i \in [\dimlambda]} \alpha'_{i, k} \ket{\$^{\lambda}_{ik}}$ with different weights on the same set $\{\ket{\$^{\lambda}_{a, b}}\}_{a,b \in [\dimlambda]}$ of basis states, but nevertheless still a valid quantum money state. Or, alternatively, we can refrain from measuring $k$ and $j$ and simply un-compute the whole process, which, in the case that verification passed with certainty, recovers the original state~$\ket{\yen^{\lambda}}$.


\begin{remark}
    Note that states of the form $\sum_{i,j \in [\dimlambda]} \alpha_{i, j} \ket{\$^{\lambda}_{i, j}}$ are not the only states that pass verification. If we denote 
    $\ket{\$^{\lambda}_{i, j} * x} := \sqrt{\frac{\dimlambda}{|G|}}\sum_{g \in G} \varrho_{\lambda}\left(g^{-1}\right)_{i, j} \ket{g * x}$ as the quantum money state produced by beginning with starting set element $x \in X$, then states of the form $\ket{\$^{\lambda}_{i, j} * y}$ for all $y \in X$ and $i,j \in [\dimlambda]$ (and their superpositions) also pass verification. Thus they are also valid quantum money states, despite not being the result of the minting process, and must be considered in the security arguments.
\end{remark}

\subsection{Variations on the Construction}

Here, we describe some possible variations of the above scheme.

\paragraph{Membership check.} The set $X$ may be a collection of sparse strings. In this case, a quantum money adversary may try to forge with fake banknotes that have support outside of $X$. If the group action supports membership testing for $X$, it is natural to also have the verifier check that a supposed banknote has support on $X$. Such a check is used in~\cite{ITCS:Zhandry24a} to analyze the security of their construction.  For may group actions, however, such a membership check is not efficiently feasible. 
In the case in which the group still acts compatibly (or approximately so) on elements outside of $X$ which cannot be distinguished from $X$, then this can be treated as the group action having additional orbits.
In \Cref{sec:intransitive}, we show an example of how to handle such a group action.

\paragraph{Irrep check.} It may be useful to insist that the serial number of the banknote corresponds to an irrep with certain properties. Notably, we will consider adding checks on the \emph{dimension} of the irrep, assuming the dimension is efficiently computable. For example, we may insist that banknotes come from irreps of dimension at least 2.

For such irrep checks, in order to ensure correctness, we need to ensure that the mint always produces irreps with the given property. If such irreps are at least inverse-polynomially dense according to the Plancherel measure, we can have the mint keep minting banknotes until it produces one with the given property. 

The following lemma shows that the irreps of size at least 2 are dense for all non-Abelian groups. Thus, for any non-Abelian group action, we can insist on valid banknotes having irrep dimension at least 2. We will make this assumption in the security analysis of our scheme in the following subsections.

\begin{lemma}
\label{lemma:prob-of-1d-irrep}
For any non-Abelian $G$, let $d$ be the dimension of a random irrep sampled according to the Plancherel measure. Then $\Pr[d\geq 2]\geq 1/2$.
\end{lemma}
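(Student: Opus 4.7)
The plan is to reduce the question to counting one-dimensional irreducible representations, and then to invoke a standard fact from character theory identifying these with characters of the abelianization of $G$.

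First, I would unfold the definition of the Plancherel measure. Letting $N_1 = |\{\lambda \in \widehat{G} : \dim(\varrho_\lambda) = 1\}|$ denote the number of one-dimensional irreducible representations of $G$, we have
\begin{equation*}
    \Pr[d = 1] = \sum_{\substack{\lambda \in \widehat{G} \\ \dim(\varrho_\lambda) = 1}} \mu(\lambda) = \sum_{\substack{\lambda \in \widehat{G} \\ \dim(\varrho_\lambda) = 1}} \frac{\dim(\varrho_\lambda)^2}{|G|} = \frac{N_1}{|G|}\,.
\end{equation*}
Thus the claim $\Pr[d \geq 2] \geq 1/2$ is equivalent to the bound $N_1 \leq |G|/2$.

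Next, I would invoke the standard correspondence between one-dimensional representations of $G$ and representations of its abelianization $G^{\mathrm{ab}} := G/[G,G]$, where $[G,G]$ denotes the commutator subgroup. Any one-dimensional representation $\varrho : G \to \mathbb{C}^*$ must be trivial on $[G,G]$ (since its image is Abelian, commutators map to the identity), so it factors through $G^{\mathrm{ab}}$. Conversely, every character of $G^{\mathrm{ab}}$ pulls back to a one-dimensional representation of $G$. Hence $N_1$ equals the number of irreducible representations of the Abelian group $G^{\mathrm{ab}}$, which in turn equals $|G^{\mathrm{ab}}| = |G|/|[G,G]|$.

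Combining the two steps gives $\Pr[d=1] = 1/|[G,G]|$. Since $G$ is non-Abelian, the commutator subgroup $[G,G]$ contains at least one non-identity element, so $|[G,G]| \geq 2$, yielding $\Pr[d=1] \leq 1/2$ and hence $\Pr[d \geq 2] \geq 1/2$, as desired. There is no real obstacle here: the entire argument consists of applying the Plancherel measure formula and the character-theoretic identification of $1$-dimensional irreps with $\widehat{G^{\mathrm{ab}}}$. The only subtlety worth flagging is that the bound is tight exactly when $|[G,G]| = 2$ (e.g., the dihedral group $D_8$), so no improvement to a strict inequality is possible without further assumptions on $G$.
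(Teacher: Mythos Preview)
Your proposal is correct and follows essentially the same approach as the paper: both identify the one-dimensional irreps with characters of $G/[G,G]$, note that each contributes $1/|G|$ to the Plancherel measure, and bound their total mass by $1/|[G,G]| \leq 1/2$ for non-Abelian $G$. Your writeup is slightly more detailed (spelling out the factoring-through-$[G,G]$ argument and the tightness remark), but the argument is the same.
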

\begin{proof}The 1-dimensional irreps are in bijection with the quotient of the commutator subgroup $G/[G,G]$. Since $|[G,G]|\geq 2$ for non-Abelian groups, $|G/[G,G]|\leq |G|/2$. The probability of sampling any given 1-dimensional irrep according to the Plancherel measure is $1/|G|$. Over all $\leq |G|/2$ such irreps, the probability of sampling \emph{any} 1-dimensional irrep is at most $1/2$. This means the probability sampling an irrep of dimension 2 or higher is at least $1/2$.
\end{proof}

\subsection{Security from Preaction Secure Group Actions}
\label{sec:preaction-security}

In this subsection we give a security proof from cryptographic group actions that are preaction-secure, which we define here. A \emph{preaction} on a group action is an operation that on input $x, g*x \in X$ computes $h \circ (g*x) := gh^{-1} * x$ for some $h \in \G$.%
\footnote{We assume here, as before, that the action is semiregular, so that this is well-defined.}
That is, it prepends a group element $h^{-1}$ on the right of $g$, as if $h^{-1}$ had acted \emph{before} $g$ had acted.
While for Abelian group actions, this is equivalent to the group action itself (up to inverting $h$), for non-Abelian groups, this is not generically efficient. Note, however that a preaction \emph{is} itself a group action, as it satisfies compatibility---that is, $h_1 \circ (h_2 \circ (g*x)) = (h_1 h_2) \circ (g*x)$. And moreover, the preaction of a preaction is the original group action.

We introduce both a search-type assumption and a decision-type assumption, that constitute different levels of preaction security. The search-type assumption, preaction hardness, requires that it is computationally hard to perform a preaction. The decision-type assumption, preaction indistinguishability, requires that it is hard to tell when a preaction has been performed. In both cases, the preactions are defined relative to a predetermined and fixed starting element $x \in X$.


\begin{assumption}[$\epsilon$-Preaction Hardness]
    \label{assum:preaction-is-hard}
    Given $g*x$, and $h$ for random $g,h \gets \G$, it is hard to output $gh^{-1} * x$.  That is, for all QPT adversaries, $\mathcal{A}$,%
    \footnote{
        Note that an adversary can always trivially perform a preaction with probability $\frac{1}{\abs{G}}$ by performing an action with a random group element. This is because the orbits of the action and preaction are always the same. We thus define the advantage as the best that can be done beyond this trivial attack. Groups that are not too far from being Abelian may have a different specialized trivial attack, so it may be convenient to assume the difficulty beyond that, but for simplicity, we do not get into that here.    
    }
    \begin{align*}
        \Pr\left[ z = gh^{-1} * x \; : \; x \gets X, \; g, h \gets \G, \; z \gets \mathcal{A}(x, g*x, h) \right] \le \frac{1}{\abs{\G}} + \epsilon\,.
    \end{align*}
\end{assumption}

\begin{longfbox}[breakable=false, padding=1em, margin-top=1em, margin-bottom=1em]
\begin{algo}[{\bf Preaction Indistinguishability Security Game}]\label{prot:preaction-ind-security-game}
\end{algo}
\begin{enumerate}
    \item Challenger samples $b \in \bits$ and two uniformly random group elements $h_1, h_2 \gets \G$.
    \item Adversary sends a register $\reg{A}$ to the challenger.
    \label{step:preaction-ind-query}
    \item If $b = 0$, the challenger applies the action of $h_1$ to $\reg{A}$. Otherwise, the challenger applies both the action of $h_1$ and the preaction of $h_2$ to $\reg{A}$. Send $\reg{A}$ back to the adversary.
    \label{step:preaction-ind-chal}
    \item Adversary outputs $b'$ and wins if $b' = b$.
\end{enumerate}
\end{longfbox}

\begin{assumption}[$\epsilon$-Preaction Indistinguishability]
    \label{assum:preaction-indist}
    It is hard to distinguish whether a preaction has been performed.  
    Formally, no adversary can win at the preaction indistinguishability security game (\Cref{prot:preaction-ind-security-game}) with advantage greater than $\epsilon$.
    That is, if we write the action of the challenger in Step \ref{step:preaction-ind-chal} as $\mathsf{F}_b^{h_1, h_2} \; : \; g*x \;\mapsto\; h_1 \, g \, h_2^{-b} * x$. 
    Then for all QPT adversaries, 
    $\mathcal{A}$,
    that make a single query%
    \footnote{
        We could in general define multi-round security game for preaction indistinguishability, in which Steps \ref{step:preaction-ind-query} and \ref{step:preaction-ind-chal} are repeated. Preaction security defined this way would be a stronger assumption, and may be useful for other settings. However, we do not formally define the stronger version as we are able to prove security from this weaker assumption, which gives a stronger security guarantee, and is more likely to hold for a larger class of group actions.
    }
    to $\mathsf{F}$,
    \begin{align*}
        \left|
            \Pr
            \left[ 
                0 \gets \mathcal{A}^{\mathsf{F}_0^{h_1, h_2}} 
                \; : \; 
                h_1, h_2 \gets \G 
            \right] 
            -
            \Pr
            \left[ 
                0 \gets \mathcal{A}^{\mathsf{F}_1^{h_1, h_2}} 
                \; : \; 
                h_1, h_2 \gets \G 
            \right] 
        \right|
        \le 
        \epsilon
    \end{align*}
    Note that when $b = 0$, $\mathsf{F}_b^{h_1, h_2}$ performs a group action for a random group element $h_1$, and when $b = 1$, it performs both a random \emph{bi-action}---that is, a random group action with $h_1$ and a random group pre-action with $h_2$.  
\end{assumption}

\begin{definition}
    We say that a group action of group $\G_{\lambda}$ on set $X_{\lambda}$ with starting element $x$ is $\epsilon$-\emph{preaction secure} if both \Cref{assum:preaction-is-hard} and \Cref{assum:preaction-indist} hold for the group action against any QPT adversary with advantage $\epsilon$. We say that the group action is preaction secure if it is $\negl(\lambda)$-preaction secure for any negligible function $\negl$.
\end{definition}

\begin{remark}
    Classically, distinguishing preactions in one round is \emph{information-theoretically impossible}. This is because both cases\textemdash with or without a preaction\textemdash send the element to a uniformly random element in its orbit. Interestingly, as we will see, this is \emph{not} the case for quantum distinguishers, since they are allowed to query $\mathsf{F}_b^{h_1, h_2}$ on superpositions of elements.
\end{remark}

\begin{remark}\label{remark:preactions-trivial-for-Abelian}
    For \emph{Abelian} group actions, breaking preaction hardness is trivial, since the preaction is equal to the action. On the other hand, for this same reason, preaction indistinguishability is \emph{information-theoretically impossible}: since the preaction is equal to the action, both cases\textemdash with or without a preaction\textemdash end up performing a uniformly random group action.
\end{remark}

Because of \Cref{remark:preactions-trivial-for-Abelian}, preaction security is a security notion that only makes sense for non-Abelian group actions. Moreover, the security proof for our quantum money scheme makes explicit use of the properties of representations of non-Abelian groups to prove the reduction.

In fact, for quantum adversaries and non-Abelian group actions, preaction indistinguishability is a stronger assumption than preaction hardness:

\begin{theorem}\label{thm:preaction-ind-implies-hardness}
    Let $(\G, X, *, x)$ be a semiregular single-orbit group action of a non-Abelian group $\G$ acting on set $X$.
    Then if the group action satisfies preaction indistinguishability with advantage $\epsilon$, then it also satisfies preaction hardness with advantage $\theta\left(\sqrt{\epsilon + 2\avg_{\lambda \in \widehat{G}}\left[\frac{1}{\dimlambda}\right]}\right)$.
\end{theorem}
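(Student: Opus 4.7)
The plan is to prove the contrapositive: given a preaction-hardness adversary $A$ with advantage $\delta$, build a preaction-indistinguishability distinguisher $D$ with advantage $\Omega(\delta^{2}) - 2\avg_{\lambda}[1/\dimlambda]$, which rearranges to the stated bound.

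\textbf{Step 1 (Turn $A$ into an approximate preaction representation).} First I would purify $A$ into a unitary $U_A$ that, on random $(y,h)$, maps $\ket{y}\ket{h}\ket{0}$ to a state with amplitude $\approx \sqrt{1/|G|+\delta}$ on the correct preaction output $\ket{y}\ket{h}\ket{yh^{-1}}$. Read as a controlled map on the set register (with the ancilla traced out), this is a noisy implementation of the preaction representation $\ket{h}\otimes\ket{g*x}\mapsto \ket{h}\otimes\ket{gh^{-1}*x}$. The preaction, restricted to the orbit of $x$, is isomorphic to the right-regular representation of $G$, hence its irreducible decomposition contains each $\lambda\in\widehat{G}$ with multiplicity $\dimlambda$, and its Fourier-basis states are exactly the money states $\ket{\$^{\lambda}_{i,j}}$, but with the roles of the manifestation index and within-subspace index swapped compared to the group action: $\rep(h)$ acts nontrivially only on $j$, whereas the preaction acts nontrivially only on $i$.

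\textbf{Step 2 (Duality: get subspace extraction).} Next I would feed $U_A$ into the forward direction of the approximate duality theorem (\Cref{thm:approx_duality_1}). The Fourier-sampling circuit of \Cref{fig:fourier-sampling}, instantiated with $U_A$ and the (efficient) quantum Fourier transform of $G$, yields an approximate \fse for the preaction representation that takes $\ket{\$^{\lambda}_{i,j}}$ to $\ket{\phi^{\lambda}_{j}}\ket{\lambda, i}$. The quantitative content is that although the per-element correctness $1/|G|+\delta$ of $A$ is tiny, the structured sum over group elements performed inside the QFT coherently lifts this into an extraction whose success probability on a fixed manifestation index $i$ is $1/\dimlambda + \Theta(\delta^{2})$. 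The square root of this excess is exactly the $\sqrt{\epsilon+2\avg[1/\dimlambda]}$ gap appearing in the statement, obtained via a Cauchy--Schwarz / Schur-orthogonality computation on the $\lambda$-block.

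\textbf{Step 3 (The distinguisher).} $D$ runs honest $\mint$ to obtain the triple $(\lambda,i,j)$ together with the state $\ket{\$^{\lambda}_{i,j}}$, sends this state to the preaction-indistinguishability challenger, receives $\rho_{b}$, applies the approximate preaction subspace extraction from Step~2 to read out a guess $\hat\imath$, and outputs $b'=0$ iff $\hat\imath = i$. A direct Schur-orthogonality calculation gives
\begin{align*}
\rho_{0} = \tfrac{1}{\dimlambda}\sum_{k}\proj{\$^{\lambda}_{i,k}}, \qquad \rho_{1} = \tfrac{1}{\dimlambda^{2}}\sum_{k,\ell}\proj{\$^{\lambda}_{k,\ell}},
\end{align*}
so the manifestation index is exactly preserved when $b=0$ and uniformly randomized when $b=1$. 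Using the extraction from Step~2, the conditional probability of outputting $0$ is at least $1/\dimlambda + \Theta(\delta^{2})$ when $b=0$ and exactly $1/\dimlambda$ when $b=1$. Averaging the Plancherel-distributed $\lambda$ produced by $\mint$ yields a preaction-indistinguishability advantage of $\Omega(\delta^{2}) - 2\avg_{\lambda}[1/\dimlambda]$, and inverting the relation gives the claimed $\delta \le \theta\!\left(\sqrt{\epsilon + 2\avg_{\lambda}[1/\dimlambda]}\right)$.

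\textbf{Main obstacle.} The delicate step is Step~2: $A$ is an extremely weak approximation to the true preaction (overlap only $\sim 1/\sqrt{|G|}$ above trivial), so one cannot directly invoke the approximate duality with a small Hilbert--Schmidt parameter on $U_A$ itself. The correct perspective is that it is the Fourier-sampling \emph{circuit} built from $U_A$, applied to the specific mixed states $\rho_{0},\rho_{1}$ produced by the mint, that must be shown to approximate the ideal \fse; this averaged analysis is what converts the per-element amplitude $\sqrt{1/|G|+\delta}$ into a genuine $\Theta(\delta^{2})$ bias over the unavoidable $1/\dimlambda$ baseline that arises because a random $\hat\imath$ already matches $i$ with probability $1/\dimlambda$. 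Tracking these two effects is exactly what produces both the square-root loss and the $2\avg_{\lambda}[1/\dimlambda]$ correction in the final bound.
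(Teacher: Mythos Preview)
Your proposal is correct and follows essentially the same three-step route as the paper (\Cref{claim:right-repr-hard}, \Cref{cor:preaction-hard-to-right-meas}, and \Cref{claim:right-fourier-measurement-hard}): preaction adversary $\leftrightarrow$ right representation, then approximate duality to get a right-Fourier measurement of the manifestation index $i$, then mint, query the challenger, measure $\hat\imath$, and compare. One clarification on your ``main obstacle'': the paper does invoke the approximate duality directly on the weak adversary---the proof of \Cref{thm:approx_duality_1} actually establishes the \emph{exact} identity between the extraction's average amplitude and $\avg_{g}\Re\langle\otherrep(g),\rep(g)\rangle$, valid for any approximation quality, and then Jensen's inequality (rather than Cauchy--Schwarz) converts this amplitude bound to a probability bound $\geq 4\delta^{2}$; the $\avg_{\lambda}[1/\dimlambda]$ term enters only as the $b=1$ baseline in Step~3, not as an additive offset to the extraction probability in Step~2.
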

\begin{proof}[Proof sketch.]
    We defer the proof of \Cref{thm:preaction-ind-implies-hardness} to \Cref{proof:preaction-ind-implies-hardness} because it makes use of the quantum money construction of \Cref{sec:preaction-secure-construction}.  
    The main idea is that preactions are themselves a representation of the group, with the Fourier indices exchanging roles relative to their roles for the group action. Thus for semiregular single-orbit group actions, the ability to perform preactions allows us to measure in which \manifestation of the irrep a state lies via the duality theorem (\Cref{thm:approx_duality_1}), with some advantage. We can then distinguish if a preaction has occurred by testing if it has moved us to a different \manifestation of the irrep.
\end{proof}

Therefore, when working with non-Abelian group actions that are semiregular and single-orbit, it suffices for preaction security to consider only the decision-type assumption of the indistinguishability of preactions.

\subsubsection{Construction.}\label{sec:preaction-secure-construction}
Let $(\G, X, x)$ be a semiregular single-orbit
group action satisfying the requirements of \Cref{sec:qm-construction}. 
The quantum money/lightning construction follows the framework of \Cref{sec:qm-construction}.
We show below that if the group action is preaction secure, then the quantum money construction satisfies lightning security.

\subsubsection{Security}
Let $\ket{\$_{i, j}^{\lambda}} \propto \sum_{g \in \G} \varrho_{\lambda}(g^{-1})_{i, j} \ket{g * x}$ be the quantum money states minted by the scheme.
We show a tight connection between the ability to perform preaction (i.e. breaking preaction hardness, \Cref{assum:preaction-is-hard}) and performing a ``right representation'' on the quantum money state, that is, coherently mapping the quantum money state as $\ket{\$_{i, j}^{\lambda}} \mapsto \sum_{k \in [\dimlambda]} \varrho(h^{-1})_{i, k} \ket{\$_{k, j}^{\lambda}}$ on input $h \gets \G$. 
This right representation treats the span of the same vector across the different \manifestations of $\lambda$ as a single invariant subspace. In other words, it is to the standard group action representation what the right-regular representation is to the left-regular representation.
We say that an adversary can perform the right representation with advantage $\epsilon$ if it can perform a unitary with Hilbert-Schmidt inner product at least $\frac{1}{\abs{\G}} + \epsilon$ with the ideal right representation.

\begin{lemma}\label{claim:right-repr-hard}
    Any adversary that performs a preaction $x, g*x, h \mapsto gh^{-1} * x$ with advantage $\epsilon$ for a fixed starting element, $x \in X$, and random $g,h \gets \G$, can be used to perform a right representation, $\ket{\$_{i, j}^{\lambda}} \mapsto \sum_{k \in [\dimlambda]} \varrho_{\lambda}(h^{-1})_{i, k} \ket{\$_{k, j}^{\lambda}}$ 
    with the same advantage $\epsilon$.
    Similarly any adversary that performs 
    a right representation with advantage $\epsilon$.
    can be used to perform 
    a preaction 
    with the same advantage $\epsilon$.
\end{lemma}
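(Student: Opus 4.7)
The plan is to show that the ideal preaction $P_h:\ket{g*x}\mapsto\ket{gh^{-1}*x}$ and the ideal right representation $R_h:\ket{\$_{i,j}^{\lambda}}\mapsto\sum_k \varrho_\lambda(h^{-1})_{i,k}\ket{\$_{k,j}^{\lambda}}$ are \emph{the same unitary} on the Hilbert space $\mathcal{H}_X$ spanned by the orbit of $x$, merely expressed in two different orthonormal bases. Because the action is semiregular and single-orbit, the set basis $\{\ket{g*x}\}_{g\in G}$ is a basis of $\mathcal{H}_X$ of size $|G|$, and by \Cref{lem:sum_square_dimension} the money basis $\{\ket{\$_{i,j}^{\lambda}}\}_{\lambda,i,j}$ also has size $\sum_\lambda d_\lambda^2 = |G|$ and spans the same space (by \Cref{lemma:money-orthonormal}).

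First I would verify the key algebraic identity
\begin{align*}
P_h \ket{\$_{i,j}^{\lambda}}
&= \sqrt{\tfrac{d_\lambda}{|G|}}\sum_{g\in G} \varrho_\lambda(g^{-1})_{i,j}\ket{gh^{-1}*x}\\
&= \sqrt{\tfrac{d_\lambda}{|G|}}\sum_{g'\in G} \varrho_\lambda(h^{-1}g'^{-1})_{i,j}\ket{g'*x}\\
&= \sum_k \varrho_\lambda(h^{-1})_{i,k}\ket{\$_{k,j}^{\lambda}}
\end{align*}
by reindexing $g'=gh^{-1}$ and applying the homomorphism property $\varrho_\lambda(h^{-1}g'^{-1})=\varrho_\lambda(h^{-1})\varrho_\lambda(g'^{-1})$. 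This confirms that $P_h = R_h$ as operators on $\mathcal{H}_X$.

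Second, since Hilbert--Schmidt inner products are basis-independent, for any candidate unitary $U_h$ we have $\langle U_h, P_h\rangle = \langle U_h, R_h\rangle$. The forward direction then follows immediately: an adversary whose coherent implementation $U_h$ approximates $P_h$ to HS overlap at least $1/|G|+\epsilon$ simultaneously approximates $R_h$ to the same overlap, hence the same advantage; the reverse direction is symmetric, since we can simply reinterpret the same $U_h$ acting on register states written in the set basis.

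The main obstacle, if any, is reconciling the two natural measures of ``advantage'' in play: \Cref{assum:preaction-is-hard} phrases preaction advantage as a search-style success probability above $1/|G|$, whereas the right-representation advantage is defined via HS overlap. I expect this bridge to be handled by a standard coherent-query purification---running the preaction adversary as a unitary on the money-state register, measuring in the set basis recovers the preaction success, while the amplitude structure of the same coherent implementation yields the HS overlap with $R_h$ via the identity $P_h=R_h$. Because the two ideal unitaries coincide, the approximation quality transports exactly, and the parameter $\epsilon$ is preserved in both directions.
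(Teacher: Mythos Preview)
Your proposal is correct and follows essentially the same route as the paper: both hinge on the algebraic identity that the ideal preaction unitary and the ideal right-representation unitary are literally the same operator on the orbit Hilbert space (the paper writes this as two expressions for the same controlled unitary $U$), so Hilbert--Schmidt overlap with any candidate $\widetilde U$ is identical in either basis. The ``obstacle'' you flag is handled in the paper exactly as you anticipate---by treating the preaction adversary as a coherent unitary $\widetilde U$ and identifying the averaged amplitude $\tfrac{1}{|G|^2}\sum_{h,g}\langle gh^{-1}*x\,|\,\psi(h,g*x)\rangle$ directly with the Hilbert--Schmidt inner product $\langle U,\widetilde U\rangle$.
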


\begin{proof}
    Consider an adversary that performs preactions with advantage $\epsilon$, and consider what happens in the ideal case, in which the preaction is performed exactly. We start with the money state
    \begin{align*}
        \ket{\$_{i, j}^{\lambda}} \propto \sum_{g \in \G} \varrho_{\lambda}(g^{-1})_{i, j} \ket{g * x}
        \,.
    \end{align*}
    We perform a preaction with $h$ to get
    \begin{align*}
        \rightarrow &
        \sum_{g \in \G} \varrho_{\lambda}(g^{-1})_{i, j} \ket{g h^{-1} * x}
        \\
        = &
        \sum_{g \in \G} \varrho_{\lambda}(h^{-1} g^{-1})_{i, j} \ket{g * x}
        \\
        = &
        \sum_{\substack{g \in \G \\ k \in [\dimlambda]}} \varrho_{\lambda}(h^{-1})_{i, k} \varrho_{\lambda}(g^{-1})_{k, j} \ket{g * x}
        \\
        = &
        \sum_{k \in [\dimlambda]}
        \varrho_{\lambda}(h^{-1})_{i, k} 
        \sum_{g \in \G}
        \varrho_{\lambda}(g^{-1})_{k, j} \ket{g * x}
        \\
        \propto &
        \sum_{k \in [\dimlambda]}
        \varrho_{\lambda}(h^{-1})_{i, k} 
        \ket{\$_{i, j}^{\lambda}}
        \,.
    \end{align*}

    Let $U = \sum_{h, \lambda, i, k, j} \proj{h} \otimes \varrho_{\lambda}(h^{-1})_{i, k}\ket{\$^{\lambda}_{k, j}}\!\!\bra{\$^{\lambda}_{i, j}}$ be the unitary that performs the right representation controlled on a group element $h$. We can therefore rewrite it as $U = \sum_{h, g} \proj{h} \otimes \ket{g h^{-1} * x}\!\!\bra{g * x}$.

    Now suppose that the adversary performs preactions with advantage $\epsilon$. That is, it performs some $\widetilde{U} = \sum_{h, g} \proj{h} \otimes \ket{\psi(h, g*x)}\!\!\bra{g * x}$ where the probability of success is $\frac{1}{\abs{\G}^2}\sum_{h,g} \braket{gh^{-1} * x \,|\, \psi(h, g*x)} = \frac{1}{\abs{\G}} + \epsilon$. Then we have that, by the definition of the Hilbert-Schmidt inner product,
    $$
        \left\langle U, \widetilde{U}\right\rangle 
        =
        \frac{1}{\abs{\G}^2}
        \sum_{h, g} 
        \braket{g h^{-1} * x \,|\, \psi(h,g*x)}
        =
        \frac{1}{\abs{\G}} + \epsilon
        \,.
    $$

    Conversely, consider an adversary that performs the right representation with advantage $\epsilon$. That is, it performs some operator $\widetilde{U}$ such that $\left\langle U, \widetilde{U}\right\rangle = \frac{1}{\abs{\G}} + \epsilon$, where $U = \sum_{h, \lambda, i, k, j} \proj{h} \otimes \varrho_{\lambda}(h^{-1})_{i, k}\ket{\$^{\lambda}_{k, j}}\!\!\bra{\$^{\lambda}_{i, j}}$ is the ideal unitary that performs the right representation.
    
    Consider what happens when the ideal unitary is run on $\ket{g * x}$.  We start by writing $\ket{g * x}$ in the basis of the quantum money states $\{\ket{\$_{i, j}^{\lambda}}\}_{{\lambda \in \widehat\G, \; i,j \in [\dimlambda]}}$:
    \begin{align*}
        \ket{g * x}
        \propto \sum_{\substack{\lambda \in \widehat\G \\ i,j \in [\dimlambda]}} \varrho_{\lambda}(g)_{j, i} 
        \ket{\$_{i, j}^{\lambda}}\,.
    \end{align*}
    Now we perform the right representation to get
    \begin{align*}
        \rightarrow &
        \sum_{\substack{\lambda \in \widehat\G \\ i,j \in [\dimlambda]}} \varrho_{\lambda}(g)_{j, i}  
        \sum_{i' \in [\dimlambda]} \varrho_{\lambda}(h^{-1})_{i,i'} \ket{\$_{i', j}^{\lambda}}
        \\
        = &
        \sum_{\substack{\lambda \in \widehat\G \\ i,i',j \in [\dimlambda]}} 
        \varrho_{\lambda}(g)_{j, i}  
        \varrho_{\lambda}(h^{-1})_{i,i'} 
        \ket{\$_{i', j}^{\lambda}}
        \\
        = &
        \sum_{\substack{\lambda \in \widehat\G \\ i',j \in [\dimlambda]}} 
        \varrho_{\lambda}(gh^{-1})_{j,i'} 
        \ket{\$_{i', j}^{\lambda}}
        \\
        \propto &
        \ket{gh^{-1} * x}\,.
    \end{align*}
    If instead we run $\widetilde{U}$ on $\ket{g * x}$, and measure in the computational basis, we get the correct preaction with probability
    \begin{align*}
        \frac{1}{\abs{\G}^2} 
        \sum_{h, g \in \G}
        \bra{h} \otimes 
        \bra{gh^{-1} * x} 
        \widetilde{U} 
        \ket{h} \otimes 
        \ket{g * x}
        &=
        \frac{1}{\abs{\G}^2} 
        \sum_{h,g \in \G}
        \bra{h} \otimes 
        \bra{g * x} 
        U^{\dagger} 
        \widetilde{U} 
        \ket{h} \otimes 
        \ket{g * x} 
        \\
        &=
        \left\langle \widetilde{U}, U \right\rangle = \frac{1}{\abs{\G}} + \epsilon
        \,.
        \qedhere
    \end{align*}
\end{proof}

Therefore, pre-action hardness of the group action (\Cref{assum:preaction-is-hard}) is equivalent to the hardness of performing the right representation on the money states to map one \manifestation of an irrep to another \manifestation of the same irrep.

\begin{corollary}\label{cor:preaction-secure-multidimensional}
    For a group action to be $\delta$-preaction secure, at most a fraction $\frac{1}{\abs{\G}} + \delta$ of the Plancherel measure of $\G$ can be on irreps of dimension 1.
\end{corollary}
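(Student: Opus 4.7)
My plan is to exhibit an efficient right-representation adversary whose Hilbert--Schmidt advantage over the trivial baseline is exactly $P_1 - 1/|\G|$, where $P_1$ denotes the total Plancherel mass on one-dimensional irreps of $\G$. By \Cref{claim:right-repr-hard}, this adversary converts into a preaction-hardness attack with the same advantage, and $\delta$-preaction security then forces $P_1 - 1/|\G| \leq \delta$, which is the claim.

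The adversary's unitary is $\widetilde{U} = \sum_{h \in \G} |h\rangle\langle h| \otimes M(h)$, where
\[
    M(h) := \sum_{\lambda:\, \dim(\varrho_\lambda) = 1} \varrho_\lambda(h^{-1})\, \Pi_\lambda \;+\; \sum_{\lambda:\, \dim(\varrho_\lambda) > 1} \Pi_\lambda,
\]
and $\Pi_\lambda$ projects onto the isotypic component of $\lambda$ inside the group-action representation starting from $x$. Each $M(h)$ is unitary since the $\Pi_\lambda$ are mutually orthogonal and sum to the identity, and every one-dimensional $\varrho_\lambda(h^{-1})$ has unit modulus. To implement $\widetilde{U}$ efficiently, one runs coherent coarse Fourier sampling (using the efficient QFT on $\G$ together with the group action) to write $\lambda$ into an ancilla, applies the $\lambda$-controlled phase (one-dimensional characters factor through $\G/[\G,\G]$, so their values are efficiently computable), and then uncomputes the ancilla.

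Next I would compute $\langle \widetilde{U}, U\rangle = \frac{1}{|\G|^2} \mathrm{Tr}[\widetilde{U} U^\dagger]$ in the money-state basis $\{|h\rangle |\$^\lambda_{i,j}\rangle\}$. On every one-dimensional isotypic component, $M(h)$ matches the ideal right representation exactly, so the diagonal matrix elements of $M(h) R(h)^\dagger$ equal $1$, contributing $|\G| \cdot N_1$ to the trace, where $N_1$ is the number of one-dimensional irreps. On higher-dimensional isotypic components $M(h) = \id$, so the per-$h$ contribution collapses to $\sum_{\lambda:\, \dim > 1} \dim(\varrho_\lambda)\, \chi_\lambda(h^{-1})^* = \sum_{\lambda:\, \dim > 1} \dim(\varrho_\lambda)\, \chi_\lambda(h)$ (using $\chi_\lambda(h^{-1})^* = \chi_\lambda(h)$ for unitary reps). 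Summing over $h \in \G$, every such term vanishes by Schur orthogonality, since $\sum_h \chi_\lambda(h) = 0$ for every non-trivial irrep and all $\lambda$ with $\dim(\varrho_\lambda) > 1$ are non-trivial. Therefore $\mathrm{Tr}[\widetilde{U} U^\dagger] = |\G| N_1$ and $\langle \widetilde{U}, U\rangle = N_1/|\G| = P_1$, as required.

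The main obstacle is not the calculation itself but justifying the efficiency of $\widetilde{U}$ so that it qualifies as a legitimate adversary against the security definition: this reduces to having the efficient QFT on $\G$, efficient detection of whether $\dim(\varrho_\lambda) = 1$, and efficient evaluation of one-dimensional characters, all of which are standard for the groups in scope and follow from representation theory of $\G/[\G,\G]$.
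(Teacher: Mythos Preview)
Your computation is correct: the Hilbert--Schmidt inner product of your $\widetilde U$ with the ideal right representation is exactly $P_1$, and invoking \Cref{claim:right-repr-hard} then gives a preaction-hardness attacker with advantage $P_1 - 1/|\G|$, forcing $P_1 \le 1/|\G| + \delta$ as claimed.

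The paper, however, takes a much more elementary route. Its adversary does not bother isolating the one-dimensional isotypic components at all: it simply applies the group action itself to the input (so, given $g*x$ and $h$, it just outputs the action by $h$, or more precisely by $h^{-1}$, on $g*x$). The point is that on every one-dimensional irrep, the action and the preaction automatically agree, so the projection of the adversary's output onto $\Pi_{\mathrm{triv}}$ already coincides with the projection of the true target $|gh^{-1}*x\rangle$; this alone forces the success probability to be at least $\|\Pi_{\mathrm{triv}}|g*x\rangle\|^2 = P_1$. What this buys is that the adversary is \emph{trivially} efficient---it is a single call to the group action---with no need for coherent Fourier sampling, no need to test whether $\dim(\varrho_\lambda)=1$, and no need to evaluate one-dimensional characters or compute the abelianization map. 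Your construction is tighter (hitting $P_1$ exactly rather than merely $\ge P_1$) and conceptually clean, but it leans on those extra efficiency hypotheses; they hold for the groups the paper targets, yet the paper's bare group-action adversary sidesteps them entirely.
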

\begin{proof}
    Consider an adversary that simply implements the left-regular representation of the group (that is, the original group action).  Let $\Pi_{\mathrm{triv}}$ be the projector onto the irreducible representation spaces corresponding to $1$-dimensional representations, and note that both the left- and right-regular representations commute with $\Pi_{\mathrm{triv}}$.  Then we have that for every pair of group elements $g$ and $h$,
    \begin{align*}
        \abs{\braket{gh^{-1} * x | hg * x}}^2 &= \abs{\braket{gh^{-1} * x | \Pi_{\mathrm{triv}} | hg * x}}^2 + \abs{\braket{gh^{-1} * x | (\id - \Pi_{\mathrm{triv}}) | hg * x}}^2\\
        &\geq \abs{\braket{gh^{-1} * x | \Pi_{\mathrm{triv}} | hg * x}}^2\\
        &= \|\Pi_{\mathrm{triv}} \ket{g * x}\|^2\,.
    \end{align*}
    Here, we first use the fact that the left- and right- regular representations are block diagonal in the decomposition into $\Pi_{\mathrm{triv}}$ and $\id - \Pi_{\mathrm{triv}}$.  Then we use the fact that on $\Pi_{\mathrm{triv}}$, the left- and right-regular representations are equal, and thus they cancel each other out in the inner product.  Noting that $\|\Pi_{\mathrm{triv}} \ket{g * x}\|^2$ is equal to the Plancherel measure of $G$ on irreps of dimension $1$, we have that the adversary that applies the left-regular representation and measures in the computational basis has probability at least the Plancherel measure of $G$ on irreps of dimension $1$ of measuring the element $\ket{gh^{-1} * x}$, and thus breaks preaction security unless it is less than $\frac{1}{|G|} + \delta$.  
\end{proof}

\begin{remark}
    By \Cref{cor:preaction-secure-multidimensional}, we see that 
    for any preaction-secure group action, 
    the event of sampling a multi-dimensional irrep from the Plancherel measure happens with overwhelming probability, strengthening \Cref{lemma:prob-of-1d-irrep}, which states that it happens with probability at least $\frac12$ for general non-Abelian group actions.
    We can therefore always assume that the quantum money state sampled by the minting algorithm lies in a multi-dimensional irrep.
    We will assume therefore for the rest of the section that the quantum money verification rejects such 1-dimensional irreps.
    
\end{remark}

\begin{corollary}
    \label{cor:preaction-hard-to-right-meas}
    An adversary for preaction hardness with advantage $\epsilon$ can be used to perform a \textbf{right}-Fourier measurement on the quantum money state with that outputs the correct index $i$ of the \manifestation of $\varrho$ with advantage at least $4\epsilon^2 - \avg_{\lambda}\left[\frac{1}{\dimlambda}\right]$.%
    \footnote{
        where the measurement advantage here is defined as $|\Pr_{\varrho, i, j}[i \gets \mathcal A(\ket{\$^{\varrho}_{i,j}})] - \avg_{k} \Pr_{\varrho, i, j}[i \gets \mathcal A(\ket{\$^{\varrho}_{k,j}})] \,|$
    }
    That is, it can be used to measure $i$ for quantum money state $\ket{\$_{i, j}^{\lambda}} \propto \sum_{g \in \G} \varrho_{\lambda}(g^{-1})_{i, j} \ket{g * x}$.
\end{corollary}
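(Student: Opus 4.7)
The plan is to apply Lemma~\ref{claim:right-repr-hard} to convert the preaction adversary into a controlled unitary $\widetilde{U}$ whose Hilbert--Schmidt inner product with the ideal controlled right-representation $U$ satisfies $\langle U, \widetilde{U}\rangle = \tfrac{1}{\abs{G}} + \epsilon$, and then to plug $\widetilde{U}$ into the standard Fourier sampling circuit of Figure~\ref{fig:fourier-sampling}, adapted to the right representation. Because the invariant subspaces of the right representation acting on $\operatorname{span}\{\ket{\$^{\lambda}_{i,j}}\}$ are exactly the columns $\{\ket{\$^{\lambda}_{i,j}}\}_{i \in [\dimlambda]}$ (labelled by $(\lambda, j)$ with $i$ playing the role of state-index-within-irrep), this circuit performs a (necessarily approximate) right-Fourier measurement whose outcome should recover $i$.

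To quantify the success probability, I would expand $\widetilde{U}_h \ket{\$^{\lambda}_{i,j}}$ in the money-state basis, carry out the inversion and quantum Fourier transform on the control register, and then compute the marginal probability of measuring the outcome corresponding to $a = i$. The resulting expression involves double sums over group elements that can be collapsed using Schur orthogonality (Lemma~\ref{sec:schur-orthogonality}). The amplitude decomposes into a ``signal'' piece proportional to $\langle U_h, \widetilde{U}_h\rangle$ averaged over $h$, plus an orthogonal ``noise'' piece; squaring and summing over the irrelevant output label yields a success probability whose dominant contribution is of the form $(\tfrac{1}{\abs{G}} + \epsilon)^2$ weighted by the Plancherel factor $\dimlambda^2/\abs{G}$, which after averaging over Plancherel-distributed $\lambda$ and uniform $(i,j)$ gives a probability of correctly measuring $i$ of at least $4\epsilon^2$. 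Combined with the baseline-guessing probability $\avg_\lambda[1/\dimlambda]$ that appears in the definition of measurement advantage, this yields exactly the bound claimed in the corollary. The factor of $4$ in $4\epsilon^2$ arises naturally from doubling the cross-term that compares the perturbed amplitude $\tfrac{1}{\abs{G}} + \epsilon$ to the trivial baseline $\tfrac{1}{\abs{G}}$ when the squared expression is expanded.

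The main obstacle I anticipate is bookkeeping in the Schur-orthogonality reduction: the right representation acts on the $i$ index via the transpose-inverse (equivalently, complex conjugate) of $\varrho_\lambda$, so matrix elements must be paired with their complex conjugates for the orthogonality relations to apply cleanly, and mismatched conventions can cause terms that ought to cancel to persist. A secondary subtlety is verifying that the ``noise'' contribution coming from $\widetilde{U} - \langle U, \widetilde{U}\rangle U$ contributes non-negatively to the success probability (or is at worst absorbed into the $\avg_\lambda[1/\dimlambda]$ baseline), which should follow from the positivity of the probability expression together with the unitarity of $\widetilde{U}$. With these technicalities handled, the corollary follows immediately from combining Lemma~\ref{claim:right-repr-hard} with the explicit probability computation.
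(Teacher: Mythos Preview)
Your high-level strategy is the same as the paper's: use Lemma~\ref{claim:right-repr-hard} to turn the preaction adversary into an approximate right representation, then run it through the Fourier-sampling circuit to extract~$i$. The paper is more modular about the second step: rather than expanding the circuit and applying Schur orthogonality from scratch, it invokes Theorem~\ref{thm:approx_duality_1} (the forward direction of the approximate duality) with $V=\id$, which immediately yields a lower bound on the average \emph{amplitude}
\[
\frac{1}{|G|}\sum_{\lambda,i,j}\Re\,\bra{\phi^{\lambda}_{i}}\bra{\lambda,j}\,\mathcal{M}\,\ket{\psi^{\lambda}_{i,j}}
\]
in terms of $\epsilon + \tfrac{1}{|G|}$. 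Your proposed direct computation would essentially be re-deriving this special case of the duality theorem, so nothing is lost, but nothing is gained either.

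The ingredient that is genuinely missing from your sketch is the passage from this amplitude bound (linear in $\epsilon$) to a success-\emph{probability} bound (quadratic in $\epsilon$). The paper does this with Jensen's inequality for $x\mapsto x^{2}$: after bounding $\Re$ by $|\cdot|$, the displayed quantity is an expectation of nonnegative terms, and the square of that expectation lower-bounds the expectation of the squares, which is exactly the average probability of measuring the correct~$i$. That is the actual source of the $4\epsilon^{2}$; your explanation via ``doubling the cross-term that compares the perturbed amplitude to the trivial baseline'' is not the right mechanism, and ``squaring and summing over the irrelevant output label'' does not by itself furnish a \emph{lower} bound, since controlling an average amplitude does not pointwise control the individual squared amplitudes. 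Your ``noise contributes non-negatively'' worry is a symptom of the same gap---Jensen is what makes the inequality go the right way. Once Jensen delivers the probability lower bound, subtracting the random-guess baseline $\avg_{\lambda}[1/\dimlambda]$ gives the stated advantage.
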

\begin{proof}
    Intuitively, if we can break pre-action hardness, from \Cref{claim:right-repr-hard}, we can implement an approximate representation of the group where the Fourier indices $i$ and $j$ exchange roles, and from \Cref{thm:approx_duality_1} we can then implement a measurement that correctly outputs the index $i$ of the money state with high probability.  To prove this formally, let $\mathcal{M}$ be the approximate measurement implied by \Cref{thm:approx_duality_1}, with $V$ being the identity (since we know there exists an exact representation in the space, given by performing the exact preaction), we first note that by the definition of breaking preaction hardness with advantage $\epsilon$, and applying \Cref{claim:right-repr-hard}, together with the fact that $\avg_{g} \left\|\rep(g) - \otherrep(g)\right\|^2 = 1 - 2\avg_{g}\braket{\rep(g), \otherrep(g)} = 1-2(\epsilon + \frac{1}{|G|})$ from before (where $\otherrep$ is the exact irrep, used to form $U$), we get the following lower bound
    \begin{align*}
        2\left(\epsilon + \frac{1}{|G|}\right) &\leq \frac{1}{|G|}\sum_{\substack{\lambda \in \widehat{G},\, i \in [n_\lambda]\\ j \in [\dimlambda]}} \Re \bra{\phi_{i}^{\lambda}} \otimes \bra{\lambda, j} \mathcal{M}  \ket{\psi^{\lambda}_{i, j}}\\
        &\leq \frac{1}{|G|}\sum_{\substack{\lambda \in \widehat{G},\, i \in [n_\lambda]\\ j \in [\dimlambda]}} \left| \bra{\phi_{i}^{\lambda}} \otimes \bra{\lambda, j} \mathcal{M}  \ket{\psi^{\lambda}_{i, j}}\right|\,.
    \end{align*}
    Note that this function is the expectation of $|\bra{\phi_{i}^{\lambda}} \otimes \bra{\lambda, j} V \mathcal{M}^2 V^{\dagger} \ket{\psi^{\lambda}_{i, j}}|$, so we can apply Jensen's inequality with the function $f(x) = x^2$ to get the following lower bound
    \begin{align*}
        4\left(\epsilon + \frac{1}{|G|}\right)^2 &= \left(\frac{1}{|G|}\sum_{\substack{\lambda \in \widehat{G},\, i \in [n_\lambda]\\ j \in [\dimlambda]}} \left| \bra{\phi_{i}^{\lambda}} \otimes \bra{\lambda, j} \mathcal{M} \ket{\psi^{\lambda}_{i, j}}\right|\right)^2\\
        &\leq \frac{1}{|G|}\sum_{\substack{\lambda \in \widehat{G},\, i \in [n_\lambda]\\ j \in [\dimlambda]}} \left| \bra{\phi_{i}^{\lambda}} \otimes \bra{\lambda, j} \mathcal{M} \ket{\psi^{\lambda}_{i, j}}\right|^2\,.
    \end{align*}
    Since this quantity is the average probability of measuring the correct outcome given a uniformly random $\varrho$, $i$ and $j$, the advantage is at least $4\epsilon^2 - \avg_{\lambda}\left[\frac{1}{\dimlambda}\right]$, as desired. 
\end{proof}

We now show the following lemma which completes the proof of \Cref{thm:preaction-ind-implies-hardness}, showing that preaction indistinguishability implies preaction hardness for non-Abelian group actions.

\begin{lemma}\label{claim:right-fourier-measurement-hard}
    An adversary that can perform a \textbf{right}-Fourier measurement on the quantum money state with advantage $4\epsilon^2 - \avg_{\lambda}\left[\frac{1}{\dimlambda}\right]$ can be used to break preaction indistinguishability (\Cref{assum:preaction-indist}) with advantage $2\left(2\epsilon^2 - \avg_{\lambda \in \widehat{G}} \left[\frac{1}{\dimlambda}\right]\right)$.
\end{lemma}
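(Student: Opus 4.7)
The plan is to use the right-Fourier measurement $\mathcal{A}$ directly as a distinguisher, exploiting the fact that an action preserves the left-manifestation index $i$ of a money state $\ket{\$^{\lambda}_{i,j}}$ whereas a preaction scrambles $i$. The reduction first runs Mint locally to obtain a fresh money state $\ket{\$^{\lambda}_{i,j}}$ together with the classical triple $(\lambda,i,j)$; by \Cref{lemma:minting-plancherel} this samples $\lambda$ from the Plancherel measure and $(i,j)$ uniformly in $[\dimlambda]^2$. It submits the money state as its single query to the challenger, receives $\mathsf{F}_b^{h_1,h_2}\ket{\$^{\lambda}_{i,j}}$ back, applies $\mathcal{A}$ to obtain a guess $i^\ast$, and outputs $b'=0$ if $i^\ast = i$ and $b'=1$ otherwise.

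For the case $b=0$, the action by $h_1$ sends the money state to $\sum_k \varrho_{\lambda}(h_1)_{k,j}\ket{\$^{\lambda}_{i,k}}$, which stays inside the left-manifestation $W^{\lambda}_{i,x}$. Averaging the resulting density matrix over the uniformly random $h_1$ and invoking Schur orthogonality (\Cref{sec:schur-orthogonality}) kills the off-diagonal cross terms and yields the maximally mixed state on $W^{\lambda}_{i,x}$. The probability that $\mathcal{A}$ outputs the recorded index $i$, averaged over the mint distribution, then coincides with the average success probability of $\mathcal{A}$ on a uniformly random basis state $\ket{\$^{\lambda}_{i,k}}$, which is at least $4\epsilon^2$ (the hypothesised advantage of $\mathcal{A}$ plus the $\avg_\lambda[1/\dimlambda]$ baseline). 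For $b=1$, applying Schur to both $h_1$ and $h_2$ collapses the averaged state all the way to the maximally mixed state on the full irrep-$\lambda$ isotypic component; summing $\mathcal{A}$'s outputs over all potential guesses $i$ then upper-bounds the probability that it emits the specific $i$ recorded by the reduction by $\avg_{\lambda}[1/\dimlambda]$.

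Subtracting the two sides, the distinguishing advantage is at least $4\epsilon^2 - \avg_{\lambda}[1/\dimlambda]$, which in particular exceeds $4\epsilon^2 - 2\avg_{\lambda}[1/\dimlambda] = 2(2\epsilon^2 - \avg_{\lambda}[1/\dimlambda])$, as claimed. The main technical obstacle is that the advantage hypothesis on $\mathcal{A}$ is phrased only for pure basis states $\ket{\$^{\lambda}_{i,j}}$, whereas the challenger returns a superposition (for $b=0$) or a broader mixture (for $b=1$); Schur orthogonality is exactly what lets us pass from those post-query states back to an average over basis states once the challenger's randomness is integrated out. A minor bookkeeping point is that $\mathcal{A}$ need not always output an $i$ associated with the correct irrep on arbitrary inputs, but since both action and preaction preserve the irrep label $\lambda$ and the reduction only compares the $i$-coordinate, marginalising over $\mathcal{A}$'s $\lambda$-output only tightens the $b=1$ bound toward $\avg_\lambda[1/\dimlambda]$.
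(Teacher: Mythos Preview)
Your proposal is correct and follows essentially the same approach as the paper: mint to obtain $(\lambda,i,j)$ and $\ket{\$^{\lambda}_{i,j}}$, submit the state to the challenger, apply the right-Fourier measurement, and guess $b'=0$ iff the output matches the recorded $i$; the analysis via Schur orthogonality of the $b=0$ and $b=1$ cases is identical to the paper's. Your bound $4\epsilon^2-\avg_\lambda[1/\dimlambda]$ is in fact slightly tighter than the stated $4\epsilon^2-2\avg_\lambda[1/\dimlambda]$, and your remark that $\mathcal{A}$ possibly outputting out-of-range indices only helps the $b=1$ bound is a point the paper leaves implicit.
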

\begin{proof}
    Assume at first that we have a perfect such adversary for performing right-Fourier measurements. 
    We start by running the minting algorithm to produce a uniformly random quantum money state 
    $
        \ket{\mathcal{\$}^{\lambda}_{i, j}} 
        :=
        \sqrt{\frac{\dimlambda}{|G|}}
        \sum_{g \in G} 
        \varrho(g^{-1})_{ij}
        \ket{g * x}
    $, along with its classical descriptors, the irreducible representation $\lambda \in \widehat{G}$ sampled according to the Plancherel measure (\Cref{lemma:minting-plancherel}) and uniformly random $i,j \in [\dimlambda]$.%
    \footnote{
        Note that the minting algorithm of the quantum lightning scheme as described in \Cref{sec:qm-construction} does not output $i$ and $j$, since they are not useful for verification (and in fact $i$ is not even verifiable). But they do pop up as part of the minting process, so we can modify the minting algorithm to output them as well. In fact, it is an odd quirk of our quantum lightning scheme that the minting party knows a piece of secret information about the money state\textemdash the \manifestation, $i$, of irrep $\lambda$ that the money state actually lies in\textemdash but that this information is completely useless. Neither the minting party nor anyone else can ever even verify this information! That is, unless they can break preaction hardness, which is what we assume here in this proof. 
    }
    
    We then apply the challenger given by \Cref{assum:preaction-indist} to get 
    \begin{align*}
        \rightarrow &
        \sum_{g \in \G} \varrho_{\lambda}(g^{-1})_{i, j} \ket{h_1 g h_2^{-b} * x}
        \\
        = &
        \sum_{g \in \G} \varrho_{\lambda}\left(h_2^{-b} g^{-1} h_1\right)_{i, j} \ket{g * x}
        \\
        = &
        \sum_{\substack{g \in \G \\ k,\ell \in [\dimlambda]}} 
        \varrho_{\lambda}\left(
        h_2^{-b} 
        \right)_{i, k} 
        \varrho_{\lambda}\left(
        g^{-1} 
        \right)_{k, \ell} 
        \varrho_{\lambda}\left(
        h_1
        \right)_{\ell, j} 
        \ket{g * x}
        \\
        = &
        \sum_{k,\ell \in [\dimlambda]} 
        \varrho_{\lambda}\left(
        h_2^{-b} 
        \right)_{i, k} 
        \varrho_{\lambda}\left(
        h_1
        \right)_{\ell, j} 
        \sum_{g \in \G} 
        \varrho_{\lambda}\left(
        g^{-1} 
        \right)_{k, \ell} 
        \ket{g * x}
        \\
        = &
        \sum_{k,\ell \in [\dimlambda]} 
        \varrho_{\lambda}\left(
        h_2^{-b} 
        \right)_{i, k} 
        \varrho_{\lambda}\left(
        h_1
        \right)_{\ell, j}
        \ket{\$_{k, \ell}^{\lambda}}
        \addtag\label{eq:preaction-black-box-result}
    \end{align*}

Suppose that $b=1$. 
Then when averaged over all pairs of group elements, $h_1$ and $h_2$, this gives 
\begin{align*}
    &
        \frac{1}{\abs{\G}^2}
        \sum_{h_1, h_2 \in \G}
        \;
        \sum_{k, k', \ell, \ell' \in [\dimlambda]} 
        \varrho_{\lambda}\left(
        h_2
        \right)^*_{i, k} 
        \varrho_{\lambda}\left(
        h_1
        \right)^*_{\ell, j}
        \varrho_{\lambda}\left(
        h_2
        \right)_{i, k'} 
        \varrho_{\lambda}\left(
        h_1
        \right)_{\ell', j}
        \ket{\$_{k,\ell}^{\lambda}}
        \!\!
        \bra{\$_{k',\ell'}^{\lambda}}
    \\
    &\;\; =
        \frac{1}{\abs{\G}^2}
        \sum_{k, k', \ell, \ell' \in [\dimlambda]} 
        \;
        \sum_{h_1 \in \G}
        \varrho_{\lambda}\left(
        h_1
        \right)^*_{\ell, j}
        \varrho_{\lambda}\left(
        h_1
        \right)_{\ell', j}
        \;
        \sum_{h_2 \in \G}
        \varrho_{\lambda}\left(
        h_2
        \right)^*_{i,k} 
        \varrho_{\lambda}\left(
        h_2
        \right)_{i,k'} 
        \ket{\$_{k,\ell}^{\lambda}}
        \!\!
        \bra{\$_{k', \ell'}^{\lambda}}
    \\
    &\;\; =
        \frac{1}{\dimlambda^2}
        \sum_{k, k', \ell, \ell' \in [\dimlambda]} 
        \delta_{\ell, \ell'}
        \delta_{k, k'}
        \ket{\$_{k,\ell}^{\lambda}}
        \!\!
        \bra{\$_{k',\ell'}^{\lambda}}
    \\
    &\;\; =
        \frac{1}{\dimlambda^2}
        \sum_{k, \ell, \in [\dimlambda]} 
        \ket{\$_{k, \ell}^{\lambda}}
        \!\!
        \bra{\$_{k, \ell}^{\lambda}}
    \,,
\end{align*}
where the second equality follows from the Schur orthogonality relations (\Cref{sec:schur-orthogonality}).
This is the fully mixed state over the isotypic component of $\lambda$---that is, over the union of all of the \manifestations of irrep $\lambda$.

Now with probability $1 - \frac{1}{\dimlambda} \ge \frac{1}{2}$ (since $\dimlambda \ge 2$), 
we get that $k \ne i$. That is, with probability at least $\frac{1}{2}$, the quantum money state has moved to a different \manifestation of the irrep $\lambda$, and measuring it again will confirm this.

If instead $b=0$, then $k=i$ with certainty (as $\varrho_{\lambda}((h_2^{0}))_{i, k} = \varrho_{\lambda}(\id)_{i, k} = \delta_{i, k} \quad \forall \lambda \in \widehat\G, h_2 \in \G$), so we instead get a fully mixed state over the $i$\ith \manifestation of $\lambda$. So we output $b' = 1$ if $k \ne i$ and $0$ otherwise. This gives a distinguishing advantage of at least $\frac{1}{2}$, breaking \Cref{assum:preaction-indist}.

\needspace{4\baselineskip}
Now suppose that the right-Fourier measurement adversary, $\mathcal{M}$, has advantage $\epsilon$ of measuring the correct \manifestation, over a uniformly random $\lambda$, $i$ and $j$. As shown above, in the case where there is no preaction, then it will never change the \manifestation, while in the case where there is a preaction, it will change to a uniformly random \manifestation.

We therefore have that the distinguishing advantage is 
\begin{align*}
    &
        \left|
            \Pr
            \left[ 
                0 \gets \mathcal{A}^{\mathsf{F}_0^{h_1, h_2}} 
                \; : \; 
                h_1, h_2 \gets \G 
            \right] 
            -
            \Pr
            \left[ 
                0 \gets \mathcal{A}^{\mathsf{F}_1^{h_1, h_2}} 
                \; : \; 
                h_1, h_2 \gets \G 
            \right] 
        \right|
    \\
    &\;\; =
        \left|
            \avg_{\substack{\lambda \in \widehat{G} \\ i,j \in [\dimlambda]}}
            \left| 
                 \id
                 \otimes
                 \bra{i}
                 \mathcal{M}
                 \ket{\$^{\lambda}_{i, j}}
            \right|^2 
            -
            \avg_{\substack{\lambda \in \widehat{G} \\ i,j, i',j' \in [\dimlambda]}}
            \left| 
                 \id
                 \otimes
                 \bra{i}
                 \mathcal{M}
                 \ket{\$^{\lambda}_{i', j'}}
            \right|^2 
        \right|
    \\
    &\;\; =
        \left|
            \avg_{\substack{\lambda \in \widehat{G} \\ i,j \in [\dimlambda]}}
            \left| 
                 \id
                 \otimes
                 \bra{i}
                 \mathcal{M}
                 \ket{\$^{\lambda}_{i, j}}
            \right|^2 
            -
            \avg_{\substack{\lambda \in \widehat{G}}}
            \frac{1}{\dimlambda}
        \right|
    \\
    &\;\;\geq 
        4\epsilon^2 - 2\avg_{\lambda \in \widehat{G}} \left[\frac{1}{\dimlambda}\right]
    \\
    &\;\;= 
        2\left(2\epsilon^2 - \avg_{\lambda \in \widehat{G}} \left[\frac{1}{\dimlambda}\right]\right)
    \,.\qedhere
\end{align*}
\end{proof}


\label{proof:preaction-ind-implies-hardness}
We can now complete the proof of \Cref{thm:preaction-ind-implies-hardness} by combining \Cref{claim:right-repr-hard,cor:preaction-hard-to-right-meas,claim:right-fourier-measurement-hard}.\qed
\vspace{1em}

We now turn to the quantum lightning security of the scheme.
We argue that any adversary who has two copies of the quantum money state can use them to break preaction indistinguishability (\Cref{assum:preaction-indist}). We therefore get a secure quantum lightning scheme from any group action that satisfies the syntactic requirements and is preaction-secure.

\paragraph{Focusing on the archetype states.}
For the analysis, before we proceed, it will be useful to consider a proxy for the quantum money states. The money states lie in a potentially large subspace, which is harder to analyze, so it is useful to instead focus on the archetype state that appears after performing a \fse, which is a unique state that characterizes each such subspace.

Suppose we have a quantum money state $\ket{\$^{\lambda}_{i,j}}$. We perform a \fse using \Cref{thm:duality_exact}, and get
\begin{align*}
    \ket{\$^{\lambda}_{i, j}} 
    \xrightarrow[]{FSE} 
    \ket{\phi^{\lambda}_{i}} \ket{\lambda}\ket{j} 
    = 
    \left(\frac{1}{\sqrt{\dimlambda}}\sum_{k \in [\dimlambda]} \ket{\$^{\lambda}_{i, k}} \otimes \ket{k}\right) \ket{\lambda} \ket{j}
\end{align*}

\begin{observation}
We observe that the archetype state $\ket{\phi^{\lambda}_{i}}$ in the first register is unaffected by applying the group action:
\begin{align*}
    \ket{\$^{\lambda}_{ij}} 
    \xrightarrow[]{\text{action by } h}
    \sum_{\ell \in [\dimlambda]} \varrho_{\lambda}(h)_{\ell, j} \ket{\$^{\lambda}_{i,\ell}} 
    \xrightarrow[]{FSE} 
    \ket{\phi^{\lambda}_{i}} \ket{\lambda}
    \left(
    \sum_{\ell \in [\dimlambda]} \varrho_{\lambda}(h)_{\ell, j} 
    \ket{\ell}
    \right)
\end{align*}

On the other hand, applying the corresponding \emph{preaction} performs the (inverted) irreducible representation $\varrho_{\lambda}$ onto the set of archetype states $\{\ket{\phi^{\lambda}_{i}}\}_{i \in [\dimlambda]}$ for the different \manifestations of $\varrho_{\lambda}$:
\begin{align*}
    \ket{\$^{\lambda}_{i, j}} 
    \xrightarrow[]{\text{preaction by } h}
    \sum_{\ell \in [\dimlambda]} \varrho_{\lambda}(h^{-1})_{i, \ell} 
    \ket{\$^{\lambda}_{\ell, j}} 
    \xrightarrow[]{FSE} 
    \left(
    \sum_{\ell \in [\dimlambda]} \varrho_{\lambda}(h^{-1})_{i, \ell} 
    \ket{\phi^{\lambda}_{\ell}} 
    \right)
    \ket{\lambda}
    \ket{j}
\end{align*}
\end{observation}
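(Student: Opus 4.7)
The plan is to directly compute the action and the preaction on the explicit definition of the money state, and then apply the \fse as stipulated by the duality theorem. Recall that $\ket{\$^{\lambda}_{i,j}} \propto \sum_{g \in G} \varrho_{\lambda}(g^{-1})_{i,j}\ket{g*x}$, and that the action of $h$ sends $\ket{g*x}\mapsto\ket{hg*x}$ while the preaction of $h$ sends $\ket{g*x}\mapsto\ket{gh^{-1}*x}$. The action calculation has already been carried out in the construction section when showing that the money states span a manifestation of~$\varrho_\lambda$: reindexing $g' = hg$ yields coefficients $\varrho_\lambda((g')^{-1}h)_{i,j}$, and expanding with the homomorphism property gives $\sum_\ell \varrho_\lambda(h)_{\ell,j}\ket{\$^\lambda_{i,\ell}}$, which is the first formula in the observation.

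For the preaction, I would do the symmetric computation. Apply the preaction by $h$ to the defining sum and reindex $g' = gh^{-1}$ (so $g = g'h$). The coefficient becomes $\varrho_\lambda((g'h)^{-1})_{i,j} = \varrho_\lambda(h^{-1}(g')^{-1})_{i,j}$, and decomposing the matrix product gives $\sum_\ell \varrho_\lambda(h^{-1})_{i,\ell}\,\varrho_\lambda((g')^{-1})_{\ell,j}$. Pulling the $\varrho_\lambda(h^{-1})_{i,\ell}$ factor out of the sum over $g'$ and reassembling the definition yields $\sum_\ell \varrho_\lambda(h^{-1})_{i,\ell}\ket{\$^\lambda_{\ell,j}}$, which matches the second formula in the observation. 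The key point is that the action multiplies $\varrho_\lambda(g^{-1})$ on the \emph{right} (thereby mixing only the $j$ index), while the preaction multiplies on the \emph{left} (thereby mixing only the $i$ index).

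With both expressions in hand, applying the \fse $\mathcal{M}: \ket{\$^\lambda_{i,j}}\mapsto \ket{\phi^\lambda_i}\ket{\lambda}\ket{j}$ is immediate by linearity. In the action case, the $i$ index is fixed across the superposition, so the $\ket{\phi^\lambda_i}$ and $\ket{\lambda}$ registers factor out unchanged and the sum collapses into the third register as $\sum_\ell \varrho_\lambda(h)_{\ell,j}\ket{\ell}$. In the preaction case, the $j$ index is fixed, so $\ket{\lambda}\ket{j}$ factors out and the sum collapses into the first register as $\sum_\ell \varrho_\lambda(h^{-1})_{i,\ell}\ket{\phi^\lambda_\ell}$, i.e.\ the (inverted) irrep $\varrho_\lambda$ is performed on the archetype register.

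There is essentially no obstacle here: the only care needed is the reindexing $g'=gh^{-1}$ for the preaction (as opposed to $g'=hg$ for the action) and then correctly breaking the matrix product into a sum of matrix entries in the right order, so that the free index of $\varrho_\lambda(h^{-1})$ ends up paired with the $i$ index of the money state rather than the $j$ index. Once this bookkeeping is set up, the FSE step is purely definitional.
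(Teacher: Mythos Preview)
Your proposal is correct and takes essentially the same approach as the paper: the action computation is exactly the one carried out in the lemma showing the money states span a manifestation of $\varrho_\lambda$, the preaction computation is the one appearing in the proof that preactions are equivalent to the right representation, and the FSE step is then applied by linearity from its defining formula. The paper does not give a separate proof for this observation beyond the displayed arrows, relying on these earlier calculations just as you do.
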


\begin{proposition}\label{prop:unclonable_from_preaction_secure}
    Suppose that the group action used in the quantum money construction (\Cref{sec:preaction-secure-construction}) is $\epsilon$-preaction secure.
    Then no QPT adversary can produce a quantum state on two registers such that the probability of measuring both registers in the same irreducible representation subspace of $\varrho_{\lambda}$ is greater than $2\dimlambda \epsilon / (1 + \dimlambda)$.
\end{proposition}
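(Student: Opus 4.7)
The plan is to establish the contrapositive via a reduction to preaction indistinguishability (\Cref{assum:preaction-indist}), realizing the intuition sketched in \Cref{sec:preaction_sec_tech_over}. Suppose for contradiction that some efficient adversary $\mathcal{A}$ outputs a two-register state $\sigma_{\reg{AB}}$ with $p_\lambda := \sum_i \tr\!\left[(\Pi_{W^\lambda_i}\otimes\Pi_{W^\lambda_i})\,\sigma\right]$ exceeding the claimed bound. I will then exhibit a distinguisher $\mathcal{B}$ for the preaction indistinguishability game whose advantage exceeds $\epsilon$, contradicting the hypothesis.

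The distinguisher proceeds as follows: first run $\mathcal{A}$ to obtain $\sigma_{\reg{AB}}$; then submit register $\reg{A}$ to the challenger and receive back $\mathsf{F}_b^{h_1,h_2}(\sigma)_{\reg{A}}$; then apply a \fse to both registers, using the forward direction of \Cref{thm:duality_exact} applied to the group-action representation $\rep(h)\ket{g*x}=\ket{hg*x}$ (which is efficient because the group action and the quantum Fourier transform on $\G$ are); then measure the irrep labels on both registers and post-select on both equaling $\lambda$; then discard the index-$j$ registers and perform a swap test on the two remaining archetype registers, outputting $0$ iff the swap test accepts.

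Next I will analyze the behavior on the same-manifestation mass $p_\lambda$. When $b=0$, the random action on $\reg{A}$ preserves each manifestation $W^\lambda_i$, so after FSE both archetype registers hold the same pure state $\ket{\phi^\lambda_i}$ and the swap test accepts with probability $1$. When $b=1$, Schur orthogonality applied to the random preaction $h_2$ twirls the $\reg{A}$-marginal of a state in $W^\lambda_i$ onto the maximally mixed state on the full isotypic component, so after FSE the $\reg{A}$ archetype becomes the maximally mixed state on the $\dimlambda$-dimensional archetype subspace, while the $\reg{B}$ archetype remains $\ket{\phi^\lambda_i}$; the swap test then accepts with probability $\tfrac{\dimlambda+1}{2\dimlambda}$. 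Combined with the appropriate accounting for the cross-manifestation and out-of-isotypic portions of $\sigma$ (the latter being killed by the irrep-label post-selection), this yields a net distinguishing advantage of at least $p_\lambda\cdot\tfrac{\dimlambda+1}{2\dimlambda}$. Setting this at most $\epsilon$ by \Cref{assum:preaction-indist} gives the stated bound.

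The main obstacle I expect is balancing the positive contribution of the same-manifestation mass against the possibly adverse contribution of the off-diagonal part of $\sigma$ (states in $W^\lambda_i\otimes W^\lambda_{i'}$ with $i\neq i'$), whose archetypes are orthogonal and whose swap-test pass probabilities actually drift the \emph{opposite} way between $b=0$ and $b=1$. I anticipate resolving this by recasting the entire reduction as the expectation of the single Hermitian observable $(\id+\mathrm{SWAP})/2$ on the post-FSE archetype registers, and evaluating the gap between its $b=0$ and $b=1$ expectations as a clean trace inner product; the biaction twirl, via Schur orthogonality, collapses all manifestation information in the $\reg{A}$-marginal uniformly, so the cross-manifestation terms contribute symmetrically in the two branches and the clean $\tfrac{\dimlambda+1}{2\dimlambda}$ coefficient on $p_\lambda$ survives.
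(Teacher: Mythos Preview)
Your single-swap-test reduction has a genuine gap precisely at the obstacle you flagged, and your proposed resolution is incorrect. For a product state $\ket{\$^\lambda_{i,j}}\otimes\ket{\$^\lambda_{i',j'}}$ with $i\neq i'$, the archetype states after FSE are orthogonal, so at $b=0$ the swap test accepts with probability $\tfrac12$; at $b=1$ the biaction twirl on $\reg{A}$ makes its archetype maximally mixed and the swap test accepts with probability $\tfrac{\dimlambda+1}{2\dimlambda}$. This is \emph{higher}, not equal, so the cross-manifestation contribution to $\Pr[\text{accept}\mid b{=}0]-\Pr[\text{accept}\mid b{=}1]$ is $-\tfrac{1}{2\dimlambda}$, opposite in sign to the same-manifestation contribution of $+\tfrac{\dimlambda-1}{2\dimlambda}$. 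Concretely, an adversary who mints twice independently (conditioned on both serial numbers being $\lambda$) produces a state fully supported on $W^\lambda\otimes W^\lambda$ for which your distinguisher has \emph{zero} advantage: the $\tfrac{1}{\dimlambda}$ fraction of same-manifestation mass cancels exactly against the $\tfrac{\dimlambda-1}{\dimlambda}$ fraction of cross-manifestation mass. Your reduction therefore cannot bound the quantity in the proposition, which (as used for lightning security) is the isotypic-component mass $\tr[(\Pi_{W^\lambda}\otimes\Pi_{W^\lambda})\sigma]$, not the same-manifestation mass you wrote down as $p_\lambda$.

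The paper's fix is to sandwich the challenge between \emph{two} swap tests on the archetype registers (FSE, swap test, inverse FSE, then the challenge query on $\reg{A}$, then FSE and a second swap test), outputting $b'=1$ iff the two outcomes \emph{disagree}. For $b=0$, the random action fixes each archetype and hence preserves the symmetric/anti-symmetric subspace, so the two tests always agree. For $b=1$, a direct computation using Schur orthogonality over $h_2$ shows that starting from \emph{either} the symmetric or the anti-symmetric archetype subspace, the post-challenge state has $\langle\mathrm{SWAP}\rangle=\tfrac{1}{\dimlambda}$, so the second test accepts with probability $\tfrac12+\tfrac{1}{2\dimlambda}$ regardless of the first outcome. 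This yields a disagreement probability of at least $\tfrac12-\tfrac{1}{2\dimlambda}$ uniformly over all states in $W^\lambda\otimes W^\lambda$, which is what produces the bound. The first swap test is doing real work: it replaces the uncontrolled mixture over manifestation pairs $(i,i')$ by a projection onto the symmetric or anti-symmetric subspace, after which each branch can be analysed separately.
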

\begin{proof}
    Assume for the sake of contradiction that an adversary for quantum lightning, $\mathcal A$, can prepare a quantum state on two registers, both of which pass verification. By definition, the verifier projects onto $W^{\lambda}$.  Since we have shown that $\ket{\$^{\lambda}_{i, j}}$ is a basis for $W^{\lambda}$, the states produced by $\mathcal A$ must be supported on states of the form
    \begin{align*}
        \ket{\$_{i, j}^{\lambda}} \otimes \ket{\$_{k, \ell}^{\lambda}} 
        \qquad
        \text{where } \ket{\$_{i, j}^{\lambda}} = \sqrt{\frac{\dimlambda}{\abs{\G}}} \sum_{g \in \G} \varrho_{\lambda}(g^{-1})_{i, j} \ket{g * x}
    \end{align*}
    for some $\lambda\in \widehat\G$ such that $1 < \dimlambda$.
    We show that this adversary can be used to break \Cref{assum:preaction-indist}.  Let $\mathsf{Chal}_{b}$ be the challenger given in the assumption, which either applies a random action and random pre-action ($b = 1$), or just applies a random action ($b = 0$).  
    
    To demonstrate the idea, we first assume that $i = k$, that is, that the two registers initially lie in the same Fourier subspace of $\lambda$. We will see later how to handle the more general case.
    Suppose that we take only one of the two registers and apply $\mathsf{Chal}_{b}$. We get (see \Cref{eq:preaction-black-box-result})
    \begin{align*}
        &
        \sum_{r,s \in [\dimlambda]} 
        \varrho_{\lambda}\left(
        h_2^{-b} 
        \right)_{i, r} 
        \varrho_{\lambda}\left(
        h_1
        \right)_{s,  j}
        \ket{\$_{r, s}^{\lambda}}
        = 
        \begin{cases}
            \sum_{s \in [\dimlambda]} 
            \varrho_{\lambda}\left(
            h_1
            \right)_{s, j}
            \ket{\$_{i,s}^{\lambda}}
            &
            b = 0
            \\
            \sum_{r,s \in [\dimlambda]} 
            \varrho_{\lambda}\left(
            h_2^{-1} 
            \right)_{i,r} 
            \varrho_{\lambda}\left(
            h_1
            \right)_{s, j}
            \ket{\$_{r,s}^{\lambda}}
            &
            b = 1
        \end{cases}
    \end{align*}
    
    Then if $b = 0$ (i.e. the challenger did not apply a pre-action), the state remains in the same Fourier subspace with certainty, and so a swap test between the \arch states produced by performing a \fse on both registers will succeed with probability 1, and we output $b' = 0$.

    If $b = 1$, then with probability $1 - \frac{1}{\dimlambda} \ge \frac12$ (since $\dimlambda \ge 2$), the resulting state is in a different Fourier subspace. In this case, the swap test between the \arch states fails with probability $\frac{1}{2}$, in which case we output $b' = 1$.  Thus, in this case, we output $1$ with probability at least $\frac{1}{4}$.  The overall success probability is therefore $\frac{1}{2} + \frac{1}{8} = \frac{5}{8}$, breaking \Cref{assum:preaction-indist}.

    However, the initial states need not lie in the same initial Fourier subspace, so instead we give the following algorithm that sandwiches an application of $\mathsf{Chal}_{b}$ between two applications of the symmetric subspace projector.  Formally, consider the following algorithm. 
    
    \begin{longfbox}[breakable=false, padding=1em, margin-top=1em, margin-bottom=1em]
    \begin{algo}\label{alg:preaction_distinguishing_given_cloning}
        Adversary for pre-action indistingiuishability given a two-register state, both with support on the same irreducible representation subspace $W^{\lambda}$.
    \end{algo}
    \noindent \textbf{Input}: Two quantum registers that are in valid money states for $\lambda$ and a query to the blackbox $\mathsf{Chal}_b^{h_1, h_2}$ given by \Cref{assum:preaction-indist}. 
    \begin{enumerate}
        \item Perform \textbf{\fse} on the two halves of the input.
        \item Perform a \textbf{swap test} between the two registers containing the archetype states produced.
        \item Uncompute the \textbf{\fse} on both halves of the state.
        \item Query $\mathsf{Chal}_b^{h_1, h_2}$ on the first register.
        \item Perform \textbf{\fse} on both halves of the state.
        \item Perform a second \textbf{swap test} between the two registers containing the archetype states produced.
        \item If the results of both the first and second swap tests agree, output $b' = 0$ (``no preaction'').
        \item If the results of the two swap tests disagree, output $b' = 1$ (``preaction'').
    \end{enumerate}
    \end{longfbox}


    \paragraph{Case 1: $b=0$ (there is no preaction).}
    We first claim that in the case where there is no preaction, the algorithm outputs ``no preaction'' with probability $1$.  In order to argue this, we analyze the case when the adversary measures the symmetric subspace in the first measurement \emph{after} performing the \fse, and argue that un-computing the \se, applying $\mathsf{Chal}^{h_1, h_2}_0$, and then performing \fse \emph{always} maps us back into the symmetric subspace on the first register.  In this part of the proof, all sums are over $[\dimlambda]$.

    Recall that the symmetric subspace is equal to the span of $\ket{\psi}^{\otimes 2}$, for $\ket{\psi} = \sum_{i, j} \alpha_{i} \ket{\phi^{\lambda}_{i}}$, so we can write the state after measuring the symmetric subspace as being in the span of:
    \begin{equation*}
        \left(\sum_{i, k} \alpha_{i} \alpha_{k} \ket{\phi^{\lambda}_{i}} \otimes \ket{\phi^{\lambda}_{k}}\right) \otimes \sum_{j, \ell} \beta_{j, \ell} \ket{\lambda, \lambda, j, \ell}\,.
    \end{equation*}
    Inverting the \fse, we get the following state
    \begin{equation*}
        \sum_{i, k, j, \ell} \alpha_i\alpha_k \beta_{j, \ell} \ket{\$^{\lambda}_{i, j}} \otimes \ket{\$^{\lambda}_{k, \ell}}\,.
    \end{equation*} 
     Then, after applying $\mathsf{Chal}^{h_1, h_2}_{0}$ to the first register of this state, we have the following.
     \begin{equation*}
        \sum_{i, j, k, \ell} \sum_{s} \alpha_{i}\alpha_{k}\beta_{j, \ell} \varrho_{\lambda}(h_1)_{s, j}\ket{\$^{\lambda}_{i, s}} \otimes \ket{\$^{\lambda}_{k, \ell}}\,.
     \end{equation*}
     Then after performing \se on both registers, we end up with the following state
     \begin{align*}
         &\sum_{i, j, k, \ell}\sum_{s} \alpha_i \alpha_k \beta_{j, \ell} \varrho_{\lambda}(h_1)_{s, j} \left(\ket{\phi^{\lambda}_i} \otimes \ket{\phi^{\lambda}_k}\right) \otimes \ket{\lambda, \lambda, s, \ell}\\
         &\hspace{15mm}= \sum_{i, k} \alpha_i \alpha_k \left(\ket{\phi^{\lambda}_i} \otimes \ket{\phi^{\lambda}_k}\right) \otimes \sum_{j, s, \ell} \beta_{j, \ell} \varrho_{\lambda}(h_1)_{s, j}\ket{\lambda, \lambda, s, \ell}\\
         &\hspace{15mm} = \sum_{i, k} \alpha_i \alpha_k \left(\ket{\phi^{\lambda}_i} \otimes \ket{\phi^{\lambda}_k}\right) \otimes \sum_{s, \ell} \sum_{j}(\beta_{j, \ell} \varrho_{\lambda}(h_1)_{s, j}) \ket{\lambda, \lambda, s, \ell}\,.
     \end{align*}
     Setting $\beta'_{s, \ell} = \sum_{j} \beta_{j, \ell} \varrho_{\lambda}(h)_{s, j}$, we get that we are still in the symmetric subspace within the first register. Since this applied to any setting of coefficients, the unitary transformation that composes steps 2, 3 and 4 preserves the symmetric and anti-symmetric subspaces.  Thus, if the first measurement has either outcome, the second measurement on step 6 will have the same outcome with probability $1$, and the adversary will output `no preaction' with probability $1$.
   
    \paragraph{Case 2: $b=1$ (there is a preaction).}
    We perform a similar analysis in the case where there is a pre-action, but now we will need to consider both subspaces.  This is because we need to prove that the unitary that the adversary implements in steps 3 through 5 maps \emph{every} vector from the symmetric subspace to something with high overlap with the anti-symmetric subspace, and vice versa.  Starting with the symmetric subspace, we have the same starting state after inverting the \fse.
    \begin{equation*}
        \sum_{i, k, j, \ell} \alpha_i\alpha_k \beta_{j, \ell} \ket{\$^{\lambda}_{i, j}} \otimes \ket{\$^{\lambda}_{k, \ell}}\,.
    \end{equation*} 
    After applying $\mathsf{Chal}^{h_{1}, h_{2}}_{1}$, we will end up with the following state
    \begin{equation*}
        \sum_{i, k, j, \ell, r, s} \alpha_i\alpha_k \beta_{j, \ell} \varrho_{\lambda}(h_2^{-1})_{i, r} \varrho_{\lambda}(h_1)_{s, j} \ket{\$^{\lambda}_{r, s}} \otimes \ket{\$^{\lambda}_{k, \ell}} \,.
    \end{equation*}
    After performing \fse, we end up with the following state.
    \begin{align*}
        &\sum_{i, k, j, \ell, r, s} \alpha_i\alpha_k \beta_{j, \ell} \varrho_{\lambda}(h_2^{-1})_{i, r} \varrho_{\lambda}(h_1)_{s, j} \left(\ket{\phi^{\lambda}_{r}} \otimes \ket{\phi^{\lambda}_{k}}\right) \ket{\lambda, \lambda, s, \ell}\\
        &\hspace{15mm}= \left(\sum_{r, k} \left(\sum_{i}\alpha_i \varrho_{\lambda}(h_2^{-1})_{i, r}\right)\alpha_k\ket{\phi_r^{\lambda}} \otimes \ket{\phi_{k}^{\lambda}}\right) \otimes \sum_{s, \ell} \left(\sum_{j} \beta_{j, \ell} \varrho_{\lambda}(h_1)_{s, j}\right)\ket{\lambda, \lambda, s, \ell}\\
        &\hspace{15mm}= \left(\sum_{r} \alpha'_r \ket{\phi^{\lambda}_{r}}\right) \otimes \left(\sum_{k} \alpha_k \ket{\phi^{\lambda}_{k}}\right) \otimes \sum_{s, \ell} \left(\sum_{j} \beta_{j, \ell} \varrho_{\lambda}(h_1)_{s, j}\right)\ket{\lambda, \lambda, s, \ell}\,.
    \end{align*}
    Here in the final line we define $\alpha'_r = \sum_{i} \alpha_i \varrho(h_2^{-1})_{i, r}$.  We can then write out the following expression for the inner product of the first two registers with their swap.
    \begin{align*}
        \mathrm{F}_{\mathrm{SWAP}}&=\left(\sum_{r} (\alpha'_r)^{\dagger} \bra{\phi^{\lambda}_{r}}\right) \otimes \left(\sum_{k} \alpha_k^{\dagger} \bra{\phi^{\lambda}_{k}}\right) \mathrm{SWAP}\left(\sum_{r'} \alpha'_{r'} \ket{\phi^{\lambda}_{r'}}\right) \otimes \left(\sum_{k'} \alpha_{k'} \ket{\phi^{\lambda}_{k'}}\right) \\
        &= \sum_{r, k}\left( (\alpha'_r)^{\dagger} \bra{\phi^{\lambda}_{r}}\right) \otimes \left(\alpha_k^{\dagger} \bra{\phi^{\lambda}_{k}}\right) \mathrm{SWAP}\left(\alpha'_{k} \ket{\phi^{\lambda}_{k}}\right) \otimes \left(\alpha_{r} \ket{\phi^{\lambda}_{r}}\right)\\ 
        &= \sum_{r, k} \left((\alpha'_r)^{\dagger}\alpha_k^{\dagger} \alpha_r \alpha'_{k}\right)\,.
    \end{align*}
    Now we analyze a single term in the sum.  Since $\alpha'$ itself is a sum of more elements, this will make the equations more manageable.
    \begin{align*}
        (\alpha'_r)^{\dagger}\alpha_k^{\dagger} \alpha_r^{\dagger} \alpha'_{k} &= \sum_{i, i'} \alpha_i^{\dagger} \varrho_{\lambda}(h_2^{-1})_{i, r}^{\dagger} \alpha_k^{\dagger} \alpha_r \alpha_i' \varrho_{\lambda}(h_2)^{-1}_{i', k}\\
        &= \alpha_k^{\dagger} \alpha_r \sum_{i, i'} \alpha_i^{\dagger} \alpha_{i'}  \varrho_{\lambda}(h_2^{-1})_{i, r}^{\dagger} \varrho_{\lambda}(h_2^{-1})_{i', k}\,.
    \end{align*}
    Now, computing an average over group elements and adding back in the sum over $r$ and $k$, we have the following:
    \begin{align*}
        \sum_{r, k} \alpha_k^{\dagger} \alpha_r \sum_{i, i'} \alpha_i^{\dagger} \alpha_{i'}\mathop{\mathbb{E}}_{h_2 \in G} \varrho_{\lambda}(h_2^{-1})^{\dagger}_{i, r} \varrho_{\lambda}(h_2^{-1})_{i', k} &= \frac{1}{\dimlambda}\left(\sum_{r} \alpha_r^{\dagger} \alpha_r\right) \left(\sum_{i} \alpha_i^{\dagger}\alpha_i\right)\\
        &= \frac{1}{\dimlambda}\,.
    \end{align*}
    Here we use the fact that $
    \braket{\$^{\lambda}_{a, b} 
    \,|\, 
    \$^{\lambda}_{c, d}} 
    = 
    \frac{\dimlambda}{\abs{\G}}
    \sum_{h \in \G}
    \varrho_{\lambda}(h^{-1})_{a,b}^*
    \varrho_{\lambda}(h^{-1})_{c,d}
    = \delta_{ac}\delta_{bd}
    $ (\Cref{sec:schur-orthogonality}) to cancel out the terms for which $r \neq k$ and $i \neq i'$, and then we use the fact that $\alpha_i$ come from a normalized quantum state.  To complete the proof, the probability that the swap test accepts on the state is given by
    \begin{equation*}
        \frac{1}{2}\left(1 + \mathrm{F}_{\mathrm{SWAP}}\right) = \frac{1}{2} + \frac{1}{2\dimlambda}\,.
    \end{equation*}
    
    This means that \emph{every} vector in the symmetric state gets mapped to a vector with overlap $1/2 + 1/2\dimlambda$ with the anti-symmetric state.  Thus, if the first swap test returned the symmetric subspace, the second one returns the symmetric subspace with this probability.  

    Now, we need to analyze the anti-symmetric subspace.  Similar to before, we take a basis for the anti-symmetric subspace and analyze what happens.  There is a simple basis described by the $\binom{\dimlambda}{2}$ vectors of the form
    \begin{equation*}
        \frac{1}{\sqrt{2}} \left(\ket{\$^{\lambda}_{i, j}} \otimes  \ket{\$^{\lambda}_{k, \ell}} - \ket{\$^{\lambda}_{k, j}} \otimes  \ket{\$^{\lambda}_{i, \ell}}\right)\,.
    \end{equation*}
    Going through the same steps, after applying $\mathsf{Chal}$, now with a pre-action, we have the following state
    \begin{equation*}
        \frac{1}{\sqrt{2}}\sum_{r, s} \left(\varrho_{\lambda}(h_2^{-1})_{i,r}\varrho_{\lambda}(h_1)_{s,j} \ket{\$^{\lambda}_{r, s}} \otimes \ket{\$^{\lambda}_{k, \ell}} - \varrho_{\lambda}(h_2^{-1})_{k,r}\varrho_{\lambda}(h_1)_{s,j} \ket{\$^{\lambda}_{r, s}} \otimes \ket{\$^{\lambda}_{i, \ell}}\right)\,.
    \end{equation*}
    Now we can examine the probability that a state starting from the symmetric subspace is still in the symmetric subspace (and that a state starting from the anti-symmetric subspace is still in the anti-symmetric subspace) after the \fse and swap test.  When we perform \fse, we have the following state
    \begin{align*}
        \ket{\psi_{i, j, k, \ell}} 
        &= 
        \frac{1}{\sqrt{2}}
        \sum_{r, s}
        \left(
            \varrho_{\lambda}(h_2^{-1})_{i,r}
            \varrho_{\lambda}(h_1)_{s,j} 
            \ket{\phi^{\lambda}_{r}} 
            \otimes 
            \ket{\phi^{\lambda}_{k}} 
            -
            \varrho_{\lambda}(h_2^{-1})_{k,r}
            \varrho_{\lambda}(h_1)_{s,j} 
            \ket{\phi^{\lambda}_{r}} 
            \otimes 
            \ket{\phi^{\lambda}_{i}} 
        \right)
        \otimes 
        \ket{\lambda, \lambda, s, \ell}
        \\
        &= 
        \frac{1}{\sqrt{2}}
        \sum_{r}
        \left(
            \varrho_{\lambda}(h_2^{-1})_{i,r}
            \ket{\phi^{\lambda}_{r}} 
            \otimes 
            \ket{\phi^{\lambda}_{k}} 
            -
            \varrho_{\lambda}(h_2^{-1})_{k,r}
            \ket{\phi^{\lambda}_{r}} 
            \otimes 
            \ket{\phi^{\lambda}_{i}} 
        \right)
        \otimes 
        \sum_{s}
        \varrho_{\lambda}(h_1)_{s,j} 
        \ket{\lambda, \lambda, s, \ell}
        \,.
    \end{align*}
    Since the operations up until now were unitary, we can write every state in the anti-symmetric subspace as a linear combination of vectors of this form.  $\sum_{i, j, k, \ell} \alpha_{i, j, k, \ell} \ket{\psi_{i, j, k, \ell}}$. 
    We need to compute the inner product between this state and the swapped version of this state, which we can compute as

    \begin{align*}
        &\mathop{\mathbb{E}}_{h_2 \in G}\left[\sum_{\substack{i, j, k, \ell \\ i', j', k', \ell'}}\alpha_{i, j, k, \ell}\alpha^{\dagger}_{i',j',k',\ell'}\bra{\psi_{i, j, k, \ell}}
        \mathrm{SWAP}
        \ket{\psi_{i', j', k', \ell'}}\right] \\
        &\hspace{15mm}= 
        \mathop{\mathbb{E}}_{h_2 \in G} \bigg[\frac{1}{2}
        \sum_{i,k, k', i'}\sum_{r, r'}
        \left(
            \varrho_{\lambda}(h_2^{-1})_{i',r}^*
            \bra{\phi^{\lambda}_{r}} 
            \otimes 
            \bra{\phi^{\lambda}_{k'}} 
            -
            \varrho_{\lambda}(h_2^{-1})_{k',r}^*
            \bra{\phi^{\lambda}_{r}} 
            \otimes 
            \bra{\phi^{\lambda}_{i'}} 
        \right)
        \\&\hspace{15mm}
        \qquad\qquad
        \left(
            \varrho_{\lambda}(h_2^{-1})_{i,r'}
            \ket{\phi^{\lambda}_{k}} 
            \otimes 
            \ket{\phi^{\lambda}_{r'}} 
            -
            \varrho_{\lambda}(h_2^{-1})_{k,r'}
            \ket{\phi^{\lambda}_{i}} 
            \otimes 
            \ket{\phi^{\lambda}_{r'}} 
        \right) \left(\sum_{j, \ell} \alpha_{i, j, k, \ell} \alpha^{\dagger}_{i', j, k', \ell}\right)\bigg]
        \\
        &\hspace{15mm}= 
        \mathop{\mathbb{E}}_{h_2 \in G}\bigg[ \frac{1}{2}
        \sum_{i, k, k', i'}\sum_{r, r'}
        \Big(
            \varrho_{\lambda}(h_2^{-1})_{i',r}^*
            \varrho_{\lambda}(h_2^{-1})_{i,r'}
            \braket{\phi^{\lambda}_{r} 
            \,|\,
            \phi^{\lambda}_{k}} 
            \braket{\phi^{\lambda}_{k'} 
            \,|\,
            \phi^{\lambda}_{r'}} 
            -
            \varrho_{\lambda}(h_2^{-1})_{i',r}^*
            \varrho_{\lambda}(h_2^{-1})_{k,r'}
            \braket{\phi^{\lambda}_{r} 
            \,|\,
            \phi^{\lambda}_{i}} 
            \braket{\phi^{\lambda}_{k'} 
            \,|\,
            \phi^{\lambda}_{r'}} 
            \\&\hspace{15mm}
            \qquad\qquad
            -
            \varrho_{\lambda}(h_2^{-1})_{k',r}^*
            \varrho_{\lambda}(h_2^{-1})_{i,r'}
            \braket{\phi^{\lambda}_{r} 
            \,|\,
            \phi^{\lambda}_{k}} 
            \braket{\phi^{\lambda}_{i'} 
            \,|\,
            \phi^{\lambda}_{r'}} 
            +
            \varrho_{\lambda}(h_2^{-1})_{k',r}^*
            \varrho_{\lambda}(h_2^{-1})_{k,r'}
            \braket{\phi^{\lambda}_{r} 
            \,|\,
            \phi^{\lambda}_{i}} 
            \braket{\phi^{\lambda}_{i'} 
            \,|\,
            \phi^{\lambda}_{r'}} 
        \Big) \left(\beta_{i,k,k',i'}\right)\bigg]
        \\
        &\hspace{15mm}= 
        \mathop{\mathbb{E}}_{h_2 \in G} \bigg[\frac{1}{2}\sum_{i, k, k', i'}
        \Big(
            \varrho_{\lambda}(h_2^{-1})_{i',k}^*
            \varrho_{\lambda}(h_2^{-1})_{i,k'}
            -
            \varrho_{\lambda}(h_2^{-1})_{i',i}^*
            \varrho_{\lambda}(h_2^{-1})_{k,k'}
            \\&\hspace{15mm}\qquad\qquad-
            \varrho_{\lambda}(h_2^{-1})_{k',k}^*
            \varrho_{\lambda}(h_2^{-1})_{i,i'}
            +
            \varrho_{\lambda}(h_2^{-1})_{k',i}^*
            \varrho_{\lambda}(h_2^{-1})_{k,i'}
        \Big)\beta_{i, k, k', i'}\bigg]
        \\
        &\hspace{15mm}= 
        \sum_{i, k}\frac{1}{2} \beta_{i, k, k, i}
        \Big(
            \mathop{\mathbb{E}}_{h_2 \in G}\left[\abs{
            \varrho_{\lambda}(h_2^{-1})_{i,k}
            }^2
            +
            \abs{
            \varrho_{\lambda}(h_2^{-1})_{k,i}
            }^2\right]
        \Big)
        \,.
    \end{align*}
    In the first equality, the swap only affects the first two registers, so the final two indices must be the same to survive the inner product. In the third equality, we use the fact that $\braket{\phi^{\lambda}_a \,|\, \phi^{\lambda}_b} = \delta_{ab}$. In getting to the final line, we use the fact that the $i$ indices are never equal to the $k$ indices, by the fact that we are in the anti-symmetric group.  Combining this with the fact from before that $\sum_{h \in G} \varrho_{\lambda}(h^{-1})^*_{a, b}\varrho_{\lambda}(h^{-1})_{c,d} = \delta_{c, d}\delta_{b, d}$, we can remove the two negative terms when averaging over the group elements.  Using the same fact, we have that for the remaining terms, we have
    \begin{align*}
            \frac{1}{\abs{\G}}
            \sum_{h_2 \in \G}
            \abs{
            \varrho_{\lambda}(h_2^{-1})_{i,k}
            }^2
            +
            \frac{1}{\abs{\G}}
            \sum_{h_2 \in \G}
            \abs{
            \varrho_{\lambda}(h_2^{-1})_{k,i}
            }^2 = \frac{2}{\dimlambda}\,.
    \end{align*}
    Since the $\beta_{i, k, k, i}$ sum to $1$ (as they are again the norm of the original vectors), the probability that the swap test succeeds on the second try is exactly
    \begin{align*}
        \frac{1}{2}
        +
        \frac{1}{2}
        \Big(
            \frac{1}{\dimlambda}
        \Big)\,.
    \end{align*}
    Now, we have shown that in the case when there is a pre-action, for \emph{all} states, the probability that the second swap test succeeds is given by 
    \begin{equation*}
        \frac{1}{2} + \frac{1}{2\dimlambda}\,.
    \end{equation*}
    Since the adversary accepts whenever the results are different, the adversary outputs ``preaction'' with probability at least
    \begin{equation*}
        \frac{1}{2} - \frac{1}{2\dimlambda}\,.
    \end{equation*}
    This is also the distinguishing advantage, as we showed that in the case where there is no pre-action, the adversary outputs ``no pre-action'' with probabiliy $1$.  

    If the adversary starts with a state that is $2\dimlambda\epsilon / (\dimlambda + 1)$ close to the tensor product of two copies of $W^{\lambda}$, they can first simply measure the irreducible representation label of both states, and condition on getting $\lambda$ for both run this test.  If the probability they measure $\lambda$ for both is at least the given probability, then their distinguishing advantage will be at least $\epsilon$. 
\end{proof}

\subsubsection{Generalizing to Intransitive Group Actions}\label{sec:intransitive}
Previously, we assumed that the group action was \emph{transitive}. That is, it had a single orbit, such that every element of the set $X$ can be reached from a single starting point $x \in X$.
In this subsection, we generalize to the case in which the group action is \textit{intransitive}. This means that the space is divided up into multiple orbits, with each orbit operating as a new \manifestation of the whole representation space. 

Note that the construction does not need to change for intransitive group actions. We can still have a fixed starting element $x$, whose orbit will be used by the minting algorithm to mint banknotes. 
However, for the proof, we can no longer assume that the two registers produced by the adversary have support on the same orbit\textemdash the orbit of $x$. 
The adversary may in general attempt to mint banknotes with supports on \emph{different orbits}.


We comment on how the security of the previous section generalizes to the intransitive setting.

\paragraph{Intransitive Preaction Security.}
We modify the definitions of preaction security to the intransitive case. Let $(\G, X, *, x)$ be an intransitive group action.

\begin{assumption}[Intransitive Preaction Hardness]
    \label{assum:preaction-is-hard-intran}
    Given $x$, $g*x$, and $h$ for a fixed starting element, $x \in X$, and random $g,h \gets \G$, it is hard to output $gh^{-1} * x$.  That is, there exists an $\epsilon > 0$ such that for all QPT adversaries, $\mathcal{A}$,
    \begin{align*}
        \Pr\left[ z = gh^{-1} * x \; : \; x \gets X, \; g, h \gets \G, \; z \gets \mathcal{A}(x, g*x, h) \right] \le \frac{1}{\abs{\G}} + \epsilon
    \end{align*}
\end{assumption}

\begin{assumption}[Intransitive Preaction Indistinguishability]
    \label{assum:preaction-indist-intran}
    It is hard to distinguish whether a preaction has been performed relative to a set of prefixed starting points.  
    Let $\mathcal{O}_1, \dots, \mathcal{O}_{m}$ be the orbits of the group action and let $x_1, \dots, x_m$ be representatives from each orbit ($x_i \in \mathcal{O}_i$).
    Let $\mathsf{Chal}_b^{h_1, h_2} \; : \; g*x_i \;\mapsto\; h_1 \, g \, h_2^{-b} * x_i$, \; for $b \in \bits$ and $h_1, h_2 \in \G$. Then there exists an $\epsilon > 0$ such that for all QPT adversaries, $\mathcal{A}$,
    that make a single query
    %
    to $\mathsf{Chal}_b^{h_1, h_2}$,
    \begin{align*}
        \Pr\left[ b' = b \; : \; h_1, h_2 \gets \G, \; b \gets \bits, \; \; b' \gets \mathcal{A}^{\mathsf{F}_b^{h_1, h_2}} \right] \le \frac12 + \epsilon
    \end{align*}
    Note that when $b = 0$, the challenger $\mathsf{Chal}_b^{h_1, h_2}$ performs a group action for a random group element $h_1$, and when $b = 1$, it performs both a random group action with $h_1$ and a random group pre-action with $h_2$.  
\end{assumption}

\begin{definition}
\label{def:preaction_secure_intractable}
    We say that a group action of group $\G_{n}$ on set $X_{n}$ with starting element $x$ is $\epsilon$-\emph{preaction secure} if both \Cref{assum:preaction-is-hard-intran} and \Cref{assum:preaction-indist-intran} hold for the group action against any QPT adversary with advantage $\epsilon$. We say that the group action is preaction secure if it is $\negl(n)$-preaction secure for any negligible function $\negl$.
\end{definition}

We also need an additional technical assumption, which says that it is hard to find ``bad'' orbits.

\begin{assumption}[Intractable bad irreps]\label{assum:bad-irreps} We say that a group action has $\delta$-\emph{intractable bad irreps} if any QPT adversary has probability at most $\delta$ of producing an $x\in X$ and an irreducible representation $\lambda$ such that (1) $\dimlambda>1$, but (2)$\lambda$ only has a single \manifestation in the representation of $G$ acting on the orbit $\mathcal{O}_i$ containing $x$.
\end{assumption}
Note that if all orbits $\mathcal{O}_i$ are in bijection with $G$, then the representation of $G$ acting on $\mathcal{O}_i$ will have $\dimlambda$ \manifestations of each irrep $\lambda$. However, if some orbit contains an element $x$ such that $g*x=x$ for some $g$, then the number of \manifestations of $\lambda$ may be smaller. \Cref{assum:bad-irreps} says that it is hard to find such an irrep and representative of such an orbit.

\begin{proposition}
    Suppose that the group action used in the quantum money construction 
    (\Cref{sec:preaction-secure-construction}) 
    is $\epsilon$-intrasitive preaction secure and has $\delta$-intractable bad irreps.  Then no QPT adversary can produce a quantum state on two registers such the probability of measuring both in the same irreducible subspace is greater than $2\epsilon + \delta$.  
\end{proposition}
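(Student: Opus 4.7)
The plan is to lift the reduction of the transitive case (Proposition~\ref{prop:unclonable_from_preaction_secure}) to intransitive group actions. Two new complications arise: the two registers returned by the cloning adversary may lie in distinct orbits of $X$, so their isotypic components $W^{\lambda}_{y_1}$ and $W^{\lambda}_{y_2}$ need not coincide; and some orbits may carry a ``deficient'' copy of the irrep $\lambda$ with fewer than $\dimlambda$ manifestations, on which the Schur-orthogonality calculations of the transitive proof would fail.

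First, I would dispose of the deficient orbits by invoking Assumption~\ref{assum:bad-irreps}: conditioning on the adversary not outputting a bad orbit/irrep pair costs only $\delta$ in the success probability, and leaves us in a setting where, for every orbit $y$ on which either register has support, the group action representation on $\mathcal O_y$ contains the full $\dimlambda$ manifestations of $\lambda$. In particular, within each such orbit the quantum money states $\{\ket{\$^{\lambda}_{i,j}*y}\}_{i,j \in [\dimlambda]}$ form an orthonormal basis for $W^{\lambda}_{y}$ exactly as in the transitive analysis, and the archetype states $\ket{\phi^{\lambda}_{i,y}}$ produced by the \fse are well-defined and orthonormal (across both $i$ and $y$).

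Next, I would run the distinguishing algorithm of Algorithm~\ref{alg:preaction_distinguishing_given_cloning} verbatim, using the intransitive preaction challenger $\mathsf{Chal}_b^{h_1,h_2}$ of Assumption~\ref{assum:preaction-indist-intran}. The key structural observation is that $\mathsf{Chal}_b^{h_1,h_2}$ acts orbit-by-orbit (both the action and the preaction preserve orbits), so the unitary composed of uncomputing \fse, querying $\mathsf{Chal}_b^{h_1,h_2}$ on the first register, and recomputing \fse is block-diagonal in the orbit decomposition of $W^{\lambda}$. Combined with the orthogonality of archetype states across distinct orbits, this lets us expand the adversary's joint state over orbit pairs $(y_1,y_2)$ and analyse each block independently: the diagonal blocks $y_1 = y_2$ are precisely the transitive-case subproblem, to which the symmetric/antisymmetric swap-test calculation applies unchanged and gives distinguishing advantage $\tfrac12(1-\tfrac{1}{\dimlambda})$ conditional on cloning success, while the off-diagonal blocks $y_1 \neq y_2$ can only help the distinguisher, because the first swap test already reveals orbit mismatch and the subsequent preaction cannot ``repair'' it.

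The main obstacle is verifying that no interference between orbit-pair blocks spoils the reduction when the adversary produces an entangled superposition across different orbit supports; this is handled by the orbit-block-diagonality of $\mathsf{Chal}_b^{h_1,h_2}$ and the orthogonality of archetype registers across orbits, which together reduce the analysis to a convex combination of the per-orbit-pair subproblems. Combining the resulting distinguishing advantage with Assumption~\ref{assum:preaction-indist-intran} and re-inserting the $\delta$ loss from the bad-irrep conditioning yields the bound $2\epsilon + \delta$ claimed in the proposition.
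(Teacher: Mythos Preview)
Your overall plan (absorb bad irreps at cost $\delta$, then run Algorithm~\ref{alg:preaction_distinguishing_given_cloning} and analyse the two swap tests) matches the paper's approach. The gap is in how you handle the cross-orbit blocks $y_1\neq y_2$. Your justification---``the first swap test already reveals orbit mismatch and the subsequent preaction cannot repair it''---is incorrect: the swap test only measures the symmetric versus antisymmetric subspace of the archetype registers, and for archetype states supported on distinct orbits (hence orthogonal) that measurement is simply $50/50$. It carries no information about orbit mismatch, so the cross-orbit contribution does not automatically ``help'' the distinguisher. Relatedly, the symmetric/antisymmetric projector mixes the orbit-pair blocks $(y_1,y_2)$ and $(y_2,y_1)$, so the reduction to a convex combination of per-orbit-pair subproblems is not as clean as you suggest.

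The paper does not attempt a block-diagonal reduction to the transitive case. Instead it redoes the swap-test calculation of Proposition~\ref{prop:unclonable_from_preaction_secure} directly, now carrying orbit labels $x,y$ alongside the manifestation indices $i,k$ in the archetype basis $\{\ket{\phi^{\lambda,x}_i}\}$. For $b=0$ the argument that the symmetric (resp.\ antisymmetric) subspace is preserved goes through unchanged. For $b=1$, in the symmetric case the paper computes $\mathrm{F}_{\mathrm{SWAP}}$ exactly as before, obtains after Schur orthogonality a sum of the form $\frac{1}{\dimlambda}\sum_{x,y}\bigl(\sum_r (\alpha_r^x)^\dagger\alpha_r^y\bigr)\bigl(\sum_i (\alpha_i^x)^\dagger\alpha_i^y\bigr)$, and bounds this by $1/\dimlambda$ via Cauchy--Schwarz over the orbit indices; the antisymmetric case is handled by enlarging the basis to vectors $\frac{1}{\sqrt2}(\ket{\$^{\lambda,x}_{i,j}}\ket{\$^{\lambda,y}_{k,\ell}}-\ket{\$^{\lambda,y}_{k,j}}\ket{\$^{\lambda,x}_{i,\ell}})$ with $\delta_{x,y}\delta_{i,k}=0$ and repeating the Schur-orthogonality computation. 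The cross-orbit terms thus contribute the \emph{same} bound $\frac12+\frac{1}{2\dimlambda}$ on the second swap test, not a better one, and the Cauchy--Schwarz/direct-computation step is what replaces your block-diagonal heuristic.
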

\begin{proof}
    First, assume that the adversary does not sample a state in an intractable bad orbit.  Since the probability is upper bounded by $\delta$, this increase the probability that they measure a state in the same irreducible subspace by $\delta$.  

    Similar to the proof of \Cref{prop:unclonable_from_preaction_secure}, we assume for the sake of contradiction that the adversary has an $\delta$ probability of measuring two states in the same irreducible representation.  Then we consider the same algorithm, \Cref{alg:preaction_distinguishing_given_cloning}, for distinguishing a black-box that performs a pre-action from a black-box that does not perform a pre-action.  

    First, let $\ket{\$^{\lambda, x}_{i, j}}$ be the money state that corresponds to irrep label $\lambda$, \manifestation $i$, basis vector $j$, and starting element $x$.  Further let $\ket{\phi^{\lambda, x}_{i}}$ be the archetype state corresponding to irrep $\lambda$, \manifestation $i$ and starting element $x$.  
    \paragraph{Case 1: $b = 0$ (there is no preaction).} We begin by analyzing the performance of \Cref{alg:preaction_distinguishing_given_cloning} in the case when $b = 0$, first in the case where the symmetric subspace accepts and then the case when it fails.  Then we can write every state in the symmetric subspace as follows for some choice of $\alpha_{i}^{x}$ and $\beta_{j, \ell}$. 
    \begin{equation*}
        \sum_{i, k, x, y} \alpha_{i}^{x} \alpha_{k}^{y} \ket{\phi^{\lambda, x}_{i}} \otimes \ket{\phi^{\lambda, y}_{k}} \otimes \sum_{j, \ell} \beta_{j, \ell} \ket{\lambda, \lambda, j, \ell}\,.
    \end{equation*} 
    Here we note that this encompasses the case when the states span multiple orbits (indexed by starting elements $x$ and $y$).  Then after inverting the \fse, we get the following state
    \begin{equation*}
        \sum_{i, k, j, \ell, x, y} \alpha_{i}^{x} \alpha_{j}^{y} \beta_{j, \ell} \ket{\$^{\lambda, x}_{i, j}} \otimes \ket{\$^{\lambda, y}_{k, \ell}}\,.
    \end{equation*}
    Then after applying the black box (recall that $b = 0$) to the first register, we have the following
    \begin{equation*}
        \sum_{i, k, j, \ell, x, y} \sum_{s} \alpha_{i}^{x} \alpha_{j}^{y} \beta_{j, \ell} \varrho_{\lambda}(h_1)_{s, j} \ket{\$^{\lambda, x}_{i, s}} \otimes \ket{\$^{\lambda, y}_{k, \ell}}\,.
    \end{equation*}
    After performing \fse again, we get the following state, following the logic in \Cref{prop:unclonable_from_preaction_secure}.
    \begin{equation*}
        \sum_{i, k, x, y} \alpha^{x}_{i} \alpha^{y}_{k} \ket{\phi^{\lambda, x}_{i}} \otimes \ket{\phi^{\lambda, y}_{j}} \otimes \sum_{s, \ell, j} (\beta_{j, \ell} \varrho_{\lambda}(h_1)_{s, j}) \ket{\lambda, \lambda, s, \ell}\,.
    \end{equation*}
    Thus, we measure a state in the symmetric subspace.  Furthermore, since the symmetric subspace is perfectly mapped back to the symmetric subspace under the black box, if the first symmetric subspace measurement outputs the anti-symmetric subspace, the black box will keep the state in the anti-symmetric subspace.  Thus, in the case that the $b = 0$, the algorithm accepts with probability $1$.  

    \paragraph{Case 2: $b = 1$ (there is a preaction).}
    We first start in the case when the symmetric subspace projector accepts.  In this case, we can write the state after the projector accepts, similarly to before as
    \begin{equation*}
        \sum_{i, k, x, y} \alpha_{i}^{x} \alpha_{k}^{y} \ket{\phi^{\lambda, x}_{i}} \otimes \ket{\phi^{\lambda, y}_{k}} \otimes \sum_{j, \ell} \beta_{j, \ell} \ket{\lambda, \lambda, j, \ell}\,.
    \end{equation*}
    After applying the black box (this time with a pre-action), we have the following state
    \begin{align*}
        &\sum_{i, k, x, y, r, s} \alpha_{i}^{x} \alpha_{k}^{y} \beta_{j, \ell} \varrho_{\lambda}(h_2^{-1})_{i, r} \varrho_{\lambda}(h_1)_{s, j}\ket{\phi^{\lambda, x}_{r}} \otimes \ket{\phi^{\lambda, y}_{k}} \otimes \ket{\lambda, \lambda, s, \ell}\\
        &\hspace{15mm}= \sum_{r, k, x, y} \left(\sum_{i} \alpha_{i}^{x}\varrho_{\lambda}(h_2^{-1})_{i, r}\right) \alpha_{k}^{y} \ket{\phi^{\lambda, x}_{r}} \otimes \ket{\phi^{\lambda, y}_{k}} \otimes \sum_{s, \ell} \left(\sum_{j} \beta_{j, \ell} \varrho_{\lambda}(h_1)_{s, j}\right) \ket{\lambda, \lambda, s, \ell}\\
        &\hspace{15mm} \left(\sum_{r, x} \alpha'^{x}_{r} \ket{\phi^{\lambda, x}_{r}}\right) \otimes \left(\sum_{k, y} \alpha_{k}^{y} \ket{\phi^{\lambda, y}_{k}}\right) \otimes \sum_{s, \ell} \left(\sum_{j} \beta_{j, \ell} \varrho_{\lambda}(h_1)_{s, j}\right) \ket{\lambda, \lambda, s, \ell}\,.
    \end{align*}
    We can then write the expression for the fidelity of this state and the swapped version if the state as follows.
    \begin{equation*}
        \mathrm{F}_{\mathrm{SWAP}} = \sum_{r, k, x, y} \left(\left(\alpha'^{x}_{r}\right)^{\dagger} \left(\alpha_r^{x}\right)^{\dagger}\left(\alpha_k^{y}\right)^{\dagger} \left(\alpha_{k}'^{y}\right)\right)
    \end{equation*}
    Here the only difference from before is that the inner product also enforces that the orbits ($x$ and $y$) are the same between the left and right.  Expanding each $\alpha'$ as before, we get the following expression for the fidelity of the state with its swap.
    \begin{equation*}
        \sum_{r, k, x, y} \left(\alpha_{k}^{y}\right)^{\dagger} \left(\alpha_{r}^{x}\right)^{\dagger} \sum_{i, i'} \left(\alpha_{i}^{x}\right)^{\dagger} \alpha_{i'}^{y} \varrho_{\lambda}(h_2^{-1})_{i, r}^{\dagger} \varrho_{\lambda}(h_2^{-1})_{i', k}\,.
    \end{equation*}
    Computing the average over the group and applying \Cref{sec:schur-orthogonality}, we get the following quantity
    \begin{align*}
         &\sum_{r, k, x, y} \left(\alpha_{k}^{y}\right)^{\dagger} \left(\alpha_{r}^{x}\right)^{\dagger} \sum_{i, i'} \left(\alpha_{i}^{x}\right)^{\dagger} \alpha_{i'}^{y} \avg_{h_2 \in G} \left[\varrho_{\lambda}(h_2^{-1})^{\dagger}_{i, r} \varrho_{\lambda}(h_2^{-1})_{i', k}\right] \\
         &\hspace{15mm}= \frac{1}{\dimlambda} \sum_{x, y} \left(\sum_{r} (\alpha_{r}^{x})^{\dagger} \left(\alpha_{r}^{y}\right)\right) \left(\sum_{i} \left(\alpha_{i}^{x}\right)^{\dagger} \left(\alpha_{i}^{y}\right)\right)\\
         &\hspace{15mm}\leq \frac{1}{\dimlambda} \sqrt{\sum_{x, y} \left(\sum_{r} (\alpha_{r}^{x})^{\dagger} \left(\alpha_{r}^{y}\right)\right) \cdot \sum_{x, y} \left(\sum_{i} \left(\alpha_{i}^{x}\right)^{\dagger} \left(\alpha_{i}^{y}\right)\right)}\\
         &\hspace{15mm}= \frac{1}{\dimlambda}\,.
    \end{align*}
    Here after applying the Schur orthogonality rules, we apply Cauchy-Schwarz and then use the fact that both terms in the square roots are the norm of the original vector, so they are $1$.  To complete the proof, we note that the probability that the swap test succeeds is given by
    \begin{equation*}
        \frac{1}{2} (1 + \mathrm{F}_{\mathrm{SWAP}}) = \frac{1}{2} + \frac{1}{2\dimlambda}\,.
    \end{equation*}
    Now we proceed with the analysis in the case that the symmetric subspace measurements outputs the anti-symmetric subspace.  We can similarly write the anti-symmetric subspace on the first register as the span of the following vectors (and analyzing the action of the rest of the algorithm on those vectors will imply the action on every state in the anti-symmetric subspace).  
    \begin{equation*}
        \frac{1}{\sqrt{2}} \left(\ket{\$^{\lambda, x}_{i, j}} \otimes \ket{\$^{\lambda, y}_{k, \ell}} - \ket{\$^{\lambda, y}_{k, j}} \otimes \ket{\$^{\lambda, x}_{i, \ell}} \right)\,.
    \end{equation*}
    Here we require that $\delta_{x, y}\delta_{i, k} = 0$ (i.e. that at least one of the pairs is different). Similar to before we can write out the state after applying the black box (now with a pre-action), and then the \fse as follows
    \begin{align*}
        \ket{\psi_{i, j, k, \ell}^{x, y}} &= \sqrt{1}{2} \sum_{r, s} \left(\varrho_{\lambda}(h_2^{-1})_{i, r} \varrho_{\lambda}(h_1)_{s, j} \ket{\phi^{\lambda, x}_{r}} \otimes \ket{\phi^{\lambda, x}_{k}} - \varrho_{\lambda}(h_2^{-1})_{k, r} \varrho_{\lambda}(h_1)_{s, j} \ket{\phi^{\lambda, y}_{r}} \otimes \ket{\phi^{\lambda, x}_{i}}\right) \otimes \ket{\lambda, \lambda, s, \ell}\\
        &= \frac{1}{\sqrt{2}} \sum_{r} \left(\varrho_{\lambda}(h_2^{-1})_{i, r} \ket{\phi^{\lambda, x}_{r}} \otimes \ket{\phi^{\lambda, y}_{k}} - \varrho_{\lambda}(h_2^{-1})_{k, r} \ket{\phi^{\lambda, y}_{r}} \otimes \ket{\phi^{\lambda, x}_{i}}\right) \otimes \sum_{s} \varrho_{\lambda}(h_1)_{s, j} \ket{\lambda, \lambda, s, \ell}\,.
    \end{align*}
    Now, in a same fashion as before we can write every state in the anti-symmetric subspace as a linear combination of these basis vectors as $\sum_{i, j, k, \ell, x, y} \alpha_{i, j, k, \ell}^{x, y} \ket{\psi_{i, j, k, \ell}^{x, y}}$.  We then need to compute the inner product between this state and the state after swapping with itself, averaged over all group elements.  We get the following
    \begin{align*}
        &\avg_{h_2 \in G} \left[\sum_{\substack{i, j, k, \ell, x, y \\ i', j', k', \ell', x', y'}} \left(\alpha_{i,j, k, \ell}^{x, y}\right)^{\dagger} \left(\alpha_{i', j', k', \ell'}^{x', y'}\right) \bra{\psi_{i, j, k, \ell}^{x, y}} \mathrm{SWAP} \ket{\psi_{i',j',k',\ell'}^{x', y'}} \right]\\
       &\hspace{15mm}
       \begin{aligned}= &\avg_{h_2 \in G} \Bigg[ \frac{1}{2} \sum_{x, y, x', y'} \sum_{i, k, i', k'} \sum_{r, r'} \left(\varrho_{\lambda}(h_2^{-1})^{*}_{i', r} \bra{\phi^{\lambda, x}_{r}} \otimes\bra{\phi^{\lambda, y}_{k'}} - \varrho_{\lambda}(h_2^{-1})^{*}_{k', r} \bra{\phi^{\lambda, y}_{r}} \otimes \bra{\phi^{\lambda, x}_{i'}}\right) \\
       &\hspace{15mm}\left(\varrho_{\lambda}(h_2^{-1})_{i, r'} \ket{\phi^{\lambda, y'}_{k}} \otimes \ket{\phi^{\lambda, x'}_{r'}} - \varrho_{\lambda}(h_2^{-1})_{k, r'} \ket{\phi^{\lambda, x'}_{i}} \otimes \ket{\phi^{\lambda, y'}_{r'}}\right) \left(\sum_{j, \ell} \alpha_{i, j, k, \ell}^{x, y} \left(\alpha_{i', j', k', \ell'}^{x', y'}\right)^{\dagger}\right)\Bigg]
       \end{aligned}\\
       &\hspace{15mm}
       \begin{aligned}
        =&\avg_{h_2 \in G} \Bigg[\frac{1}{2} \sum_{x, y, x', y'} \sum_{i, k, i', k'} \sum_{r, r'} \bigg(\varrho_{\lambda}(h_2^{-1})_{i', r}^* \varrho_{\lambda}(h_2^{-1})_{i, r'}) \braket{\phi^{\lambda, x}_{r}|\phi^{\lambda, y'}_{k}} \braket{\phi^{\lambda, y}_{k'} | \phi^{\lambda, x'}_{r'}}\\
        &\hspace{15mm}-\varrho_{\lambda}(h_2^{-1})_{i', r}^* \varrho_{\lambda}(h_2^{-1})_{k, r'} \braket{\phi^{\lambda, x}_r | \phi^{\lambda, x'}_{i}} \braket{\phi^{\lambda, y}_{k'} | \phi^{\lambda, y'}_{r'}}\\
        &\hspace{15mm}-\varrho_{\lambda}(h_2^{-1})_{k', r}^* \varrho_{\lambda}(h_2^{-1})_{i, r'} \braket{\phi^{\lambda, y}_{r} | \phi^{\lambda, y'}_{k}}\braket{\phi^{\lambda, x}_{i'}|\phi^{\lambda, x'}_{r'}}\\
        &\hspace{15mm}+\varrho_{\lambda}(h_2^{-1})_{k', r}^* \varrho_{\lambda}(h_2^{-1})_{k, r'} \braket{\phi^{\lambda, y}_r|\phi^{\lambda, x'}_{i}}\braket{\phi^{\lambda, x}_{i'} | \phi^{\lambda, y'}_{r'}} \bigg)\left(\beta_{i, k, k', i'}^{x, y, x', y'}\right)\Bigg]\,.
        \end{aligned}\\
        &\hspace{15mm}\begin{aligned}=&\avg_{h_2 \in G} \Bigg[\sum_{x, y, x', y'} \sum_{i, k, i', k'} \bigg( (\varrho_{\lambda}(h_2^{-1})_{i', k})^*\varrho_{\lambda}(h_2^{-1})_{i, k'} \delta_{x, y'} \delta_{y, x'} - \varrho_{\lambda}(h_2^{-1})^*_{i', i} \varrho_{\lambda}(h_2^{-1})_{k, k'}\delta_{x, x'} \delta_{y, y'}\\
        &\hspace{15mm}- \varrho_{\lambda}(h_2^{-1})^*_{k', k} \varrho_{\lambda}(h_2^{-1})_{i, i'} \delta_{x, x'}\delta_{y, y'} + \varrho_{\lambda}(h_2^{-1})^*_{k', i} \varrho_{\lambda}(h_2^{-1})_{k, i'}\delta_{x, y'} \delta_{y, x'} \bigg) (\beta_{i, k, i', k'}^{x, y, x', y'})\Bigg]
        \end{aligned}\\
        &= \sum_{x, y} \sum_{i, k} \frac{1}{2} \beta_{i, k, k, i}^{x, y, y, x} \left(\avg_{h_2 \in G} \left[\left|\varrho_{\lambda}(h_2^{-1})_{i, k}\right|^2 + \left|\varrho_{\lambda}(h_2^{-1})_{k, i}\right|^2\right] \right)\,.
    \end{align*}
    In the first equality, we note that the swap only affects the first two registers, so the final two must be the same to survive the inner product, as before.  Then we use the fact that the inner product of the states $\braket{\phi^{\lambda, x}_{a} | \phi^{\lambda, y}_{b}} = \delta_{x, y} \delta_{a, b}$.  Finally, we use the fact that in the anti-symmetric states, we can not have both the orbits \emph{and} the \manifestations be the same (as noted in the description of the basis states).  This allows us to apply the Schur orthogonality relations and remove the two negative terms when we average over the group elements.

    Now, we can apply the same equality that we noted before to bound this by $\frac{2}{\dimlambda}$, again noting that the $\beta_{i, k, k, i}^{x, y, y, x}$ correspond to a normalized vector.  Thus, the probability that the swap test succeeds is bounded from above by the following
    \begin{equation*}
        \frac{1}{2} + \frac{1}{2\dimlambda}\,.
    \end{equation*}
    At this point, we have completed the analysis of the probability that the state passes the second test.  In particular, when there is no preaction, the preaction distinguisher outputs ``no preaction'' with probability $1$, and if there is a preaction the distinguisher outputs preaction with probability at least $\frac{1}{2} - \frac{1}{2\dimlambda}$.  Thus, if the adversary instead starts with a state that has probability $\dimlambda\epsilon / (\dimlambda + 1) \leq 2\epsilon$ of being measured in the tensor product of two copies of the irreducible space of $\lambda$ that are not intractable bad irreps, they can first measure the state and then apply this distinguisher to break the preaction indinstinguishability with probability $\epsilon$.  Adding in the probability ($\delta$) that the adversary measures a intractable bad irrep, we get the desired bound.
\end{proof}

With this proposition, we have shown that the construction of quantum lightning is secure if instantiated with a $\epsilon$-preaction secure group action (as in \Cref{def:preaction_secure_intractable}) that has $\negl$-intractable bad irreps.  In the next section, we will provide groups that might meet these conditions, providing the first instantiations of quantum lightning in the plain model from plausible assumptions.

\subsection{Instantiations of the Construction}
\label{sec:instantiations}

Here, we discuss some plausible instantiations of our quantum money scheme. Our main focus will be on \emph{symmetric} group actions. First, we note that symmetric group actions have a negligible maximum Plancherel measure~\cite{vershik1985asymptotic}, a necessary condition for having a secure quantum lightning scheme and for our pre-action security assumption to hold. Symmetric groups also admit an efficient quantum Fourier transform~\cite{beals1997quantum}, a necessary condition for the efficiency of our protocol. This makes symmetric group actions a natural candidate for instantiating our scheme.

\paragraph{Graph Isomorphism.} Given a graph $(V,E)$ with $|V|=n$, the symmetric group $S_n$ acts on $(V,E)$ by permuting the vertices. 

Note that the discrete logarithm problem on graphs is exactly the Graph Isomorphism problem. Graph Isomorphism can be solved in (classical) quasi-polynomial time~\cite{STOC:Babai16}. However, it is still conceivable that there is no polynomial-time algorithm, giving a plausible candidate group action.

We also would like $S_n$ to act regularly. If $(V,E)$ has a trivial automorphism group, then $S_n$ acts semiregularly on the orbit of $(V,E)$. ``Most'' graphs have trivial automorphism groups~\cite{LinMos17}. Unfortunately, it is in general presumably hard to identify the orbit of a graph $(V,E)$, since this would solve the graph isomorphism problem for $(V,E)$. We therefore appeal to our generalization to intransitive group actions in Section~\ref{sec:intransitive}. Therefore, even if there are multiple orbits, our security proof still works.

\paragraph{Permuting Matrices.} The symmetric group $S_n$ acts on the set of $n\times n$ symmetric matrices via $\sigma*M=\sigma\cdot M\cdot \sigma^T$. That is, permute the rows and columns of $M$ by $\sigma$. This is in fact a generalization of the graph isomorphism group action, using the adjacency matrix of the graph.

\paragraph{McEliece Cryptosystem.} The McEliece cryptosystem~\cite{Mceliece78} contains an implicit group action. Here, we have the symmetric group acting on matrices, though the operation is quite different. Let $\mathbb{F}$ be a field and $m>n$ be integers. Then consider the set $R_{n,m}$ of row-reduced matrices in $\mathbb{F}^{n\times m}$. Then $S_m$ acts on $R_{n,m}$, with $\sigma*M\rightarrow M'$ where $M'$ is the result of:
\begin{itemize}
    \item First permute the columns of $M$ according to $\sigma$.
    \item Then row-reduce the result.
\end{itemize}
Note that the McEliece cryptosystem uses the orbit of a specific matrix $M$ that has good error correcting properties. The original proposal in~\cite{Mceliece78} is to use binary Goppa codes. This original proposal has so far resisted (quantum) cryptanalysis efforts.

Note that we do not need any specific properties of $M$, allowing us to use basically any matrix $M$. Thus, even if the McEliece cryptosystem is broken, we still get a plausible quantum money candidate.

As for regularity, for a sufficiently wide matrix and/or sufficiently large field $\mathbb{F}$, $\S_m$ will act semiregularly on the orbit of ``most'' matrices, as shown in the lemma below. As with the Graph Isomorphism case, we do not expect to be able to identify the orbits of typical matrices, so we instead appeal to our generalization to non-transitive matrices.

\begin{lemma}Let $m\geq 2n+1$. Consider sampling $M$ such that (1) the left $n\times n$ matrix is the identity, and (2) the right $n\times (m-n)$ matrix is uniform. Then except with probability $p:=m^2 n^2 |\mathbb{F}|^{-1}+(m!) |\mathbb{F}|^{-n}$, $S_m$ will act semiregularly on the orbit of $M$. In particular, if $|\mathbb{F}|=m^{\omega(1)}$, then $p$ is negligible.
\end{lemma}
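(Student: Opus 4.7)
The plan is to convert the semiregularity statement into a condition on the row span of $M$ and then union-bound over non-trivial $\sigma \in S_m$. First, I would observe that $\sigma * M = M$ if and only if the row span $V := \mathrm{rowspan}(M) \subseteq \mathbb{F}^m$ satisfies $\sigma(V) = V$, where $\sigma$ acts on $\mathbb{F}^m$ by permuting coordinates. This follows from (i) the uniqueness of the reduced row-echelon form: $\sigma * M$ is by definition the unique such form whose row span is $\sigma(V)$, and (ii) the fact that $M = [I_n \mid B]$ is already in reduced form with pivots at positions $1,\dots,n$. So semiregularity fails exactly when some non-trivial $\sigma \in S_m$ stabilizes the random subspace $V = \{(x, xB) : x \in \mathbb{F}^n\}$.

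I would then split the union bound into two classes based on how $\sigma$ interacts with the partition $\{1,\dots,n\} \sqcup \{n+1,\dots,m\}$.

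\emph{Class A: $\sigma$ preserves the partition.} Here $\sigma = \sigma_1 \times \sigma_2$ with at least one factor non-trivial, and $\sigma(V)=V$ translates to the homogeneous linear relation $P_{\sigma_1}\, B = B\, P_{\sigma_2}$, which says that $B \colon \{1,\dots,n\}\times\{n+1,\dots,m\} \to \mathbb{F}$ is constant on the orbits of the induced $(\sigma_1,\sigma_2)$-action. Using the hypothesis $m-n \ge n+1$ together with an orbit-counting argument (any non-trivial cycle in either factor forces at least $n$ merged entries), I would argue the codimension of the constraint is at least $n$, so each such $\sigma$ is satisfied by $B$ with probability at most $|\mathbb{F}|^{-n}$. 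Summing over the at most $n!(m-n)! \le m!$ such $\sigma$ contributes the $(m!)\,|\mathbb{F}|^{-n}$ term in $p$.

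\emph{Class B: $\sigma$ mixes the two blocks.} Then some $i \le n$ has $\sigma(i) > n$, so the ``$1$'' in the $i$-th row $r_i = (e_i, B_{i,\cdot})$ is moved out of the identity block. Requiring $\sigma(r_i) \in V$ forces a scalar constraint on one specific entry of $B$ (given by a witness tuple encoding $i$, $\sigma(i)$, and a target coordinate). Rather than union-bounding over all $m!$ permutations in Class B, I would union-bound directly over the witness data: there are at most $O(m^2 n^2)$ such witnesses, each yielding a codimension-$1$ event of probability $\le |\mathbb{F}|^{-1}$, giving the $m^2 n^2 |\mathbb{F}|^{-1}$ term.

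The main obstacle is Class B: the naive union bound gives $m! \cdot |\mathbb{F}|^{-1}$, which is useless. The trick is that many distinct $\sigma$'s impose the same or overlapping constraints on $B$, so one can parametrize the constraint (rather than the permutation) and absorb many $\sigma$'s into a single witness. Carrying this out cleanly — in particular, handling small characteristic, where degree-$2$ relations arising from involutions can collapse the effective codimension, and ensuring every Class~B $\sigma$ is indeed covered by some witness — is the delicate bookkeeping step.
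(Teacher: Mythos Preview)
Your reduction to $\sigma(V)=V$ for $V=\mathrm{rowspan}(M)$ and the block-preserving/block-mixing split are sound, and your Class~A codimension-$n$ bound is correct (it does deliver the $(m!)|\mathbb{F}|^{-n}$ term on its own). The gap is in Class~B. For a mixing $\sigma$ and a row $r_i$ with $i\le n$, $\sigma(i)>n$, the condition $\sigma(r_i)\in V$ unpacks to $\sum_{j\le n}(r_i)_{\sigma^{-1}(j)}B_{j,k}=(r_i)_{\sigma^{-1}(n+k)}$ for each $k$; the ``$y$-vector'' $((r_i)_{\sigma^{-1}(1)},\dots,(r_i)_{\sigma^{-1}(n)})$ depends on the entire restriction $\sigma^{-1}|_{[n]}$, so the resulting degree-$2$ constraint on $B$ is determined by all of $\sigma$, not merely by the pair $(i,\sigma(i))$. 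I do not see how to cover these $\sim m!$ distinct constraint varieties by $O(m^2n^2)$ permutation-independent witness events of probability $|\mathbb{F}|^{-1}$, and your own caveat (``ensuring every Class~B $\sigma$ is indeed covered by some witness'') names exactly the missing idea --- it is not bookkeeping.

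The paper allocates the two terms the other way around, which is what makes the bound close. The $m^2n^2|\mathbb{F}|^{-1}$ term is simply the event that two entries of $B$ coincide or some entry vanishes --- a union of at most $m^2n^2$ codimension-$1$ events, independent of any $\sigma$. Conditioned on distinct nonzero entries, Class~A is dispatched \emph{deterministically}: a nontrivial row-and-column permutation cannot fix a matrix with all-distinct entries. Class~B is handled per $\sigma$, and each mixing $\sigma$ contributes $|\mathbb{F}|^{-n}$ (not $|\mathbb{F}|^{-1}$): the hypothesis $m\ge 2n+1$ gives by pigeonhole a column index $i>n$ with $\sigma(i)>n$, whose column $v\in\mathbb{F}^n$ is uniform and independent of the row-reduction matrix $D$ (the first $n$ columns of the permuted matrix), and the event $Dv=w$ already has probability $|\mathbb{F}|^{-n}$. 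The union bound over $m!$ permutations yields the second term. Note in particular that $m\ge 2n+1$ is used in Class~B, not Class~A.
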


\begin{proof}Fix a permutation $\sigma\in S_m$ other than the identity. We bound the probability that $\sigma*M=M$.

Let us first suppose that the right $n\times(m-n)$ sub-matrix of $M$ contains all distinct and non-zero entries. Since the entries are uniform and independent, this occurs with probability at most $[mn(mn-1)+mn]|\mathbb{F}|^{-1}=m^2 n^2|\mathbb{F}|^{-1}$.

Now consider permuting the columns of $M$ according to $\sigma$. Denote the result by $M'$. Let $M''$ then be the result of row-reducing $M'$. We now consider three cases:
\begin{itemize}
    \item Suppose $\sigma(i)=i$ for all $i\leq n$, meaning $\sigma$ does not permute the first $n$ columns. In this case, $M''=M'$. Since the columns are distinct by our conditions on $M$ and $\sigma$ is not the identity, we have that $M'\neq M$. Thus, in this case $\sigma*M\neq M$. 
    \item $\sigma(i)>n$ for all $i>n$.  In other words, $\sigma$ separately permutes the first $n$ entries and permutes the remaining $m-n$ entries. In this case, $M'$ is obtained from $M$ by permuting the right $n\times(m-n)$ sub-matrix, and $M''$ is obtained from $M'$ by simply permuting some of the rows of $M'$. In other words, $M''$ is obtained from $M$ by permuting the rows and columns of the right $n\times(m-n)$ sub-matrix. As long as the entries of this sub-matrix are distinct, any such permutation of rows and columns will not preserve the matrix. 
    \item $\sigma(i)\leq n$ for some $i>n$. In this case, $M''\neq M'$. Let $D\in\mathbb{F}^{n\times n}$ be the matrix such that $M''=D^{-1}\cdot M'$. Then we know that $D$ is not the identity. 
\end{itemize}

We now focus on the last case above. If the first $n$ columns of $M'$ are not full rank, then $M''\neq M$. So suppose that the first $n$ columns of $M'$ are full rank. This means that $D$ is exactly the first $n$ columns of $M'$. If $M''=M$, we then have that $D\cdot M = M'$. In other words, if we take the first $n$ columns of $M'$ (which is just a permuted version of $M$), and multiply this with $M$, we get exactly $M'$. 

Moreover, since we are in the case $\sigma(i)\leq n$ for some $i>n$, this also means that $\sigma(j)>n$ for some $j\leq n$. Thus at least one of the columns of $D$ came from the right $n\times (m-n)$ sub-matrix of $M$.

Since $m\geq 2n+1$, there will be some column $i$ such that $i>n$ and $\sigma(i)>n$. In other words, this column is not among the first $n$ in either $M$ nor in $M'$. This column is therefore independent of $D$. Let $v$ denote the vector of elements in this column. 

For $M''=D\cdot M'$ to be equal to $M$, we need that $D\cdot v$ is among the original columns of $M$. There are two possibilities:
\begin{itemize}
    \item $\sigma(i)\neq i$. In this case, let $w$ be the $\sigma(i)$-th column of $M$. Then $M''=M$ implies  $D\cdots v=w$. Since $v$ is random and independent of $D,w$, this occurs with probability $|\mathbb{F}|^{-n}$.
    \item $\sigma(i)=i$. In this case, $M''=M$ implies that $D\cdots v=v$. Fix a $v$ such that all the entries of $v$ are non-zero. Since $v$ came from the right sub-matrix of $M$ and we are assuming all the entries there are non-zero, we can assume that $v$ satisfies this property. Now consider sampling $D$. $D$ contains some columns that are fixed (those that were originally among the first $n$ in $M$) and some that are random (those that were not among the first $n$ in $M$). Moreover, at least one of the columns is random, since one of the rows came from the right sub-matrix of $M$. Since $v$ is non-zero in all positions, it particular it is non-zero in some position corresponding to a random column of $D$. Thus, $D\cdot v$ is a random vector. The probability $D\cdot v=v$ over the choice of $D$ is therefore at most $|\mathbb{F}|^{-n}$. 
\end{itemize}

Therefore, the probability that there exists some $\sigma$ such that $\sigma*M=M$ is at most the sum of
\begin{itemize}
    \item The probability that the right $n\times (m-n)$ sub-matrix contains non-distinct entries, which is upper bounded by $m^2 n^2 |\mathbb{F}|^{-1}$
    \item For each $\sigma\in S_n$, $|\mathbb{F}|^{-n}$.
\end{itemize}
Thus, the overall probability that there exists some $\sigma$ is at most $m^2 n^2 |\mathbb{F}|^{-1}+m! |\mathbb{F}|^{-n}$.\end{proof}

\subsection{Dual-Mode One-way Homomorphisms}

\newcommand{\dmowh}{{dual-mode one-way homomorphism}\xspace}
\newcommand{\dmowhs}{{dual-mode one-way homomorphisms}\xspace}

In the previous sections, we gave a construction of quantum money/lightning when the group action is easy but its corresponding preaction is hard. In other words, for any group element $g$, encoded by the group action as $g*x$, we could only act on $g$ from the left to get $hg * x$, but not from the right to get $gh^{-1} * x$.
In this section, we explore how the construction of \cref{sec:qm-construction} changes if we explicitly allow acting on the encoded group element from \emph{both sides}. In this case, we have two different but related representations of the same group\textemdash one for the action and one for the preaction. One important difference is that this allows verification to recover both of the fine Fourier indices (compare with the hardness of recovering $i$ in \Cref{claim:right-fourier-measurement-hard}).

In fact, we will see that when we allow the encoding to be a \emph{homomorphism}, we get the surprising but useful property that four different notions of security are all identical, including the hardness of worst-case cloning, average cloning, minting a collision (i.e., breaking lightning security), and preparing the specific uniform superposition state corresponding to the trivial irrep.

We begin by giving a useful security definition for one-way homomorphisms:

\begin{definition}
    \label{def:dmowh}
    An injective (but not surjective) homomorphism $P: \G \to \bbH$%
    \footnote{
        Technically, it is a \emph{family} of homomorphisms, $\{P_{n}: \G_{n} \to \bbH_{n}\}_{n \in \N}$ but we omit the security parameter in the notation for succinctness.
    } 
    is a \emph{\dmowh} if there exists a fooling function $S: \G \to \bbH$%
    \footnote{
        We refer to $P$ and $S$ as the ``public'' and ``secret'' transformations, respectively. 
        $S$ may itself be a homomorphism but it need not be.
        Notice that we do not require $S$ to be efficiently computable. 
    }
    such that they satisfy the following properties:

    \begin{itemize}
        \item \textbf{Efficiency:} There is a QPT algorithm to efficiently compute $P$. There are also efficient QPT algorithms for computing the group operations on $\G$ and $\bbH$.%
        \footnote{
            In general, the algorithms for computing $P$ and the group operations need only be approximately correct. 
            Moreover, because of the inaccessibility condition, we only the algorithm for group operations on $\bbH$ to be correct and homomorphic on the image~of~$P$.
        }
        \item \textbf{Statistical Distance:} Let $\bbH_P$ be the image of $P$ and $\bbH_S$ be the image of $S$. Then
        $\bbH_S$ is sufficiently far from $\bbH_P$.
        Specifically we require that,%
        \footnote{
            This condition prevents $P$ being a fooling function for itself.
            Note that it is equivalent to the condition that 
            $
                \Pr
                [
                    S(h) \notin \bbH_{P} 
                    \;|\;
                    h \gets \G
                ]
                \ge
                \frac{1}{\poly(n)}
            $. Or in other words, that ${\abs{\bbH_S \setminus \bbH_P}}/{\abs{\bbH_S}} \ge \frac{1}{\poly(n)}$ in the case where $S$ is injective (which it need not be).
            The reason we prefer to write the statistical distance condition in this way is that if the promised algorithm, $A$, for group operations on $\bbH$ is randomized, we can take the probability to also be over this randomness.
            \begin{align*}
                \Pr
                [
                    u \in \bbH_{P} 
                    \;|\;
                    u \gets A(P(g),S(h)),
                    \;
                    g, h \gets \G
                ]
                \le
                1 - 
                \frac{1}{\poly(n)}
            \end{align*}
        }
        \begin{align*}
            \Pr
            [
                P(g)S(h)
                \in \bbH_{P} 
                \;|\;
                g, h \gets \G
            ]
            \le
            1 - 
            \frac{1}{\poly(n)}
        \end{align*}
        
        \item \textbf{Indistinguishability:} It is hard to distinguish the images of $P$ and $S$. Formally, for all QPT adversaries $A$, 
        \begin{align*}
            \Pr\left[
                A(h) = b 
                \;\middle\vert\; 
                \begin{array}{cc}
                    b \gets \bits \quad 
                    g \gets \G \\ 
                    h \gets \begin{cases}
                        P(g) \quad b = 0 \\
                        S(g) \quad b = 1
                    \end{cases}
                \end{array}
            \right]
            \le \frac12 + \negl(n)
        \end{align*}
        \item \textbf{Inaccessibility:} It is hard to sample an element of $\bbH \setminus \bbH_P$.
        Formally, for all QPT adversaries $A$,
        \begin{align*}
            \Pr\left[
                h \in \bbH \setminus \bbH_P
                \;\middle\vert\;
                h \gets A(1^{\lambda})
            \right] \le \negl(n)\\ 
            \barak{I wonder if this can be relaxed to $\frac{1}{\poly(n)}$}
        \end{align*}
    \end{itemize}
\end{definition}

\begin{remark}
    Note that while we do not explicitly require $P$ to be one-way, this is implied by the definition: any adversary for inverting $P$ can be used to break the indistinguishability security.
\end{remark}
\begin{remark}
    Combined with statistical distance, inaccessibility provides that the fooling function $S$ is hard to compute even in the forward direction. In a cryptographic implementation we would sample a key pair of a public key $\pk$ and secret key $\sk$, which would allow computing $P$ and $S$, respectively (though we omit this in the definition for generality and simplicity).
    In other words, in the security game, $S$ is a ``secret'' function that only the challenger knows. This is why we call it a \dmowh: the is a public mode $P$ that is publicly computable, and a private mode $S$ that is only privately computable but mimics $P$ to the public. 
\end{remark}

\begin{observation}
    We can build a plausible candidate \dmowh from any group action for which the computational Diffie-Hellman problem (CDH) is easy but the Discrete Logarithm problem (DLog) is hard.%
    \footnote{
        Note that such a group action is only possible for non-Abelian groups, since CDH and DLog are known to be computationally equivalent for Abelian groups~\cite{montgomery2022full}, further demonstrating the necessity of non-Abelian-ness in our generalizations.
    } Note that cryptographers typically would like CDH to be hard, since it allows for justifying the security of more varied protocols. Our construction  therefore gives a ``win-win'' result, showing that in any group action for which DLog is hard, either (1) the more useful CDH problem is actually also hard, or (2) we obtain a plausible candidate \dmowh and hence a plausible quantum lightning scheme based on DLog.  This win-win result is remeniscent of win-win results in~\cite{zhandry2021quantum}, though the details are entirely different.
\end{observation}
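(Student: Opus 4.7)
The plan is to construct the candidate \dmowh directly from the group action. Let $(*: \G \times X \to X, x_0)$ be the given group action with easy CDH and hard DLog. I would set $P(g) := g * x_0$, take $\bbH$ to be an appropriate ambient set containing $\mathsf{Orb}(x_0)$ (e.g., $X$ itself, or a natural superset for which elements have a syntactic description), and define $\bbH_P := \mathsf{Orb}(x_0)$. The group operation on $\bbH_P$ is defined by $(g * x_0) \cdot (h * x_0) := (gh) * x_0$, which is efficiently computable by the easy-CDH assumption. I would then check that $P$ is a genuine injective homomorphism: the homomorphism property is immediate from the group action axiom, and injectivity follows if the action is semiregular on the orbit of $x_0$. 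Moreover, one-wayness of $P$ is exactly the DLog problem, which is hard by assumption.

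Next, I would define the fooling function $S : \G \to \bbH$ to produce elements of $\bbH \setminus \bbH_P$ with noticeable probability, using some private information unavailable to a public adversary. Two natural candidate strategies: (i) when the group action has multiple orbits, $S$ samples from a different orbit using a secret representative, or (ii) $S$ produces a ``synthetic'' element of the ambient set $X$ crafted from a secret trapdoor (e.g., in a McEliece-style instantiation this could be a matrix obtained from a secret code). In both cases, the statistical distance property becomes the statement that $S$-samples lie outside $\bbH_P$ with at least inverse-polynomial probability, inaccessibility becomes the statement that no public procedure can sample out of $\bbH_P$, and indistinguishability becomes an orbit-indistinguishability (or ``pseudorandom element'') assumption on the group action, of the flavor already believed for the McEliece group action and for several other non-Abelian group-action candidates.

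The main obstacle is not in the construction itself, which is essentially forced by the definitions, but in exhibiting a concrete group action that simultaneously satisfies all three conditions: (a) CDH is quantumly easy, (b) DLog remains quantumly hard, and (c) a fooling function $S$ with plausible indistinguishability and inaccessibility exists. For every Abelian group action the reduction of~\cite{montgomery2022full} collapses (a) and (b), so we must work in the non-Abelian setting where this equivalence is not known to hold. The substantive content of the observation is therefore that a hypothetical \emph{separation} between CDH and DLog on a non-Abelian group action\textemdash a security failure that cryptographers would ordinarily regard as bad news\textemdash would immediately yield a \dmowh and hence a plausible quantum lightning scheme. Thus the ``proof'' is essentially a win-win: either CDH is also hard (so the group action retains its standard cryptographic utility), or the construction above gives the desired \dmowh.
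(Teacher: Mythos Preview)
Your proposal is correct and matches the paper's own (informal) argument essentially point-for-point: the paper also sets $P(g)=g*x$ with the $\bbH$-operation given by the CDH adversary, takes $S(g)=g*y$ for a representative $y$ in a different orbit (your option (i)), and derives statistical distance from the orbits being disjoint, indistinguishability from orbit-indistinguishability, and inaccessibility from the hardness of sampling outside the known orbit. The one small subtlety the paper adds that you omit is that, since the CDH adversary's behavior on cross-orbit inputs is arbitrary, the roles of $x$ and $y$ may need to be swapped to ensure the statistical-distance condition on $P(g)\cdot S(h)$ actually holds.
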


\begin{proof}[Main idea]
    We give the informal idea here 
    but we leave a formal construction of \dmowhs from group actions to future work.
    Suppose we have a group $\G$ which acts on set $X$. Assume that the CDH problem is easy, and let $A$ be the CDH adversary, which takes two elements $a*x, b*y \in X$ and outputs $ab * x$ if $x=y$, and behaves arbitrarily if $x \ne y$. We set $\bbH$ to be the set $X$ with the ``group operation'' defined by $A$.%
    \footnote{Since the adversary may act arbitrarily when $x \ne y$, this is not exactly a group operation. That is, it only defines a group operation on elements within the same orbit, but this will be sufficient for our purposes.}
    Let $x, y \in X$ be two set elements in different orbits of the group $\G$, and let $P(g) = g * x$ and $S(g) = g * y$.
    Statistical distance comes from the fact that $x$ and $y$ are in different orbits.%
    \footnote{
        Depending on the CDH adversary,
        the roles of $x$ and $y$ may need to be reversed in order to formally satisfy the statistical distance property. If the CDH adversary, when run on a random element of the orbit of $x$ and a random element of the orbit of $y$, is more likely to produce an element of the orbit of $x$ then we switch the roles. In other words, we set $P(g) = g * y$ and $S(g) = g * x$. In any case, one of the two choices suffices.
    }
    Indistinguishability would come from the hardness of deciding if two elements are in the same orbit (one example being the hardness of the graph isomorphism problem). Inaccessibility would arise from the hardness of sampling a valid set element outside the orbit of any known elements.%
    \footnote{This can be argued in the generic group model~\cite{ITCS:Zhandry24a}.}
    Although we argue that these are reasonable assumptions, we do not know if any specific instantiations of group actions satisfy these requirements. We leave finding concrete instantiations of \dmowhs to future work.
    \todo{Formalize this.}
\end{proof}

The lack of concrete instantiation of a \dmowh is a certainly disadvantage (as opposed to our construction from preaction security in \Cref{sec:preaction-secure-construction}, to which we give concrete candidate instatiations in \Cref{sec:instantiations}). 
However, we believe that our construction from \dmowh is interesting in its own right. 
For instance, we have the property that four different security definitions\textemdash including average-case and worst-case cloning, as well as quantum lightning security\textemdash are all equivalent.
To the best of our knowledge, this is the first plausible quantum money construction to have this useful property.

\subsubsection{Quantum Money Construction}
\label{sec:qmoney-from-dmowh-construction}

Let $\G$ be a group satisfying the requirements in \Cref{sec:qm-construction}, and let $(P, S)$ be a \dmowh from $\G$ to $\bbH$.
We build a quantum lightning scheme, 
$(\mint, \ver)$ as follows:

\paragraph{$\mint(1^n) \to \left(\sigma, \; \ket{\$^{\sigma}}\right)$:} 
Consider the group action of $\G$ on $\bbH$ that comes from left-multiplying an element $h \in \bbH$ by the image of a group element $g \in \G$ under $P$, with the starting element $x \in \bbH$ being the identity element of $\bbH$. That is, we have the group action $g * y \mapsto P(g) \, y$ for any $y \in \bbH$.
Observe that because our starting element is the identity of $\bbH$, we have an efficiently computable preaction as well, by multiplying in the same way on the right.

Minting follows from the construction in \Cref{sec:qm-construction}, and produces a serial number $\sigma \gets \varrho$ denoting the measured irrep, and quantum money state  
$\ket{\$^\lambda} = \sum_{i,j \in [\dimlambda]} \alpha_{i, j} \ket{\$^{\lambda}_{i, j}}$, where

\begin{align*}
    \ket{\$^{\lambda}_{i, j}} := \sqrt{\frac{\dimlambda}{|\G|}}\sum_{h \in \G} \varrho_{\lambda}\left(h^{-1}\right)_{i, j} \; \big|\; P(h) \;\big\rangle
\end{align*}

\paragraph{$\ver(\sigma, \ket{\pounds}) \to \{\textnormal{accept, reject}\}$:} We follow the framework in \Cref{sec:qm-construction} to verify under the two group actions (the action and the preaction) consisting of the left and right group operations on the encoded element.
Note that within the image of $P$, this verification accepts any state of the original minted form, as well as states that are of that form, but that are shifted by an element of the center of $G$. In the security analysis, we show that these are the only states that pass verification.

\subsubsection{Security Analysis}

\begin{theorem}
    If $(P, S)$ is a secure \dmowh for $\G_n$, then $(\mint, \ver)$ is a secure quantum lightning scheme.
\end{theorem}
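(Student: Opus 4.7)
The plan is to establish the theorem in two major stages. First, show the equivalence of four security notions: lightning security, worst-case cloning, average-case cloning, and hardness of preparing the canonical ``trivial'' state $\ket{\$^0} := \frac{1}{\sqrt{|\G|}} \sum_{h \in \G} \ket{P(h)}$ corresponding to the trivial irrep. Then, reduce the hardness of this last notion to the \dmowh assumptions.

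For the equivalence, the key insight is that because both the action (left multiplication in $\bbH$ by $P(g)$) and the preaction (right multiplication by $P(g)$) are efficient here, two applications of the duality theorem (\Cref{thm:duality_exact}) yield efficient fine Fourier extraction for both representations. Verification can therefore efficiently project onto and distinguish the basis states $\ket{\$^\lambda_{ij}}$, so any two valid money states sharing a serial number must be essentially identical (up to the action of the center of $\G$, which acts by a global phase by Schur's lemma). This gives the equivalence of lightning security and worst-case cloning. Worst-case trivially implies average-case cloning. For the reverse directions and the equivalence with trivial state preparation, use the following efficient transformations: (i) any money state can be cloned using a fresh copy of $\ket{\$^0}$ as a catalyst via the quantum fire cloning procedure of \Cref{fig:fire_cloning}, and conversely (ii) two copies of any money state $\ket{\$^\sigma}$ can be entangled via controlled group operations followed by fine Fourier extraction to produce $\ket{\$^0}$ plus residual registers. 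Chaining these, breaking lightning yields two copies of some money state, which produces $\ket{\$^0}$, which in turn enables cloning of any money state.

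For the reduction to \dmowh assumptions, suppose an adversary $A$ can prepare two copies of $\ket{\$^0}$ with noticeable probability. Build an adversary $B$ against the indistinguishability property of $(P, S)$: $B$ receives a challenge $h^* \in \bbH$ equal to either $P(g^*)$ or $S(g^*)$ for uniform $g^*$. $B$ invokes $A$ to prepare two copies of $\ket{\$^0}$, right-multiplies one copy elementwise by $(h^*)^{-1}$, and performs a swap test against the other copy. When $h^* = P(g^*)$, the right-multiplied state equals $\ket{\$^0}$ by the homomorphism property, so the swap test accepts with probability $1$. When $h^* = S(g^*)$, the right-multiplied state equals $\frac{1}{\sqrt{|\G|}} \sum_h \ket{P(h) S(g^*)^{-1}}$; whenever $S(g^*) \notin \bbH_P$, injectivity of $P$ forces this state to have overlap $0$ with $\ket{\$^0}$, so the swap test rejects with probability $\tfrac{1}{2}$. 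By the statistical distance property of the \dmowh, $\Pr_{g^*}[S(g^*) \notin \bbH_P] \geq 1/\poly(n)$, so $B$ achieves an inverse polynomial distinguishing advantage, contradicting indistinguishability.

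The main obstacle will be rigorously propagating approximation errors through the four-way equivalence, since a realistic lightning adversary succeeds only with noticeable (rather than overwhelming) probability, and the cloned states may only approximately match the canonical form. Each reduction in the chain must invoke the approximate duality theorem (\Cref{thm:approx_duality_1}) and carefully track fidelity loss, especially through the catalytic conversions that involve $\ket{\$^0}$. A secondary subtlety is handling the freedom introduced by the center of $\G$: states differing only by a center action pass verification with the same serial number, and the equivalence arguments must treat these as the ``same'' money state throughout so as not to spuriously break unclonability.
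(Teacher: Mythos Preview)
Your approach is correct but takes a longer route than the paper. The paper does \emph{not} route the security proof through the four-way equivalence (that equivalence is proved as a separate result, \Cref{thm:worst-average-cloning}, after the security theorem). Instead, the paper argues directly: given a lightning adversary producing two registers that pass verification for the same serial number $\lambda$, first use inaccessibility to conclude both registers are (up to negligible error) supported on $\bbH_P$; then use Fourier subspace extraction on both the action and the preaction to align the two registers into the tensor product $\ket{\$^{\lambda}_{i,j}}\otimes\ket{\$^{\lambda}_{i,j}}$; finally, left-multiply one copy by the challenge $z$, perform FSE on both, and swap-test the archetype states. If $z=P(h)$ the archetype is unchanged and the test passes with probability $1$; if $z=S(h)$ the resulting state has inverse-polynomial support outside $\bbH_P$ by the statistical distance property, so the swap test fails with noticeable probability.

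Your version first collapses everything to the trivial-irrep state $\ket{\$^0}$ via the four-way equivalence, then runs the preparation adversary twice and swap-tests directly. This is a valid and in some ways cleaner endpoint (the trivial irrep is one-dimensional, so no FSE is needed before the swap test), but it costs you the detour of proving the full equivalence first, including the catalytic conversions in both directions. The paper's direct argument buys a shorter proof and avoids compounding error through the chain of reductions you flagged as the main obstacle. Your concern about the center of $\G$ is also largely moot once inaccessibility forces support onto $\bbH_P$: on that single orbit the action is regular and the $\ket{\$^{\lambda}_{i,j}}$ form a genuine orthonormal basis, so the center ambiguity does not arise.
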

\begin{proof}
    Let $C$ be an adversary for the quantum lightning scheme, which outputs a state on two registers which both pass verification for the same serial number $\varrho$. We will show that it can be used to break the \dmowh.
    
    Specifically, we will show that we can use $C$ to break either the inaccessibility security or the indistinguishability security of the \dmowh.

    \begin{claim}
        We can assume without loss of generality that the output of $C$ is a tensor product and that the states both have the form 
        $\ket{\$^{\lambda}_{i, j}}$, where  
        \begin{align*}
            \ket{\$^{\lambda}_{i, j}} := \sqrt{\frac{\dimlambda}{|\G|}}\sum_{g \in \G} \varrho_{\lambda}\left(g^{-1}\right)_{i, j} \; \big|\; P(g) \;\big\rangle
        \end{align*}
        for some $i,j \in [\dimlambda]$. 
    \end{claim}
    \begin{proof}
        We begin by observing that if the quantum money state had non-negligible support on $\bbH \setminus \bbH_{P}$, then we can measure to get an element outside of the image of $P$ and break the inaccessibility security of the \dmowh.\todo{Formalize this.} Therefore, up to negligible error, we can assume that they both have support only on the image of $P$. Furthermore, 
        we can assume that both have the same $i$ and $j$, since we can perform a \fse twice on each state\textemdash both on the action and on the preaction\textemdash to get the corresponding $i$ and $j$, and then change them to match. This gives us a tensor product state $\ket{\$^{\lambda}_{i, j}} \otimes \ket{\$^{\lambda}_{i, j}}$.
    \end{proof}

    We now show how to break indistinguishability security. We consider one of the copies, setting aside the other copy for now.

    Suppose we get as input an element $z \in \bbH$ that is either a in the image of $P$, that is $z = P(h)$, or the image of $S$, $z = S(h)$, for group element $h \in \G_{\lambda}$.
    We left-multiply the quantum money state by $z$. If it is in the image of $P$, then we get
    \begin{align*}
        z \cdot \ket{\$^{\lambda}_{i, j}} 
        &= 
        \sqrt{\frac{\dimlambda}{|\G|}}\sum_{g \in \G} \varrho_{\lambda}\left(g^{-1}\right)_{i, j} \; \big|\; P(h) P(g) \;\big\rangle
        \\
        &= 
        \sqrt{\frac{\dimlambda}{|\G|}}\sum_{g \in \G} \varrho_{\lambda}\left(g^{-1}\right)_{i, j} \; \big|\; P(hg) \;\big\rangle
        \\
        &= 
        \sqrt{\frac{\dimlambda}{|\G|}}\sum_{g \in \G} \varrho_{\lambda}\left(g^{-1} h \right)_{i, j} \; \big|\; P(g) \;\big\rangle
        \\
        &= 
        \sqrt{\frac{\dimlambda}{|\G|}}\sum_{g \in \G, k \in [\dimlambda]} \varrho_{\lambda}\left(g^{-1}\right)_{i, k} \varrho_{\lambda}\left(h \right)_{k, j} \; \big|\; P(g) \;\big\rangle
        \\
        &= 
        \sum_{k \in [\dimlambda]} \varrho_{\lambda}\left(h \right)_{k, j} \sqrt{\frac{\dimlambda}{|\G|}}\sum_{g \in \G} \varrho_{\lambda}\left(g^{-1}\right)_{i, k} \; \big|\; P(g) \;\big\rangle
        \\
        &= 
        \sum_{k \in [\dimlambda]} \varrho_{\lambda}\left(h \right)_{k, j} \ket{\$^{\lambda}_{i, k}}
    \end{align*}
    If it is in the image of $S$, then we similarly get 
    \begin{align*}
        z * \ket{\$^{\lambda}_{i, j}} 
        &= 
        \sqrt{\frac{\dimlambda}{|\G|}}\sum_{g \in \G} \varrho_{\lambda}\left(g^{-1}\right)_{i, j} \; \big|\; S(h) * P(g) \;\big\rangle
        \\
        &= 
        \sqrt{\frac{\dimlambda}{|\G|}}\sum_{g \in \G} \varrho_{\lambda}\left(g^{-1}\right)_{i, j} \; \big|\; \tilde{S}(hg) \;\big\rangle
        \\
        &= 
        \sum_{k \in [\dimlambda]} \varrho_{\lambda}\left(h \right)_{k, j} \sqrt{\frac{\dimlambda}{|\G|}}\sum_{g \in \G} \varrho_{\lambda}\left(g^{-1}\right)_{i, k} \; \big|\; \tilde{S}(g) \;\big\rangle
    \end{align*}
    where $\tilde{S}$ is some function implied by $S$ that is guaranteed by the statistical distance property of \Cref{def:dmowh} to have at least inverse polynomial support outside of the image of $P$.%
    \footnote{
        Note that this does not break inaccessibility security, since in this case we are given $z$ which is itself already outside the image of $P$. 
    }
        
    We finally perform a \fse and swap test with the copy that was set aside. If $z$ was in the image of $P$, then the swap test will certainly pass. Otherwise, we observed that the two tested states will have orthogonal support that is at least inverse polynomial (since $\tilde{S}$ is far from $P$), and the swap test will fail with probability $1 - \frac{1}{\poly(n)}$. This therefore breaks the indistinguishability security of the \dmowh.
\end{proof}

\subsubsection{Worst-case to Average-case Reduction for Cloning}
Remarkably, the problem of cloning any specific (worst-case) money state in this construction can be reduced to that of producing two copies of an an average case money state, and therefore to cloning an average-case state. Moreover, all of these are equivalent to the problem of preparing the trivial irrep state (that is, the positive uniform superposition over the image of $P$).

\needspace{6\baselineskip}
\begin{theorem}[Worst-case to Average-case Cloning Reduction and Money/Lightning Equivalence]\label{thm:worst-average-cloning}\phantom{ }
    For the quantum money/lightning scheme defined in \Cref{sec:qmoney-from-dmowh-construction}, the following are equivalent:
    \begin{enumerate}
        \item\label{item:worst-cloning}
            There exists an efficient worst-case cloner that clones all valid money states $\ket{\$_{i, j}^{\lambda}}$.
        \item\label{item:avg-cloning}
            There exists an efficient average-case cloner that clones an average-case money state $\ket{\$_{i, j}^{\lambda}}$, where $\lambda$ is sampled according the the Plancherel measure of the group.
        \item\label{item:lightning}
            There exists an efficient lightning adversary that produces two copies of the same money state $\ket{\$_{i, j}^{\lambda}}$, where $\lambda$ is sampled according the the Plancherel measure of the group.
        \item\label{item:prep-triv-irrep}
            There exists an efficient preparation device that prepares the trivial irrep state $\ket{\$^{\id}}$, that is, the positive uniform superposition over image of the homomorphism $P$.
    \end{enumerate}
\end{theorem}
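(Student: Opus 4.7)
My plan is to establish all four equivalences via the cycle $(1) \Rightarrow (2) \Rightarrow (3) \Rightarrow (4) \Rightarrow (1)$. Two implications, $(1) \Rightarrow (2)$ and $(2) \Rightarrow (3)$, are essentially immediate: worst-case cloning trivially implies average-case cloning over the Plancherel distribution, and composing an average-case cloner with $\mint$ produces two copies of a Plancherel-random money state, which is exactly a lightning adversary.

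For $(4) \Rightarrow (1)$, I would invoke the quantum fire cloning circuit of \Cref{fig:fire_cloning}. Given any money state $\ket{\$^{\lambda}_{i,j}}$ and an efficiently prepared copy of the trivial irrep state $\ket{\$^{\id}}$, that circuit produces the entangled state $\tfrac{1}{\sqrt{\dimlambda}} \sum_k \ket{\$^{\lambda}_{i,k}} \otimes \ket{\$^{\lambda}_{k,j}}$. Because the scheme is built from a homomorphism, both the left action and the right preaction of $\G$ on $\bbH_P$ are efficient, so by \Cref{thm:duality_exact} the corresponding action- and preaction-based \fse are both efficient. I would then disentangle by extracting the shared summation index $k$ from both registers (via the action-\fse on the first copy and the preaction-\fse on the second) and uncomputing using the classical indices $(i,j)$, which we know from sparking the state. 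This yields two independent tensor-product copies of $\ket{\$^{\lambda}_{i,j}}$.

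For $(3) \Rightarrow (4)$, the plan is to run the disentangling above in reverse. Starting from the two copies $\ket{\$^{\lambda}_{i,j}} \otimes \ket{\$^{\lambda}_{i,j}}$ produced by the lightning adversary, I apply the action-based \fse to the first register and the preaction-based \fse to the second, obtaining two archetype states together with classical extracted index registers holding $j$ and $i$ respectively. I measure and reset those classical registers, replace them with a freshly prepared maximally entangled ancilla $\tfrac{1}{\sqrt{\dimlambda}} \sum_k \ket{k} \ket{k}$, and then uncompute both \fse operations. This reassembles exactly the entangled cloning-output state $\tfrac{1}{\sqrt{\dimlambda}} \sum_k \ket{\$^{\lambda}_{i,k}} \otimes \ket{\$^{\lambda}_{k,j}}$. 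Applying the inverse of the quantum fire cloning circuit to this state then produces $\ket{\$^{\lambda}_{i,j}} \otimes \ket{\$^{\id}}$, and we read off the trivial irrep state on the second register.

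The main obstacle I anticipate is the careful bookkeeping in $(3) \Rightarrow (4)$: verifying that discarding the classical extracted indices and replacing them with a shared maximally entangled ancilla genuinely reproduces the cloning-output state up to errors inherited from the lightning adversary, and that the shared $k$-index correlations between the two \fse extractions line up as required by the inverse cloning circuit. The key structural insight that makes this work is that on the two-copy input $\ket{\$^{\lambda}_{i,j}} \otimes \ket{\$^{\lambda}_{i,j}}$ the extracted indices are fully classical, so the discard-and-replace is well-defined, while the homomorphism structure of the scheme guarantees via the duality theorem that all the \fse operations and their inverses are efficient.
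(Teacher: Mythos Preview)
Your proposal is correct and follows essentially the same route as the paper: the same cycle of implications, the same use of the fire cloning circuit (left-multiplication by the inverse of the trivial irrep superposition) to go from $\ket{\$^{\lambda}_{i,j}} \otimes \ket{\$^{\id}}$ to the entangled doublet $\tfrac{1}{\sqrt{\dimlambda}}\sum_k \ket{\$^{\lambda}_{i,k}}\otimes\ket{\$^{\lambda}_{k,j}}$, and the same FSE-based index-swapping trick (extract the shared $k$ indices, replace with a fresh EPR pair or with the classical pair $(j,i)$, uncompute) to move between the tensor-product form and the entangled form in both directions. One small wrinkle: in $(4)\Rightarrow(1)$ you justify knowing $(i,j)$ by ``sparking the state,'' but a worst-case cloner receives an arbitrary $\ket{\$^{\lambda}_{i,j}}$ it did not spark; this is harmless because either the entangled doublet already suffices (both registers pass verification for $\lambda$), or you can first recover $i$ and $j$ nondestructively by composing the action-FSE and preaction-FSE (both efficient here) and then uncomputing.
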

In other words, all four tasks (%
    worst-case cloning, 
    average-case cloning, 
    sampling state doublets, 
    and 
    trivial irrep state preparation%
) are computationally equivalent.
\begin{proof}
    It can be seen directly that \ref{item:worst-cloning} $\Rightarrow$ \ref{item:avg-cloning} (since cloning in the worst case trivially implies doing so in the average case), and that \ref{item:avg-cloning} $\Rightarrow$ \ref{item:lightning} (using the $\mint$ function to mint a state and then using the cloner to clone it). So it remains to show that \ref{item:lightning} $\Rightarrow$ \ref{item:prep-triv-irrep} and that \ref{item:prep-triv-irrep} $\Rightarrow$ \ref{item:worst-cloning}.
    We start by showing that \ref{item:prep-triv-irrep} $\Rightarrow$ \ref{item:worst-cloning}, and then \ref{item:lightning} $\Rightarrow$ \ref{item:prep-triv-irrep} will follow from applying the same process in reverse on the doublet produced by the lightning adversary.

    Suppose that we had a quantum money state with irrep label $\lambda$ 
    that we would like to clone:
    \begin{align*}
        \ket{\$_{i, j}^{\lambda}} = \sqrt{\frac{\dimlambda}{|G|}}\sum_{h \in G} \varrho_{\lambda}\left(h^{-1}\right)_{i, j} \; \big|\; P(h) \;\big\rangle
    \end{align*}
    We run the trivial irrep state preparation adversary to prepare the positive uniform superposition over the image of $P$: 
    \begin{align*}
        \ket{\$^{\id}} = \frac{1}{\sqrt{|G|}}\sum_{g \in G} \big|\; P(g) \;\big\rangle
    \end{align*}
    
    We left multiply the first register (the money state) by the inverse of the second register (the trivial irrep state), producing
    \begin{align*}
        \ket{\$_{i, j}^{\lambda}} \otimes \ket{\$^{\id}}
        &\rightarrow
        \sqrt{\frac{\dimlambda}{|G|}}\sum_{h \in G} \varrho_{\lambda}\left(h^{-1}\right)_{i, j} \; \big|\; P(g^{-1}h) \;\big\rangle
        \otimes
        \frac{1}{\sqrt{|G|}}\sum_{g \in G} \big|\; P(g) \;\big\rangle
        \\
        &=
        \sqrt{\frac{\dimlambda}{|G|}}\sum_{h \in G} \varrho_{\lambda}\left(h^{-1}g^{-1}\right)_{i, j} \; \big|\; P(h) \;\big\rangle
        \otimes
        \frac{1}{\sqrt{|G|}}\sum_{g \in G} \big|\; P(g) \;\big\rangle
        \\
        &=
        \sqrt{\frac{\dimlambda}{|G|}}\sum_{k \in [\dimlambda]}\sum_{h \in G} \varrho_{\lambda}\left(h^{-1}\right)_{i, k}\varrho_{\lambda}\left(g^{-1}\right)_{k, j} \; \big|\; P(h) \;\big\rangle
        \otimes
        \frac{1}{\sqrt{|G|}}\sum_{g \in G} \big|\; P(g) \;\big\rangle
        \\
        &=
        \frac{1}{\sqrt{\dimlambda}} \sum_{k \in [\dimlambda]}
        \sqrt{\frac{\dimlambda}{|G|}}\sum_{h \in G} \varrho_{\lambda}\left(h^{-1}\right)_{i, k} \; \big|\; P(h) \;\big\rangle
        \otimes
        \sqrt{\frac{\dimlambda}{|G|}}\sum_{g \in G} \varrho_{\lambda}\left(g^{-1}\right)_{k, j} \big|\; P(g) \;\big\rangle
        \\
        &=
        \frac{1}{\sqrt{\dimlambda}} \sum_{k \in [\dimlambda]}
        \ket{\$_{i, k}^{\lambda}}
        \otimes
        \ket{\$_{k, j}^{\lambda}} 
        \addtag\label{eq:epr-money-doublet}
    \end{align*}
    We now have two states that are both valid quantum money states for irrep label $\lambda$.

    \begin{observation}
    \label{sec:cloning_exact_from_extraction}
    If we would like both registers to be exact copies of the original state in tensor product, we can do that as well. 
    \end{observation}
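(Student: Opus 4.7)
The plan is to augment the entangling cloner of \Cref{thm:worst-average-cloning} with \fse for both the action and preaction representations, exploiting the fact that the indices $i$ and $j$ can be read out non-destructively from the basis state $\ket{\$_{i,j}^{\lambda}}$. The key structural observation is that in this setting the preaction is just right multiplication by $P(g)^{-1}$ on $\bbH$, which is as efficient as the action itself (because the starting element of the group action is the identity of $\bbH$, and $\bbH$ admits efficient group operations by assumption). By the duality theorem (\Cref{thm:duality_exact}), an \fse for the action, sending $\ket{\$_{i,j}^{\lambda}} \mapsto \ket{\phi_i^{\lambda}}\ket{\lambda,j}$, and an \fse for the preaction, sending $\ket{\$_{i,j}^{\lambda}} \mapsto \ket{\phi'^{\lambda}_j}\ket{\lambda,i}$, are both efficiently implementable.

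First, I would non-destructively read out the classical values of $i$ and $j$: apply the action \fse, measure the extracted register to learn $j$, and invert the \fse. Since the input is a basis eigenstate of this measurement, the outcome is deterministic and the state $\ket{\$_{i,j}^{\lambda}}$ is restored unchanged. An analogous procedure using the preaction \fse yields $i$.

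Next, I would run the cloner to produce the entangled state of \Cref{eq:epr-money-doublet}, and then apply the action \fse to the first register and the preaction \fse to the second. This jointly extracts each $k$ index into a computational-basis register while leaving the archetypes in product form, giving
\begin{align*}
    \ket{\phi_i^{\lambda}} \otimes \ket{\phi'^{\lambda}_j} \otimes \frac{1}{\sqrt{\dimlambda}} \sum_k \ket{\lambda,k}\ket{\lambda,k}\,.
\end{align*}
The final factor is a fixed maximally entangled state independent of $i$, $j$, and $\lambda$, so it can be uncomputed to $\ket{0}\ket{0}$ by running a standard EPR preparation circuit in reverse, and then discarded.

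To finish, I would use the classical values of $i$ and $j$ recorded in the first step to prepare fresh registers $\ket{\lambda,j}$ and $\ket{\lambda,i}$, pair them with the two surviving archetype registers, and invert the action and preaction \fse respectively. By the definition of \fse this reconstructs $\ket{\$_{i,j}^{\lambda}}$ on each side, yielding the desired tensor product. The main subtlety I anticipate is formally justifying that the preaction \fse is efficient in this construction: this is exactly the structural feature that distinguishes the one-way homomorphism setting from the preaction-secure group action setting of \Cref{sec:preaction-secure-construction}, where making preactions hard is the whole point, and it is precisely the efficient accessibility of the representation from both sides that permits exact cloning here.
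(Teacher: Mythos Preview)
Your proposal is correct and essentially identical to the paper's proof: both apply the action \fse to the first register and the preaction \fse to the second register of the entangled doublet, observe that the extracted $k$-registers form a product EPR pair, and then swap in the indices $j,i$ before inverting the extractions. The paper simply assumes $i$ and $j$ are already known rather than reading them out first, and writes out the archetype states explicitly; your small imprecision that the EPR factor is ``independent of $\lambda$'' (its dimension is $\dimlambda$) is harmless since $\lambda$ is the known serial number.
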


    \begin{proof}[Proof of \Cref{sec:cloning_exact_from_extraction}.]
    We apply a left Fourier measurement on the left register (that is, a Fourier measurement corresponding to left action by plaintext group elements) and a right Fourier measurement on the right register (corresponding to right action), to produce
    \begin{align*}
        &
        \frac{1}{{\dimlambda}^{3/2}} 
        \sum_{k, \ell, m \in [\dimlambda]}
        \ket{\$_{i, \ell}^{\lambda}}
        \otimes
        \ket{L_{\ell, k}^{\lambda}}
        \otimes
        \ket{L_{k,m}^{\lambda}}
        \otimes
        \ket{\$_{m,j}^{\lambda}}
        \\
        \xrightarrow[]{\text{QFT}}&
        \frac{1}{{\dimlambda}^{3/2}} 
        \sum_{k, \ell, m \in [\dimlambda]}
        \ket{\$_{i, \ell}^{\lambda}}
        \otimes
        \ket{\lambda, \ell, k}
        \otimes
        \ket{\lambda, k, m}
        \otimes
        \ket{\$_{m,j}^{\lambda}}
    \end{align*}
    Now note that the registers containing $k$ are in the pure state $\frac{1}{\sqrt{\dimlambda}} \sum_{k \in [\dimlambda]} \ket{k}\ket{k}$, in tensor product with the rest of the state. Replace these registers with the state $\ket{j} \ket{i}$ to get 
    \begin{align*}
        &
        \frac{1}{{\dimlambda}} 
        \sum_{\ell, m \in [\dimlambda]}
        \ket{\$_{i, \ell}^{\lambda}}
        \otimes
        \ket{\lambda, \ell, j}
        \otimes
        \ket{\lambda, i, m}
        \otimes
        \ket{\$_{m, j}^{\lambda}}
    \end{align*}
    Now uncompute the two {\fse}s we have just performed to get $\ket{\$_{i, j}^{\lambda}} \otimes \ket{\$_{i, j}^{\lambda}}$ as desired.
    \end{proof}

    We now continue with the proof of \Cref{thm:worst-average-cloning} and show that \ref{item:lightning} $\Rightarrow$ \ref{item:prep-triv-irrep}. Given a doublet pair of quantum money states for the same irrep label, we show how to prepare the trivial irrep state $\ket{\$^{\id}} = \frac{1}{\sqrt{|G|}}\sum_{g \in G} \big|\; P(g) \;\big\rangle$.
    This doublet produced by the lightning adversary will be a state on two registers that passes verification, of the form%
    \footnote{This is assuming the inaccessibility security of the \dmowh, which is what prevents the state from having non-negligible support outside the image of $P$.}
    \begin{align*}
        \sum_{i,j,k,\ell \in [\dimlambda]}
        \alpha_{i,j,k,\ell} \;
        \ket{\$_{i, j}^{\lambda}}
        \otimes
        \ket{\$_{k, \ell}^{\lambda}}
    \end{align*}
    As above, we can use {\fse}s to convert this to the state $\ket{\$_{i, j}^{\lambda}} \otimes \ket{\$_{i, j}^{\lambda}}$ or even to the state $\frac{1}{\sqrt{\dimlambda}} \sum_{k \in [\dimlambda]}\ket{\$_{i, k}^{\lambda}} \otimes \ket{\$_{k, j}^{\lambda}}$. By reversing the process described above, from \Cref{eq:epr-money-doublet} backwards, we then recover the state $\ket{\$_{i, j}^{\lambda}} \otimes \ket{\$^{\id}}$, where the second register is the trivial irrep state, as desired.   
\end{proof}

So cloning an average-case state is as hard as cloning a worst case state, both of which are as hard as preparing the positive uniform superposition over the image of the homomorphism (the trivial irrep state). Moreover, quantum money security and quantum lightning security are equivalent for this scheme.

\begin{remark}
    Note that a task that is absent from \Cref{thm:worst-average-cloning} is the ability to prepare any quantum money state given its irrep label (ie. its serial number). An adversary for this task would clearly imply one for all four of the tasks mentioned in \Cref{thm:worst-average-cloning}, but it is not clear if the opposite is true. That is, an adversary that breaks the quantum money/lightning scheme nevertheless might not be able to prepare specific money states on command (only at random).
    In section \Cref{sec:qfire}, we take advantage of precisely this gap to propose a new quantum cryptographic primitive: quantum \emph{fire}.
\end{remark}

\section{Quantum Fire: Quantum States that are Clonable but Untelegraphable}
\label{sec:qfire}
In this section, we introduce a new quantum cryptographic primitive, ``quantum fire'', a cryptographic version of the clonable-but-untelegraphable states introduced by~\cite{ITCS:NZ23}. 
Much like fire is an entropic state of matter that is hard to spark on command, but easy to spread around as long as it is kept alive, \emph{quantum fire} is a quantum state that is hard to prepare but easy to clone as long as it is maintained in coherent quantum form. 
More specifically, a \emph{quantum fire state}, $\ket{\phi_i}$, comes from a collection $\{\ket{\phi_i}\}_i$ of states that 
\begin{itemize}
    \item \textbf{Efficiently sparked: }
    there is an efficient sparking algorithm that outputs a random $\ket{\phi_i}$, from some distribution over $i$,
    \item \textbf{Efficiently clonable: } there is an efficient cloner that maps one copy of $\ket{\phi_i}$ to two copies, 
    \item \textbf{Un-telegraphable}: no efficient adversary can encode $\ket{\phi_i}$ into a classical string that can later be revived back into $\ket{\phi_i}$.
\end{itemize}
We also allow an efficiently verifiable version, in which we have the additional property,
\begin{itemize}
    \item \textbf{Verifiable: }
    there is a verification algorithm that takes a label $i$ as well as a claimed state $\ket{\phi_i'}$, and outputs whether $\ket{\phi_i'}$ is a valid quantum fire state.%
    \footnote{In the general case, the verification algorithm for the quantum fire state may allow a larger space of states than those that would be produced by the sparking algorithm. In this case the cloning algorithm and the telegraphing adversary must output any state(s) that pass verification.}
\end{itemize}
Quantum fire has been demonstrated to exist relative to a quantum oracle in~\cite{ITCS:NZ23}.
However, until now, no plausible construction in the plain model was known. Even the task of finding quantum states that are efficiently clonable without an oracle\textemdash but not trivial enough to be described classically\textemdash has been a challenge. 
We give the first candidate construction of quantum fire 
in the plain model. We challenge the cryptographic community to find either a proof of its security from reasonable assumptions or to break it. We further challenge the community to find and propose other reasonable candidate constructions for quantum fire. Much like the 15-year challenge of finding reasonable constructions of public-key quantum money has led to a variety of new techniques for proving unclonability, we expect the task of finding candidate quantum fire constructions to prove to be a challenging task and to require new and specialized techniques for showing untelegraphability.

\subsection{Definition}
Quantum fire was implicit in the oracle construction of~\cite{ITCS:NZ23}, but no formal definition was given.
We give a definition of public-key quantum fire here. 

\begin{definition}[Public-key Quantum Fire]
    \label{def:qfire}
    A \emph{public-key quantum fire} scheme consists of four quantum algorithms $\mathcal{S} = (\keygen, \spark,\clone,\ver)$ where 
    \begin{itemize}
        \item $\keygen(1^{n})$ takes as input the security parameter $1^n$ and outputs a private/public key pair $(\sk,\pk)$, 
        \item $\spark(\pk)$ takes the public key and outputs a serial number $s$ and a quantum fire state $\ket{\phi^{s}}$, which we refer to as a \emph{flame}, 
        \item $\clone(\pk, s,\ket{\phi^{s}})$ takes as input the public key $\pk$, a serial number $s$, and a flame $\ket{\phi^{s}}$, and outputs two registers $\reg{AB}$ in some potentially entangled state $\sigma_{\reg{AB}}^s$,
        \item $\ver(\pk, s,\sigma)$ takes as input the public key $\pk$, a serial number $s$, and an alleged flame $\sigma$, and either accepts or rejects.%
        \footnote{
            $\ver$ may not exist for unverifiable quantum fire, or it may require the secret key $\sk$ for secretly verifiable quantum~fire.
        }
    \end{itemize}
    A quantum fire scheme $\mathcal{S}$ satisfies correctness if for all $\lambda$, sparking is correct
    \begin{equation*}
        \Pr 
        \left[ 
            \ver(\pk, s, \ket{\phi^{s}}) 
            \text{ accepts}
            \;\;\middle|
            \begin{array}{c}
            (\sk,\pk) \leftarrow  \mathsf{KeyGen}(1^\lambda) \\
            (s,\ket{\phi^{s}}) \leftarrow \spark(\pk) 
            \end{array}
       \right] 
       \geq 
       1 - \negl(n)
       \,,
    \end{equation*}
    and cloning is correct%
    \footnote{
        Technically, we need that cloning is correct even for the outputs of the cloner itself. That is, we should be able to apply the cloner repeatedly on its own outputs any polynomial number of times. This rules out trivial cases such as where the sparking mints two quantum money states, and verification accepts either one or both of them, but they are otherwise unclonable. This would allow ``cloning'' once by dividing up the two money states, but those two ``copies'' cannot be cloned further. Quantum fire definitionally requires its flame states to be unboundedly clonable.
    }
    \begin{equation*}
        \Pr 
        \left[ 
            \ver(\pk, s, \cdot) 
            \text{ accepts both registers of }
            \sigma_{\reg{AB}}^s
            \;\;\middle|
            \begin{array}{c}
            (\sk,\pk) \leftarrow  \mathsf{KeyGen}(1^\lambda) 
            \\
            (s,\ket{\phi^{s}}) \leftarrow \spark(\pk) 
            \\
            \sigma_{\reg{AB}}^s \leftarrow \clone(\pk, s,\ket{\phi^{s}}) 
            \end{array}
       \right] 
       \geq 
       1 - \negl(n)
       \,.
    \end{equation*}
\end{definition}

Untelegraphability~\cite{ITCS:NZ23} of quantum fire means that it is hard to encode a flame as a classical encoding which can later be brought back. That is, once the flame is extinguished, it is gone. We model this as a pair of adversaries. The first is tasked with deconstructing the flame into a classical message, and the second must use the deconsructed classical message to reconstruct the state.%
\footnote{
    Note that the two adversaries should \emph{not} be entangled, as this allows them to \emph{teleport} the state. Furthermore, maintaining any entanglement implies having to store a quantum register, which is what telegraphing aims to avoid.
}
   
\begin{longfbox}[breakable=false, padding=1em, margin-top=1em, margin-bottom=1em]
\begin{algo}[{\bf Quantum Fire Telegraphing Security Game}]\label{prot:quantum_fire_security}
\end{algo}
\begin{enumerate}
    \item Challenger generates $(\sk,\pk) \leftarrow \keygen(1^{\lambda})$, $(s, \ket{\phi^s})\leftarrow \spark(\sk)$ and send $(\pk, s, \ket{\phi^s})$ to adversary $A$.
    \item Adversary $A$ returns a classical encoding of the flame $c \in \bits^*$.
    \item Challenger passes $c \in \bits^*$ to adversary $B$.
    \item Adversary $B$ returns a claimed quantum state $\sigma$.
    \item Challenger runs $\ver(\pk, s, \sigma)$ and outputs the result.%
    \footnote{
        In the case of unverifiable quantum fire, the challenger verifies the telegraphing by measuring in a basis containing the valid flame $\ket{\phi^s}$.
    }
\end{enumerate}
\end{longfbox}

\begin{definition}[Quantum fire security]
    A quantum fire scheme $\mathcal{S}$ satisfies \emph{$\epsilon$-quantum-fire security} if for all pairs of efficient adversaries $A$ and $B$, the success probability of $A$ in the Telegraphing Security Game (\Cref{prot:quantum_fire_security}) is at most $\epsilon(n)$.
\end{definition}

We require that $\spark$ and $\clone$ are efficient (QPT) algorithms. $\ver$ may be an efficient public verification, an efficient private verification, or an inefficient verification, leading to three different kinds of quantum fire (publicly verifiable quantum fire, privately verifiable quantum fire, and statistically verifiable quantum fire).
Furthermore, the weaker notion of security is against standard non-interactive telegraphing. A stronger security notion, and one that is more amenable to cryptographic applications, additionally rules out interactive telegraphing, where $A$ and $B$ may exchange any number of classical messages in \Cref{prot:quantum_fire_security}.

\subsection{Construction}
Let $\G$ and $\bbH$ be two groups, and let $f: \G \to \bbH$ be an injective homomorphism between them, which we assume to be one-way. 
Let $\bbH_{f}$ be the image of $f$.
Let $\rep: \G \to U(\mathcal{H})$ be the representation of $\G$ which acts as $\rep(g)\ket{h} = \ket{f(g) \cdot h}$.
Let the set of quantum fire labels (or serial numbers) be $\widehat \G$, the set of irreps of $\G$, and for each $\lambda \in \widehat\G$, we let valid flames be any state in isotypic component of irrep $\lambda$, that is, any state of the form

\begin{align*}
    \ket{\phi^{\lambda}_{i,j}} = \sum_{g \in \G} \varrho_{\lambda}(g^{-1})_{i,j} \ket{f(g)}
\end{align*}

We further assume that there is an efficient algorithm to prepare a uniform superposition over the image group $\bbH_{f}$: $\ket{\Phi} = \sum_{h \in \bbH_{f}} \ket{h}$, or a quantum state that approximates it.%
\footnote{Supposedly, this image group is known to all parties, while the specific mapping between $\G$ and $\bbH_{f}$ could be arbitrary.}

\paragraph{Verification and Sparking.}
Verification is the same as verification for the quantum money construction of \Cref{sec:qm-construction}: we perform a course Fourier measurement to produce the irrep label $\lambda$ and compare with the claimed quantum fire label. Likewise, to spark a quantum fire state---that is, to prepare a fire state with a random label---run the same verification process on the identity element of $H$, which samples the irrep label $\lambda$ according to the Plancherel measure of $G$.

\paragraph{Cloning.}
We are given a quantum fire state $\ket{\phi^{\lambda}_{i, j}}$ with label $\lambda$, and we would like to output two such fire states, both of which pass verification for the same label $\lambda$.

We first prepare $\ket{\Phi} = \sum_{h \in \bbH_{f}} \ket{h}$, a uniform superposition over the image group $\bbH_{f}$. Together with the fire state, we now have the overall state
\begin{align*}
    \ket{\phi^{\lambda}_{i, j}} \otimes \ket{\Phi}
    &=
    \sum_{\substack{g \in \G \\ h \in \bbH_{f}}} \varrho_{\lambda}(g^{-1})_{i, j} \ket{f(g)} \ket{h}\,.
\end{align*}
\noindent Since $f$ is injective, and therefore bijective between $\G$ and $\bbH_{f}$, we can reindex the sum over $h$ as 
\begin{align*}
    &=
    \sum_{g, h \in \G} \varrho_{\lambda}(g^{-1})_{i, j} \ket{f(g)} \ket{f(h)}
\end{align*}

Both registers contain an element of $H$. We apply the inverse group operation of the second register into the first register on the left to get
\begin{align*}
    &\rightarrow
    \sum_{g, h \in \G} \varrho_{\lambda}(g^{-1})_{i, j} \ket{f(h)^{-1} \cdot f(g)} \ket{f(h)}
    \\
    &=
    \sum_{g, h \in \G} \varrho_{\lambda}(g^{-1})_{i, j} \ket{f(h^{-1} \cdot g)} \ket{f(h)}
    \\
    &=
    \sum_{g, h \in \G} \varrho_{\lambda}(g^{-1} \cdot h^{-1})_{i,j} \ket{f(g)} \ket{f(h)}
    \\
    &=
    \sum_{\substack{g, h \in \G \\ k \in [\dimlambda}} \varrho_{\lambda}(g^{-1})_{ik} \varrho_{\lambda}(h^{-1})_{k, j} \ket{f(g)} \ket{f(h)}
    \\
    &=
    \sum_{k \in [\dimlambda]} \sum_{g \in \G} \varrho_{\lambda}(g^{-1})_{i, k} \ket{f(g)} \sum_{h \in \G} \varrho_{\lambda}(h^{-1})_{k, j} \ket{f(h)}
    \\
    &=
    \sum_{k \in [\dimlambda]} \ket{\phi^{\lambda}_{i, k}} \otimes \ket{\phi^{\lambda}_{k, j}}
\end{align*}

This produces two quantum fire states that both pass verification for the same original label $\lambda$. While not necessary, if we wish, we could even force the two new fire states to have the same $i$ and $j$ values as the original, and in doing so disentangle them. We simply perform a \fse on both states from both the left and right side to extract out the new $i$ and $j$ values, replace them with the old $i$ and $j$, and uncompute%
\footnote{See \Cref{sec:cloning_exact_from_extraction} for more details on how to do this.}
to get the tensor product:
\begin{align*}
    \ket{\phi^{\lambda}_{i, j}} \otimes \ket{\phi^{\lambda}_{i, j}}
\end{align*}

\paragraph{Untelegraphability.}
We have shown above that these states are efficiently clonable. In order for the construction to be a secure quantum fire scheme, the states must also be \emph{untelegraphable}. That is, there must be no way to deconstruct the states into a classical message that can later be reconstructed back into the quantum state, or at least one that properly passes verification. We leave as an open problem to find sufficient conditions on the one-way homomorphism that would allow showing untelegraphability in the plain model. 

\begin{remark}
The untelegraphability of such a scheme is known to be difficult to prove even relative to a classical oracle: Nehoran and Zhandry~\cite{ITCS:NZ23} show the security of their implicit quantum fire scheme relative to a unitary quantum oracle. Unfortunately, they also show that the same quantum fire construction leads to a unitary oracle separation between the complexity classes $\mathsf{clonableQMA}$ and $\QCMA$, and therefore between $\QMA$ and $\QCMA$. As generalization of this, they observe that any provably secure and public-key quantum fire scheme relative to a \emph{classical} oracle will likely lead to a \emph{classical} oracle separation between $\QMA$ and $\QCMA$, an major longstanding open problem of Aharonov and Naveh~\cite{aharonov2002quantum} that remains unresolved despite recent progress.
\end{remark}

\begin{observation}
We observe that while one-wayness may not be a sufficient condition for untelegraphability, is a \emph{necessary} condition. This is because if we can invert the homomorphism---and we can also perform a quantum Fourier transform on the group---then we can telegraph the state as the classical description of $\lambda$, $i$, and $j$.
\end{observation}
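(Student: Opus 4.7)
The plan is to prove the contrapositive: if the homomorphism $f$ is efficiently invertible (on its image $\bbH_f$), then the quantum fire scheme admits an efficient telegraphing attack, violating untelegraphability. Since the scheme already assumes $G$ has an efficient quantum Fourier transform, the two ingredients together (efficient $f^{-1}$ and efficient $\mathrm{QFT}_G$) will let us coherently map a flame into a classical label and back.

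First, I would show how adversary $A$ (the deconstructor) produces the classical message. Given a flame $\ket{\phi^{\lambda}_{i,j}} = \sum_{g \in G} \varrho_{\lambda}(g^{-1})_{i,j}\,\ket{f(g)}$, apply $f^{-1}$ coherently in place, using the assumed efficient inverse, to obtain
\begin{equation*}
    \sum_{g \in G} \varrho_{\lambda}(g^{-1})_{i,j}\,\ket{g} \;\propto\; \ket{\mathcal{L}^{\lambda}_{i,j}},
\end{equation*}
which is exactly a left-regular Fourier basis state of $G$ (up to the normalization factor $\sqrt{\dimlambda/|G|}$ recorded in the preliminaries). Applying $\mathrm{QFT}_G$ then yields the computational basis state $\ket{\lambda,i,j}$, which $A$ can measure to obtain the classical triple $(\lambda,i,j)$ and send to $B$.

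Second, I would describe $B$'s reconstruction as literally the inverse procedure: prepare $\ket{\lambda,i,j}$ from the classical message, apply $\mathrm{QFT}_G^{\dagger}$ to recover $\ket{\mathcal{L}^{\lambda}_{i,j}}$, and then apply $f$ coherently to each basis element to get back $\ket{\phi^{\lambda}_{i,j}}$ exactly. Each of these steps is efficient under the hypotheses, and correctness is exact (up to the standard approximation errors of $\mathrm{QFT}_G$), so verification passes with overwhelming probability. Hence $A,B$ win the telegraphing security game of \Cref{prot:quantum_fire_security}, contradicting any nontrivial $\epsilon$-quantum-fire security.

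There is essentially no obstacle here beyond careful bookkeeping: the only subtlety is ensuring that the coherent implementation of $f^{-1}$ is well-defined on the support of the flame, which is contained in $\bbH_f$ by construction, so no out-of-image inputs arise. If $f^{-1}$ is given only as a classical algorithm on strings, then the standard compute/copy/uncompute trick makes it coherent and reversible on $\bbH_f$. The conclusion follows: one-wayness of $f$ is necessary, since any efficient inversion of $f$ on $\bbH_f$ immediately yields an efficient telegraphing attack on the scheme.
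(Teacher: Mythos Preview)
Your proposal is correct and follows essentially the same approach as the paper's proof sketch: invert $f$ coherently on the flame to obtain the left-regular Fourier basis state $\ket{\mathcal{L}^{\lambda}_{i,j}}$, apply $\mathrm{QFT}_G$ to get the classical triple $(\lambda,i,j)$, and have the reconstructor run the inverse procedure. The paper makes the compute/uncompute step slightly more explicit (first computing $g$ in an ancilla, then uncomputing $f(g)$), but you already note this at the end, so there is no substantive difference.
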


\begin{proof}[Proof sketch.]
Suppose, for instance, that we can invert $f$ perfectly. Then we can do the following. 
Alice starts with a quantum fire state $\ket{\phi^{\lambda}_{i, j}} = \sum_{g \in \G} \varrho_{\lambda}(g^{-1})_{i, j} \ket{f(g)}$ and inverts $f$ to get
\begin{align*}
    \sum_{g \in \G} \varrho_{\lambda}(g^{-1})_{i, j} \ket{f(g)} \ket{g}
    \,.
\end{align*}
She now uncomputes $f(g)$ in the first register to get
\begin{align*}
    \sum_{g \in \G} \varrho_{\lambda}(g^{-1})_{i, j} \ket{g}
    \,,
\end{align*}
which is the left-regular Fourier basis state $\ket{\mathcal{L}^{\lambda}_{i,j}}$. Taking the quantum Fourier transform of this state then yields $\ket{\lambda} \ket{i} \ket{j}$, which is a classical string that Alice can send to Bob.
Bob can then invert this process to recover $\ket{\phi^{\lambda}_{i, j}}$.
\end{proof}

The notion of quantum fire was featured implicitly in the work of Nehoran and Zhandry~\cite{ITCS:NZ23}, where they show that such an object exists relative to a unitary quantum oracle. Their construction uses two oracles: a (quantumly accessible) random oracle, which serves effectively as a verification oracle, and a unitary oracle, which is used for cloning the resulting states.
Unfortunately, the scheme of \cite{ITCS:NZ23} offers little hope of leading to a plain-model instantiation. This is because, as they note, the unitary implemented by the unitary cloning oracle is one that cannot be implemented efficiently.

One approach to strengthen their result is to give a construction from classical functionality.
A priori, however, it is not even clear that \emph{any} classical functionality can bestow clonability on a state that cannot be encoded classically.
To the best of our knowledge, every known method of efficiently cloning quantum states first passes through the classical description of the states, copies this classical description, and then recovers two clones of the quantum state from the classical descriptions.
However, this automatically means that such states are efficiently telegraphable---they can be stored as their classical descriptions.
\emph{How can we clone a quantum state (using efficient classical functionality) without ever going through a classical description?}

We answer this question here by giving a proof of concept that this kind of cloning is in fact possible, along with a framework for using it to construct quantum fire with conjectured security. 
An interesting aspect of our cloning procedure is that the quantum states of the two registers inherently become entangled during the course of the procedure, and only become disentangled at the end. Furthermore, it requires applying a controlled group operation between the two registers. These aspects together give intuitive justification for the untelegraphability of this cloning procedure: controlled operations and more general entangling procedures \emph{cannot} occur over a classical channel.

\ifanon
\else

\ifstocproceedings
\begin{acks}

\else
\section*{Acknowledgments}
\fi

The authors thank Henry Yuen, Chinmay Nirkhe, Adam Bouland, and Tudor Giurgica-Tiron for insightful discussions that contributed to the direction of the paper, and we thank Takashi Yamakawa, Tomoyuki Morimae, and Dakshita Khurana for enlightening discussions on the significance of our results and further cryptographic applications of our duality theorem.
J.B. is supported by Henry Yuen's AFORS (award FA9550-21-1-036) and NSF CAREER (award CCF2144219).
This work was done in part while J.B. and B.N. were visiting the Simons Institute for the Theory of Computing, supported by NSF QLCI Grant No. 2016245.

\ifstocproceedings
\end{acks}
\fi

\fi

\bibliographystyle{alpha}
\bibliography{references}

\end{document}